\DeclareMathAlphabet{\mathpzc}{OT1}{pzc}{m}{it}
\theoremstyle{plain}
\newtheorem{thm}{Theorem}[section] % reset theorem numbering for each chapter
\theoremstyle{definition}
\newtheorem{defn}[thm]{Definition} % definition numbers are dependent on theorem numbers
\newtheorem{lem}[thm]{Lemma}
\newtheorem{prop}[thm]{Proposition}
\newtheorem{rem}[thm]{Remark}
\newtheorem{cor}[thm]{Corollary}
\def\Xint#1{\mathchoice
	{\XXint\displaystyle\textstyle{#1}}%
	{\XXint\textstyle\scriptstyle{#1}}%
	{\XXint\scriptstyle\scriptscriptstyle{#1}}%
	{\XXint\scriptscriptstyle\scriptscriptstyle{#1}}%
	\!\int}
\def\XXint#1#2#3{{\setbox0=\hbox{$#1{#2#3}{\int}$ }
		\vcenter{\hbox{$#2#3$ }}\kern-.6\wd0}}
\def\dashint{\Xint-}
\newcounter{MPequ}
\newcounter{AppA}
\newcounter{AppB}
\newenvironment{AppB}
{\stepcounter{AppB}%
	\addtocounter{equation}{0}%
	\equation}
{\endequation}
\newcounter{AppC}
\newcounter{AppD}
\newcounter{AppE}
\newenvironment{AppE}
{\stepcounter{AppE}%
	\addtocounter{equation}{0}%
	\equation}
{\endequation}
\begin{document}\selectlanguage{english}
\begin{center}
\normalsize \textbf{\textsf{Kernel and image of the Biot-Savart operator and their applications in stellarator designs}}
\end{center}
\begin{center}
	Wadim Gerner\footnote{\textit{E-mail address:} \href{mailto:wadim.gerner@edu.unige.it}{wadim.gerner@edu.unige.it}}
\end{center}
\begin{center}
{\footnotesize	MaLGa Center, Department of Mathematics, Department of Excellence 2023-2027, University of Genoa, Via Dodecaneso 35, 16146 Genova, Italy}
\end{center}
{\small \textbf{Abstract:} 
We consider the Biot-Savart operator acting on $W^{-\frac{1}{2},2}$ regular, div-free, surface currents $j$
\begin{gather}
	\nonumber
	\operatorname{BS}(j)(x)=\frac{1}{4\pi}\int_{\Sigma}j(y)\times \frac{x-y}{|x-y|^3}d\sigma(y)\text{, }x\in \Omega
\end{gather}
where $\Sigma$ is a connected surface to which $j$ is tangent and where $\Omega$ is the finite domain bounded by $\Sigma$.
\newline
We answer two questions regarding this operator.
\begin{enumerate}
	\item We provide an algorithm which converges (theoretically) exponentially fast to an element of the kernel of the Biot-Savart operator, as well as characterise the elements of the kernel of the Biot-Savart operator in terms of certain solutions to exterior boundary value problems. This allows one to explicitly exploit the non-uniqueness of the coil reconstruction process in the context of stellarator designs.
	\item We provide a simple, concise characterisation of the image of the Biot-Savart operator. This allows to define a 2-step current reconstruction procedure to obtain surface currents which approximate to arbitrary precision a prescribed target magnetic field within the plasma region of a stellarator device.
	\newline
	The first step does not require computing integrals involving singular integral kernels of the form $\frac{x-y}{|x-y|^3}$ but may have a potentially slow convergence rate, while the second step requires the computation of integrals involving singular integral kernels but in turn has (theoretically) an exponential convergence rate.
	\newline
	This approximation procedure always leads to approximating surface currents which are as poloidal as possible.
\end{enumerate}
{\small \textit{Keywords}: Biot-Savart operator, Coil winding surface, Plasma physics, Stellarator}
\newline
{\small \textit{2020 MSC}: 31B10, 41A25, 45F15, 45L05, 45Q05, 76W05, 78A30, 78A55}
\section{Introduction}
One promising approach with regards to replicating plasma fusion on earth is magnetic confinement fusion which aims to confine the plasma by means of magnetic fields. The two most prominent designs are the tokamak design and the stellarator design. While the tokamak creates the confining magnetic field by means of simple coil structures and a strong plasma current, the stellarator instead relies mainly on complex coil structures not requiring any strong plasma currents \cite{Xu16}. Both approaches have advantages and disadvantages, c.f. \cite{Xu16}.
\newline
Traditionally a two step optimisation procedure is used in the design of stellarators, even though one-step optimisation procedures are becoming more prominent recently, \cite{HHPH21},\cite{JGLRW23}.
\begin{enumerate}
	\item \underline{Step 1:} In the first step one looks for a plasma shape and supporting magnetic field which optimise confinement properties as well as may take into account engineering constraints, c.f. \cite{HW83}. Mathematically, the output of this procedure is a (bounded) region $P$ known as plasma region or plasma domain and a vector field $B_T$ within $P$ which corresponds to the magnetic field which needs to be produced by the coils.
	\item \underline{Step 2:} One looks for a coil arrangement, as well as a current distribution supported by the coil structures which approximates well the desired vector field $B_T$ within the plasma region, c.f. \cite{L17},\cite{PLBD18},\cite{PRS22},\cite{G24}. Again, physical constraints may be taken into account.
\end{enumerate}
There are different ways to model the coils such as the filament model or the coil windings surface (CWS) method, see \cite[Chapter 13.4]{IGPW24},\cite{M87}. While the CWS method is a less realistic model it is easier to handle from a computational and theoretical point of view, even though a certain recent approach to plasma fusion confinement \cite{PPPV24} makes the CWS model a good model for this specific approach.
\newline
The goal of the present paper is to analyse in more detail the coil reconstruction problem from the point of view of the CWS model.
\newline
To be more precise about our setting let $\Sigma\subset\mathbb{R}^3$ be a closed, connected $C^{1,1}$-surface. A current distribution $j$ on $\Sigma$ is a square integrable vector field $j$ on $\Sigma$, which is tangent to $\Sigma$ at a.e. point and which is divergence-free in the sense that $\int_{\Sigma}j(x)\cdot \nabla \psi(x)d\sigma(x)=0$ for all $\psi\in C^1_c(\mathbb{R}^3)$ and where $d\sigma$ denotes the standard induced surface measure on $\Sigma$. The divergence-free condition is necessary to ensure that the considered currents satisfy Maxwell's equations.
Now, given any such current distribution $j$ it will induce a magnetic field in $3$-space according to the Biot-Savart law
\begin{gather}
	\label{E1}
	\operatorname{BS}_{\Sigma}(j)(x)=\frac{1}{4\pi}\int_{\Sigma}j(y)\times \frac{x-y}{|x-y|^3}d\sigma(y)\text{ for }x\in \mathbb{R}^3\setminus \Sigma.
\end{gather}
What we are interested in here is the question of how, for a given (plasma) domain $P\subset\mathbb{R}^3$, target magnetic field $B_T\in L^2(P,\mathbb{R}^3)$ and CWS $\Sigma\subset\mathbb{R}^3$, one may obtain a current distribution $j$ on $\Sigma$ such that $\|\operatorname{BS}_{\Sigma}(j)-B_T\|_{L^2(P)}\ll 1$.
\newline
One important and simple observation is that the current induced magnetic field satisfies the identities $\operatorname{div}(\operatorname{BS}_{\Sigma}(j))=0$ and $\operatorname{curl}(\operatorname{BS}_{\Sigma}(j))=0$ on $\mathbb{R}^3\setminus\Sigma$. The first equation is simply one of the Maxwell equations valid for all magnetic fields. The second equation physically amounts to saying that there is no current outside of $\Sigma$ which is obviously true since we consider a magnetic field induced by a current contained in $\Sigma$.
\newline
From this it is easy to deduce that if we want to be able to approximate the target magnetic field $B_T$ well, it must also satisfy the same equations $\operatorname{div}(B_T)=0$, $\operatorname{curl}(B_T)=0$ in $P$. This may however always be guaranteed by taking into account the plasma current as we shall discuss now. The main idea in order to obtain a suitable target field consists essentially in trying to find a plasma equilibrium magnetic field $B$ of the equations of magnetohydrodynamics, i.e. a solution of
\begin{gather}
	\label{E2}
	B\times \operatorname{curl}(B)=\nabla p\text{, }\operatorname{div}(B)=0\text{ in }P\text{ and }B,\operatorname{curl}(B)\parallel \partial P
\end{gather}
where $p$ is the pressure and in general $B$ is not curl-free. To extract a target field $B_T$ compatible with the curl-free condition one may consider $J:=\operatorname{curl}(B)$ which according to Maxwell's equations corresponds to the plasma current contained in $P$. One can then consider the magnetic field induced by the (volume-) plasma current $J$ which is once more given by the Biot-Savart law
\begin{gather}
	\label{E3}
	\operatorname{BS}_P(J)(x)=\frac{1}{4\pi}\int_PJ(y)\times \frac{x-y}{|x-y|^3}d^3y.
\end{gather}
Then, as $\operatorname{div}(J)=0$ and $J\parallel \partial P$, we find $\operatorname{div}(\operatorname{BS}_P(J))=0$ and $\operatorname{curl}(\operatorname{BS}_P(J))=J$, see \cite[Theorem A]{CDG01}, so that if we let $B_T:=B-\operatorname{BS}_P(J)$ we arrive at our desired div-free and curl-free magnetic field which we must reproduce by our coils in order to arrive at a desired plasma equilibrium.

We note that while $\operatorname{BS}_{\Sigma}(j)$ is always div- and curl-free inside $P$ it is easy to construct square integrable fields $B_T$ on $P$ which are div- and curl-free but not the magnetic field of any current on any CWS surrounding $P$ at a positive distance, \cite[Proposition 3.1]{G24}.

However, it has been shown \cite[Corollary 3.10 (iii,b)]{G24} that under some technical assumptions on $\Sigma$ and $P$, which are always satisfied in the context of stellarator designs, the image of $\operatorname{BS}_{\Sigma}$ is $L^2(P)$-dense in the space $L^2\mathcal{H}(P):=\{B_T\in L^2(P,\mathbb{R}^3)\mid \operatorname{div}(B)=0=\operatorname{curl}(B)\}$. In particular, for any target field $B_T$ one can (in theory) find a current distribution $j$ on $\Sigma$ with $\|\operatorname{BS}_{\Sigma}(j)-B_T\|_{L^2(P)}\ll 1$.
\newline
In practice one may reconstruct currents $j$ by means of a Tikhonov regularisation procedure, see \cite{PRS22},\cite{L17}.

The advantage, as well as disadvantage, of this regularisation approach is that it singles out current distributions in a way to make appropriate minimisation problems uniquely solvable, allowing to obtain approximating currents in terms of the unique solutions of the associated minimisation problems \cite{PRS22}. By reducing the regularisation parameter one may obtain a sequence of well-approximating currents \cite[Corollary 4.3]{G24}. The currents obtained in this way are $L^2(\Sigma)$-orthogonal to the kernel of the linear operator $\operatorname{BS}_{\Sigma}$.
If $j_0\in \operatorname{Ker}(\operatorname{BS}_{\Sigma})$ and $\alpha\in \mathbb{R}$, then obviously $\operatorname{BS}_{\Sigma}(j+\alpha j_0)=\operatorname{BS}_{\Sigma}(j)$ approximates $B_T$ as precise as does $\operatorname{BS}_{\Sigma}(j)$. Modifying $j$ by adding an element of the kernel provides one with flexibility and one may for instance search for modified currents $j+j_0$ such that $j+j_0$ optimises some other desirable physical feature such as reducing the Laplace force or the coil shape. We note however that the currents $j$ obtained by the procedure in \cite{PRS22} minimise the average (squared) current strength which may in itself be a desirable feature.
\newline
\newline
The goal of the present manuscript is twofold:
\begin{enumerate}
	\item We provide a simple characterisation of the image of the Biot-Savart operator. Based on that we provide a current reconstruction algorithm which does not rely on a regularisation procedure. This algorithm consists of two steps. In a first step, using the newly obtained characterisation of the image, we find an element $B$ of the image of the Biot-Savart operator which approximates well the given target field $B_T$. During this step the convergence rate cannot be easily controlled, but at the same time no singular integral kernels need to be computed and therefore this part of the algorithm is not so computationally complex. The second part of the algorithm consists of approximating a preimage $j$ of the previously found field $B$. During this step it will be necessary to compute singular integral kernels. However, this needs to be done only on a surface and we also show that the provided algorithm converges exponentially fast to a real preimage, so that few iterations are required to achieve a good precision. We further prove that the currents obtained in this way lead to the simplest possible coil shapes.
	\item We provide two ways to reconstruct the kernel of the Biot-Savart operator on a given CWS. The first is an algorithm which also requires the computation of double layer type integrals on a surface but also converges exponentially fast, with the same rate of convergence as the second part of the algorithm in the previous bullet point. The second approach characterises the kernel elements in terms of certain solutions to exterior boundary value problems and hence provides an alternative way to compute kernel elements.
\end{enumerate}
\textbf{Structure of the paper:} In section 2 we introduce the notation used throughout the manuscript and formulate the main results. Section 3 contains the proofs of the characterisation of the image of the Biot-Savart operator. In section 4 we prove the validity of the current reconstruction procedure. Section 5 contains the proof regarding the validity of the kernel reconstruction algorithms. We also include an appendix consisting of three parts. Appendix A discusses the equivalence between certain norms and specifically clarifies the relationship between the $L^2$-norm of a given magnetic field and its normal trace on the boundary as well as its toroidal circulation. Appendix B discusses the relation between the energy of the magnetic field induced by a surface current and certain surface norms of the currents themselves. Further, we analyse some dynamical properties of the kernel elements. The final part of the appendix includes a discussion of the Friedrichs decomposition on less regular domains which we make use of at certain parts of the manuscript.
\section{Main results}
\subsection{Notation}
By a $C^{1,1}$surface $\Sigma\subset\mathbb{R}^3$ we always mean a closed (i.e. compact and without boundary), connected $2$-manifold of class $C^{1,1}$. According to the Jordan-Brouwer-separation theorem, \cite{Li88}, $\mathbb{R}^3\setminus \Sigma$ consists of two connected components. One unbounded component and a bounded component. We will usually denote by $\Omega$ the corresponding bounded component and call it the bounded domain bounded by $\Sigma$ or the finite domain bounded by $\Sigma$. We further denote by $\mathcal{V}(\Sigma)$ the space of $C^{0,1}$-vector fields on $\Sigma$ which are tangent to $\Sigma$. By $L^2\mathcal{V}(\Sigma)$ we denote the completion of $\mathcal{V}(\Sigma)$ with respect to the $L^2(\Sigma)$-norm. We say that $j\in L^2\mathcal{V}(\Sigma)$ is div-free, which we also denote by $\operatorname{div}_{\Sigma}(j)=0$, if $\int_{\Sigma}j(x)\cdot \nabla \psi(x)d\sigma(x)=0$ for all $\psi\in C^1_c(\mathbb{R}^3)$. The space of all square integrable div-free fields is denoted by $L^2\mathcal{V}_0(\Sigma)$. In addition, we define the following norm on $L^2\mathcal{V}(\Sigma)$:
\begin{gather}
	\label{2E1}
	\|j\|_{W^{-\frac{1}{2},2}(\Sigma)}:=\sup_{\psi\in W^{\frac{1}{2},2}(\Sigma,\mathbb{R}^3)\setminus \{0\}}\frac{\left|\int_{\Sigma}j(x)\cdot \psi(x)d\sigma(x)\right|}{\|\psi\|_{W^{\frac{1}{2},2}(\Sigma)}}
\end{gather}
where $W^{\frac{1}{2},2}(\Sigma,\mathbb{R}^3)$ is the completion of $\mathcal{V}(\Sigma)$ with respect to the standard $W^{\frac{1}{2},2}(\Sigma)-$norm which may be taken to be the square root of the sum of the squares of the $W^{\frac{1}{2},2}(\Sigma)$-norms of the components of $\psi$ and $\|f\|^2_{W^{\frac{1}{2},2}(\Sigma)}:=\|f\|^2_{L^2(\Sigma)}+\int_{\Sigma}\int_{\Sigma}\frac{|f(x)-f(y)|^2}{|x-y|^3}d\sigma(y)$ for scalar functions $f$. We then denote by $W^{-\frac{1}{2},2}\mathcal{V}_0(\Sigma)$ the completion of $L^2\mathcal{V}_0(\Sigma)$ with respect to the norm $\|\cdot\|_{W^{-\frac{1}{2},2}(\Sigma)}$. If $\Omega\subset \mathbb{R}^3$ is a bounded $C^{1,1}$-domain with disconnected boundary we make the same definitions where $\Sigma$ is replaced by $\partial\Omega$ accordingly.

Lastly we introduce the following function spaces for a given bounded domain $\Omega\subset\mathbb{R}^3$: $H(\operatorname{curl},\Omega):=\{w\in L^2(\Omega,\mathbb{R}^3)\mid \operatorname{curl}(w)\in L^2(\Omega,\mathbb{R}^3)\}$, $H(\operatorname{div},\Omega):=\{w\in L^2(\Omega,\mathbb{R}^3)\mid \operatorname{div}(w)\in L^2(\Omega)\}$ and $L^2\mathcal{H}(\Omega):=\{w\in L^2(\Omega,\mathbb{R}^3)\mid \operatorname{div}(w)=0=\operatorname{curl}(w)\}$ where $\operatorname{curl}$ and $\operatorname{div}$ are understood in the weak sense and we equip the spaces with the norms $\|w\|_{H(\operatorname{curl},\Omega)}:=\sqrt{\|w\|^2_{L^2(\Omega)}+\|\operatorname{curl}(w)\|_{L^2(\Omega)}},\|w\|_{H(\operatorname{div},\Omega)}:=\sqrt{\|w\|^2_{L^2(\Omega)}+\|\operatorname{div}(w)\|_{L^2(\Omega)}}$ and $\|w\|_{L^2(\Omega)}$ respectively.
\subsection{Statement and discussion of main results}
\subsubsection{Image of the Biot-Savart operator}
Given a bounded domain $\Omega\subset\mathbb{R}^3$ with (possibly disconnected) $C^{1,1}$-boundary $\partial\Omega$ we may consider the following operator (recall $L^2\mathcal{H}(\Omega)$ denotes the square integrable, div- and curl-free fields on $\Omega$)
\begin{gather}
	\nonumber
	\operatorname{BS}_{\partial\Omega}:L^2\mathcal{V}_0(\partial\Omega)\rightarrow L^2\mathcal{H}(\Omega)\text{, }j\mapsto \left(x\mapsto \frac{1}{4\pi}\int_{\partial\Omega}j(y)\times \frac{x-y}{|x-y|^3}d\sigma(y)\right)
\end{gather}
which gives rise to a well-defined bounded linear operator \cite[Lemma 5.5]{G24}. The crucial observation is that the above operator remains continuous if we equip the space $L^2\mathcal{V}_0(\partial\Omega)$ with the norm $\|\cdot\|_{W^{-\frac{1}{2},2}(\partial\Omega)}$ as defined in (\ref{2E1}), see \cite[Lemma C.1]{G24} so that the Biot-Savart operator extends uniquely to the space $W^{-\frac{1}{2},2}\mathcal{V}_0(\partial\Omega)$. To make our setting precise we make the following definition
\begin{defn}[Biot-Savart operator]
	\label{2D1}
	Let $\Omega\subset\mathbb{R}^3$ be a bounded $C^{1,1}$-domain with possibly disconnected boundary. Then we define the \textit{Biot-Savart operator} as
	\begin{gather}
		\nonumber
		\operatorname{BS}_{\partial\Omega}:W^{-\frac{1}{2},2}\mathcal{V}_0(\partial\Omega)\rightarrow L^2\mathcal{H}(\Omega)\text{, }j\mapsto \left(x\mapsto \frac{1}{4\pi}\int_{\partial\Omega}j(y)\times \frac{x-y}{|x-y|^3}d\sigma(y)\right)
	\end{gather}
	which is a well-defined, bounded linear operator.
\end{defn}
Before we characterise the image of this operator we introduce the space of harmonic Dirichlet fields
\begin{gather}
	\label{2E2}
	\mathcal{H}_D(\Omega):=\{\nabla f\mid f\in H^1(\Omega), \Delta f=0\text{, }\nabla f\times \mathcal{N}=0\}
\end{gather}
where $\mathcal{N}$ denotes the outward unit normal on $\partial\Omega$ and the identities $\Delta f=0$ and $\nabla f\times \mathcal{N}=0$ are understood in the weak sense, i.e. $\int_{\Omega}\nabla f\cdot \operatorname{curl}(\psi)d^3x=0$ for all $\psi\in C_c^1(\mathbb{R}^3,\mathbb{R}^3)$ (note that we allow $\psi$ to have non-zero boundary values) and $\int_{\Omega}\nabla f\cdot \nabla \phi d^3x=0$ for all $\phi\in C^1_c(\Omega)$.
We have the following characterisation of the image of the Biot-Savart operator.
\begin{thm}[Image of the Biot-Savart operator]
	\label{2T2}
	Let $\Omega\subset\mathbb{R}^3$ be a bounded $C^{1,1}$-domain with possibly disconnected boundary. Then
	\begin{gather}
		\nonumber
		\operatorname{Im}(\operatorname{BS}_{\partial\Omega})=L^2\mathcal{H}(\Omega)\cap \mathcal{H}^{\perp_{L^2(\Omega)}}_D(\Omega)
	\end{gather}
	where $\mathcal{H}^{\perp_{L^2(\Omega)}}_D(\Omega)$ denotes the $L^2(\Omega)$-orthogonal complement of $\mathcal{H}_D(\Omega)$ within $L^2\mathcal{H}(\Omega)$.
\end{thm}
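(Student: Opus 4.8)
The plan is to reduce the statement to a flux condition and then establish the two inclusions separately, the second one constructively. First I would record that $\mathcal{H}_D(\Omega)\subseteq L^2\mathcal{H}(\Omega)$ (a gradient is curl-free, and $\Delta f=0$ makes it div-free), so the right-hand side is meaningful, and then identify the set $L^2\mathcal{H}(\Omega)\cap\mathcal{H}_D^{\perp_{L^2(\Omega)}}(\Omega)$ explicitly. Writing $\partial\Omega=\bigcup_{i=1}^m\Sigma_i$ for the connected components and integrating by parts, for $B\in L^2\mathcal{H}(\Omega)$ and $\nabla f\in\mathcal{H}_D(\Omega)$ (with $f\equiv c_i$ on $\Sigma_i$, which is forced by $\nabla f\times\mathcal{N}=0$) one gets $\int_\Omega B\cdot\nabla f\, d^3x=\sum_i c_i\,\Phi_i(B)$, where $\Phi_i(B):=\int_{\Sigma_i}B\cdot\mathcal{N}\,d\sigma$ is the flux through $\Sigma_i$. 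Using that the flux functionals pair non-degenerately on $\mathcal{H}_D$ (the capacity matrix), together with $\sum_i\Phi_i(B)=0$ (the divergence theorem on $\Omega$), I would conclude that $B\perp_{L^2}\mathcal{H}_D$ if and only if $\Phi_i(B)=0$ for every $i$. This recasts the theorem as the assertion $\operatorname{Im}(\operatorname{BS}_{\partial\Omega})=\{B\in L^2\mathcal{H}(\Omega)\mid \Phi_i(B)=0\ \forall i\}$, with the Friedrichs/Hodge decomposition from the appendix supplying the backbone (finite-dimensionality and complementability of $\mathcal{H}_D$).

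For the inclusion $\subseteq$, by density of $L^2\mathcal{V}_0(\partial\Omega)$ in $W^{-\frac{1}{2},2}\mathcal{V}_0(\partial\Omega)$ and continuity of both $\operatorname{BS}_{\partial\Omega}$ and the flux functionals, it suffices to treat $j\in L^2\mathcal{V}_0(\partial\Omega)$. For such $j$ the field $B=\operatorname{BS}_{\partial\Omega}(j)$ is smooth and div- and curl-free on $\mathbb{R}^3\setminus\partial\Omega$, has continuous normal trace across $\partial\Omega$, and decays at infinity. Hence each $\Phi_i(B)$ may be computed from the exterior side: if $\Sigma_i$ bounds a bounded component of $\mathbb{R}^3\setminus\overline{\Omega}$ (a cavity) the divergence theorem there gives $\Phi_i(B)=0$, while the fluxes through the components bordering the unbounded exterior vanish as well using div-freeness together with the decay of $B$ on a large sphere. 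This yields $\operatorname{Im}(\operatorname{BS}_{\partial\Omega})\subseteq L^2\mathcal{H}(\Omega)\cap\mathcal{H}_D^{\perp_{L^2(\Omega)}}(\Omega)$.

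The inclusion $\supseteq$ is the constructive core. Given $B\in L^2\mathcal{H}(\Omega)$ with all $\Phi_i(B)=0$, I would solve an exterior boundary value problem: find $B^{\mathrm{ext}}$, div- and curl-free on $\mathbb{R}^3\setminus\overline{\Omega}$, decaying at infinity, with prescribed normal trace $B^{\mathrm{ext}}\cdot\mathcal{N}=B\cdot\mathcal{N}$ on $\partial\Omega$ (for instance $B^{\mathrm{ext}}=\nabla u$, with $u$ the decaying solution of the exterior Neumann problem $\Delta u=0$, $\partial_{\mathcal{N}}u=B\cdot\mathcal{N}$). The vanishing fluxes are precisely the compatibility conditions making this problem solvable, in particular on each bounded cavity. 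Gluing, the field $\widetilde{B}:=B$ on $\Omega$ and $B^{\mathrm{ext}}$ outside has continuous normal trace, hence is div-free on all of $\mathbb{R}^3$, is curl-free off $\partial\Omega$, and decays. I then define $j$ as the induced tangential jump current $j:=\mathcal{N}\times(\widetilde{B}^{+}-\widetilde{B}^{-})$; curl-freeness on both sides forces $\operatorname{div}_{\partial\Omega}(j)=0$, and the $L^2$-harmonicity of $B$ places $j$ in $W^{-\frac{1}{2},2}\mathcal{V}_0(\partial\Omega)$. Finally $\widetilde{B}-\operatorname{BS}_{\partial\Omega}(j)$ is div- and curl-free on all of $\mathbb{R}^3$ (the singular surface-curls cancel) and decays at infinity, hence vanishes by Liouville; restricting to $\Omega$ gives $\operatorname{BS}_{\partial\Omega}(j)=B$.

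The \emph{main obstacle} is this reconstruction direction, and within it three points require care: (i) solving the exterior problem in the correct, possibly non-simply-connected and disconnected, geometry with the right decay and function spaces, and matching the compatibility conditions exactly to the flux conditions; (ii) justifying the trace/jump relations and that the resulting $j$ lies in $W^{-\frac{1}{2},2}\mathcal{V}_0(\partial\Omega)$ with $\operatorname{div}_{\partial\Omega}(j)=0$, since $B\cdot\mathcal{N}$ and the tangential traces of $B$ live only in $W^{\mp\frac{1}{2},2}(\partial\Omega)$; and (iii) the precise jump identity for the extended Biot-Savart operator needed in the Liouville step. By contrast, the inclusion $\subseteq$ and the reduction to the flux condition are comparatively routine once the Friedrichs decomposition from the appendix is available.
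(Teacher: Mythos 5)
Your strategy is sound and genuinely different from the paper's. The paper does not pass through fluxes for the easy inclusion: it uses the duality $\int_\Omega\operatorname{BS}_{\partial\Omega}(j)\cdot B\,d^3x=\int_{\partial\Omega}j\cdot\operatorname{BS}_\Omega(B)\,d\sigma$ and the fact that the volume Biot--Savart operator annihilates Dirichlet fields (since $B\times\mathcal{N}=0$ and $\operatorname{curl}(B)=0$ kill both terms of the representation (\ref{2E16})); your exterior flux computation is an equally valid substitute, and for the outer boundary component you can skip the decay estimate entirely since its flux is minus the sum of the cavity fluxes. The real divergence is in the reconstruction direction. The paper Hodge-decomposes $B=\nabla h+\Gamma$, notes that $j=\Gamma\times\mathcal{N}$ is an exact preimage of the Neumann part, and produces the preimage of $\nabla h$ as $\nabla f_*\times\mathcal{N}$ where $\nabla f_*$ is the Banach fixed point of $\nabla f\mapsto\nabla h+T(\nabla f)$; the entire technical weight sits in \Cref{3C2}, the proof that $T$ is a \emph{strict} contraction on $\mathcal{H}_{\operatorname{ex}}(\Omega)\cap\mathcal{H}_D^{\perp_{L^2(\Omega)}}(\Omega)$, which occupies most of Section 3. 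You instead solve the exterior Neumann problem, take the tangential jump of the glued field as the current, and finish with a Liouville argument; the flux conditions enter exactly as the solvability conditions on the cavities, which is the correct mechanism, and this is essentially the same device the paper deploys later for the \emph{kernel} in \Cref{2T13}. What the paper's route buys is the Neumann-series iteration with an explicit geometric convergence rate (\Cref{2T9}, \Cref{2T12}), which is the point of the article; what yours buys is a shorter, more classical surjectivity proof that bypasses the contraction lemma. The debts you flag are the genuine ones: you need the Biot--Savart field of a $W^{-\frac{1}{2},2}$ current as a global $L^2_{\operatorname{loc}}(\mathbb{R}^3)$ field together with the distributional identities $\operatorname{div}(\operatorname{BS}(j))=0$ and $\operatorname{curl}(\operatorname{BS}(j))=j\,\delta_{\partial\Omega}$ at that regularity, and you need \Cref{3L4} (plus an exterior-domain analogue) to certify that the jump current is surface-divergence-free and lies in $W^{-\frac{1}{2},2}\mathcal{V}_0(\partial\Omega)$; with those supplied, the exterior Neumann estimates give the a priori bound $\|j\|_{W^{-\frac{1}{2},2}(\partial\Omega)}\le c\|B\|_{L^2(\Omega)}$ and the argument closes directly at $L^2$ regularity, without the density-plus-Cauchy-sequence step the paper performs in Step 2 of its proof.
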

We observe first that $\dim\left(\mathcal{H}_D(\Omega)\right)=\#\partial\Omega-1$ where $\#\partial\Omega$ denotes the number of connected components of $\partial\Omega$, c.f. \cite[Hodge Decomposition Theorem]{CDG02}. Further, the kernel of the Biot-Savart operator has been investigated in \cite{G24} with the following findings
\begin{thm}[{\cite[Theorem 5.1, Remark C.2]{G24}}]
	\label{2T3}
	Let $\Omega\subset\mathbb{R}^3$ be a bounded $C^{1,1}$-domain with possibly disconnected boundary. Then $\dim\left(\operatorname{Ker}(\operatorname{BS}_{\partial\Omega})\right)=g(\partial\Omega)$ where $g(\partial\Omega)$ denotes the genus of $\partial\Omega$ and equals the sum of the genera of the connected components of $\partial\Omega$ in case $\partial\Omega$ is disconnected.
\end{thm}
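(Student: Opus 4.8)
The plan is to realise $\operatorname{Ker}(\operatorname{BS}_{\partial\Omega})$ as a space of harmonic fields on the exterior domain whose dimension is forced by the topology of $\partial\Omega$. Write $\Omega^e:=\mathbb R^3\setminus\overline\Omega$ for the exterior component. First I would record the analytic properties of $B:=\operatorname{BS}_{\partial\Omega}(j)$: it is divergence- and curl-free on $\Omega$ and on $\Omega^e$, decays like $|x|^{-2}$ at infinity, its normal trace is continuous across $\partial\Omega$, and its tangential trace jumps by $j$ across $\partial\Omega$ (the classical jump relations for Biot-Savart single-layer potentials). Consequently $j\in\operatorname{Ker}(\operatorname{BS}_{\partial\Omega})$ iff $B\equiv0$ on $\Omega$; continuity of the normal trace then forces the exterior trace $B^{+}$ to satisfy $\mathcal N\cdot B^{+}=0$ on $\partial\Omega$, while the jump relation gives $j=-\mathcal N\times B^{+}$. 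This produces an injection $j\mapsto B^{+}$ from $\operatorname{Ker}(\operatorname{BS}_{\partial\Omega})$ into the space $\mathcal H_N(\Omega^e)$ of div-free, curl-free fields on $\Omega^e$ that are tangent to $\partial\Omega$ and decay at infinity.

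For surjectivity I would run the construction in reverse: given $H\in\mathcal H_N(\Omega^e)$, the field equal to $0$ on $\Omega$ and to $H$ on $\Omega^e$ is div- and curl-free off $\partial\Omega$, decays at infinity, and carries the tangential jump $-\mathcal N\times H=:j$, which is div-free on the surface because $\operatorname{div}_{\partial\Omega}(\mathcal N\times H)=-\mathcal N\cdot\operatorname{curl}(H)=0$. Invoking uniqueness of decaying div-/curl-free fields with a prescribed tangential source on $\partial\Omega$, this field must coincide with $\operatorname{BS}_{\partial\Omega}(j)$; hence $j\in\operatorname{Ker}(\operatorname{BS}_{\partial\Omega})$ and maps back to $H$. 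This yields the isomorphism $\operatorname{Ker}(\operatorname{BS}_{\partial\Omega})\cong\mathcal H_N(\Omega^e)$, reducing the problem to computing $\dim\mathcal H_N(\Omega^e)$.

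To evaluate this dimension I would pass to the one-point compactification $S^3=\mathbb R^3\cup\{\infty\}$ and set $\widehat{\Omega^e}:=\Omega^e\cup\{\infty\}$, a compact $C^{1,1}$-manifold with boundary $\partial\Omega$. The $|x|^{-2}$-decay of $H$ is exactly the condition under which $H$ extends (via the Kelvin transform) to a genuine harmonic Neumann field on $\widehat{\Omega^e}$, so that no fields are gained or lost in the passage. By the Hodge-Morrey-Friedrichs decomposition for low-regularity domains established in Appendix C, the harmonic Neumann fields on a compact $3$-manifold $M$ with boundary are isomorphic to $H^1(M;\mathbb R)$; thus $\dim\mathcal H_N(\Omega^e)=b_1(\widehat{\Omega^e})$. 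For connected $\partial\Omega=\Sigma$ of genus $g$, the splitting $S^3=\Omega\cup_{\Sigma}\widehat{\Omega^e}$ and reduced Mayer-Vietoris (using $H_1(S^3)=H_2(S^3)=0$) give an isomorphism $H_1(\Sigma;\mathbb R)\cong H_1(\Omega;\mathbb R)\oplus H_1(\widehat{\Omega^e};\mathbb R)$, so $b_1(\Omega)+b_1(\widehat{\Omega^e})=2g$; the ``half lives, half dies'' principle (the image of $H_1(\Sigma)\to H_1(M)$ has rank exactly $g$, by Poincar\'e-Lefschetz duality) forces $b_1(M)\ge g$ for each side, whence both equal $g$ and $\dim\operatorname{Ker}(\operatorname{BS}_{\partial\Omega})=g$. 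For disconnected $\partial\Omega$ I would carry out the same Mayer-Vietoris and duality bookkeeping componentwise, where the additional $b_0$- and $H_2$-contributions account for the several boundary components and produce $b_1(\widehat{\Omega^e})=\sum_i g_i=g(\partial\Omega)$.

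The hard part will be the analytic content of the first two paragraphs in the $C^{1,1}$, genuinely weak setting: making the jump relations rigorous and, above all, proving the uniqueness/representation statement that a decaying div-/curl-free field with prescribed tangential source on $\partial\Omega$ is its own Biot-Savart field. A second technical point, corresponding to the cited Remark C.2, is to verify that extending $\operatorname{BS}_{\partial\Omega}$ from $L^2\mathcal V_0(\partial\Omega)$ to its $W^{-\frac12,2}$-completion does not enlarge the kernel; here I would use that the $g$-dimensional kernel already consists of smooth (harmonic) surface fields sitting inside $L^2\mathcal V_0(\partial\Omega)$, and that boundedness of $\operatorname{BS}_{\partial\Omega}$ with respect to the weaker norm, together with density, prevents new kernel elements from appearing.
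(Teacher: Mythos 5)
First, a point of comparison: the paper itself does not prove \Cref{2T3}; it is imported verbatim from \cite[Theorem 5.1, Remark C.2]{G24}. What the present paper does reveal about the proof in \cite{G24} (via \Cref{2T11} and \Cref{2T13}, proved in Section 5) is that the kernel is there parametrized by the \emph{interior} Neumann fields $\mathcal{H}_N(\Omega)$ through $\Gamma\mapsto \operatorname{BS}_{\Omega}(\Gamma)\times\mathcal{N}+\nabla g\times\mathcal{N}$, whose dimension is $g(\partial\Omega)$ by \cite{CDG02}. Your route --- identifying the kernel with decaying harmonic Neumann fields on the complement via the jump relations and then doing topology --- is genuinely different, and its skeleton (jump relations, Liouville-type uniqueness, Mayer--Vietoris plus half-lives-half-dies) is sound in outline. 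However, two of your steps have real gaps.

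The first gap is the compactification step, which fails as stated. A field being divergence- and curl-free means its associated $1$-form is closed and co-closed, and co-closedness is \emph{not} conformally invariant in dimension $3$: under $\tilde g=\lambda^2g$ the Hodge star on $1$-forms scales by $\lambda$, so $d(\tilde\ast\alpha)=d\lambda\wedge\ast\alpha+\lambda\, d(\ast\alpha)$, which does not vanish when $d(\ast\alpha)=0$ unless $\lambda$ is constant. Hence a decaying flat-harmonic field on $\Omega^e$ does \emph{not} correspond, via stereographic projection/Kelvin transform, to a harmonic Neumann field on $\widehat{\Omega^e}\subset S^3$, and the assertion that ``no fields are gained or lost in the passage'' has no argument behind it. The target identity $\dim\{\text{decaying harmonic Neumann fields on }\Omega^e\}=b_1(\Omega^e)$ is true, but it requires genuine exterior-domain Hodge theory: injectivity of the period map from uniqueness of the exterior Neumann problem, and surjectivity by solving exterior boundary value problems (or weighted Sobolev Hodge theory). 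Note also that the paper's appendix only proves a Friedrichs decomposition for bounded domains of $\mathbb{R}^3$; it contains neither Hodge theory on subdomains of $S^3$ nor the identification $\mathcal{H}_N\cong H^1$, so the citation you lean on does not cover this step.

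The second gap is your treatment of the passage from $L^2\mathcal{V}_0(\partial\Omega)$ to its $W^{-\frac{1}{2},2}$-completion. Boundedness with respect to the weaker norm together with density does \emph{not} prevent new kernel elements from appearing: a bounded operator that is injective on a dense subspace can acquire a kernel upon continuous extension (for instance, the orthogonal projection of $L^2(0,1)$ onto the complement of the constants is injective on the dense subspace of polynomials vanishing at $0$, yet its extension kills the constants). Moreover, invoking ``the $g$-dimensional kernel of smooth fields'' at that point is circular, since the dimension and regularity of the kernel are exactly what is being proven. The clean repair is structural: establish the jump relations and the Liouville-type uniqueness statement directly for currents in $W^{-\frac{1}{2},2}\mathcal{V}_0(\partial\Omega)$ (which you correctly flag as the analytic core), so that your isomorphism describes the kernel of the extended operator itself; then no separate density argument is needed at all.
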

As an immediate consequence we obtain the following corollary.
\begin{cor}[Biot-Savart operator is Fredholm]
	\label{2C4}
	Let $\Omega\subset\mathbb{R}^3$ be a bounded $C^{1,1}$-domain. Then
	\begin{gather}
		\nonumber
		\operatorname{BS}_{\partial\Omega}:W^{-\frac{1}{2},2}\mathcal{V}_0(\partial\Omega)\rightarrow L^2\mathcal{H}(\Omega)
	\end{gather}
	is a Fredholm operator of index $\operatorname{ind}(\operatorname{BS}_{\partial\Omega})=g(\partial\Omega)-\#\partial\Omega+1$. In particular, the Fredholm index of the Biot-Savart operator is a topological invariant, i.e. if two $C^{1,1}$-bounded domains are homeomorphic, then the Fredholm-indices of their corresponding Biot-Savart operators coincide.
\end{cor}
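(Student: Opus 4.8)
The plan is to verify the three defining properties of a Fredholm operator directly from Theorems \ref{2T2} and \ref{2T3} and then simply read off the index. The finiteness of the kernel is immediate: Theorem \ref{2T3} gives $\dim\left(\operatorname{Ker}(\operatorname{BS}_{\partial\Omega})\right)=g(\partial\Omega)<\infty$.

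Next I would establish that the range is closed with finite-dimensional cokernel. By Theorem \ref{2T2} the range equals $L^2\mathcal{H}(\Omega)\cap \mathcal{H}_D^{\perp_{L^2(\Omega)}}(\Omega)$, i.e. the $L^2$-orthogonal complement of $\mathcal{H}_D(\Omega)$ taken inside the Hilbert space $L^2\mathcal{H}(\Omega)$. The one point needing a brief justification is that $\mathcal{H}_D(\Omega)\subset L^2\mathcal{H}(\Omega)$: every $\nabla f\in \mathcal{H}_D(\Omega)$ is a gradient, hence weakly curl-free, and satisfies $\Delta f=0$, hence is weakly divergence-free, so it indeed lies in the codomain $L^2\mathcal{H}(\Omega)$. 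Since the orthogonal complement of any subset of a Hilbert space is closed, the range is automatically closed --- this is where passing to the $W^{-\frac{1}{2},2}$-completion pays off, as closedness comes for free rather than via a coercivity estimate.

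For the cokernel I would use that $\mathcal{H}_D(\Omega)$ is finite-dimensional, with $\dim\left(\mathcal{H}_D(\Omega)\right)=\#\partial\Omega-1$ by the Hodge decomposition theorem recalled just after Theorem \ref{2T2}. A finite-dimensional (hence closed) subspace admits the orthogonal splitting $L^2\mathcal{H}(\Omega)=\mathcal{H}_D(\Omega)\oplus \mathcal{H}_D^{\perp_{L^2(\Omega)}}(\Omega)$, so the cokernel $L^2\mathcal{H}(\Omega)/\operatorname{Im}(\operatorname{BS}_{\partial\Omega})$ is canonically isomorphic to $\mathcal{H}_D(\Omega)$ and has dimension $\#\partial\Omega-1$. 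Hence $\operatorname{BS}_{\partial\Omega}$ is Fredholm with
\begin{gather}
\nonumber
\operatorname{ind}(\operatorname{BS}_{\partial\Omega})=\dim\left(\operatorname{Ker}(\operatorname{BS}_{\partial\Omega})\right)-\dim\left(\operatorname{coker}(\operatorname{BS}_{\partial\Omega})\right)=g(\partial\Omega)-(\#\partial\Omega-1).
\end{gather}

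Finally, for the topological invariance I would argue that a homeomorphism $\Omega_1\to\Omega_2$ of the two bounded domains restricts to a homeomorphism of the boundaries $\partial\Omega_1\to\partial\Omega_2$ (invariance of the boundary of a manifold, detectable via local homology), and that both the genus and the number of connected components of a closed surface are topological invariants; thus the two index-determining quantities $g(\partial\Omega)$ and $\#\partial\Omega$ agree for homeomorphic domains, and so do their indices. I do not anticipate a genuine obstacle here: all the analytic content is carried by Theorems \ref{2T2} and \ref{2T3}, and what remains is the standard Hilbert-space bookkeeping above, the only mild care being the verification that $\mathcal{H}_D(\Omega)$ sits inside the codomain $L^2\mathcal{H}(\Omega)$.
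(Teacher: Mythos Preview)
Your proof is correct and is precisely the argument the paper has in mind: the corollary is stated there as an ``immediate consequence'' of Theorems \ref{2T2} and \ref{2T3} together with the remark $\dim\mathcal{H}_D(\Omega)=\#\partial\Omega-1$, and you have simply written out the straightforward Hilbert-space bookkeeping that this entails.
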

\subsubsection{Current reconstruction algorithm}
As discussed in the introduction, we intend to deal with the following inverse problem of relevance in plasma physics: Given a bounded $C^{1,1}$-solid torus (which corresponds to the plasma region) and a $C^{1,1}$-surface $\Sigma$ (corresponding to the CWS) such that $\overline{P}\subset\Omega$ where $\Omega$ is the finite region enclosed by $\Sigma$ and an element $B_T\in L^2\mathcal{H}(P)$, find for given $\epsilon>0$ a $j\in W^{-\frac{1}{2},2}\mathcal{V}_0(\Sigma)$ with $\|\operatorname{BS}_{\Sigma}(j)-B_T\|_{L^2(P)}\leq \epsilon$. In general it is not possible to approximate arbitrary target fields arbitrarily well by elements of the image of the Biot-Savart operator, c.f. \cite[Corollary (iii,a)]{G24}. However, under certain natural assumptions it becomes possible. For a given solid torus $P$, we call a closed $C^1$-curve $\gamma\subset \partial P$ poloidal if it represents a trivial element of the fundamental group when viewed as a curve in $\overline{P}$ but represents a non-trivial element of the fundamental group as a curve in $\partial P$.
\begin{thm}[{\cite[Corollary 3.10 (iii,b)]{G24}}]
	\label{2T5}
	Let $\Sigma$ be a $C^{1,1}$-surface which bounds a solid torus $\Omega$ and let $P\subset \Omega$ be another $C^{1,1}$-solid torus with $\overline{P}\subset \Omega$. Further, suppose that $\Omega$ contains a smooth disc $D$ with $C^1$-boundary such that $\partial D\subset \Sigma$ is a poloidal $C^1$-curve and such that $D\cap \partial P$ is a poloidal $C^1$-curve in $\partial P$, see \Cref{figure1}. Then for every $B_T\in L^2\mathcal{H}(P)$ and every $\epsilon>0$ there is some $j\in L^2\mathcal{V}_0(\Sigma)$ such that $\|\operatorname{BS}_{\Sigma}(j)-B_T\|_{L^2(P)}\leq \epsilon$.
\end{thm}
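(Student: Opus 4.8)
The plan is to reduce the statement to the exact image description in \Cref{2T2} together with a Runge-type approximation theorem for harmonic fields. Since $\Sigma=\partial\Omega$ is connected we have $\#\partial\Omega=1$, hence $\dim\mathcal{H}_D(\Omega)=0$ and $\mathcal{H}_D(\Omega)=\{0\}$, so \Cref{2T2} gives $\operatorname{Im}(\operatorname{BS}_\Sigma)=L^2\mathcal{H}(\Omega)$ \emph{exactly}. Consequently it suffices to prove that the restriction map $L^2\mathcal{H}(\Omega)\to L^2\mathcal{H}(P)$, $B\mapsto B|_P$, has $L^2(P)$-dense image: given such a $B$ with $\|B-B_T\|_{L^2(P)}$ small, \Cref{2T2} yields $j'\in W^{-\frac{1}{2},2}\mathcal{V}_0(\Sigma)$ with $\operatorname{BS}_\Sigma(j')=B$, and approximating $j'$ in $\|\cdot\|_{W^{-\frac{1}{2},2}(\Sigma)}$ by some $j\in L^2\mathcal{V}_0(\Sigma)$ (possible since the latter space is dense by definition of the completion) and invoking the continuity of $\operatorname{BS}_\Sigma$ produces the desired $L^2$-regular current.

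To prove density of the restriction I would argue by duality. Let $w\in L^2\mathcal{H}(P)$ satisfy $\int_P B\cdot w=0$ for every $B\in L^2\mathcal{H}(\Omega)$; by Hahn-Banach it is enough to show $w=0$. Extending $w$ by zero, this says $w$ is $L^2(\Omega)$-orthogonal to $L^2\mathcal{H}(\Omega)$. Testing first against the gradient fields $B=\nabla h$ with $h\in H^1(\Omega)$ harmonic gives $\int_P w\cdot\nabla h=0$; a scalar Runge approximation theorem (applicable because the shell $A:=\Omega\setminus\overline{P}$ is connected) shows that such $h$ are dense among functions harmonic on $P$, whence $\int_P w\cdot\nabla\chi=0$ for all $\chi$ harmonic on $P$. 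Since $\operatorname{div}(w)=0$ implies $\int_P w\cdot\nabla\chi=\int_{\partial P}(w\cdot\mathcal{N})\chi$, this forces the normal trace $w\cdot\mathcal{N}$ to vanish on $\partial P$; combined with $\operatorname{div}(w)=\operatorname{curl}(w)=0$ and the Friedrichs/Hodge decomposition of $L^2\mathcal{H}(P)$ recorded in the appendix, $w$ must be a scalar multiple $w=\beta B_P$ of the (one-dimensional) harmonic Neumann field $B_P$ of the solid torus $P$.

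It then remains to show $\beta=0$, and this is where the hypothesis on the disc $D$ enters. Testing the orthogonality against the field $B_\Omega\in L^2\mathcal{H}(\Omega)$ generating the one-dimensional circulation part of $\Omega$ and integrating by parts over $P$ cut along a meridian disc $\Pi$, the boundary terms drop out because $w\cdot\mathcal{N}=0$ on $\partial P$, leaving $0=\int_P w\cdot B_\Omega=e\,\mathcal{F}_\Pi(w)=e\,\beta\,\mathcal{F}_\Pi(B_P)$, where $e=\oint_{\lambda_P}B_\Omega\cdot dl$ is the circulation of $B_\Omega$ along the core $\lambda_P$ of $P$ and $\mathcal{F}_\Pi(\cdot)$ is the flux through $\Pi$. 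For a solid torus one checks directly that $\mathcal{F}_\Pi(B_P)\neq 0$, so it suffices to prove $e\neq 0$. Here the disc is decisive: $\partial D\subset\Sigma$ being poloidal makes $D$ a meridian disc of $\Omega$, while $D\cap\partial P$ being poloidal in $\partial P$ forces $D$ to meet $\lambda_P$ transversally with intersection number $\pm1$. As a meridian disc of $\Omega$ is Lefschetz-dual to the generator of $H_1(\Omega)\cong\mathbb{Z}$, this intersection number equals the class $[\lambda_P]$, so $[\lambda_P]$ generates $H_1(\Omega)$ and $e=\oint_{\lambda_P}B_\Omega\cdot dl=\oint_{\lambda_\Omega}B_\Omega\cdot dl\neq 0$. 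Hence $\beta=0$ and $w=0$.

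I expect the main obstacle to be the topological and analytic bookkeeping of the last two steps: making the period and flux integrations rigorous at the $C^{1,1}$/$L^2$ regularity available, so that the cut along $\Pi$, the normal traces, and the Friedrichs decomposition are all justified (e.g.\ via the appendix), and extracting $e\neq 0$ cleanly from the disc hypothesis via the intersection-number computation. Conceptually the disc hypothesis is precisely the condition preventing the single remaining degree of freedom $\beta$ from producing a harmonic Neumann field on $P$ that is invisible from $\Omega$; indeed, were $[\lambda_P]=0$ in $H_1(\Omega)$ one would have $e=0$, the field $B_P$ would be $L^2(P)$-orthogonal to $\operatorname{Im}(\operatorname{BS}_\Sigma)|_P$, and density---hence the theorem---would genuinely fail.
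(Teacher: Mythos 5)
The paper does not actually prove this statement: \Cref{2T5} is imported verbatim from \cite[Corollary 3.10 (iii,b)]{G24} and used as a black box, so there is no in-paper proof to compare against. Judged on its own terms, your architecture is sound and non-circular within this paper's logic: \Cref{2T2} is proved in Section 3 without any appeal to \Cref{2T5}, and since $\Sigma$ is connected it gives $\operatorname{Im}(\operatorname{BS}_{\Sigma})=L^2\mathcal{H}(\Omega)$ exactly, so the theorem does reduce to $L^2(P)$-density of $\{B|_P\mid B\in L^2\mathcal{H}(\Omega)\}$ in $L^2\mathcal{H}(P)$ (your passage from $j'\in W^{-\frac{1}{2},2}\mathcal{V}_0(\Sigma)$ to $j\in L^2\mathcal{V}_0(\Sigma)$ via density of the completion and continuity of $\operatorname{BS}_{\Sigma}$ is fine). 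The duality scheme --- kill the normal trace of an orthogonal $w$, conclude $w\in\mathcal{H}_N(P)$ via the Hodge decomposition, then kill the remaining coefficient through the circulation/flux pairing --- is the right one, and your use of the disc hypothesis matches exactly how the paper itself exploits it in the discussion following (\ref{2E13}), where $\int_{\sigma_t}\Gamma\neq 0$ is derived from the fact that a toroidal loop of $P$ is also toroidal in $\Omega$. Note, however, that your route inverts the historical dependency: \cite{G24} could not have argued this way, since the exact image characterisation is new to the present paper.

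The one step you should not wave at is the ``scalar Runge approximation theorem''. What you need is that $\int_P w\cdot\nabla h\,d^3x=0$ for all harmonic $h\in H^1(\Omega)$ forces $\mathcal{N}\cdot w=0$ in $W^{-\frac{1}{2},2}(\partial P)$. Classical Runge theorems give uniform approximation on compacta of functions harmonic on a \emph{neighbourhood} of $\overline{P}$; they neither reach all of $\mathcal{H}_{\operatorname{ex}}(P)$ (whose elements are merely $H^1(P)$ and need not extend past $\partial P$) nor are stated in a topology that controls the pairing $\int_{\partial P}(\mathcal{N}\cdot w)\chi\,d\sigma$, so ``dense among functions harmonic on $P$'' is doing unpaid work. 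The standard repair: test against $h(x)=|x-z|^{-1}$ for $z\in\mathbb{R}^3\setminus\overline{\Omega}$, note that the single layer potential $v(z)=\langle\mathcal{N}\cdot w,\ |\cdot-z|^{-1}\rangle_{\partial P}$ is harmonic on $\mathbb{R}^3\setminus\partial P$ and vanishes on the open set $\mathbb{R}^3\setminus\overline{\Omega}$, hence on all of the connected set $\mathbb{R}^3\setminus\overline{P}$ by real-analyticity, and then use continuity of the single layer potential across $\partial P$, uniqueness for the interior Dirichlet problem and the jump relations to get $\mathcal{N}\cdot w=0$. With that substitution, and the routine verification that $\mathcal{F}_{\Pi}(B_P)\neq 0$ (e.g.\ from the cut identity $\|B_P\|^2_{L^2(P)}=e_P\,\mathcal{F}_{\Pi}(B_P)$, which also justifies your integration by parts over $P$ cut along $\Pi$), your argument closes; the remaining work is the topological bookkeeping you already identified.
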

\begin{figure}[H]
	\hspace{40mm}\includegraphics[width=0.5\textwidth, keepaspectratio]{./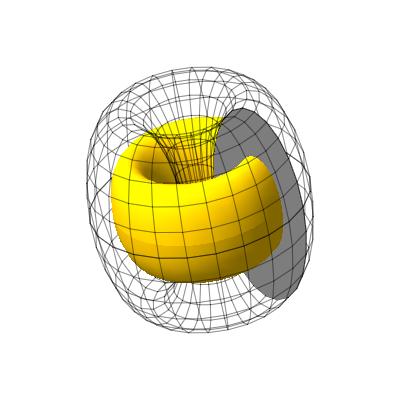}
	\caption{The plasma domain depicted in yellow. The CWS $\Sigma$ depicted by the black grid and the disc $D$ depicted in grey. The disc $D$ bounds a poloidal curve on $\Sigma$ as well as on the boundary of the plasma domain.}
	\label{figure1}
\end{figure}
A rough outline of the algorithm we are about to propose is the following
\begin{enumerate}
	\item Exploit \Cref{2T2} to find some $B\in \operatorname{Im}(\operatorname{BS}_{\Sigma})$ with $\|B-B_T\|_{L^2(P)}\leq \epsilon$.
	\item Knowing that $B\in \operatorname{Im}(\operatorname{BS}_{\Sigma})$ find $j\in W^{-\frac{1}{2},2}\mathcal{V}_0(\Sigma)$ with $\operatorname{BS}_{\Sigma}(j)=B$.
\end{enumerate}
\textbf{\underline{Step 1:}}
\newline
\newline
Define the following two subspaces of $L^2\mathcal{H}(\Omega)$ known as the harmonic Neumann fields and exact harmonic fields respectively
\begin{gather}
	\label{2E3}
	\mathcal{H}_N(\Omega):=\{\Gamma\in L^2(\Omega,\mathbb{R}^3)\mid \operatorname{div}(\Gamma)=0=\operatorname{curl}(\Gamma)\text{, }\Gamma\cdot \mathcal{N}=0\}\text{, }\mathcal{H}_{\operatorname{ex}}(\Omega):=\{\nabla f\mid f\in H^1(\Omega)\text{, }\Delta f=0\}
\end{gather}
where the imposed conditions are understood in the weak sense. In particular, $\Gamma$ being div-free and tangent to $\Sigma$ is equivalent to the statement $\int_{\Omega}\Gamma\cdot \nabla \phi d^3x=0$ for all $\phi\in C^1_c(\mathbb{R}^3)$ (note $\phi$ does not need to be supported in $\Omega$). The relevance of these two spaces is that according to the Hodge-decomposition theorem, c.f. \cite[Hodge decomposition theorem]{CDG02} and \cite[Corollary 3.5.2]{S95} for the smooth setting and \cite[Theorem B.1]{G24} for the $C^{1,1}$-setting, we have the $L^2(\Omega)$-orthogonal decomposition 
\begin{gather}
	\label{2E4}
	L^2\mathcal{H}(\Omega)=\mathcal{H}_{\operatorname{ex}}(\Omega)\oplus \mathcal{H}_N(\Omega).
\end{gather}
We note that $\dim\left(\mathcal{H}_N(\Omega)\right)=g(\partial \Omega)$, c.f. \cite[Hodge decomposition theorem]{CDG02}, and consequently $\dim\left(\mathcal{H}_N(\Omega)\right)=1$ whenever $\Omega$ is a solid torus. On the other hand, it is standard that for any given $\kappa\in W^{\frac{1}{2},2}(\Sigma)$ there exists a unique solution $f\in H^1(\Omega)$ to the following boundary value problem (BVP), 
\begin{gather}
	\label{2E5}
	\Delta f=0\text{ in }\Omega\text{ and }f|_{\partial\Omega}=\kappa.
\end{gather}
The existence can be seen upon extending $\kappa$ to some $k\in H^1(\Omega)$, \cite[Proposition 3.31]{DD12}, and then decomposing $\nabla k$ according to the Hodge-decomposition theorem, \cite[Theorem B.1]{G24}, $\nabla k=\nabla g+\nabla h$ for suitable $g\in W^{1,2}_0(\Omega)$ and $\nabla h\in \mathcal{H}_{\operatorname{ex}}(\Omega)$. We conclude $k=g+h+c$ for some $c\in \mathbb{R}$ and so setting $f:=h+c$ we find $\Delta f=0$ in $\Omega$ and $f|_{\partial\Omega}=\kappa$. Uniqueness of solutions follows from the uniqueness of the solution of the homogenous equation $\Delta f=0$ and $f|_{\partial\Omega}=0$, which is just a consequence of the integration by parts formula $\int_{\Omega} \nabla f \cdot \nabla fd^3x=-\int_{\Omega}f \Delta fd^3x=0$ where we used that $f|_{\partial \Omega}=0$ and $\Delta f=0$. Consequently $f$ is constant and since $f|_{\partial\Omega}=0$ we must have $f=0$ everywhere, proving the uniqueness of solutions. For more general existence and uniqueness results in $C^{1,1}$-domains we refer to \cite[Theorem 2.4.2.5]{Gris85}.
We can therefore obtain a basis of the image of the Biot-Savart operator without the need to work with the Biot-Savart operator itself.
\begin{thm}[Current reconstruction algorithm, Step 1]
	\label{2T6}
	Let $\Sigma\subset\mathbb{R}^3$ be a $C^{1,1}$-surface which bounds a solid torus $\Omega$ and let $P\subset \Omega$ be another $C^{1,1}$-solid torus with $\overline{P}\subset \Omega$. Assume further that $\Omega$ contains a smooth disc $D$ with $C^1$-boundary such that $\partial D\subset \Sigma$ is a poloidal curve and such that $D\cap \partial P$ is a poloidal $C^1$-curve in $\partial P$. Then for any $B_T\in L^2\mathcal{H}(P)$, any $\epsilon>0$, any $\Gamma\in \mathcal{H}_N(\Omega)\setminus \{0\}$ and any basis $\{\kappa_1,\kappa_2,\dots\}$ of $W^{\frac{1}{2},2}(\Sigma)$ there exists some $N\in \mathbb{N}$ and $\alpha_0,\alpha_1,\dots,\alpha_N\in \mathbb{R}$ such that
	\begin{gather}
		\label{2E6}
		\left\|\left(\alpha_0\Gamma+\sum_{k=1}^N\alpha_i\nabla f_i\right)-B_T\right\|_{L^2(P)}\leq \epsilon
	\end{gather}
	where the $f_i$ are the unique solutions to the BVPs (\ref{2E5}).
\end{thm}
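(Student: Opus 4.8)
The plan is to reduce the claim to a density statement for harmonic extensions, using \Cref{2T5} to handle the passage from the target region $P$ to the enclosing solid torus $\Omega$, and the Hodge decomposition \eqref{2E4} to separate the two types of contributions appearing in \eqref{2E6}.

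First I would apply \Cref{2T5}: for the given $B_T\in L^2\mathcal{H}(P)$ there exists $j\in L^2\mathcal{V}_0(\Sigma)$ with $\|\operatorname{BS}_\Sigma(j)-B_T\|_{L^2(P)}\leq \epsilon/2$. Setting $B:=\operatorname{BS}_\Sigma(j)$, \Cref{2D1} guarantees $B\in L^2\mathcal{H}(\Omega)$, so the orthogonal splitting \eqref{2E4} applies; since $\Omega$ is a solid torus we have $\dim(\mathcal{H}_N(\Omega))=g(\partial\Omega)=1$, so the prescribed $\Gamma\neq 0$ spans $\mathcal{H}_N(\Omega)$ and the splitting reads $B=\nabla h+\alpha_0\Gamma$ for a unique $\nabla h\in \mathcal{H}_{\operatorname{ex}}(\Omega)$ and $\alpha_0\in\mathbb{R}$. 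This reduces the problem to approximating the exact part $\nabla h$ by finite combinations $\sum_i\alpha_i\nabla f_i$ in the $L^2(\Omega)$-norm, because $P\subset\Omega$ yields $\|\cdot\|_{L^2(P)}\leq \|\cdot\|_{L^2(\Omega)}$ and a triangle inequality then delivers \eqref{2E6}.

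\textbf{Main obstacle.} The crux is to show that $\operatorname{span}\{\nabla f_i\}_{i\geq 1}$ is $L^2(\Omega)$-dense in $\mathcal{H}_{\operatorname{ex}}(\Omega)$. I would introduce the solution operator $T:W^{\frac{1}{2},2}(\Sigma)\to \mathcal{H}_{\operatorname{ex}}(\Omega)$, $\kappa\mapsto \nabla f_\kappa$, where $f_\kappa\in H^1(\Omega)$ is the unique harmonic extension solving \eqref{2E5}. This map is linear and, by the standard energy estimate for the Dirichlet problem (continuous dependence of the $H^1$-solution on its $W^{\frac{1}{2},2}$-boundary datum, \cite[Theorem 2.4.2.5]{Gris85}), it is bounded into $L^2(\Omega)$. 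It is surjective onto $\mathcal{H}_{\operatorname{ex}}(\Omega)$: for any $\nabla h\in \mathcal{H}_{\operatorname{ex}}(\Omega)$ the function $h$ is harmonic with trace $\kappa:=h|_\Sigma\in W^{\frac{1}{2},2}(\Sigma)$, and the uniqueness part of \eqref{2E5} forces $f_\kappa=h$, whence $T\kappa=\nabla h$. Because $\{\kappa_i\}$ is a basis of $W^{\frac{1}{2},2}(\Sigma)$, finite linear combinations are dense there; continuity of $T$ then sends this dense set to a dense subset of $T\big(W^{\frac{1}{2},2}(\Sigma)\big)=\mathcal{H}_{\operatorname{ex}}(\Omega)$, i.e. $\operatorname{span}\{\nabla f_i\}=T(\operatorname{span}\{\kappa_i\})$ is $L^2(\Omega)$-dense in $\mathcal{H}_{\operatorname{ex}}(\Omega)$.

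Finally I would pick $N$ and $\alpha_1,\dots,\alpha_N$ with $\|\nabla h-\sum_{i=1}^N\alpha_i\nabla f_i\|_{L^2(\Omega)}\leq \epsilon/2$, keeping the $\alpha_0$ obtained from the Hodge splitting. Then, using $B=\alpha_0\Gamma+\nabla h$ and $\|\cdot\|_{L^2(P)}\leq \|\cdot\|_{L^2(\Omega)}$,
\begin{align}
\left\|\left(\alpha_0\Gamma+\sum_{i=1}^N\alpha_i\nabla f_i\right)-B_T\right\|_{L^2(P)}&\leq \|B-B_T\|_{L^2(P)}+\left\|\nabla h-\sum_{i=1}^N\alpha_i\nabla f_i\right\|_{L^2(P)}\nonumber\\
&\leq \frac{\epsilon}{2}+\frac{\epsilon}{2}=\epsilon,\nonumber
\end{align}
which is exactly \eqref{2E6}. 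The only points requiring mild care are the precise reading of ``basis'' (I take it to mean that finite linear combinations are dense, as for a Schauder basis, which is all the argument uses) and the invocation of the correct $C^{1,1}$-regularity statement to justify boundedness of $T$; both are standard.
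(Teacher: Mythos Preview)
Your proposal is correct and follows essentially the same route as the paper: apply \Cref{2T5} to get a field $B=\operatorname{BS}_\Sigma(j)\in L^2\mathcal{H}(\Omega)$ close to $B_T$ on $P$, Hodge-decompose $B=\alpha_0\Gamma+\nabla h$ via \eqref{2E4}, and then approximate $\nabla h$ using density of the span of the $\kappa_i$ in $W^{\frac{1}{2},2}(\Sigma)$ together with continuous dependence of the harmonic extension on its Dirichlet datum. The only cosmetic difference is that the paper packages the continuity of your solution operator $T$ as an elementary inequality (\Cref{3L3}) rather than citing \cite{Gris85}, and your caveat on the reading of ``basis'' matches exactly how the paper uses the hypothesis.
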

\begin{rem}
	\label{2R7}
	\begin{enumerate}
		\item We observe that according to \Cref{2T2} and the fact that $\mathcal{H}_D(\Omega)=\{0\}$, since $\partial \Omega$ is connected, we find that the approximating vector field $B:=\alpha_0\Gamma+\sum_{k=1}^N\alpha_i\nabla f_i$ lies in the image of the Biot-Savart operator.
		\item In order to obtain a basis of $W^{\frac{1}{2},2}(\Sigma)$ one may make use of the fact that $\Sigma$ is a torus. In the realm of plasma physics the CWSs may be modelled as embeddings $\Psi:T^2\rightarrow\Sigma$ where $T^2$ denotes the standard flat $2$-torus viewed as a square with opposite sides identified and usually this embedding is expressed in terms of Fourier coefficients which are used as free-parameters in order to adjust the CWS structure to satisfy desirable features, c.f. \cite[Section 4]{PRS22}.
		From this perspective we see that conceptually if $\Psi$ is a $C^{1,1}$-diffeomorphism we may start with any basis of $W^{\frac{1}{2},2}(T^2)$ and by composition with $\Psi^{-1}$ this will provide us with a basis of $W^{\frac{1}{2},2}(\Sigma)$. For instance, one may start with a standard Fourier basis on $T^2$ satisfying $\Delta \hat{\kappa}_i=\lambda_i\hat{\kappa}_i$ (we compute all quantities on $T^2$ with respect to the flat metric). This basis is known explicitly. According to \cite[Proposition 3.40]{DD12} and its proof the $C^{1,1}(T^2)$-functions are dense in $W^{\frac{1}{2},2}(T^2)$. So given any $\hat{\kappa}\in W^{\frac{1}{2},2}(T^2)$ we may find for given $\epsilon>0$ some $\hat{h}\in C^{1,1}(T^2)$ with $\|\hat{h}-\hat{\kappa}\|_{W^{\frac{1}{2},2}(T^2)}\leq \frac{\epsilon}{2}$. On the other hand, since $\hat{h}\in W^{2,2}(T^2)$ we note that the Fourier series $\hat{h}=\sum_{k=1}^\infty a_k\hat{\kappa}_k$ with $a_k=\int_{T^2}\hat{h}\cdot \hat{\kappa}_k d^2x$ in fact converges in $H^1(T^2)$ and consequently in $W^{\frac{1}{2},2}(T^2)$. We conclude that the span of the $\hat{\kappa}_k$ is dense in $W^{\frac{1}{2},2}(T^2)$ and therefore provides a (non-orthonormal) basis of this space.
	\end{enumerate}
\end{rem}
\textbf{\underline{Step 2:}}
\newline
\newline
	According to the first step we are left with reconstructing a preimage for an element $B\in \operatorname{Im}(\operatorname{BS}_{\Sigma})$. Before we come to the construction itself we recall here the definition of the double layer potential $w_{\Omega}$ and its transpose $w_{\Omega}^{\operatorname{Tr}}$ for a given bounded $C^{1,1}$-domain $\Omega\subset\mathbb{R}^3$ (with possibly disconnected boundary)
	\begin{gather}
		\nonumber
		w_{\Omega}:W^{\frac{1}{2},2}(\partial\Omega)\rightarrow W^{\frac{1}{2},2}(\partial\Omega)\text{, }\phi\mapsto \left(x\mapsto \frac{1}{4\pi}\int_{\partial\Omega}\phi(y)\mathcal{N}(y)\cdot \frac{x-y}{|x-y|^3}d\sigma(y)\right)
		\\
		\label{2E7}
		w^{\operatorname{Tr}}_{\Omega}:W^{-\frac{1}{2},2}(\partial\Omega)\rightarrow W^{-\frac{1}{2},2}(\partial\Omega)\text{, }\phi\mapsto \left(x\mapsto \frac{1}{4\pi}\int_{\partial\Omega}\phi(y)\mathcal{N}(x)\cdot \frac{x-y}{|x-y|^3}d\sigma(y)\right)
	\end{gather}
	which give rise to bounded, linear operators, c.f. \Cref{3L1}. More precisely, we let $W^{-\frac{1}{2},2}(\partial\Omega)$ denote the topological dual space of $W^{\frac{1}{2},2}(\partial\Omega)$ and for given $g\in W^{-\frac{1}{2},2}(\partial\Omega)$ we define $w^{\operatorname{Tr}}_{\Omega}(g)(f):=-g(w_{\Omega}(f))$ for $f\in W^{\frac{1}{2},2}(\partial\Omega)$. Further, we identify $W^{\frac{1}{2},2}(\partial\Omega)$ with a subspace of $W^{-\frac{1}{2},2}(\partial\Omega)$ via $h \mapsto \left(f\mapsto \int_{\partial\Omega}f(x)\cdot h(x)d\sigma(x)\right)$.

	In the upcoming formulation we also make use of the concept of tangential traces, i.e. of the fact that there exists a unique, linear, bounded operator $T:H(\operatorname{div},\Omega)\rightarrow W^{-\frac{1}{2},2}(\partial\Omega)\text{, }X\mapsto \mathcal{N}\cdot X$ such that for every $X\in C^1(\overline{\Omega})$, $T(X)=X|_{\partial\Omega}\cdot \mathcal{N}$ is given by the product of the classical restriction of $X$ to $\partial\Omega$ and the outward unit normal $\mathcal{N}$ on $\partial\Omega$, c.f. \cite[Theorem 2.5]{GR86}. Similarly, one can extend the mapping $X\mapsto X\times \mathcal{N}$ defined in the classical sense on $C^1(\overline{\Omega},\mathbb{R}^3)$ to a bounded, linear map from $H(\operatorname{curl},\Omega)$ into $(W^{-\frac{1}{2},2}(\partial\Omega))^3$, c.f. \cite[Theorem 2.11]{GR86}.
	\newline
	The first result provides an exact preimage for a given element of the Biot-Savart operator.
	\begin{lem}[Constructing preimages]
		\label{2L8}
		Let $\Omega\subset\mathbb{R}^3$ be a bounded $C^{1,1}$-domain with possibly disconnected boundary and let $B\in L^2\mathcal{H}(\Omega)\cap \mathcal{H}^{\perp_{L^2(\Omega)}}_D(\Omega)$. Then $\left(\frac{\operatorname{Id}}{2}+w^{\operatorname{Tr}}_{\Omega}\right)$ admits a bounded, linear inverse and the following boundary value problem (BVP) admits a unique solution $f\in H^1(\Omega)$
			\begin{gather}
				\label{2E8}
				\Delta f=0\text{ in }\Omega\text{, }\mathcal{N}\cdot \nabla f=\left(\frac{\operatorname{Id}}{2}+w^{\operatorname{Tr}}_{\Omega}\right)^{-1}(B\cdot \mathcal{N})\text{ on }\partial\Omega\text{ and }\int_{\Omega}f(x)d^3x=0.
			\end{gather}
			Let $\Gamma\in \mathcal{H}_N(\Omega)$ denote the $L^2(\Omega)$-orthogonal projection of $B$ onto $\mathcal{H}_N(\Omega)$. Then
			\begin{gather}
				\nonumber
				j:=\Gamma\times \mathcal{N}+\nabla f\times \mathcal{N}
			\end{gather}
			is a well-defined element in $W^{-\frac{1}{2},2}\mathcal{V}_0(\partial\Omega)$ with $\operatorname{BS}_{\partial\Omega}(j)=B$.
	\end{lem}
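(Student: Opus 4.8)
The plan is to verify the three assertions in turn, the heart of the matter being a single identity relating the surface Biot--Savart field of a rotated surface gradient to the transposed double layer operator. For the invertibility of $\frac{\operatorname{Id}}{2}+w^{\operatorname{Tr}}_{\Omega}$ I would argue by Fredholm theory. On a $C^{1,1}$-domain the kernel $\mathcal N(x)\cdot\frac{x-y}{|x-y|^3}$ is weakly singular, so $w^{\operatorname{Tr}}_{\Omega}$ is compact on $W^{-\frac12,2}(\partial\Omega)$ and $\frac{\operatorname{Id}}{2}+w^{\operatorname{Tr}}_{\Omega}$ is Fredholm of index $0$; invertibility then reduces to injectivity. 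To prove injectivity I would pass to the single layer potential $S\phi(x):=\frac{1}{4\pi}\int_{\partial\Omega}\frac{\phi(y)}{|x-y|}d\sigma(y)$ of a density $\phi$ in the kernel and invoke the exterior jump relation $\partial^{+}_{\mathcal N}(S\phi)=-\left(\frac{\operatorname{Id}}{2}+w^{\operatorname{Tr}}_{\Omega}\right)\phi$ for the outward normal derivative. Then $S\phi$ solves the exterior Neumann problem with vanishing data and decays at infinity, so it vanishes on the unbounded complementary component; continuity of $S\phi$ across $\partial\Omega$ and uniqueness for the interior Dirichlet problem force $S\phi\equiv 0$, whence $\phi$, being the jump of the normal derivative of $S\phi$, vanishes. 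The delicate point is the disconnected boundary, where the exterior solution is only locally constant on the cavities; there injectivity must be combined with the hypothesis $B\in\mathcal H_D^{\perp}(\Omega)$ to guarantee that $B\cdot\mathcal N$ lies in the range and that the residual ambiguity is immaterial to the final field.

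Next, solvability of (\ref{2E8}): the Neumann problem has a unique solution normalised by $\int_{\Omega}f\,d^3x=0$ precisely when its data integrates to zero over $\partial\Omega$. Since $w_{\Omega}(1)=-\frac12$ on $\partial\Omega$, the transpose obeys $\int_{\partial\Omega}\left(\frac{\operatorname{Id}}{2}+w^{\operatorname{Tr}}_{\Omega}\right)\psi\,d\sigma=\int_{\partial\Omega}\psi\,d\sigma$, so that $\int_{\partial\Omega}\mathcal N\cdot\nabla f\,d\sigma=\int_{\partial\Omega}B\cdot\mathcal N\,d\sigma=\int_{\Omega}\operatorname{div}(B)\,d^3x=0$ and the compatibility condition holds. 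The same bookkeeping shows $j$ is admissible: $\nabla f\times\mathcal N$ equals the rotated surface gradient $\nabla_{\partial\Omega}(f|_{\partial\Omega})\times\mathcal N$ (the normal part of $\nabla f$ drops out), hence is tangential, surface-divergence-free and of class $W^{-\frac12,2}$ because $f|_{\partial\Omega}\in W^{\frac12,2}$, while $\Gamma\times\mathcal N$ is tangential, $L^2$ and surface-divergence-free as $\Gamma$ is curl-free in $\Omega$; thus $j\in W^{-\frac12,2}\mathcal V_0(\partial\Omega)$.

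The crux is to show $\operatorname{BS}_{\partial\Omega}(j)=B$, which I would do by matching the two invariants that pin down an element of $L^2\mathcal H(\Omega)$ under the Hodge decomposition (\ref{2E4}), namely its normal trace and its $\mathcal H_N(\Omega)$-component. Writing $Df(x):=\frac{1}{4\pi}\int_{\partial\Omega}f(y)\mathcal N(y)\cdot\frac{x-y}{|x-y|^3}d\sigma(y)$ for the double layer potential (whose boundary trace is $w_{\Omega}$), the governing identity is
\begin{gather}
\nonumber
\operatorname{BS}_{\partial\Omega}(\nabla_{\partial\Omega}f\times\mathcal N)=-\nabla(Df)\quad\text{on }\mathbb{R}^3\setminus\partial\Omega,
\end{gather}
which I would establish by verifying that $-\nabla(Df)$ is divergence- and curl-free off $\partial\Omega$, decays at infinity, has continuous normal component and tangential jump equal to the Ampère jump of the current $\nabla_{\partial\Omega}f\times\mathcal N$; uniqueness of decaying fields with prescribed jumps then identifies it with the Biot--Savart field. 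Exploiting harmonicity of $f$, Green's representation gives $Df=S(\mathcal N\cdot\nabla f)-f$ in $\Omega$, and combining this with the interior jump relation $\partial^{-}_{\mathcal N}(S\psi)=\left(\frac{\operatorname{Id}}{2}-w^{\operatorname{Tr}}_{\Omega}\right)\psi$ yields
\begin{gather}
\nonumber
\mathcal N\cdot\operatorname{BS}_{\partial\Omega}(\nabla f\times\mathcal N)=\left(\frac{\operatorname{Id}}{2}+w^{\operatorname{Tr}}_{\Omega}\right)(\mathcal N\cdot\nabla f)=B\cdot\mathcal N
\end{gather}
by the choice of Neumann data in (\ref{2E8}). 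An analogous uniqueness argument shows $\operatorname{BS}_{\partial\Omega}(\Gamma\times\mathcal N)=\Gamma$ inside $\Omega$ (and $0$ outside), so this term contributes nothing to the normal trace and supplies exactly the harmonic Neumann field $\Gamma$, while $-\nabla(Df)$ is exact and so has vanishing $\mathcal H_N(\Omega)$-component. Hence $\operatorname{BS}_{\partial\Omega}(j)$ and $B$ share the same normal trace $B\cdot\mathcal N$ and the same $\mathcal H_N(\Omega)$-projection $\Gamma$, and since an element of $L^2\mathcal H(\Omega)$ is determined by these two data, $\operatorname{BS}_{\partial\Omega}(j)=B$.

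I expect the main obstacle to be the governing identity: correctly reducing the surface Biot--Savart operator applied to $\nabla_{\partial\Omega}f\times\mathcal N$ to $-\nabla(Df)$ and tracking all jump relations and signs so that precisely the combination $\frac{\operatorname{Id}}{2}+w^{\operatorname{Tr}}_{\Omega}$ (and not $\frac{\operatorname{Id}}{2}-w^{\operatorname{Tr}}_{\Omega}$) emerges. A secondary difficulty is the injectivity and range analysis for disconnected $\partial\Omega$, where the interplay with the harmonic Dirichlet fields $\mathcal H_D(\Omega)$ requires care.
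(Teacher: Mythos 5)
Your reduction of $\operatorname{BS}_{\partial\Omega}(j)=B$ to matching the normal trace and the $\mathcal{H}_N(\Omega)$-component is sound and essentially equivalent to the paper's argument: the governing identity $\operatorname{BS}_{\partial\Omega}(\nabla f\times\mathcal{N})=-\nabla W_{\Omega}(f)$ is exactly what falls out of combining \cite[Lemma 5.5]{G24} with the relation $T(\nabla f)=\nabla W_{\Omega}(f)+\nabla f$ used in \Cref{3C2}, and your trace computation $\mathcal{N}\cdot\operatorname{BS}_{\partial\Omega}(\nabla f\times\mathcal{N})=\left(\frac{\operatorname{Id}}{2}+w^{\operatorname{Tr}}_{\Omega}\right)(\mathcal{N}\cdot\nabla f)$ is the content of \Cref{4L1}. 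The compatibility check via $w_{\Omega}(1)=-\frac12$, the identification $\operatorname{BS}_{\partial\Omega}(\Gamma\times\mathcal{N})=\Gamma$, and the admissibility of $j$ (which the paper handles through \Cref{3L4}) are all fine.

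The genuine gap is the invertibility of $\frac{\operatorname{Id}}{2}+w^{\operatorname{Tr}}_{\Omega}$. Your Fredholm-plus-potential-theory argument yields injectivity only when $\partial\Omega$ is connected: for disconnected boundary your own exterior-Neumann reasoning shows that the kernel on all of $W^{-\frac12,2}(\partial\Omega)$ is $\{\mathcal{N}\cdot\nabla h\mid\nabla h\in\mathcal{H}_D(\Omega)\}$, of dimension $\#\partial\Omega-1$ (the single layer potential is only locally constant on the cavities), so the operator admits no inverse on the full space and the expression $\left(\frac{\operatorname{Id}}{2}+w^{\operatorname{Tr}}_{\Omega}\right)^{-1}(B\cdot\mathcal{N})$ in (\ref{2E8}) is not yet well defined. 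You flag this, but the proposed fix (``the residual ambiguity is immaterial'') is not carried out; to close it you must either (a) verify that the ambiguity lies precisely in $\{\mathcal{N}\cdot\nabla h\mid\nabla h\in\mathcal{H}_D(\Omega)\}$, whose harmonic extensions satisfy $\nabla h\times\mathcal{N}=0$ and hence leave $j$ unchanged, and additionally show that $B\cdot\mathcal{N}$ lies in the (codimension $\#\partial\Omega-1$) range, or (b) do what the paper does: restrict to the invariant subspace $\mathcal{D}=\left\{\mathcal{N}\cdot\nabla f\mid\nabla f\in\mathcal{H}_{\operatorname{ex}}(\Omega)\cap\mathcal{H}^{\perp_{L^2(\Omega)}}_D(\Omega)\right\}$ and invert there (\Cref{4L2}). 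Note that the paper avoids Fredholm theory entirely: by \Cref{4L1} one has $\frac{\operatorname{Id}}{2}+w^{\operatorname{Tr}}_{\Omega}=\operatorname{Id}-\mathcal{N}\cdot T(\nabla\,\cdot)$ on $\mathcal{D}$, and the strict contraction property of $T$ (\Cref{3C2}) gives injectivity and surjectivity simultaneously via the Banach fixed point theorem. That quantitative contraction, rather than bare invertibility, is also what powers the Neumann series and the exponential rate in \Cref{2T9}, so even a completed Fredholm argument would deliver less than the paper's route.
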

	The main observation now is that the inverse $\left(\frac{\operatorname{Id}}{2}+w^{\operatorname{Tr}}_{\Omega}\right)^{-1}$ admits a Neumann series expression, i.e. we can write $\left(\frac{\operatorname{Id}}{2}+w^{\operatorname{Tr}}_{\Omega}\right)^{-1}(B\cdot \mathcal{N})=\sum_{k=0}^\infty\left(\frac{\operatorname{Id}}{2}-w^{\operatorname{Tr}}_{\Omega}\right)^{k}(B\cdot \mathcal{N})$ which we may truncate to obtain approximate solutions. 
	\begin{thm}[Current reconstruction algorithm, Step 2]
		\label{2T9}
		Let $\Omega\subset\mathbb{R}^3$ be a bounded $C^{1,1}$-domain with possibly disconnected boundary and $B\in L^2\mathcal{H}(\Omega)\cap \mathcal{H}_D^{\perp_{L^2(\Omega)}}(\Omega)$. Define $b_n:=\sum_{k=0}^{n}\left(\frac{\operatorname{Id}}{2}-w^{\operatorname{Tr}}_{\Omega}\right)^{k}(B\cdot \mathcal{N})\in W^{-\frac{1}{2},2}(\partial\Omega)$. Then the following BVPs have unique solutions $f_n\in H^1(\Omega)$
		\begin{gather}
			\label{2E9}
			\Delta f_n=0\text{ in }\Omega\text{, }\mathcal{N}\cdot \nabla f_n=b_n\text{ on }\partial\Omega\text{ and }\int_{\Omega}f_n(x)d^3x=0.
		\end{gather}
		Let $\Gamma\in \mathcal{H}_N(\Omega)$ denote the $L^2(\Omega)$-orthogonal projection of $B$ onto $\mathcal{H}_N(\Omega)$. Then $j_n:=\Gamma\times \mathcal{N}+\nabla f_n\times \mathcal{N}\in W^{-\frac{1}{2},2}\mathcal{V}_0(\partial\Omega)$ for all $n\in \mathbb{N}$ and there exist constants $0<c_1(\partial\Omega),c_2(\partial\Omega)<\infty$ and $0<\lambda(\partial\Omega)<1$ which are independent of $B$ and $n$ such that the following estimates hold true where $j$ is the exact preimage from \Cref{2L8}
		\begin{gather}
			\label{2E10}
			\|j_n-j\|_{W^{-\frac{1}{2},2}(\partial\Omega)}\leq \frac{c_1\lambda^{n+1}}{1-\lambda}\|B\|_{L^2(\Omega)},
		\\
		\label{2E11}
		\|\operatorname{BS}_{\partial\Omega}(j_n)-B\|_{L^2(\Omega)}\leq \frac{c_2\lambda^{n+1}}{1-\lambda}\|B\|_{L^2(\Omega)}.
		\end{gather}
	\end{thm}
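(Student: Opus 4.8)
The argument rests on three pillars: geometric convergence of the Neumann series representing $\bigl(\frac{\operatorname{Id}}{2}+w^{\operatorname{Tr}}_{\Omega}\bigr)^{-1}$, solvability of the Neumann problems \eqref{2E9}, and the observation that every map sending the boundary datum $b_n$ to $j_n$ and then to $\operatorname{BS}_{\partial\Omega}(j_n)$ is a bounded linear operator whose norm is independent of $B$ and $n$. Abbreviating $A:=\frac{\operatorname{Id}}{2}-w^{\operatorname{Tr}}_{\Omega}$ and $g:=B\cdot\mathcal{N}$, we have $b_n=\sum_{k=0}^n A^k g$, while the exact datum underlying the preimage $j$ from \Cref{2L8} is $b_\infty:=\bigl(\frac{\operatorname{Id}}{2}+w^{\operatorname{Tr}}_{\Omega}\bigr)^{-1}g=\sum_{k=0}^\infty A^k g$; the whole estimate is governed by the tail $b_\infty-b_n=\sum_{k=n+1}^\infty A^k g$.

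\textbf{Solvability of \eqref{2E9}.} The nontrivial point is the Neumann compatibility condition. Written weakly, \eqref{2E9} asks for $f_n\in H^1(\Omega)$ with $\int_\Omega\nabla f_n\cdot\nabla\phi\,d^3x=\langle b_n,\phi|_{\partial\Omega}\rangle$ for all $\phi\in H^1(\Omega)$; testing with $\phi\equiv 1$ shows solvability requires $\langle b_n,1\rangle=0$ (the pairing being the $W^{-\frac12,2}$--$W^{\frac12,2}$ duality), after which Lax--Milgram on mean-zero functions together with the Poincar\'e inequality yields a unique $f_n$ with $\int_\Omega f_n=0$. To verify the compatibility I would use the transpose relation $\langle w^{\operatorname{Tr}}_{\Omega}h,f\rangle=-\langle h,w_{\Omega}f\rangle$, which gives $\langle A^k g,1\rangle=\langle g,(\frac{\operatorname{Id}}{2}+w_{\Omega})^k 1\rangle$. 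The decisive input is the Gauss/solid-angle identity $\frac{1}{4\pi}\int_{\partial\Omega}\mathcal{N}(y)\cdot\frac{x-y}{|x-y|^3}d\sigma(y)=-1$ for $x\in\Omega$, which via the jump relations yields $w_{\Omega}(1)=-\frac12$ on $\partial\Omega$, hence $(\frac{\operatorname{Id}}{2}+w_{\Omega})1=0$. Thus only the $k=0$ term survives, and $\langle b_n,1\rangle=\langle g,1\rangle=\langle B\cdot\mathcal{N},1\rangle=\int_\Omega\operatorname{div}(B)\,d^3x=0$ by the generalized divergence theorem and $\operatorname{div}(B)=0$.

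\textbf{The estimates.} Since the same $\Gamma$ enters $j_n$ and the exact preimage $j$ of \Cref{2L8}, we have $j_n-j=(\nabla f_n-\nabla f)\times\mathcal{N}$, and $f_n-f$ solves the Neumann problem with datum $b_n-b_\infty$. Testing this problem against $f_n-f$ and invoking the trace theorem and Poincar\'e gives $\|\nabla(f_n-f)\|_{L^2(\Omega)}\leq C\|b_n-b_\infty\|_{W^{-\frac12,2}(\partial\Omega)}$; as $\nabla(f_n-f)$ is curl-free, boundedness of the tangential trace then yields $\|(\nabla f_n-\nabla f)\times\mathcal{N}\|_{W^{-\frac12,2}(\partial\Omega)}\leq C\|\nabla(f_n-f)\|_{L^2(\Omega)}$. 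Combined with the tail bound $\|b_\infty-b_n\|\leq\sum_{k=n+1}^\infty\lambda^k\|g\|=\frac{\lambda^{n+1}}{1-\lambda}\|g\|$ and with $\|B\cdot\mathcal{N}\|_{W^{-\frac12,2}(\partial\Omega)}\leq C\|B\|_{H(\operatorname{div},\Omega)}=C\|B\|_{L^2(\Omega)}$ (normal-trace continuity and $\operatorname{div}(B)=0$), this proves \eqref{2E10}. Estimate \eqref{2E11} is then immediate from $\operatorname{BS}_{\partial\Omega}(j)=B$ (\Cref{2L8}) and boundedness of $\operatorname{BS}_{\partial\Omega}$: $\|\operatorname{BS}_{\partial\Omega}(j_n)-B\|_{L^2(\Omega)}=\|\operatorname{BS}_{\partial\Omega}(j_n-j)\|_{L^2(\Omega)}\leq\|\operatorname{BS}_{\partial\Omega}\|\,\|j_n-j\|_{W^{-\frac12,2}(\partial\Omega)}$, so $c_2:=\|\operatorname{BS}_{\partial\Omega}\|\,c_1$ works; all constants depend only on $\partial\Omega$.

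\textbf{Main obstacle.} The heart of the matter is producing the rate $\lambda\in(0,1)$, independent of $B$ and $n$, controlling $\|A^k\|$. I would obtain it from the Plemelj symmetrization principle: on a $C^{1,1}$-boundary the single layer operator induces an inner product on $W^{-\frac12,2}(\partial\Omega)$ equivalent to the norm and with respect to which $w^{\operatorname{Tr}}_{\Omega}$ is self-adjoint with spectrum in $[-\frac12,\frac12]$; moreover $w^{\operatorname{Tr}}_{\Omega}$ is compact, so its nonzero spectrum consists of eigenvalues accumulating only at $0$. Invertibility of $\frac{\operatorname{Id}}{2}+w^{\operatorname{Tr}}_{\Omega}$ from \Cref{2L8} removes the endpoint $-\frac12$, so $\operatorname{spec}(A)\subset[0,1)$ with $\sup\operatorname{spec}(A)<1$; by self-adjointness $\lambda:=\|A\|<1$ in the energy norm, and transferring to the $W^{-\frac12,2}$-norm (absorbing the equivalence constants into $c_1$) gives the stated geometric decay. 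I expect the careful justification of this spectral gap---self-adjointness and spectral localization of the double layer operator in the $C^{1,1}$ setting, together with the uniformity of $\lambda$---to be the genuinely delicate step, the remainder reducing to standard elliptic, trace and duality estimates.
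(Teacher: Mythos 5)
Your overall architecture coincides with the paper's: identify the exact Neumann datum as the full Neumann series $\sum_{k=0}^\infty\left(\frac{\operatorname{Id}}{2}-w^{\operatorname{Tr}}_{\Omega}\right)^k(B\cdot\mathcal{N})$, bound the tail by $\frac{\lambda^{n+1}}{1-\lambda}$, control $\|\nabla f_n-\nabla f\|_{L^2(\Omega)}$ by the $W^{-\frac{1}{2},2}(\partial\Omega)$-norm of the Neumann data (this is exactly \Cref{CT1}, which you re-derive by testing the weak formulation), pass to $j_n-j$ via continuity of the twisted tangential trace, and deduce (\ref{2E11}) from (\ref{2E10}) and the boundedness of $\operatorname{BS}_{\partial\Omega}$. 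Your verification of the Neumann compatibility condition via the adjoint identity $\left(\frac{\operatorname{Id}}{2}+w_{\Omega}\right)1=0$ is a correct and rather elegant alternative to the paper's route, which instead shows that $b_n$ lies in $\mathcal{D}=\{\mathcal{N}\cdot\nabla f\mid\nabla f\in\mathcal{H}_{\operatorname{ex}}(\Omega)\cap\mathcal{H}^{\perp_{L^2(\Omega)}}_D(\Omega)\}$ and is therefore automatically mean-free.

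The genuine divergence, and the place where your argument is not yet a proof, is the source of the uniform rate $\lambda<1$. The paper obtains it from \Cref{4C4}, which rests on the identity of \Cref{4L3} linking the single-layer pairing of $w^{\operatorname{Tr}}_{\Omega}-\frac{\operatorname{Id}}{2}$ to the volume operator $T$, and on the contraction property of $T$ on $\mathcal{H}_{\operatorname{ex}}(\Omega)\cap\mathcal{H}^{\perp_{L^2(\Omega)}}_D(\Omega)$ (\Cref{3C2}), which is proved by a Rellich-compactness contradiction argument in the volume; no compactness of the boundary operator is ever used. You instead invoke Plemelj symmetrization plus compactness and spectral localization of the Neumann--Poincar\'{e} operator. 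That route is classical and viable, but as written it has two concrete holes. First, compactness of $w^{\operatorname{Tr}}_{\Omega}$ on $W^{-\frac{1}{2},2}(\partial\Omega)$ for a merely $C^{1,1}$ boundary is asserted, not established. Second, and more seriously, when $\partial\Omega$ is disconnected the endpoint you wish to exclude is genuinely present on the full space: the indicator functions $1_{\partial\Omega_i}$ of the inner boundary components satisfy $w_{\Omega}(1_{\partial\Omega_i})=\frac{1_{\partial\Omega_i}}{2}$, so with the paper's sign convention $w^{\operatorname{Tr}}_{\Omega}$ has eigenvalue $-\frac{1}{2}$, $A:=\frac{\operatorname{Id}}{2}-w^{\operatorname{Tr}}_{\Omega}$ has eigenvalue $1$, and $\frac{\operatorname{Id}}{2}+w^{\operatorname{Tr}}_{\Omega}$ is not invertible on all of $W^{-\frac{1}{2},2}(\partial\Omega)$ (indeed \Cref{2L8} and \Cref{4L2} only assert invertibility on $\mathcal{D}$). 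You therefore cannot "remove the endpoint" globally; you must exhibit an $A$-invariant subspace that contains $B\cdot\mathcal{N}$, is orthogonal (in the single-layer inner product) to these eigenfunctions, and on which the spectral gap holds uniformly --- this is precisely the role played by $\mathcal{D}$ and by the hypothesis $B\in\mathcal{H}^{\perp_{L^2(\Omega)}}_D(\Omega)$, neither of which enters your spectral argument explicitly. Until that invariant-subspace bookkeeping and the compactness claim are supplied, the decisive inequality $\|A^k(B\cdot\mathcal{N})\|\leq\lambda^k\|B\cdot\mathcal{N}\|$ with a $\lambda<1$ independent of $B$ remains unproved; everything else in your proposal is sound.
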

\textbf{\underline{End of the algorithm}}
\newline
\newline
\textbf{\underline{Shape complexity of constructed currents:}}
\newline
\newline
We now explain why the currents $j_n$ and $j$ obtained from \Cref{2L8} and \Cref{2T9} have in a sense the "simplest" form possible. To this end we focus on solid $C^{1,1}$-tori $\Omega\subset\mathbb{R}^3$. In recent work \cite{G24SurfaceHelicityArXiv} the notions of asymptotic toroidal and poloidal windings, \cite[Definition 2.9]{G24SurfaceHelicityArXiv}, have been studied on toroidal surfaces within the context of plasma physics. We recall here the necessary definitions. Given a bounded $C^{1,1}$-solid torus $\Omega\subset\mathbb{R}^3$ we can fix two closed curves $\sigma_p,\sigma_t$ on $\Sigma:=\partial\Omega$ such that $\sigma_p$ defines a non-trivial element of the fundamental group of $\Sigma$ but is the trivial element of the fundamental group of $\overline{\Omega}$ and such that $\sigma_t$ defines a non-trivial element of the fundamental group of $\Sigma$ but is contractible within $\mathbb{R}^3\setminus \Omega$. The curves $\sigma_p$ and $\sigma_t$ are called poloidal and toroidal respectively. One can then define the space of harmonic fields on $\Sigma$ by $\mathcal{H}(\Sigma):=\{\gamma\in L^2\mathcal{V}(\Sigma)\mid \operatorname{div}_{\Sigma}(\gamma)=0=\operatorname{curl}_{\Sigma}(\gamma)\}$ which is $2$-dimensional by standard Hodge theory. There is then a unique basis $\gamma_p,\gamma_t\in \mathcal{H}(\Sigma)$ determined by the equations
\begin{gather}
	\nonumber
	\int_{\sigma_p}\gamma_t=0=\int_{\sigma_t}\gamma_p\text{ and }\int_{\sigma_t}\gamma_t=1=\int_{\sigma_p}\gamma_p
\end{gather}
and this basis is independent of the chosen poloidal and toroidal curves (with the exception that a change of orientation of the curves will result in an additional minus sign of the corresponding basis elements). If $\sigma$ is any other closed $C^1$-curve on $\Sigma$, then we can express it as a concatenation $\sigma=P\sigma_p\oplus Q\sigma_t$ since $\sigma_p$ and $\sigma_t$ generate the first fundamental group of $\Sigma$ and where $P$ corresponds to the amount of poloidal windings and $Q$ corresponds to the amount of toroidal windings within one period of $\sigma$. One can then show that $\int_{\sigma}\gamma_t=Q$ and $\int_{\sigma}\gamma_p=P$ and that we have the identities $\lim_{T\rightarrow\infty}\frac{1}{T}\int_{\sigma[0,T]}\gamma_t=\frac{Q}{\tau}$, $\lim_{T\rightarrow\infty}\frac{1}{T}\int_{\sigma[0,T]}\gamma_p=\frac{P}{\tau}$ where $\tau$ is the period of $\sigma$ and $\sigma[0,T]$ denotes the (possibly non-closed) curve $\sigma:[0,T]\rightarrow \Sigma$. The key observation is that for the (possibly non-periodic) integral curves $\sigma_x$ starting at some point $x\in \Sigma$ of any div-free field $j\in \mathcal{V}(\Sigma)$ the limit $\hat{q}(x):=\lim_{T\rightarrow\infty}\frac{1}{T}\int_{\sigma_x[0,T]}\gamma_t$ exists for a.e. $x\in \Sigma$ and is integrable. In correspondence with the case of closed curves we may interpret $\hat{q}(x)$ as the weighted asymptotic toroidal windings of the field line $\sigma_x$ of $j$ starting at $x$. The average of the toroidal windings of the integral curves of $j$ can then be expressed by the following integral, c.f. \cite[Lemma 2.12]{G24SurfaceHelicityArXiv},
\begin{gather}
	\nonumber
	\overline{Q}(j):=\frac{1}{|\Sigma|}\int_{\Sigma}\hat{q}(x)d\sigma(x)=\frac{1}{|\Sigma|}\int_{\Sigma}j(x)\cdot \gamma_t(x)d\sigma(x)
\end{gather}
where $|\Sigma|$ denotes the area of $\Sigma$. Loosely speaking, the fact that $\overline{Q}(j)\neq 0$ tells us that on average the field lines of $j$ will wind toroidally along $\Sigma$. In contrast, $\overline{Q}(j)=0$ implies that the field lines of $j$ on average wind in poloidal direction along $\Sigma$, even though one has to be careful with this interpretation since it might happen that the field lines of $j$ wind in opposite toroidal directions so that the average toroidal contributions cancel each other, see \cite[Figure 4]{G24SurfaceHelicityArXiv} for an example. Nonetheless we may interpret the condition $\overline{Q}(j)=0$ that the field lines of $j$ tend to be more poloidal and hence $j$ having a "simple" shape, in contrast to the situation $\overline{Q}(j)\neq 0$ where we expect on average to observe toroidal windings of the field lines of the current distribution.
For less regular currents we can still make the following definition for any given $C^{1,1}$-surface $\Sigma\subset\mathbb{R}^3$
\begin{gather}
	\label{2E12}
	\overline{Q}:W^{-\frac{1}{2},2}\mathcal{V}_0(\Sigma)\rightarrow\mathbb{R}\text{, }j\mapsto \frac{1}{|\Sigma|}\int_{\Sigma}j(x)\cdot \gamma_t(x)d\sigma(x)
\end{gather}
which defines a linear, bounded operator since $\mathcal{H}(\Sigma)\subset W^{\frac{1}{2},2}\mathcal{V}(\Sigma)$. We have the following result regarding the current complexity.
\begin{prop}[Shape complexity of constructed currents]
	\label{2P10}
	Let $\Omega\subset \mathbb{R}^3$ be a $C^{1,1}$-solid torus and $\Sigma:=\partial\Omega$. Then for every $\Gamma\in \mathcal{H}_N(\Omega)$ and $\nabla f\in \mathcal{H}_{\operatorname{ex}}(\Omega)$ we have $J:=\Gamma\times \mathcal{N}+\nabla f\times \mathcal{N}\in W^{-\frac{1}{2},2}\mathcal{V}_0(\Sigma)$ and $\overline{Q}(J)=0$. In particular, for given $B\in L^2\mathcal{H}(\Omega)$, the exact preimage $j$ from \Cref{2L8} and the approximating currents $j_n$ defined in \Cref{2T9} all satisfy $\overline{Q}(j)=0=\overline{Q}(j_n)$ for all $n$.
\end{prop}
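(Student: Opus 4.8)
The plan is to reduce the statement to two facts about the surface Hodge decomposition on $\Sigma$: that the tangential trace of any interior div- and curl-free field is surface-curl-free, and that the harmonic part of this trace is forced to be a multiple of $\gamma_t$ by a vanishing-period argument. First I would record that the membership of $J$ in $W^{-\frac{1}{2},2}\mathcal{V}_0(\Sigma)$ is already contained in \Cref{2L8}: writing $V:=\Gamma+\nabla f\in L^2\mathcal{H}(\Omega)$, the field $J=V\times\mathcal{N}$ is tangent to $\Sigma$ since $(V\times\mathcal{N})\cdot\mathcal{N}=0$, and it is surface-div-free because $\operatorname{div}_{\Sigma}(V\times\mathcal{N})$ agrees up to sign with $\operatorname{curl}_{\Sigma}(V_{\operatorname{tan}})=(\operatorname{curl}(V)\cdot\mathcal{N})|_{\Sigma}=0$, using $\operatorname{curl}(V)=0$, where $V_{\operatorname{tan}}$ denotes the tangential trace of $V$. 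Using that $\gamma_t\times\mathcal{N}$ is tangent together with the pointwise identity $(V\times\mathcal{N})\cdot\gamma_t=-V\cdot(\gamma_t\times\mathcal{N})=-V_{\operatorname{tan}}\cdot(\gamma_t\times\mathcal{N})$, the quantity to be computed becomes
\[
\overline{Q}(J)=-\frac{1}{|\Sigma|}\int_{\Sigma}V_{\operatorname{tan}}\cdot(\gamma_t\times\mathcal{N})\,d\sigma.
\]

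The core of the argument is the decomposition of $V_{\operatorname{tan}}$ on the closed surface $\Sigma$. Since $\operatorname{curl}(V)=0$, the trace $V_{\operatorname{tan}}$ is surface-curl-free, so its surface Hodge decomposition has no co-exact part, i.e. $V_{\operatorname{tan}}=\nabla_{\Sigma}\varphi+h$ with $h\in\mathcal{H}(\Sigma)$. I would then identify $h$ through its periods: the exact term has vanishing period over every cycle and the co-exact term is absent, so $\int_{\sigma}V_{\operatorname{tan}}=\int_{\sigma}h$ for $\sigma\in\{\sigma_p,\sigma_t\}$. Because $f$ is single-valued and $\Gamma$ is curl-free while $\sigma_p$ is contractible in $\overline{\Omega}$, the poloidal period vanishes, $\int_{\sigma_p}V=0$; hence $h$ has zero poloidal period and therefore $h=c\gamma_t$ with $c=\int_{\sigma_t}V$.

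It then remains to evaluate the two contributions. The harmonic one vanishes pointwise, $\gamma_t\cdot(\gamma_t\times\mathcal{N})=0$, so $c\int_{\Sigma}\gamma_t\cdot(\gamma_t\times\mathcal{N})\,d\sigma=0$. For the exact one I would use that the rotation $X\mapsto X\times\mathcal{N}$ is the surface Hodge star and hence maps $\mathcal{H}(\Sigma)$ into itself, so that $\gamma_t\times\mathcal{N}$ is again harmonic on $\Sigma$, in particular surface-div-free; integrating by parts gives $\int_{\Sigma}\nabla_{\Sigma}\varphi\cdot(\gamma_t\times\mathcal{N})\,d\sigma=-\int_{\Sigma}\varphi\,\operatorname{div}_{\Sigma}(\gamma_t\times\mathcal{N})\,d\sigma=0$. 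Combining the two yields $\overline{Q}(J)=0$. The ``in particular'' statement is then immediate: the preimage $j$ of \Cref{2L8} and the iterates $j_n$ of \Cref{2T9} are exactly of the form $\Gamma\times\mathcal{N}+\nabla f\times\mathcal{N}$ with $\Gamma\in\mathcal{H}_N(\Omega)$ and harmonic potentials $f,f_n$, so each falls within the scope of the statement just proved.

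The main obstacle I anticipate is the temptation to argue that $\Gamma|_{\Sigma}$ is itself surface-harmonic and hence a multiple of $\gamma_t$: this is false, since the surface divergence of the trace of a tangent interior field carries a normal-derivative correction and need not vanish. The correct mechanism is that only the harmonic part of $V_{\operatorname{tan}}$ survives the pairing, and that part is pinned down by the vanishing poloidal period rather than by the trace being harmonic. A secondary technical point is to justify the surface Hodge decomposition and the integration by parts at $C^{1,1}$ regularity: here I would invoke the $W^{-\frac{1}{2},2}/W^{\frac{1}{2},2}$ trace theorems cited before \Cref{2L8} so that all pairings are read as the corresponding duality pairings, together with the continuity of $\overline{Q}$, reducing the identities to smooth representatives where the surface calculus is classical.
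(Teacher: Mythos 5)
Your argument is correct and is essentially the paper's proof in dual form: the paper applies the surface Hodge decomposition to the trace $\widetilde{\Gamma}|_{\Sigma}$ of a reference Neumann field (pinning its harmonic part to $\gamma_t$ via Stokes on the disc bounded by $\sigma_p$) and then pairs $J$ against $\widetilde{\Gamma}$, whereas you decompose $V_{\operatorname{tan}}=(\Gamma+\nabla f)_{\operatorname{tan}}$ and pair against $\gamma_t\times\mathcal{N}$ --- in both cases the work is done by the closedness of the tangential trace of a curl-free field, the vanishing poloidal period, and the orthogonality of exact, co-exact and harmonic pieces. One small correction: the membership $J\in W^{-\frac{1}{2},2}\mathcal{V}_0(\Sigma)$ is supplied by \Cref{3L4} rather than \Cref{2L8}, which is precisely the tangency/surface-divergence-free argument (with the approximation by smooth representatives) that you sketch.
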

\Cref{2P10} therefore tells us that our reconstructed currents have in a sense the simplest possible shape. Working with such currents can be of importance when one wishes to find simple coil designs since $\overline{Q}(j)\neq 0$ implies that $j$ has on average non-zero toroidal windings and therefore one needs to use coils which wind toroidally around the plasma, whereas $\overline{Q}(j)=0$ tells us that we may be able to approximate $j$ well by coils which wind only poloidally around the plasma.
\newline
\newline
\textbf{\underline{Comparison to previous current reconstruction algorithms:}}
\newline
\newline
Let us compare our reconstruction procedure with other procedures studied in the literature. We focus here on a reconstruction procedure which was for instance studied in \cite{PRS22}. Let $\Omega\subset \mathbb{R}^3$ be a $C^{1,1}$-solid torus which corresponds to the finite region bounded by the CWS $\Sigma:=\partial\Omega$ and let $P$ be another $C^{1,1}$-solid torus with $\overline{P}\subset\Omega$ which corresponds to the plasma region. Given a target magnetic field $B_T\in L^2\mathcal{H}(P)$ one wishes to solve the minimisation problem
\begin{gather}
	\nonumber
	\arg\min_{j}\|\operatorname{BS}_{\Sigma}(j)-B_T\|^2_{L^2(P)}.
\end{gather}
The issue is that this minimisation problem does not admit a solution and so one introduces a regularising parameter $\lambda>0$ and considers the modified minimisation problem
\begin{gather}
	\label{2E13}
	\min_{j\in L^2\mathcal{V}_0(\Sigma)}\left(\|\operatorname{BS}_{\Sigma}(j)-B_T\|^2_{L^2(P)}+\lambda\|j\|^2_{L^2(\Sigma)}\right).
\end{gather}
It follows from standard variational techniques and the convexity of the functional involved that for every $\lambda>0$ there exists a unique current distribution $j_{\lambda}\in L^2\mathcal{V}_0(\Sigma)$ which realises the global minimum in (\ref{2E13}). It has been then shown in \cite[Corollary 4.3]{G24} that under the same assumptions on $\Omega$ and $P$ as in \Cref{2T6} we have $\lim_{\lambda\searrow 0}\|\operatorname{BS}_{\Sigma}(j_{\lambda})-B_T\|_{L^2(P)}=0$. Consequently, the minimisers $j_{\lambda}$ may serve as our desired reconstructed currents.
\newline
\newline
\underline{\textit{Comparison from a theoretical point of view:}} By construction of the regularisation procedure, the minimising currents $j_{\lambda}$ of (\ref{2E13}) are necessarily $L^2(\Sigma)$-orthogonal to $\operatorname{Ker}(\operatorname{BS}_{\Sigma})$. On the other hand, we note that the condition $\overline{Q}(j)=0$ for a $j\in L^2\mathcal{V}_0(\Sigma)$ is equivalent to the statement that $j$ is $L^2(\Sigma)$-orthogonal to the $1$-dimensional span of $\gamma_t\in \mathcal{H}(\Sigma)$ defined by the relations $\int_{\sigma_p}\gamma_t=0$ and $\int_{\sigma_t}\gamma_t=1$ for fixed poloidal and toroidal curves $\sigma_p$ and $\sigma_t$ respectively. The kernel of $\operatorname{BS}_{\Sigma}$ does not need to coincide with the span of $\gamma_t$ and therefore these two orthogonality conditions are generally distinct.

So if one wishes to obtain approximating currents which minimise the average (squared) current strength one should preferably use the approach via the minimisation problem (\ref{2E13}). If, on the other hand, one is willing to accept potentially stronger currents at the expense of simplicity of the coil design it would be preferable to follow the current reconstruction algorithm proposed in the present manuscript or impose the additional constraint $\overline{Q}(j)=0$ in the minimisation problem (\ref{2E13}). The convergence of $\|\operatorname{BS}_{\Sigma}(j_{\lambda})-B_T\|_{L^2(P)}\rightarrow 0$ under the additional constraint, $\overline{Q}(j)=0$, has been established recently \cite{G24SurfaceHelicityArXiv}.
\newline
\newline
\underline{\textit{Comparison from a practical point of view:}} We note that in practice in order to find approximations to the minimisers of the functional in (\ref{2E13}) one has to compute a volume integral involving the Biot-Savart operator which is costly. Therefore, it is customary, when doing numerical computations, to work instead on the plasma boundary $\partial P$ and instead try to minimise the norm $\|\mathcal{N}\cdot \operatorname{BS}_{\Sigma}(j)-\mathcal{N}\cdot B_T\|_{L^2(\partial P)}$ and prescribe the total poloidal current of $j$, see \cite[Section 4.1.2]{PRS22}. To be more precise if we fix a toroidal loop $\sigma_t$ within the plasma domain $P$ we may define $I_p:=\int_{\sigma_t}B_T$ where $B_T\in L^2\mathcal{H}(P)$ is our given target field. Every current $j\in L^2\mathcal{V}_0(\Sigma)$ can then be expressed according to the Hodge-decomposition theorem in the form $j=\nabla f\times \mathcal{N}+\alpha\gamma_t\times \mathcal{N}+\beta \gamma_p\times \mathcal{N}$, where $\gamma_p,\gamma_t$ as usual form a basis of $\mathcal{H}(\Sigma)$ induced by, but independent (except for the orientation) of the specific choice of, a poloidal and toroidal curve within $\Sigma$. Further, we note that the restriction of any $\Gamma\in \mathcal{H}_N(\Omega)$ to $\Sigma$ gives rise to a closed vector-field because $\operatorname{curl}(\Gamma)\parallel \Sigma$, i.e. we can write $\Gamma|_{\Sigma}=\nabla_{\Sigma}\kappa+\gamma$ for a suitable function $\kappa$ and $\gamma\in \mathcal{H}(\Sigma)$. Even more, $\gamma\neq 0$, because otherwise we must have $\Gamma=0$ throughout $\Omega$ which can be seen by an integration by parts and the fact that each $\Gamma\in \mathcal{H}_N(\Omega)$ admits a vector potential. Further, $\int_{\sigma_t}\Gamma\neq 0$ assuming that the toroidal loop within $P$ defines also a toroidal loop within $\Omega$, see again \Cref{figure1}. Finally, $\int_{\sigma_p}\Gamma=0$ because in our applications we assume that we can choose $\sigma_p$ such that it bounds a disc within $\Omega$ and hence we may apply Stokes' theorem. We conclude that $\gamma$ must be a multiple of $\gamma_t$ and upon scaling $\Gamma$ appropriately we may achieve that $\Gamma=\nabla_{\Sigma}\kappa+\gamma_t$. We can therefore equivalently express $j=\nabla \tilde{f}\times \mathcal{N}+\alpha \Gamma\times \mathcal{N}+\beta\gamma_p\times \mathcal{N}$ where $\alpha$ and $\beta$ are as in the previous expression for $j$ and $\tilde{f}$ is a possibly modified function. We observe that on the one hand $\overline{Q}(j)=\frac{\beta}{|\Sigma|}\int_{\Sigma}\gamma_t\cdot (\gamma_p\times \mathcal{N})d\sigma$. On the other hand we observe that $\gamma_p\times \mathcal{N}\in \mathcal{H}(\Sigma)$ and that $\gamma_p\times \mathcal{N}$ is linearly independent of $\gamma_p$ so that $\int_{\Sigma}\gamma_t\cdot (\gamma_p\times \mathcal{N})d\sigma\neq 0$. Thus, $\overline{Q}(j)=0\Leftrightarrow \beta=0$. It is discussed in \cite[Section 4.1.3 Lemma 10]{PRS22} that $\alpha$ and $\beta$ correspond to the poloidal and toroidal flux of $j$ respectively and that setting the toroidal flux to zero should lead to more poloidal field lines. As we can see, the reasoning in \cite{PRS22} is consistent with our reasoning involving the quantity $\overline{Q}$ defined in (\ref{2E12}). Further, it is argued \cite[Section 4.1.2 \& 4.1.3]{PRS22} that one should set $\alpha=I_p$ to obtain good approximations. This can be justified as follows. One can show that the $L^2(\Omega)$-orthogonal projection of $\operatorname{BS}_{\Sigma}(j)$ onto the space $\mathcal{H}_N(\Omega)$ is given by $\alpha\Gamma$ whenever $j=\nabla \tilde{f}\times \mathcal{N}+\alpha \Gamma\times \mathcal{N}+\beta\gamma_p\times \mathcal{N}$ and we can find a toroidal loop within $\Sigma$ which bounds a $C^{1,1}$-surface outside of $\Omega$. We therefore find $\operatorname{BS}_{\Sigma}(j)=\nabla \psi+\alpha\Gamma$ for a suitable $\nabla\psi\in \mathcal{H}_{\operatorname{ex}}(\Omega)$. We observe that $\nabla\psi|_P\in \mathcal{H}_{\operatorname{ex}}(P)$ and that the restriction $\Gamma|_{P}$ will decompose further according to the Hodge-decomposition theorem into a $\Gamma|_P=\widetilde{\Gamma}+\nabla \tilde{\psi}$ for suitable $\nabla \tilde{\psi}\in \mathcal{H}_{\operatorname{ex}}(P)$ and $\widetilde{\Gamma}\in \mathcal{H}_N(P)$. Further, the space $\mathcal{H}_N(P)$ is $1$-dimensional and so we can fix some element $\hat{\Gamma}$ in it uniquely determined by $\int_{\sigma_t}\hat{\Gamma}=1$. We can then express $\widetilde{\Gamma}=\left(\int_{\sigma_t}\widetilde{\Gamma}\right)\hat{\Gamma}=\left(\int_{\sigma_t}\Gamma\right)\hat{\Gamma}=\hat{\Gamma}$ where we used that $\widetilde{\Gamma}$ and $\Gamma$ differ only by a gradient field, that $\sigma_t$ is by assumption also toroidal within $\Omega$ and hence homotopic to a toroidal loop on $\Sigma$ and that $\Gamma$ integrates to $1$ by our chosen scaling along any such loop. Letting $\Gamma_T\in \mathcal{H}_N(P)$ be the projection of $B_T$ onto $\mathcal{H}_N(P)$ we have $\Gamma_T=\left(\int_{\sigma_t}\Gamma_T\right)\hat{\Gamma}=I_p\hat{\Gamma}$ because $\Gamma_T$ and $B_T$ differ only by a gradient field. We conclude that
\begin{gather}
	\nonumber
	\|\operatorname{BS}_{\Sigma}(j)-B_T\|^2_{L^2(P)}\geq (\alpha-I_p)^2\|\hat{\Gamma}\|^2_{L^2(P)}
\end{gather}
and therefore we must pick $\alpha=I_p$ to be able to obtain a good approximation.
In order to obtain simpler coil designs and to ease computations the following modified minimisation procedure was numerically implemented in \cite[Section 4]{PRS22}
\begin{gather}
	\label{2E14}
	\min_{f\in H^1(\Omega)}\left(\|\mathcal{N}\cdot \operatorname{BS}_{\Sigma}(\nabla f\times \mathcal{N}+I_p\gamma_t\times \mathcal{N})-\mathcal{N}\cdot B_T\|^2_{L^2(\partial P)}+\lambda \|\nabla f\times \mathcal{N}+I_p\gamma_t\times \mathcal{N}\|^2_{L^2(\Sigma)}\right).
\end{gather}
We note that the $L^2(\partial P)$-norm of the normal traces dominates the $L^2(P)$-norm of the underlying vector fields once the poloidal current is fixed, \cite[Lemma 11]{PRS22}. This is not true the other way around, see \Cref{AppC}. From a mathematical perspective it is more natural to replace the $\|\cdot\|_{L^2(\partial P)}$-norm in (\ref{2E14}) by the $W^{-\frac{1}{2},2}(\partial P)$-norm since this turns out to be a norm which is equivalent to the $\|\cdot\|_{L^2(P)}$-norm, c.f. \Cref{AppC}. Therefore one can adapt the reasoning of \cite[Section 4]{G24} in order to show that once again arbitrary precision may be achieved if the $L^2(\partial P)$-norm in (\ref{2E14}) is replaced by the $W^{-\frac{1}{2},2}(\partial\Omega)$-norm. In addition, the normal trace $\mathcal{N}\cdot B_T$ will in general only be an element of $W^{-\frac{1}{2},2}(\partial P)$ if we allow arbitrary $B_T\in L^2\mathcal{H}(P)$ as target fields. From the point of view of applications, one should however expect to face more regular target fields $B_T$ which admit more regular traces.

If we compare the minimisation procedure (\ref{2E14}) with our proposed procedure \Cref{2T6} \& \Cref{2T9} we see that in both cases we obtain currents satisfying $\overline{Q}(j)=0$ and so the simplicity of the shape is incorporated in both procedures. The main difference between these two approaches from a theoretical point of view is that there is no a-priori guarantee that we can achieve an arbitrary small error as $\lambda\searrow 0$ in (\ref{2E14}) while (theoretically) arbitrary precision may be achieved in the algorithm \Cref{2T6} \& \Cref{2T9}. Note also that recently, c.f. \cite[Theorem 2.35]{G24SurfaceHelicityArXiv}, it has been shown that if we add the additional constraint $\overline{Q}(j)=0$ in (\ref{2E13}) we obtain a corresponding sequence of minimisers $j^0_{\lambda}$ satisfying $\overline{Q}(j^0_{\lambda})=0$ and such that $\|\operatorname{BS}_{\Sigma}(j^0_{\lambda})-B_T\|_{L^2(P)}\rightarrow0$ as $\lambda\searrow 0$.
\newline
\newline
\underline{\textit{Comparison from a computational point of view:}} In both minimisation problems (\ref{2E13}) and (\ref{2E14}) we are performing integrations over $\Sigma$ and either over a volume $P$ or a surface $\partial P$. We note that the integration over $\Sigma$ does not require computing singular integral kernels, whereas the integrations over $P$ and $\partial P$ require the computation of $\operatorname{BS}_{\Sigma}(j)$ which involves a kernel of the form $\frac{x-y}{|x-y|^3}$ with $x\in P\cup \partial P$ and $y\in \Sigma$. These computations can be costly but we note also that $P$ as well as $\partial P$ have a positive distance to $\Sigma$ so that strictly speaking the kernels do not become singular. The downside of both procedures is that no convergence rate is a priori known and so it is not known how small $\lambda$ needs to be chosen to achieve good results, but see \cite[Section 4]{PRS22} for some numerical results regarding (\ref{2E14}).

In the first step, \Cref{2T6}, of the proposed algorithm in the present work we do not need to compute any integrals involving the Biot-Savart operator but instead need to solve boundary value problems. This appears to be computationally simpler, even though for this part of the algorithm there is also no a priori known convergence rate and therefore it is not clear which of the methods leads faster to good approximations. We want to point out that while the formulation in \Cref{2T6} proposes to work on a volume $P$, we could similarly as is done in the reformulation (\ref{2E14}) exchange the volume integral by a surface integral, c.f. \Cref{AppC}, and identify the coefficient $\alpha_0$ with the toroidal circulation of $B_T$. The second part of the algorithm, \Cref{2T9}, includes the computation of singular integral kernels. In contrast to the situation in (\ref{2E13}) and (\ref{2E14}) the kernels in \Cref{2T9} really become singular since both $x$ and $y$ in this scenario run through $\Sigma$. It is therefore more difficult to handle these integrals. However, the main feature of the second part of the proposed algorithm is that we have an exponential a priori convergence rate so that one expects that only a few iterations should lead to reasonable results. In conclusion, the approaches in \cite{PRS22} as well as the algorithm proposed here have their advantages and disadvantages.
\subsubsection{Kernel reconstruction algorithm}
Before we formulate our algorithm we first recall that $\mathcal{H}_N(\Omega)$ denotes the space of square integrable fields which are div-free, curl-free within $\Omega$ and tangent to its boundary, (\ref{2E3}), which is always finite dimensional whenever $\Omega$ is a bounded $C^{1,1}$-domain.
Furthermore we define the volume Biot-Savart operator of a given domain $\Omega$
\begin{gather}
	\label{2E15}
	\operatorname{BS}_{\Omega}:L^2\mathcal{V}(\Omega)\rightarrow L^2\mathcal{V}(\Omega)\text{, }B\mapsto \left(x\mapsto\frac{1}{4\pi}\int_{\Omega}B(y)\times\frac{x-y}{|x-y|^3}d^3y\right).
\end{gather}
Given some $\Gamma\in \mathcal{H}_N(\Omega)$ it is known, \cite[Lemma A.1]{G24} that $\Gamma$ is of class $W^{1,p}(\Omega)$ for all $1\leq p<\infty$ and of class $C^{0,\alpha}(\overline{\Omega})$ for all $0<\alpha<1$ and that $\operatorname{BS}_{\Omega}(\Gamma)\in C^{1,\alpha}(\overline{\Omega})$ for all $0<\alpha<1$, see step 1 of the proof of \cite[Proposition 5.8]{G24}. In particular, the traces $\operatorname{BS}_{\Omega}(\Gamma)\cdot \mathcal{N}$ and $\operatorname{BS}_{\Omega}(\Gamma)\times \mathcal{N}$ exist. To simplify computations we further note that we have the following identity, see the proof of \cite[Proposition 3.6]{G24}
\begin{gather}
	\label{2E16}
	\operatorname{BS}_{\Omega}(B)(x)=\frac{1}{4\pi}\int_{\partial\Omega}\frac{B(y)\times \mathcal{N}(y)}{|x-y|}d\sigma(y)+\frac{1}{4\pi}\int_{\Omega}\frac{\operatorname{curl}(B)(y)}{|x-y|}d^3y\text{ for all }B\in W^{1,q}(\Omega,\mathbb{R}^3)\text{, }q>3.
\end{gather}
Consequently
\begin{gather}
	\label{2E17}
	\operatorname{BS}_{\Omega}(\Gamma)(x)=\frac{1}{4\pi}\int_{\partial\Omega}\frac{\Gamma(y)\times \mathcal{N}(y)}{|x-y|}d\sigma(y).
\end{gather}
Further, we observe that $\operatorname{div}(\operatorname{BS}_{\Omega}(\Gamma))=0$ and $\operatorname{curl}(\operatorname{BS}_{\Omega}(\Gamma))=\Gamma$ because $\Gamma$ is div-free and tangent to the boundary of $\Omega$, c.f. \cite[Theorem A]{CDG01}. Consequently $\Delta \operatorname{BS}_{\Omega}(\Gamma)=0$ where $\Delta$ denotes the vector Laplacian and this equation may be understood in the classical sense since it follows from standard interior elliptic regularity theory that $\operatorname{BS}_{\Omega}(\Gamma)$ is in fact analytic within $\Omega$. We conclude that $\operatorname{BS}_{\Omega}(\Gamma)$ is the unique solution to the following boundary value problem
\begin{gather}
	\label{2E18}
	\Delta A=0\text{ in }\Omega\text{ and }A|_{\partial\Omega}=\frac{1}{4\pi}\int_{\partial\Omega}\frac{\Gamma(y)\times \mathcal{N}(y)}{|x-y|}d\sigma(y)\text{, }x\in \partial\Omega.
\end{gather}
We note that the original definition (\ref{2E15}) requires to compute singular integral kernels in a volume to determine $\operatorname{BS}_{\Omega}(\Gamma)$, while the new formulation (\ref{2E18}) only requires to do so on a surface and solving a BVP which seems computationally easier.
Before we come to the kernel reconstruction algorithm we provide an exact formula for elements of the kernel.
\begin{thm}[Exact kernel elements]
	\label{2T11}
	Let $\Omega\subset\mathbb{R}^3$ be a bounded $C^{1,1}$-domain with possibly disconnected boundary. Fix a basis $\Gamma_1,\dots,\Gamma_n\in \mathcal{H}_N(\Omega)$ of $\mathcal{H}_N(\Omega)$. Then the following boundary value problems admit unique solutions $g_i\in H^1(\Omega)$
	\begin{gather}
		\nonumber
		\Delta g_i=0\text{ in }\Omega\text{, }\mathcal{N}\cdot \nabla g_i=\left(\frac{\operatorname{Id}}{2}+w^{\operatorname{Tr}}_{\Omega}\right)^{-1}\left(\left(\frac{\operatorname{Id}}{2}-w^{\operatorname{Tr}}_{\Omega}\right)(\operatorname{BS}_{\Omega}(\Gamma_i)\cdot \mathcal{N})\right)\text{ on }\partial\Omega\text{ and }\int_{\Omega}g_id^3x=0
	\end{gather}
	where $w^{\operatorname{Tr}}_{\Omega}$ denotes the transpose of the double layer potential, (\ref{2E7}) and $j_i:=\operatorname{BS}_{\Omega}(\Gamma_i)\times \mathcal{N}+\nabla g_i\times \mathcal{N}$, $i=1,\dots,n$ forms a basis of $\operatorname{Ker}(\operatorname{BS}_{\partial\Omega})$.
\end{thm}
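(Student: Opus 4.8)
The plan is to prove that each $j_i$ lies in $\operatorname{Ker}(\operatorname{BS}_{\partial\Omega})$, that the family $\{j_i\}$ is linearly independent, and then to close the argument by a dimension count. Indeed, since $\dim\left(\mathcal{H}_N(\Omega)\right)=g(\partial\Omega)$ and, by \Cref{2T3}, $\dim\left(\operatorname{Ker}(\operatorname{BS}_{\partial\Omega})\right)=g(\partial\Omega)$ as well, once we know that the $n=\dim\left(\mathcal{H}_N(\Omega)\right)$ elements $j_i$ are linearly independent members of the kernel they must automatically form a basis. For the well-definedness I would first record that $A_i:=\operatorname{BS}_{\Omega}(\Gamma_i)\in C^{1,\alpha}(\overline{\Omega})$ with $\operatorname{curl}(A_i)=\Gamma_i$ and $\operatorname{div}(A_i)=0$, so that $\operatorname{div}_{\partial\Omega}(A_i\times\mathcal{N})=-\mathcal{N}\cdot\operatorname{curl}(A_i)=-\mathcal{N}\cdot\Gamma_i=0$ because $\Gamma_i$ is tangent to $\partial\Omega$; hence $A_i\times\mathcal{N}$ is an admissible divergence-free current. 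The solvability and uniqueness of the Neumann problem for $g_i$ rests on the invertibility of $\frac{\operatorname{Id}}{2}+w^{\operatorname{Tr}}_{\Omega}$ from \Cref{2L8} together with the compatibility of the prescribed Neumann datum, exactly as in the proof of \Cref{2T9}.

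The heart of the argument is a single computational identity. Writing $S_{\Omega}(\phi)(x):=\frac{1}{4\pi}\int_{\partial\Omega}\frac{\phi(y)}{|x-y|}d\sigma(y)$ for the single layer potential, one has $\operatorname{BS}_{\partial\Omega}(j)=\operatorname{curl}\left(S_{\Omega}(j)\right)$ for any admissible current $j$, while $\operatorname{BS}_{\Omega}(w)=\operatorname{curl}\left(N_{\Omega}(w)\right)$ with $N_{\Omega}$ the vector Newtonian potential. Combining these with the boundary representation \eqref{2E16} and the identity $\operatorname{div}(N_{\Omega}(B))=-S_{\Omega}(B\cdot\mathcal{N})$, valid for divergence-free $B$, I would derive that for every divergence-free $B$ of class $W^{1,q}(\Omega)$, $q>3$, one has $\operatorname{BS}_{\partial\Omega}(B\times\mathcal{N})=B-\nabla S_{\Omega}(B\cdot\mathcal{N})-\operatorname{BS}_{\Omega}(\operatorname{curl}(B))$ in $\Omega$. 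Applying this with $B=A_i$ and using $\operatorname{curl}(A_i)=\Gamma_i$, the two copies of $A_i=\operatorname{BS}_{\Omega}(\Gamma_i)$ cancel and leave $\operatorname{BS}_{\partial\Omega}(A_i\times\mathcal{N})=-\nabla S_{\Omega}(A_i\cdot\mathcal{N})$; applying it with $B=\nabla g_i$ (which is curl-free and, since $\Delta g_i=0$, divergence-free) gives $\operatorname{BS}_{\partial\Omega}(\nabla g_i\times\mathcal{N})=\nabla g_i-\nabla S_{\Omega}(\mathcal{N}\cdot\nabla g_i)$, the regularity deficit of $\nabla g_i$ being absorbed by the same density and trace arguments underlying \Cref{2L8}. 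In particular $\operatorname{BS}_{\partial\Omega}(j_i)$ is a pure gradient field, so its $\mathcal{H}_N(\Omega)$-component vanishes automatically and only its normal trace remains to be checked.

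For the normal trace I would use $\mathcal{N}\cdot\nabla S_{\Omega}(\phi)=\left(\frac{\operatorname{Id}}{2}-w^{\operatorname{Tr}}_{\Omega}\right)(\phi)$ for the interior trace, the sign being fixed by consistency with \Cref{2L8}. This yields $\operatorname{BS}_{\partial\Omega}(j_i)\cdot\mathcal{N}=\left(\frac{\operatorname{Id}}{2}+w^{\operatorname{Tr}}_{\Omega}\right)(\mathcal{N}\cdot\nabla g_i)-\left(\frac{\operatorname{Id}}{2}-w^{\operatorname{Tr}}_{\Omega}\right)(A_i\cdot\mathcal{N})$, which is precisely $0$ by the defining equation for $\mathcal{N}\cdot\nabla g_i$. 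Since an exact harmonic field $\nabla\Phi$ with vanishing normal trace is zero, by the integration by parts $\int_{\Omega}|\nabla\Phi|^2=\int_{\partial\Omega}\Phi\,\partial_{\mathcal{N}}\Phi=0$, we conclude $\operatorname{BS}_{\partial\Omega}(j_i)=0$, i.e. $j_i\in\operatorname{Ker}(\operatorname{BS}_{\partial\Omega})$. For linear independence I note that $\Gamma\mapsto j_{\Gamma}:=\operatorname{BS}_{\Omega}(\Gamma)\times\mathcal{N}+\nabla g_{\Gamma}\times\mathcal{N}$ is linear, and if $j_{\Gamma}=0$ then $V:=\operatorname{BS}_{\Omega}(\Gamma)+\nabla g_{\Gamma}$ is purely normal on $\partial\Omega$, is divergence-free, and satisfies $\operatorname{curl}(V)=\Gamma$; integrating by parts, $\int_{\Omega}|\Gamma|^2=\int_{\Omega}\Gamma\cdot\operatorname{curl}(V)=\int_{\partial\Omega}(\Gamma\times\mathcal{N})\cdot V\,d\sigma=0$ since $\operatorname{curl}(\Gamma)=0$ and $\Gamma\times\mathcal{N}$ is orthogonal to the normal field $V$ on the boundary, whence $\Gamma=0$. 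As the $\Gamma_i$ form a basis of $\mathcal{H}_N(\Omega)$, the $j_i$ are independent, and by the dimension count they form a basis of $\operatorname{Ker}(\operatorname{BS}_{\partial\Omega})$.

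The hard part will be the computational identity in the second paragraph: establishing $\operatorname{BS}_{\partial\Omega}(B\times\mathcal{N})=B-\nabla S_{\Omega}(B\cdot\mathcal{N})-\operatorname{BS}_{\Omega}(\operatorname{curl}(B))$ rigorously, pinning down the jump/sign conventions for the interior normal-derivative trace of the single layer so that the two contributions to $\operatorname{BS}_{\partial\Omega}(j_i)\cdot\mathcal{N}$ cancel, and justifying the use of this identity for the merely $L^2$-regular field $\nabla g_i$. Everything else—the divergence-free property, the energy argument for injectivity, and the final dimension count—is then routine.
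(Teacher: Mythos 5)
Your proposal is correct in substance but follows a genuinely different route from the paper. The paper does not verify directly that $\operatorname{BS}_{\partial\Omega}(j_i)=0$: it imports from \cite[Proposition 5.8, Equation (6.1)]{G24} the fact that currents of the form $\operatorname{BS}_{\Omega}(\Gamma_i)\times\mathcal{N}+\nabla f_i\times\mathcal{N}$, with $f_i$ solving a certain boundary integral equation, already form a basis of the kernel, and then spends the proof showing that the gradient of that $f_i$ coincides with $\nabla g_i$ for the stated Neumann problem — by rewriting the integral equation as $\nabla\widetilde{f}_i=T(\nabla\widetilde{f}_i)+T(Z_i)$ and applying the normal-trace identity of \Cref{4L1}. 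You instead prove membership in the kernel from scratch via the representation $\operatorname{BS}_{\partial\Omega}(B\times\mathcal{N})=B-\nabla S_{\Omega}(B\cdot\mathcal{N})-\operatorname{BS}_{\Omega}(\operatorname{curl}(B))$ (which is indeed the content of \cite[Lemma 5.5]{G24} as used in Steps 2--3 of the proof of \Cref{2T2}, with $\nabla S_{\Omega}(\mathcal{N}\cdot\nabla g)=T(\nabla g)$ for harmonic $g$), observe that the result is an exact field whose normal trace vanishes by the very definition of $g_i$ together with $\mathcal{N}\cdot\nabla S_{\Omega}(\phi)=\bigl(\tfrac{\operatorname{Id}}{2}-w^{\operatorname{Tr}}_{\Omega}\bigr)(\phi)$ (this is \Cref{4L1}, and your sign is the right one), and then close with linear independence plus the dimension count $\dim\operatorname{Ker}(\operatorname{BS}_{\partial\Omega})=g(\partial\Omega)=\dim\mathcal{H}_N(\Omega)$ from \Cref{2T3}. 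The trade is clear: the paper leans on the solvability of the integral equation and the basis statement of \cite{G24} but gets the equivalence with the earlier algorithm for free (which it reuses in \Cref{2T12} and \Cref{2T13}); you lean instead on the kernel dimension formula of \Cref{2T3} and obtain a more self-contained, directly verifiable argument. Your injectivity computation $\int_{\Omega}|\Gamma|^2=\langle V\times\mathcal{N},\Gamma\rangle=0$ is sound provided you phrase the boundary term as the $H(\operatorname{curl},\Omega)$ duality pairing rather than a pointwise statement about $V$ being ``purely normal,'' since $\nabla g_{\Gamma}$ has no pointwise trace.

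One small step you should not wave away with ``exactly as in the proof of \Cref{2T9}'': the invertibility of $\tfrac{\operatorname{Id}}{2}+w^{\operatorname{Tr}}_{\Omega}$ is only established (\Cref{4L2}) on the subspace $\mathcal{D}=\bigl\{\mathcal{N}\cdot\nabla f\mid\nabla f\in\mathcal{H}_{\operatorname{ex}}(\Omega)\cap\mathcal{H}^{\perp_{L^2(\Omega)}}_D(\Omega)\bigr\}$, so for the BVP defining $g_i$ to be well posed you must check that $\operatorname{BS}_{\Omega}(\Gamma_i)\cdot\mathcal{N}\in\mathcal{D}$, i.e.\ that $\operatorname{BS}_{\Omega}(\Gamma_i)$ is $L^2(\Omega)$-orthogonal to $\mathcal{H}_D(\Omega)$. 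In \Cref{2T9} this was a hypothesis on $B$; here it requires the extra observation (made in the paper via the symmetry of the volume Biot--Savart operator and $\operatorname{BS}_{\Omega}(\nabla f)=0$ for $\nabla f\in\mathcal{H}_D(\Omega)$) that $\int_{\Omega}\nabla f\cdot\operatorname{BS}_{\Omega}(\Gamma_i)\,d^3x=0$. This is only an issue when $\partial\Omega$ is disconnected, but the theorem explicitly allows that case.
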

The idea of the kernel reconstruction algorithm comes once again from expressing $\left(\frac{\operatorname{Id}}{2}+w^{\operatorname{Tr}}_{\Omega}\right)^{-1}$ as a Neumann series.
\begin{thm}[Kernel reconstruction algorithm]
	\label{2T12}
	Let $\Omega\subset\mathbb{R}^3$ be a bounded $C^{1,1}$-domain with possibly disconnected boundary and let $\Gamma\in \mathcal{H}_N(\Omega)\setminus \{0\}$. Then for every $n\in \mathbb{N}_0$ there is a unique solution $f_n\in H^1(\Omega)$ of the following boundary value problem
	\begin{gather}
		\label{E219}
		\Delta f_n=0\text{ in }\Omega\text{, }\mathcal{N}\cdot \nabla f=\sum_{k=1}^{n}\left(\frac{\operatorname{Id}}{2}-w^{\operatorname{Tr}}_{\Omega}\right)^k\left(\operatorname{BS}_{\Omega}(\Gamma)\cdot \mathcal{N}\right)\text{ on }\partial\Omega\text{ and }\int_{\Omega}f_nd^3x=0
	\end{gather}
	where $w^{\operatorname{Tr}}_{\Omega}$ denotes the transpose of the double layer potential, (\ref{2E7}). Further, there exist $0<\lambda(\partial\Omega)<1$, $0<c(\partial\Omega),\tilde{c}(\partial\Omega)<\infty$ independent of the chosen $\Gamma$, and some $j_0\in \operatorname{Ker}(\operatorname{BS}_{\partial\Omega})\setminus \{0\}$, which depends on $\Gamma$, such that
	\begin{gather}
		\label{2E20}
		\|j_0-j_n\|_{W^{-\frac{1}{2},2}(\partial\Omega)}\leq \frac{c\lambda^{n+1}}{1-\lambda}\|\Gamma\|_{L^2(\Omega)}
		\\
		\label{EXTRA21}
		\|\operatorname{BS}_{\partial\Omega}(j_n)\|_{L^2(\Omega)}\leq \frac{\tilde{c}\lambda^{n+1}}{1-\lambda}\|\Gamma\|_{L^2(\Omega)}
	\end{gather}
	where we set $j_n:=\operatorname{BS}_{\Omega}(\Gamma)\times \mathcal{N}+\nabla f_n\times \mathcal{N}\in W^{-\frac{1}{2},2}\mathcal{V}_0(\partial\Omega)$.
\end{thm}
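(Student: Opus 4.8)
The plan is to view the boundary datum prescribed in (\ref{E219}) as the $n$-th partial sum of the Neumann series that produces the \emph{exact} kernel element of \Cref{2T11}, and then to transport the geometric decay of the series tail through the bounded solution operator of the Neumann problem and through $\operatorname{BS}_{\partial\Omega}$. First I would fix $j_0$ by applying \Cref{2T11} to the single field $\Gamma\in\mathcal{H}_N(\Omega)\setminus\{0\}$: this yields a $g\in H^1(\Omega)$ solving the Neumann problem with datum $\left(\frac{\operatorname{Id}}{2}+w^{\operatorname{Tr}}_{\Omega}\right)^{-1}\left(\left(\frac{\operatorname{Id}}{2}-w^{\operatorname{Tr}}_{\Omega}\right)(\operatorname{BS}_{\Omega}(\Gamma)\cdot\mathcal{N})\right)$ and the element $j_0:=\operatorname{BS}_{\Omega}(\Gamma)\times\mathcal{N}+\nabla g\times\mathcal{N}\in\operatorname{Ker}(\operatorname{BS}_{\partial\Omega})$. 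Since $\Gamma\mapsto j_0$ is linear and, by \Cref{2T11}, carries a basis of $\mathcal{H}_N(\Omega)$ onto a basis of $\operatorname{Ker}(\operatorname{BS}_{\partial\Omega})$, it is injective, so $\Gamma\neq0$ forces $j_0\neq0$.

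Next I would invoke the Neumann-series representation already used for \Cref{2L8} and \Cref{2T9}, $\left(\frac{\operatorname{Id}}{2}+w^{\operatorname{Tr}}_{\Omega}\right)^{-1}=\sum_{k=0}^{\infty}\left(\frac{\operatorname{Id}}{2}-w^{\operatorname{Tr}}_{\Omega}\right)^{k}$, valid because the spectral radius $\lambda=\lambda(\partial\Omega)$ of $\frac{\operatorname{Id}}{2}-w^{\operatorname{Tr}}_{\Omega}$ satisfies $\lambda<1$; this is the same $\lambda$ as in \Cref{2T9}, and it furnishes a constant $c_0=c_0(\partial\Omega)$ with $\left\|\left(\frac{\operatorname{Id}}{2}-w^{\operatorname{Tr}}_{\Omega}\right)^{k}\right\|\leq c_0\lambda^{k}$. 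Applying this to $\left(\frac{\operatorname{Id}}{2}-w^{\operatorname{Tr}}_{\Omega}\right)(\operatorname{BS}_{\Omega}(\Gamma)\cdot\mathcal{N})$ identifies the exact datum of $g$ with $\sum_{k=1}^{\infty}\left(\frac{\operatorname{Id}}{2}-w^{\operatorname{Tr}}_{\Omega}\right)^{k}(\operatorname{BS}_{\Omega}(\Gamma)\cdot\mathcal{N})$, whose $n$-th partial sum is exactly the datum of (\ref{E219}). To obtain unique solvability of (\ref{E219}) I would verify the Neumann compatibility condition that the datum has vanishing mean: $\operatorname{BS}_{\Omega}(\Gamma)$ is divergence-free so $\int_{\partial\Omega}\operatorname{BS}_{\Omega}(\Gamma)\cdot\mathcal{N}\,d\sigma=0$, while from the classical identity $w_{\Omega}(1)=-\frac{1}{2}$ together with the duality definition of $w^{\operatorname{Tr}}_{\Omega}$ one computes $\langle(\frac{\operatorname{Id}}{2}-w^{\operatorname{Tr}}_{\Omega})h,1\rangle=0$ for every $h$; hence every summand of (\ref{E219}) is mean-free, the compatibility condition holds, and normalising by $\int_{\Omega}f_n=0$ pins down a unique $f_n$, exactly as in \Cref{2T9}.

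It remains to assemble the estimates. Since $j_0-j_n=\nabla(g-f_n)\times\mathcal{N}$, this difference is controlled solely by the difference of the two Neumann data, namely the tail $\sum_{k=n+1}^{\infty}\left(\frac{\operatorname{Id}}{2}-w^{\operatorname{Tr}}_{\Omega}\right)^{k}(\operatorname{BS}_{\Omega}(\Gamma)\cdot\mathcal{N})$, whose norm is at most $\frac{c_0\lambda^{n+1}}{1-\lambda}\|\operatorname{BS}_{\Omega}(\Gamma)\cdot\mathcal{N}\|_{W^{-\frac{1}{2},2}(\partial\Omega)}$. Bounding $\|\operatorname{BS}_{\Omega}(\Gamma)\cdot\mathcal{N}\|_{W^{-\frac{1}{2},2}(\partial\Omega)}\leq C\|\operatorname{BS}_{\Omega}(\Gamma)\|_{H(\operatorname{div},\Omega)}=C\|\operatorname{BS}_{\Omega}(\Gamma)\|_{L^2(\Omega)}\leq C'\|\Gamma\|_{L^2(\Omega)}$ (normal trace theorem, $\operatorname{div}\operatorname{BS}_{\Omega}(\Gamma)=0$, and $L^2$-boundedness of the volume operator), and feeding the data difference through the bounded map that sends a mean-free Neumann datum to $\nabla(\cdot)\times\mathcal{N}\in W^{-\frac{1}{2},2}\mathcal{V}_0(\partial\Omega)$ used already in \Cref{2T9}, yields (\ref{2E20}). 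Estimate (\ref{EXTRA21}) is then immediate from $\operatorname{BS}_{\partial\Omega}(j_0)=0$ and boundedness of $\operatorname{BS}_{\partial\Omega}$: $\|\operatorname{BS}_{\partial\Omega}(j_n)\|_{L^2(\Omega)}=\|\operatorname{BS}_{\partial\Omega}(j_n-j_0)\|_{L^2(\Omega)}\leq\|\operatorname{BS}_{\partial\Omega}\|\,\|j_n-j_0\|_{W^{-\frac{1}{2},2}(\partial\Omega)}$. The only genuinely analytic ingredient is the strict bound $\lambda<1$ (equivalently that $-\frac{1}{2}$ lies outside the spectrum of $w^{\operatorname{Tr}}_{\Omega}$), which is inherited from \Cref{2L8}/\Cref{2T9}; the real work here is organisational, namely checking that $c,\tilde{c},\lambda$ can be taken independent of $\Gamma$, which holds because the spectral-radius bound and all operator norms depend only on $\partial\Omega$ and $\Gamma$ enters only linearly through the factor $\|\Gamma\|_{L^2(\Omega)}$.
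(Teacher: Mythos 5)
Your proposal is correct and follows essentially the same route as the paper's proof: fix the exact kernel element $j_0$ from \Cref{2T11}, recognise the Neumann data in (\ref{E219}) as the partial sums of the Neumann series for $\left(\frac{\operatorname{Id}}{2}+w^{\operatorname{Tr}}_{\Omega}\right)^{-1}\circ\left(\frac{\operatorname{Id}}{2}-w^{\operatorname{Tr}}_{\Omega}\right)$, and transport the geometric tail through \Cref{CT1} and the continuity of the tangential trace, exactly as in the proof of \Cref{2T9}. The one point to phrase carefully is that the decay $\left\|\left(\frac{\operatorname{Id}}{2}-w^{\operatorname{Tr}}_{\Omega}\right)^{k}\right\|\leq \lambda^{k}$ with $\lambda<1$ is an operator-norm bound on the invariant subspace $\mathcal{D}$ (\Cref{4C4}), not a spectral-radius statement on all of $W^{-\frac{1}{2},2}(\partial\Omega)$ (where $\frac{\operatorname{Id}}{2}+w^{\operatorname{Tr}}_{\Omega}$ has nontrivial kernel); this is harmless in your argument because $\operatorname{BS}_{\Omega}(\Gamma)\cdot\mathcal{N}\in\mathcal{D}$, as established in the proof of \Cref{2T11}.
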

The kernel reconstruction procedure can also be formulated as an exterior boundary value problem.
\begin{thm}[Kernel Elements and exterior BVP]
	\label{2T13}
	Let $\Omega\subset\mathbb{R}^3$ be a bounded $C^{1,1}$-domain with possibly disconnected boundary. Fix a basis $\Gamma_1,\dots,\Gamma_n\in \mathcal{H}_N(\Omega)$ of $\mathcal{H}_N(\Omega)$. Then the following exterior boundary value problems admit unique solutions $g_i\in \dot{H}^1(\overline{\Omega}^c)\equiv \{f\in L^6(\overline{\Omega}^c)\mid \nabla f\in L^2(\overline{\Omega}^c)\}$
	\begin{gather}
		\nonumber
		\Delta g_i(x)=0\text{ in }\mathbb{R}^3\setminus \overline{\Omega}\text{, }\mathcal{N}\cdot \nabla g_i=-\mathcal{N}\cdot \operatorname{BS}_{\Omega}(\Gamma_i)\text{ on }\partial\Omega\text{, }g_i(x)\rightarrow 0\text{ as }x\rightarrow\infty\text{, }\int_{\Omega_k}g_id^3x=0
	\end{gather}
	with $1\leq k\leq \#\partial\Omega-1$ (and the last condition is empty if $\partial\Omega$ is connected), where the $\Omega_k$ are the (connected) finite volumes enclosed by the connected components $\partial\Omega_k$ of $\partial\Omega$ which satisfy $\Omega_k\cap \Omega=\emptyset$. Further, $j_i:=\operatorname{BS}_{\Omega}(\Gamma_i)\times \mathcal{N}+\nabla g_i\times \mathcal{N}$, $i=1,\dots,n$ forms a basis of $\operatorname{Ker}(\operatorname{BS}_{\partial\Omega})$.
\end{thm}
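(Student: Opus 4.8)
The plan is to reduce the statement to three points: existence and uniqueness of the $g_i$, the membership $j_i\in\operatorname{Ker}(\operatorname{BS}_{\partial\Omega})$, and the linear independence of the $j_i$, the last of which, combined with the dimension count $\dim\operatorname{Ker}(\operatorname{BS}_{\partial\Omega})=g(\partial\Omega)=\dim\mathcal{H}_N(\Omega)=n$ coming from \Cref{2T3}, upgrades the family to a basis. The conceptual engine behind all three points is the following gluing. Given $g_i$, set $E_i:=\operatorname{BS}_{\Omega}(\Gamma_i)+\nabla g_i$ on $\overline{\Omega}^c$ and define the global field $W_i$ by $W_i:=0$ on $\Omega$ and $W_i:=E_i$ on $\overline{\Omega}^c$. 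Since $\operatorname{curl}(\operatorname{BS}_{\Omega}(\Gamma_i))=\Gamma_i$ vanishes outside $\Omega$ and $\nabla g_i$ is curl-free, $E_i$ is curl- and divergence-free on $\overline{\Omega}^c$; the Neumann condition $\mathcal{N}\cdot\nabla g_i=-\mathcal{N}\cdot\operatorname{BS}_{\Omega}(\Gamma_i)$ is precisely the statement that $E_i$ is tangent to $\partial\Omega$, so that $W_i$ has no normal jump and hence $\operatorname{div}(W_i)=0$ distributionally on all of $\mathbb{R}^3$. Its distributional curl is concentrated on $\partial\Omega$ and equals, up to the standard sign, $j_i\,\delta_{\partial\Omega}$ with $j_i=E_i\times\mathcal{N}=\operatorname{BS}_{\Omega}(\Gamma_i)\times\mathcal{N}+\nabla g_i\times\mathcal{N}$. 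Because $\operatorname{BS}_{\Omega}(\Gamma_i)$ decays and $g_i\to 0$ at infinity, $W_i$ is a divergence-free, piecewise curl-free field decaying at infinity with prescribed singular curl; by uniqueness of such fields (two of them differ by a globally harmonic field vanishing at infinity, hence by Liouville by $0$) one identifies $W_i=\pm\operatorname{BS}_{\partial\Omega}(j_i)$. In particular $\operatorname{BS}_{\partial\Omega}(j_i)=\pm W_i=0$ on $\Omega$, giving $j_i\in\operatorname{Ker}(\operatorname{BS}_{\partial\Omega})$. One also checks here that $j_i$ is a genuine element of $W^{-\frac{1}{2},2}\mathcal{V}_0(\partial\Omega)$: it is tangent by construction and div-free since $\operatorname{div}_{\partial\Omega}(E_i\times\mathcal{N})=\mathcal{N}\cdot\operatorname{curl}(E_i)=0$.

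For the solvability of the exterior boundary value problem I would first note that $\mathbb{R}^3\setminus\overline{\Omega}$ splits into the unbounded component outside the outer boundary component together with the bounded holes $\Omega_1,\dots,\Omega_{\#\partial\Omega-1}$. On each bounded hole $\Omega_k$ the problem is an ordinary interior Neumann problem, whose solvability requires the compatibility condition $\int_{\partial\Omega_k}\mathcal{N}\cdot\operatorname{BS}_{\Omega}(\Gamma_i)\,d\sigma=0$; this holds automatically because $\int_{\partial\Omega_k}\mathcal{N}\cdot\operatorname{BS}_{\Omega}(\Gamma_i)\,d\sigma=\int_{\Omega_k}\operatorname{div}(\operatorname{BS}_{\Omega}(\Gamma_i))\,d^3x=0$, and the remaining additive-constant ambiguity is removed by the normalisation $\int_{\Omega_k}g_i\,d^3x=0$. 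On the unbounded component the problem is the exterior Neumann problem, well posed in the homogeneous Sobolev space $\dot{H}^1(\overline{\Omega}^c)$: the decay condition $g_i\to 0$ is encoded in the space and no further compatibility is needed, so a Riesz/Lax--Milgram argument on $\dot{H}^1$ yields a unique solution. The Neumann datum $\mathcal{N}\cdot\operatorname{BS}_{\Omega}(\Gamma_i)$ is admissible because, as recalled around (\ref{2E16})--(\ref{2E18}), $\operatorname{BS}_{\Omega}(\Gamma_i)\in C^{1,\alpha}(\overline{\Omega})$ and its normal trace exists.

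The remaining task is linear independence. Suppose $\sum_i c_i j_i=0$. By linearity and the identification of the first paragraph, $\sum_i c_i W_i=\pm\operatorname{BS}_{\partial\Omega}(\sum_i c_i j_i)=0$, so that $E:=\sum_i c_i E_i=\operatorname{BS}_{\Omega}(\Gamma)+\nabla g$ vanishes identically on $\overline{\Omega}^c$, where $\Gamma:=\sum_i c_i\Gamma_i\in\mathcal{H}_N(\Omega)$ and $g:=\sum_i c_i g_i$. Consequently the boundary trace of $\operatorname{BS}_{\Omega}(\Gamma)$, which is continuous across $\partial\Omega$, equals $-\nabla g|_{\partial\Omega}$. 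Using $\operatorname{curl}(\operatorname{BS}_{\Omega}(\Gamma))=\Gamma$, $\operatorname{curl}(\Gamma)=0$ and the integration-by-parts identity for the curl I obtain $\int_{\Omega}|\Gamma|^2\,d^3x=\int_{\partial\Omega}\operatorname{BS}_{\Omega}(\Gamma)\cdot(\Gamma\times\mathcal{N})\,d\sigma=-\int_{\partial\Omega}\nabla_{\partial\Omega}g\cdot(\Gamma\times\mathcal{N})\,d\sigma$, where only the tangential part $-\nabla_{\partial\Omega}g$ of $-\nabla g$ contributes since $\Gamma\times\mathcal{N}$ is tangent. A surface integration by parts on the closed boundary then gives $\int_{\partial\Omega}\nabla_{\partial\Omega}g\cdot(\Gamma\times\mathcal{N})\,d\sigma=-\int_{\partial\Omega}g\,\operatorname{div}_{\partial\Omega}(\Gamma\times\mathcal{N})\,d\sigma=0$, because $\operatorname{div}_{\partial\Omega}(\Gamma\times\mathcal{N})=\mathcal{N}\cdot\operatorname{curl}(\Gamma)=0$. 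Hence $\Gamma=0$, and since the $\Gamma_i$ form a basis of $\mathcal{H}_N(\Omega)$ all $c_i$ vanish. Together with $\dim\operatorname{Ker}(\operatorname{BS}_{\partial\Omega})=g(\partial\Omega)=n$ this proves that $\{j_1,\dots,j_n\}$ is a basis of the kernel.

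The step I expect to be the genuine obstacle is the rigorous identification $\operatorname{BS}_{\partial\Omega}(j_i)=\pm W_i$. Writing down the distributional curl and normal jumps of the glued field $W_i$ and matching them against the Biot--Savart field is formally immediate, but making it precise in the $W^{-\frac{1}{2},2}$ setting requires controlling the traces of $E_i$, justifying the jump relations for the merely $C^{0,\alpha}$ (not smooth) field $\operatorname{BS}_{\Omega}(\Gamma_i)+\nabla g_i$, and invoking the uniqueness theorem for decaying div-/curl-free fields with a singular curl on $\partial\Omega$ in the correct functional class. This is exactly the technical machinery already underlying \Cref{2L8} and \Cref{2T11}, which I would reuse rather than redevelop.
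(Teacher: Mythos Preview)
Your argument is correct in outline and follows a genuinely different route from the paper. The paper does \emph{not} glue fields and invoke Liouville; instead it reduces everything to \Cref{2T11}. Concretely, the paper represents the exterior Neumann solution as a single layer potential $v_{\overline{\Omega}^c}[\phi]$ with density $\phi$ solving $\left(\frac{\operatorname{Id}}{2}+w^{\operatorname{Tr}}_{\Omega}\right)(\phi)=\mathcal{N}\cdot\operatorname{BS}_{\Omega}(\Gamma_i)$, then observes that the \emph{same} density $\phi$ (shown to lie in $\mathcal{D}$) produces, via the interior single layer potential $v_{\Omega}[\phi]$, a solution of the interior Neumann problem appearing in \Cref{2T11}. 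Since single layer potentials are continuous across $\partial\Omega$, the tangential gradients---and hence the twisted traces $\nabla g_i\times\mathcal{N}$---agree from both sides, so the currents $j_i$ coincide with those already known from \Cref{2T11} to form a basis of the kernel. No gluing, no Liouville, and no separate linear-independence argument are needed.

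What each approach buys: the paper's route is technically lighter because the jump relations for single layer potentials and the invertibility of $\frac{\operatorname{Id}}{2}+w^{\operatorname{Tr}}_{\Omega}$ on $\mathcal{D}$ are already established earlier (\Cref{4L2}), so the proof is essentially a bookkeeping exercise; the price is dependence on \Cref{2T11}. Your approach is more self-contained and conceptually transparent---the vanishing of $\operatorname{BS}_{\partial\Omega}(j_i)$ on $\Omega$ is read off directly from $W_i|_\Omega=0$, and your energy argument for linear independence is elegant and independent of \Cref{2T11}. The cost, which you correctly flag, is the rigorous identification $W_i=\pm\operatorname{BS}_{\partial\Omega}(j_i)$ on $\mathbb{R}^3\setminus\partial\Omega$: this needs the jump relations for $\operatorname{BS}_{\partial\Omega}$ across $\partial\Omega$ at the $W^{-\frac{1}{2},2}$ level and the $L^2(\mathbb{R}^3)$ (or decay) control on both $W_i$ and $\operatorname{BS}_{\partial\Omega}(j_i)$ on the exterior, neither of which is stated in the paper for the exterior region. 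These are provable but would require additional lemmas beyond what the paper provides.
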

\textbf{\underline{Comparison to previous kernel reconstruction algorithms:}}
\newline
\newline
In the recent work \cite[Section 6]{G24} the following kernel reconstruction algorithm has been proposed: Recall first that $\mathcal{H}_{\operatorname{ex}}(\Omega)\subset L^2\mathcal{H}(\Omega)$ denotes the subspace consisting of harmonic gradient fields, \ref{2E3}. Then the following operator was introduced
\begin{gather}
	\label{2E21}
	S:\mathcal{H}_{\operatorname{ex}}(\Omega)\rightarrow\mathcal{H}_{\operatorname{ex}}(\Omega)\text{, }\nabla f\mapsto \left(x\mapsto \frac{\nabla_x}{4\pi}\int_{\partial\Omega}\frac{\operatorname{BS}_{\Omega}(\Gamma)(y)\cdot \mathcal{N}(y)}{|x-y|}d\sigma(y)+\frac{\nabla_x}{4\pi}\int_{\Omega}\nabla f(y)\cdot \frac{x-y}{|x-y|^3}d^3y\right)
\end{gather}
and it was shown in \cite[Theorem 6.5]{G24} that if we define the recursive sequence $X_0:=0$, $X_{n+1}:=S(X_n)$ and $j_n:=\mathcal{N}\times \operatorname{BS}_{\Omega}(\Gamma)+\mathcal{N}\times X_n\in W^{-\frac{1}{2},2}\mathcal{V}_0(\partial\Omega)$ then the $j_n$ converge weakly to a non-trivial element of the kernel provided $\Gamma\in \mathcal{H}_N(\Omega)\setminus \{0\}$ and even more, that if we start with a basis of $\mathcal{H}_N(\Omega)$ the corresponding weak limits of the $(j_n)_n$ will form a basis of $\operatorname{Ker}(\operatorname{BS}_{\partial\Omega})$.

As we shall see, the algorithm proposed in \Cref{2T12} is in fact an equivalent reformulation of the algorithm proposed in \cite[Theorem 6.5]{G24} and the results established in the present manuscript in fact imply that the convergence in \cite[Theorem 6.5]{G24} is not only weakly in $W^{-\frac{1}{2},2}$ but in fact the sequence of $(j_n)_n$ converges in the strong $W^{-\frac{1}{2},2}(\partial\Omega)$ topology exponentially fast to a non-trivial element of the kernel of the Biot-Savart operator.

The advantage of the formulation \Cref{2T12} is that it only requires computing singular integrals on a surface while the approach via the operator $S$ in (\ref{2E21}) requires the computation of singular integrals over volumes.

Similarly, the characterisation in \Cref{2T13} only requires solving an exterior boundary value problem as well as the computation of $\operatorname{BS}_{\Omega}(\Gamma)$ for given $\Gamma\in \mathcal{H}_N(\Omega)$ which by means of (\ref{2E18}) can be reduced to computing a singular boundary integral and an interior BVP.
\section{Image of the Biot-Savart operator}
\subsection{Preliminary results}
We define for a given $f\in W^{\frac{1}{2},2}(\partial\Omega)$ of a domain $\Omega\subset\mathbb{R}^3$
\begin{gather}
	\label{3E1}
	W_{\Omega}(f)(x):=\frac{1}{4\pi}\int_{\partial\Omega}f(y)\left(\mathcal{N}(y)\cdot \frac{x-y}{|x-y|^3}\right)d\sigma(y)\text{ for }x\in \Omega.
\end{gather}
In the following we mean by $\psi\in H^1_{\operatorname{loc}}(\overline{\Omega})$ that for every $x\in \overline{\Omega}$ there is some open $x\in U\subset\mathbb{R}^3$ such that $\psi\in H^1(U\cap \Omega)$. In particular, if $\Omega$ is bounded, then $H^1_{\operatorname{loc}}(\overline{\Omega})=H^1(\Omega)$.
\begin{lem}
	\label{3L1}
	Let $\Omega\subset \mathbb{R}^3$ be a (not necessarily bounded) $C^{1,1}$-domain with compact boundary $\partial\Omega$. Then for every $f\in W^{\frac{1}{2},2}(\partial\Omega)$ we have $W_{\Omega}(f)\in H^1_{\operatorname{loc}}(\overline{\Omega})$ and we have the following jump formula
	\begin{gather}
		\nonumber
		\operatorname{Tr}(W_{\Omega}(f))(x)=-\frac{f(x)}{2}+w_{\Omega}(f)(x)\text{ for }\mathcal{H}^2\text{-a.e. }x\in\partial\Omega
	\end{gather}
	where $w_{\Omega}$ is defined in the first equation of (\ref{2E7}) and $\mathcal{H}^2$ denotes the standard surface measure on $\partial\Omega$.
\end{lem}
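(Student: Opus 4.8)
The plan is to treat the two assertions in turn: first the $H^1_{\operatorname{loc}}$-regularity of $W_\Omega(f)$, obtained from a Green-type representation, and then the jump relation, obtained by anchoring the classical potential-theoretic jump on smooth densities and extending by continuity. Throughout, note that $W_\Omega(f)$ is harmonic, hence $C^\infty$, in the open set $\Omega$, so the regularity claim is genuinely a statement near $\partial\Omega$; this reduces the (possibly unbounded) case to the bounded one, since in the unbounded case one may run the same argument on the truncations $\Omega\cap B_R$ and let $R\to\infty$, the contributions on $\partial B_R$ vanishing by the decay of the Newtonian kernel.

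For the regularity I would exploit that $f\in W^{\frac12,2}(\partial\Omega)$ is a trace. Choosing a harmonic lift $F\in H^1(\Omega)$ with $\operatorname{Tr}(F)=f$ and $\norm{F}_{H^1(\Omega)}\lesssim\norm{f}_{W^{\frac12,2}(\partial\Omega)}$ --- exactly the harmonic-extension construction used around (\ref{2E5}) --- and applying Green's second identity to $F$ and the kernel $y\mapsto\frac{1}{4\pi|x-y|}$, using $\mathcal{N}(y)\cdot\nabla_y\frac{1}{|x-y|}=\mathcal{N}(y)\cdot\frac{x-y}{|x-y|^3}$, produces for $x\in\Omega$ the identity $W_\Omega(f)=-F+S_\Omega(\mathcal{N}\cdot\nabla F)$, where $S_\Omega$ denotes the single-layer potential. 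Here $\mathcal{N}\cdot\nabla F$ is well defined in $W^{-\frac12,2}(\partial\Omega)$ because $\nabla F\in H(\operatorname{div},\Omega)$ (as $\Delta F=0\in L^2$), via the normal-trace operator \cite{GR86}. Since the single-layer potential of a $W^{-\frac12,2}(\partial\Omega)$-density lies in $H^1_{\operatorname{loc}}(\mathbb{R}^3)$, this identity yields at once $W_\Omega(f)\in H^1_{\operatorname{loc}}(\overline\Omega)$ together with the bound $\norm{W_\Omega(f)}_{H^1(\Omega)}\lesssim\norm{f}_{W^{\frac12,2}(\partial\Omega)}$.

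For the jump relation I would first establish it on the dense subspace $C^\infty(\partial\Omega)\subset W^{\frac12,2}(\partial\Omega)$. The decisive geometric point is that for a $C^{1,1}$-surface one has the cancellation $\mathcal{N}(y)\cdot(x-y)=O(|x-y|^2)$ for $x,y\in\partial\Omega$, so the boundary kernel $\mathcal{N}(y)\cdot\frac{x-y}{|x-y|^3}$ is only weakly singular, of order $|x-y|^{-1}$, and hence \emph{absolutely} integrable over the $2$-surface $\partial\Omega$; in particular $w_\Omega(f)$ is an ordinary convergent integral and no principal-value subtlety arises. The classical double-layer jump relation then gives $W_\Omega(f)(x)\to-\tfrac12 f(x_0)+w_\Omega(f)(x_0)$ as $\Omega\ni x\to x_0\in\partial\Omega$, identifying the interior trace. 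The normalisation of the $-\tfrac12$ is pinned down by the Gauss/solid-angle identity $\frac{1}{4\pi}\int_{\partial\Omega}\mathcal{N}(y)\cdot\frac{x-y}{|x-y|^3}d\sigma(y)=-1$ for $x\in\Omega$ and its boundary value $-\tfrac12$, which also checks the formula directly for $f\equiv1$.

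Finally I would pass to general $f$ by density. The three maps entering the identity are bounded in the relevant topologies: $W_\Omega:W^{\frac12,2}(\partial\Omega)\to H^1(\Omega)$ from the previous step, $\operatorname{Tr}:H^1(\Omega)\to W^{\frac12,2}(\partial\Omega)$ by the standard trace theorem \cite{GR86}, and $w_\Omega:W^{\frac12,2}(\partial\Omega)\to W^{\frac12,2}(\partial\Omega)$, which on a $C^{1,1}$-boundary again reflects the weak singularity of the kernel. Thus $f\mapsto\operatorname{Tr}(W_\Omega(f))$ and $f\mapsto-\tfrac12 f+w_\Omega(f)$ are continuous linear maps $W^{\frac12,2}(\partial\Omega)\to W^{\frac12,2}(\partial\Omega)$ agreeing on the dense set $C^\infty(\partial\Omega)$, hence everywhere. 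I expect the genuine technical core to be the layer-potential mapping properties in the Sobolev scale --- the $H^1$-boundedness of $W_\Omega$ (equivalently the $W^{-\frac12,2}\to H^1_{\operatorname{loc}}$ bound for $S_\Omega$) and the $W^{\frac12,2}$-boundedness of $w_\Omega$. On $C^{1,1}$-domains these remain accessible precisely because the double-layer kernel is weakly, rather than singularly, integrable, so the full Calder\'on--Zygmund apparatus required for general Lipschitz boundaries can be sidestepped; the jump relation for smooth densities, by contrast, is entirely classical.
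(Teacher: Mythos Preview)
Your proposal is correct, but your route to the jump formula differs from the paper's in a way worth spelling out.

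For the regularity claim the two arguments are close cousins: you use a \emph{harmonic} lift $F$ and Green's second identity to write $W_\Omega(f)=-F+S_\Omega(\mathcal{N}\cdot\nabla F)$, then invoke the single-layer mapping $S_\Omega:W^{-\frac12,2}(\partial\Omega)\to H^1_{\operatorname{loc}}$. The paper instead takes an arbitrary $H^1$ extension $\widetilde f$ (with compact support), applies only the divergence theorem to obtain $W_\Omega(f)(x)=-\widetilde f(x)+\frac{1}{4\pi}\int_\Omega\nabla\widetilde f(y)\cdot\frac{x-y}{|x-y|^3}\,d^3y$, and reads off the $\dot H^1$ bound directly from Hardy--Littlewood--Sobolev and Calder\'on--Zygmund estimates for the Newtonian potential of $\nabla\widetilde f$. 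When the extension is harmonic these formulas coincide (integrate by parts in the volume term). The paper's version has the virtue of being entirely self-contained, never touching boundary-layer mapping properties.

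For the jump relation the divergence is more substantial. You argue by density: establish the formula for smooth $f$, then pass to the limit using the continuity of $\operatorname{Tr}\circ W_\Omega$ and of $w_\Omega$ in $W^{\frac12,2}(\partial\Omega)$. The paper instead argues pointwise: it first proves (via Lebesgue-point theory for traces combined with the mean-value property of harmonic functions) that for any harmonic $u\in H^1_{\operatorname{loc}}(\overline\Omega)$ the non-tangential limit $\lim_{r\searrow 0}u(x-r\mathcal{N}(x))$ equals $\operatorname{Tr}(u)(x)$ for $\mathcal{H}^2$-a.e.\ $x$; then it quotes the classical jump relation for $L^q$ densities, $q>2$, applicable since $W^{\frac12,2}(\partial\Omega)\hookrightarrow L^4(\partial\Omega)$. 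One practical difference: your density argument \emph{requires} the boundedness of $w_\Omega$ on $W^{\frac12,2}(\partial\Omega)$ as an input, whereas in the paper's logical structure this boundedness is quoted (around (\ref{2E7})) as a \emph{consequence} of the present lemma. Your route is thus not circular, but it does shift the burden to an independent proof of that mapping property; the paper's pointwise argument avoids the issue entirely.
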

\begin{proof}[Proof of \Cref{3L1}]
	\underline{Step 1, $W_{\Omega}(f)\in H^1_{\operatorname{loc}}(\overline{\Omega})$:} We prove the stronger assertion that $W_{\Omega}$ is a linear, bounded operator from $W^{\frac{1}{2},2}(\partial\Omega)$ into $\dot{H}^1(\Omega):=\{h\in L^6(\Omega)\mid \nabla h\in L^2(\Omega)\}$. To see this we may fix an extension $\widetilde{f}\in H^1(\Omega)$ whose $H^1(\Omega)$-norm is bounded above by the $W^{\frac{1}{2},2}(\partial\Omega)$-norm of $f$ (modulo a constant which is independent of $f$) and whose support is contained in some bounded subset $U\subset \overline{\Omega}$ (again independent of $f$), see \cite[Proposition 3.31]{DD12}. We can then apply Gauss' formula and use the fact that $\operatorname{div}_y\left(\frac{x-y}{|x-y|^3}\right)=-4\pi\delta(x-y)$ where $\delta$ denotes the Dirac-delta to arrive at
	\begin{gather}
		\nonumber
		W_{\Omega}(f)(x)=-\widetilde{f}(x)+\frac{1}{4\pi}\int_{\Omega}\nabla \widetilde{f}(y)\cdot \frac{x-y}{|x-y|^3}d^3y\text{ for all }x\in \Omega.
	\end{gather}
	Now on the one hand $\widetilde{f}\in H^1(\Omega)$. On the other hand, according to the Hardy-Littlewood-Sobolev inequality, \cite[Chapter V]{S70}, we can control the $L^6(\Omega)$-norm of $\int_{\Omega}\nabla \widetilde{f}(y)\cdot \frac{x-y}{|x-y|^3}d^3y$ by means of the $L^2(\Omega)$-norm of $\nabla \widetilde{f}$. Further, we observe that $\int_{\Omega}\nabla \widetilde{f}(y)\cdot \frac{x-y}{|x-y|^3}d^3y=-\partial_{x^k}\int_{\Omega}\frac{\partial_{y^k}\widetilde{f}(y)}{|x-y|}d^3y$ and that $\int_{\Omega}\frac{\partial_{y^k}\widetilde{f}(y)}{|x-y|}d^3y$ corresponds to the Newton potential of the $L^2(\Omega)$-function $\partial_{x^k}\widetilde{f}$. We may replace the integral over $\Omega$ in $\int_{\Omega}\frac{\partial_{x^k}\widetilde{f}(y)}{|x-y|}d^3y$ by an integral over the bounded set $U$ since $\widetilde{f}$ is supported in $U$. It then follows from the regularity of the Newton potential, c.f. \cite[Theorem 9.9]{GT01}, that all second order derivatives of the Newton potential exist, are of class $L^2(\mathbb{R}^3)$ and that their $L^2(\mathbb{R}^3)$ and consequently their $L^2(\Omega)$-norm may be controlled by means of the $L^2(U)$-norm (and hence of the $L^2(\Omega)$-norm) of $\nabla \widetilde{f}$. We conclude
	\begin{gather}
		\label{3E2}
		\|W_{\Omega}(f)\|_{L^6(\Omega)}+\|\nabla W_{\Omega}(f)\|_{L^2(\Omega)}\leq c(\Omega)\|\nabla \widetilde{f}\|_{L^2(\Omega)}\leq C(\Omega)\|f\|_{W^{\frac{1}{2},2}(\partial\Omega)}
	\end{gather}
	for suitable constants $0<c,C<\infty$ independent of $f$.
	\newline
	\newline
	\underline{Step 2:} Here we prove that if $f\in H^1_{\operatorname{loc}}(\overline{\Omega})$ solves weakly $\Delta f=0$ in $\Omega$, then
	\begin{gather}
		\label{3E3}
		\lim_{r\searrow 0}f(x-r\mathcal{N}(x))=\operatorname{Tr}(f)(x)\text{ for }\mathcal{H}^2\text{-a.e. }x\in \partial\Omega
	\end{gather}
	where $\mathcal{N}(x)$ denotes the outward unit normal at $x$.
	We start with the following fact, c.f. \cite[Theorem 5.7]{EG15},
	\begin{gather}
		\label{3E4}
		\lim_{r\searrow 0}\dashint_{B_r(x)\cap \Omega}|f(y)-\operatorname{Tr}(f)(x)|d^3y=0\text{ for }\mathcal{H}^2\text{-a.e. }x\in \partial\Omega.
	\end{gather}
	It then follows further \cite[Theorem 2.6]{Dal18} that there is some $r_0>0$ such that for all $0<r\leq r_0$ we have $B_r(x-r\mathcal{N}(x))\subset \Omega$. In particular, if we fix $0<\lambda<1$ then $\overline{B_{\lambda r}(x-r\mathcal{N}(x))}\subset \Omega$ and we have the inclusion $B_{\lambda r}(x-r\mathcal{N}(x))\subset B_{2r}(x)$ because for every $z\in B_r(x-r\mathcal{N}(x))$ we have $|z-(x-r\mathcal{N}(x))|\leq r$ and on the other hand $|z-(x-r\mathcal{N}(x))|\geq |z-x|-r$ so that $|z-x|\leq 2r$ for any $0<\lambda<1$ and $z\in B_{\lambda r}(x-r\mathcal{N}(x))\subset B_r(x-r\mathcal{N}(x))$. Since $B_{\lambda r}(x-r\mathcal{N}(x))\subset \Omega$ we conclude $B_{\lambda r}(x-r\mathcal{N}(x))\subset \Omega\cap B_{2r}(x)$ and can compute
	\begin{gather}
		\nonumber
		\dashint_{B_{\lambda r}(x-r\mathcal{N}(x))}|f(y)-\operatorname{Tr}(f)(x)|d^3y=\frac{\int_{B_{\lambda r}(x-r\mathcal{N}(x))}|f(y)-\operatorname{Tr}(f)(x)|d^3y}{|B_{\lambda r}(x)|}
		\\
		\nonumber
		=\frac{|B_{2r}(x)\cap \Omega|}{|B_{\lambda r}(x)|}\frac{\int_{B_{\lambda r}(x-r\mathcal{N}(x))}|f(y)-\operatorname{Tr}(f)(x)|d^3y}{|B_{2r}(x)\cap \Omega|}\leq \frac{|B_{2r}(x)\cap \Omega|}{|B_{\lambda r}(x)|}\frac{\int_{B_{ r}(x-r\mathcal{N}(x))}|f(y)-\operatorname{Tr}(f)(x)|d^3y}{|B_{2r}(x)\cap \Omega|}
		\\
		\nonumber
		\leq \frac{|B_{2r}(x)\cap \Omega|}{|B_{\lambda r}(x)|}\dashint_{B_{2r}(x)\cap \Omega}|f(y)-\operatorname{Tr}(f)(x)|d^3y\leq \frac{|B_{2r}(x)|}{|B_{\lambda r}(x)|}\dashint_{B_{2r}(x)\cap \Omega}|f(y)-\operatorname{Tr}(f)(x)|d^3y
		\\
		\nonumber
		\leq \frac{8}{\lambda^3}\dashint_{B_{2r}(x)\cap \Omega}|f(y)-\operatorname{Tr}(f)(x)|d^3y\rightarrow 0\text{ as }r\searrow 0
	\end{gather}
	by means of (\ref{3E4}) for any fixed $0<\lambda<1$ and for $\mathcal{H}^2$-a.e. $x\in \partial\Omega$. In particular we deduce
	\begin{gather}
		\label{3E5}
		\dashint_{B_{\frac{r}{2}(x-r\mathcal{N}(x))}}f(y)d^3y\rightarrow \operatorname{Tr}(f)(x)\text{ for }\mathcal{H}^2\text{-a.e. }x\in \partial\Omega.
	\end{gather}
	However, since we assume $\Delta f=0$ in the weak sense, it follows from standard interior elliptic regularity results that $f$ is analytic in $\Omega$ and hence harmonic in the classical sense. Then, since $\overline{B_{\frac{r}{2}}(x-r\mathcal{N}(x))}\subset \Omega$, we deduce from the mean value property that $\dashint_{B_{\frac{r}{2}(x-r\mathcal{N}(x))}}f(y)d^3y=f(x-r\mathcal{N}(x))$ for all $0<r\leq r_0$ and consequently (\ref{3E5}) implies (\ref{3E3}) as desired.
	\newline
	\newline
	\underline{Step 3:} In this last step we deduce the claimed identity in the lemma. We conclude first from \cite[Theorem XIV]{Cial95} that for any fixed $\phi\in L^q(\partial\Omega)$ with $q>2$ we have
	\begin{gather}
		\nonumber
		\lim_{r\searrow 0}W_{\Omega}(\phi)(x-r\mathcal{N}(x))=-\frac{\phi(x)}{2}+w_{\Omega}(\phi)(x)\text{ for }\mathcal{H}^2\text{-a.e. }x\in \partial\Omega.
	\end{gather}
	We have the embedding $W^{\frac{1}{2},2}(\partial\Omega)\hookrightarrow L^4(\partial\Omega)$, \cite[Theorem 3.81]{DD12}, and accordingly we conclude
	\begin{gather}
		\label{3E6}
		\lim_{r\searrow 0}W_{\Omega}(f)(x-r\mathcal{N}(x))=-\frac{f(x)}{2}+w_{\Omega}(f)(x)\text{ for all }f\in W^{\frac{1}{2},2}(\partial\Omega)\text{ and }\mathcal{H}^2\text{-a.e. }x\in \partial\Omega.
	\end{gather}
	According to step 1 we have $W_{\Omega}(f)\in H^1_{\operatorname{loc}}(\overline{\Omega})$ and it is easy to verify that $\Delta W_{\Omega}(f)=0$ in the classical sense within $\Omega$ and so in particular in the weak sense, see also \cite[proposition 4.28]{RCM21}. So according to step 2 we find $\lim_{r\searrow 0}W_{\Omega}(f)(x-r\mathcal{N}(x))=\operatorname{Tr}(W_{\Omega}(f))(x)$ for $\mathcal{H}^2$-a.e. $x\in \partial\Omega$ and so the lemma follows now from the identity in (\ref{3E6}).
\end{proof}
\Cref{3L1} will allow us to prove the following important fact
\begin{cor}
	\label{3C2}
	Let $\Omega\subset\mathbb{R}^3$ be a bounded $C^{1,1}$-domain with possibly disconnected boundary. Then the operator
	\begin{gather}
		\nonumber
		T:\mathcal{H}_{\operatorname{ex}}(\Omega)\cap \mathcal{H}^{\perp_{L^2(\Omega)}}_D(\Omega)\rightarrow \mathcal{H}_{\operatorname{ex}}(\Omega)\cap \mathcal{H}^{\perp_{L^2(\Omega)}}_D(\Omega)\text{, }
		\nabla f\mapsto \left(x\mapsto \frac{\nabla_x}{4\pi}\int_{\Omega}\nabla f(y)\cdot \frac{x-y}{|x-y|^3}d^3y\right)
	\end{gather}
	is a well-defined linear contraction with respect to the $\|\cdot\|_{L^2(\Omega)}$-norm.
\end{cor}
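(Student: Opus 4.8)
The plan is to reduce the operator $T$ to the Leray--Helmholtz projection on all of $\mathbb{R}^3$, which renders both the well-definedness and the contraction estimate transparent. First I would rewrite the kernel as $\frac{x-y}{|x-y|^3}=\nabla_y\frac{1}{|x-y|}$ and, using $\Delta f=0$, integrate by parts to express the scalar potential $\Phi(x):=\frac{1}{4\pi}\int_\Omega \nabla f(y)\cdot\frac{x-y}{|x-y|^3}\,d^3y$ in two ways: as the single layer potential $\Phi(x)=\frac{1}{4\pi}\int_{\partial\Omega}\frac{(\nabla f\cdot\mathcal{N})(y)}{|x-y|}\,d\sigma(y)$ of the Neumann datum $\nabla f\cdot\mathcal{N}$, and (picking up the Dirac term at $y=x$) as $\Phi=W_{\Omega}(f)+f$ with $W_{\Omega}$ the operator from \eqref{3E1}. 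The second identity together with \Cref{3L1} gives $\Phi\in H^1(\Omega)$ (recall $H^1_{\operatorname{loc}}(\overline\Omega)=H^1(\Omega)$ for bounded $\Omega$), while the single layer representation shows that $\Phi$ is harmonic in $\Omega$. Hence $T(\nabla f)=\nabla\Phi$ is an exact harmonic field, i.e. $T(\nabla f)\in\mathcal{H}_{\operatorname{ex}}(\Omega)$.

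The central step is to extend $\nabla f$ by zero to $\tilde V:=\mathbf{1}_\Omega\nabla f\in L^2(\mathbb{R}^3,\mathbb{R}^3)$ and to identify $\nabla\Phi=\mathbb{Q}\tilde V$ on all of $\mathbb{R}^3$, where $\mathbb{Q}$ is the $L^2(\mathbb{R}^3)$-orthogonal projection onto gradient fields. Indeed, $\Phi$ agrees with the decaying distributional solution $p$ of $\Delta p=\operatorname{div}\tilde V$, since integrating $-\frac{1}{4\pi}\int\frac{\operatorname{div}\tilde V(y)}{|x-y|}\,d^3y$ by parts reproduces $\Phi$; therefore $\nabla\Phi=\nabla p=\mathbb{Q}\tilde V$. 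Because $\mathbb{Q}$ is an orthogonal projection on $L^2(\mathbb{R}^3)$, one obtains $\|T(\nabla f)\|_{L^2(\Omega)}=\|\nabla\Phi\|_{L^2(\Omega)}\le\|\nabla\Phi\|_{L^2(\mathbb{R}^3)}=\|\mathbb{Q}\tilde V\|_{L^2(\mathbb{R}^3)}\le\|\tilde V\|_{L^2(\mathbb{R}^3)}=\|\nabla f\|_{L^2(\Omega)}$, which is precisely the asserted contraction property $\|T\|\le 1$.

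It then remains to verify $T(\nabla f)\perp\mathcal{H}_D(\Omega)$, so that $T$ maps into the stated codomain. For $\nabla h\in\mathcal{H}_D(\Omega)$ the condition $\nabla h\times\mathcal{N}=0$ means $h$ is constant, say $c_k$, on each boundary component $\partial\Omega_k$; integrating by parts and using $\Delta\Phi=0$ in $\Omega$ yields $\int_\Omega\nabla\Phi\cdot\nabla h\,d^3x=\sum_k c_k\int_{\partial\Omega_k}\partial_{\mathcal{N}}\Phi\,d\sigma$. Since $\operatorname{div}(\mathbb{Q}\tilde V)=\operatorname{div}\tilde V$ is a surface distribution supported on $\partial\Omega$, the function $\Phi$ is harmonic in each connected component of $\mathbb{R}^3\setminus\overline\Omega$; a divergence-theorem argument on these complementary regions, using the decay of $\nabla\Phi$ at infinity (guaranteed by the vanishing total charge $\int_{\partial\Omega}\nabla f\cdot\mathcal{N}\,d\sigma=\int_\Omega\Delta f\,d^3x=0$), forces $\int_{\partial\Omega_k}\partial_{\mathcal{N}}\Phi\,d\sigma=0$ for every $k$. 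Hence the sum vanishes for all admissible $(c_k)$ and the orthogonality follows; this argument in fact shows the image lies in $\mathcal{H}_D^{\perp}$ for every input, the restriction of the domain to $\mathcal{H}_{\operatorname{ex}}\cap\mathcal{H}_D^{\perp}$ serving only to make $T$ an endomorphism.

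The main obstacle is making the identification $\nabla\Phi=\mathbb{Q}\tilde V$ rigorous: the integration by parts pairs the order-one distribution $\operatorname{div}\tilde V$ (which carries a surface term on $\partial\Omega$ because $\tilde V$ jumps there) against the non-smooth, non-compactly supported kernel $\frac{1}{|x-\cdot|}$, so it must be justified through $H^{-1}$--$H^1$ duality together with a cutoff controlling the behaviour at infinity. The second delicate point is the component-wise flux computation when $\partial\Omega$ is disconnected, which requires enough boundary regularity of the single layer potential $\Phi$ (its one-sided normal derivatives and the applicability of the divergence theorem in the complementary domains); both points are standard in potential theory but deserve care.
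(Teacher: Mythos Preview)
Your argument establishes only the non-expansive bound $\|T\|\le 1$, whereas the corollary (and the rest of the paper) requires a \emph{strict} contraction: there must exist a fixed $0<\lambda<1$ with $\|T(\nabla f)\|_{L^2(\Omega)}\le\lambda\|\nabla f\|_{L^2(\Omega)}$ for all $\nabla f$ in the domain. This strictness is what feeds the Banach fixed-point theorem in Step~2 of the proof of \Cref{2T2} and what makes the Neumann series in \Cref{4C4} and \Cref{2T9} converge exponentially. The Leray projection identity $T(\nabla f)=\mathbb{Q}\tilde V|_\Omega$ is correct and gives $\|T\|\le 1$ immediately, but it cannot produce a uniform gap: you would need to show that a fixed positive fraction of the $L^2(\mathbb{R}^3)$-energy of $\mathbb{Q}\tilde V$ always sits in $\overline{\Omega}^c$, or equivalently that $\|\nabla W_\Omega(f)\|^2_{L^2(\Omega)}+\|\nabla W_{\overline{\Omega}^c}(f)\|^2_{L^2(\overline{\Omega}^c)}$ controls $\|\nabla f\|^2_{L^2(\Omega)}$ from below. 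The paper obtains this via the energy identity (\ref{3E11}) and a compactness/contradiction argument that ultimately hinges on the jump formula of \Cref{3L1}; this is precisely the substantive content you are missing.

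There is also an error in your orthogonality step. The claim $\int_{\partial\Omega_k}\partial_{\mathcal{N}}\Phi\,d\sigma=0$ for every $k$ is false: the divergence theorem in the complementary region gives you the \emph{exterior} normal flux of the single layer potential, but you need the \emph{interior} one, and these differ by the density $\nabla f\cdot\mathcal{N}$ via the standard jump relation. Tracking the jump yields $\int_\Omega\nabla\Phi\cdot\nabla h\,d^3x=-\int_\Omega\nabla f\cdot\nabla h\,d^3x$ (exactly the identity the paper derives directly), which vanishes \emph{only because} $\nabla f\perp\mathcal{H}_D(\Omega)$. In particular your final remark that the image lies in $\mathcal{H}_D^{\perp}$ ``for every input'' is incorrect: the restriction on the domain is essential, not cosmetic.
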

During the course of the proof we will need the following simple fact
\begin{lem}
	\label{3L3}
	Let $\Omega\subset\mathbb{R}^3$ be a bounded $C^{1,1}$-domain with possibly disconnected boundary. Then there is some $c>0$ such that for all $f\in H^1(\Omega)$ which weakly solve $\Delta f=0$ in $\Omega$ we have
	\begin{gather}
		\|\nabla f\|_{L^2(\Omega)}\leq c\|\operatorname{Tr}(f)\|_{W^{\frac{1}{2},2}(\partial\Omega)}.
	\end{gather}
\end{lem}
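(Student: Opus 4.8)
The plan is to combine a bounded trace-extension operator with the classical Dirichlet principle: a weakly harmonic $f$ is the minimiser of the Dirichlet energy among all $H^1(\Omega)$-functions sharing its boundary trace, so its energy is controlled by that of any convenient extension of $\operatorname{Tr}(f)$.

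First I would invoke the extension result already used in the excerpt, \cite[Proposition 3.31]{DD12}. Setting $\kappa:=\operatorname{Tr}(f)\in W^{\frac{1}{2},2}(\partial\Omega)$, this gives some $k\in H^1(\Omega)$ with $\operatorname{Tr}(k)=\kappa$ and $\|k\|_{H^1(\Omega)}\leq C(\Omega)\|\kappa\|_{W^{\frac{1}{2},2}(\partial\Omega)}$, where $C(\Omega)$ depends only on $\Omega$ and not on $f$. The possible disconnectedness of $\partial\Omega$ is irrelevant here, as the extension operator is insensitive to it.

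Second, I would exploit that $f-k$ has vanishing trace. Since $\Omega$ is $C^{1,1}$ (hence Lipschitz), the kernel of the trace map coincides with $H^1_0(\Omega)$, so $f-k\in H^1_0(\Omega)$ and can be approximated in the $H^1$-norm by elements of $C_c^\infty(\Omega)$. The weak formulation of $\Delta f=0$ then yields $\int_\Omega \nabla f\cdot\nabla(f-k)\,d^3x=0$, that is $\int_\Omega|\nabla f|^2\,d^3x=\int_\Omega \nabla f\cdot\nabla k\,d^3x$. Applying Cauchy--Schwarz and cancelling one factor of $\|\nabla f\|_{L^2(\Omega)}$ (the estimate being trivial if $\nabla f=0$) gives
\begin{gather}
	\nonumber
	\|\nabla f\|_{L^2(\Omega)}\leq \|\nabla k\|_{L^2(\Omega)}\leq \|k\|_{H^1(\Omega)}\leq C(\Omega)\|\operatorname{Tr}(f)\|_{W^{\frac{1}{2},2}(\partial\Omega)},
\end{gather}
which is the claimed inequality with $c=C(\Omega)$.

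The only genuinely delicate ingredients, both entirely standard for $C^{1,1}$-domains, are the existence of a bounded right inverse of the trace operator and the identification $\operatorname{Ker}(\operatorname{Tr})=H^1_0(\Omega)$; the latter is precisely what legitimises using $f-k$ as a test function in the weak harmonicity condition. Everything else reduces to the one-line energy computation above, so I expect no serious obstacle, especially since the excerpt already cites the relevant extension result and operates throughout in the $C^{1,1}$-setting where these trace facts hold.
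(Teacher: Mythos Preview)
Your proof is correct but follows a different route from the paper's. The paper integrates by parts to the boundary via Green's formula, $\|\nabla f\|_{L^2(\Omega)}^2=\int_{\partial\Omega}\operatorname{Tr}(f)\,(\mathcal{N}\cdot\nabla f)\,d\sigma$, then estimates the right-hand side by the $W^{\frac{1}{2},2}\times W^{-\frac{1}{2},2}$ duality together with the continuity of the normal trace $H(\operatorname{div},\Omega)\to W^{-\frac{1}{2},2}(\partial\Omega)$, which gives $\|\mathcal{N}\cdot\nabla f\|_{W^{-\frac{1}{2},2}}\leq c\|\nabla f\|_{L^2(\Omega)}$ since $\nabla f$ is divergence-free. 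You instead stay in the interior and use the Dirichlet principle: a bounded right inverse of the trace produces an extension $k$ with controlled $H^1$-norm, and weak harmonicity of $f$ tested against $f-k\in H^1_0(\Omega)$ yields $\|\nabla f\|_{L^2}\leq\|\nabla k\|_{L^2}$ directly. Your argument is slightly more elementary in that it avoids the $H(\operatorname{div})$ normal-trace machinery, relying only on the extension operator (already cited in the paper) and the identification $\operatorname{Ker}(\operatorname{Tr})=H^1_0(\Omega)$; the paper's argument, on the other hand, makes explicit the role of the normal trace, which is consistent with how the lemma is later used alongside results such as \Cref{CT1}.
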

\begin{proof}[Proof of \Cref{3L3}]
	Fix any $f\in H^1(\Omega)$ with $\Delta f=0$ in $\Omega$ and compute
	\begin{gather}
		\nonumber
		\|\nabla f\|^2_{L^2(\Omega)}=\int_{\partial\Omega}\operatorname{Tr}(f)(x)\left(\mathcal{N}\cdot \nabla f(x)\right)d\sigma(x)\leq \|\operatorname{Tr}(f)\|_{W^{\frac{1}{2},2}(\partial\Omega)}\|\mathcal{N}\cdot \nabla f\|_{W^{-\frac{1}{2},2}(\partial\Omega)}
		\\
		\nonumber
		\leq c\|\nabla f\|_{H(\operatorname{div},\Omega)}\|\operatorname{Tr}(f)\|_{W^{\frac{1}{2},2}(\partial\Omega)}=c\|\nabla f\|_{L^2(\Omega)}\|\operatorname{Tr}(f)\|_{W^{\frac{1}{2},2}(\partial\Omega)}
	\end{gather}
	where we used the continuity of the normal trace with respect to the $H(\operatorname{div},\Omega)$-norm.
\end{proof}
\begin{proof}[Proof of \Cref{3C2}]
	Linearity is clear. Further we observe that $T(\nabla f)=S(\nabla f)-S(0)$ with $S$ being defined in (\ref{2E21}) and that it was shown in \cite[Lemma 6.2]{G24} that $S$ is a continuous mapping from $\mathcal{H}_{\operatorname{ex}}(\Omega)$ into $\mathcal{H}_{\operatorname{ex}}(\Omega)$.
	
	We show now that if $\nabla f\in \mathcal{H}_{\operatorname{ex}}(\Omega)$ is $L^2$-orthogonal to $\mathcal{H}_D(\Omega)$, then so is $T(\nabla f)$. Recall first that, (\ref{2E2}), $\mathcal{H}_D(\Omega)$ is the space of fields $\nabla h$ with $\Delta h=0$ in $\Omega$, $h\in H^1(\Omega)$ and $\nabla h\times \mathcal{N}=0$ on $\partial\Omega$. We note that by an approximation argument we may suppose that $f\in W^{2,p}(\Omega)$ for every $1\leq p<\infty$ which allows us to justify all of the upcoming integral manipulations rigorously. We then make use of the fact that $\frac{x-y}{|x-y|^3}=\nabla_y\frac{1}{|x-y|}$ and thus
	\begin{gather}
		\nonumber
		\int_{\Omega}\nabla f (y)\cdot \frac{x-y}{|x-y|^3}d^3y=\int_{\Omega}\nabla f\cdot \nabla_y\frac{1}{|x-y|}d^3y=\int_{\partial\Omega}\frac{\mathcal{N}(y)\cdot \nabla f(y)}{|x-y|}d\sigma(y)
	\end{gather}
	where we used that $\Delta f=0$. Consequently
	\begin{gather}
		\label{3E7}
		T(\nabla f)(x)=\frac{1}{4\pi}\int_{\partial\Omega}(\mathcal{N}(y)\cdot \nabla f(y))\frac{y-x}{|x-y|^3}d\sigma(y).
	\end{gather}
	Now $\partial\Omega$ has finitely many connected components and a unique connected component $\partial\Omega_0$ such that the finite domain $\Omega_0$ enclosed by $\partial\Omega_0$ contains $\Omega$. The finite domains $\Omega_1,\dots,\Omega_n$ enclosed by the $\partial\Omega_i$, $i=1,\dots,n$, are disjoint to $\Omega$. We can then fix any $\nabla h\in \mathcal{H}_D(\Omega)$ and note first, \cite[Lemma A.2]{G24}, $\nabla h\in W^{1,p}(\Omega)$ for all $1\leq p<\infty$ and consequently $h\in W^{2,p}(\Omega)$ for all $1\leq p<\infty$. From this one easily concludes that the boundary condition $\nabla h\times \mathcal{N}$ implies $h|_{\partial\Omega_i}=c_i\in \mathbb{R}$. Further, since $h$ is unique only up to a constant, we may suppose that $h|_{\partial\Omega_0}=0$. With these preliminary considerations we compute
	\begin{gather}
		\nonumber
		\int_{\Omega}\nabla h(x)\cdot T(\nabla f)(x)d^3x=\frac{1}{4\pi}\int_{\partial\Omega}\left(\mathcal{N}(y)\cdot \nabla f(y)\right)\int_{\Omega}\frac{y-x}{|x-y|^3}\cdot \nabla h(x)d^3xd\sigma(y)
	\end{gather}
	and 
	\begin{gather}
		\nonumber
		\int_{\Omega}\frac{y-x}{|x-y|^3}\cdot \nabla h(x)d^3x=-4\pi h(y)+\int_{\partial\Omega}h(x)\mathcal{N}(x)\cdot \frac{y-x}{|x-y|^3}d\sigma(x)
		\\
		\nonumber
		=-4\pi h(y)+\sum_{i=1}^nc_i\int_{\partial\Omega_i}\mathcal{N}(x)\cdot \frac{y-x}{|y-x|^3}d\sigma(x)=-4\pi h(y)
	\end{gather}
	where we used that $\operatorname{div}_x\left(\frac{y-x}{|x-y|^3}\right)=\operatorname{div}_x\left(\nabla_x\frac{1}{|x-y|}\right)=-4\pi \delta(x-y)$ where $\delta(x-y)$ denotes the Dirac-delta and where we used that $h|_{\partial\Omega_0}=0$ and that $\Omega_i\cap\Omega=\emptyset$ for $i=1,\dots,n$. Combining these calculations yields
	\begin{gather}
		\nonumber
		\int_{\Omega}\nabla h(x)\cdot T(\nabla f)(x)d^3x=-\int_{\partial\Omega}h(y)\left(\mathcal{N}(y)\cdot \nabla f(y)\right)d\sigma(y)=-\int_{\Omega}\nabla h(y)\cdot \nabla f(y)d^3y
	\end{gather}
	and therefore $\nabla f\in \mathcal{H}^{\perp_{L^2(\Omega)}}_D(\Omega)$ implies $T(\nabla f)\in \mathcal{H}^{\perp_{L^2(\Omega)}}_D(\Omega)$.
	\newline
	\newline
	We are left with proving that $T$ is a contraction. By linearity of $T$ we need to prove that there exists some $0<\lambda<1$ such that
	\begin{gather}
		\label{3E9}
		\|T(\nabla f)\|_{L^2(\Omega)}\leq \lambda \|\nabla f\|_{L^2(\Omega)}.
	\end{gather}
	We recall the relationship $T(\nabla f)=S(\nabla f)-S(0)$ between the operator $T$ and the operator $S$, c.f. (\ref{2E21}). It has been shown in \cite[Lemma 6.2]{G24} that the operator $S$ is firmly non-expansive, i.e. $\|S(\nabla f)-S(\nabla \widetilde{f})\|^2_{L^2(\Omega)}\leq \langle \nabla f-\nabla \widetilde{f},S(\nabla f)-S(\nabla \widetilde{f})\rangle_{L^2(\Omega)}$ and consequently, setting $\nabla \widetilde{f}=0$, so is $T$. This shows that $T$ is a weak contraction, i.e. (\ref{3E9}) is satisfied with $\lambda=1$. Our goal now will be to exclude the possibility $\lambda=1$ by arguing by contradiction.
	\newline
	First we will equivalently reformulate the property of $T$ being a contraction. It follows from \cite[Equation (6.2)]{G24} and the relation between $S$ and $T$ that
	\begin{gather}
		\label{3E10}
		\|T(\nabla f)\|^2_{L^2(\Omega)}=\|\nabla f\|^2_{L^2(\Omega)}-\|\nabla f-T(\nabla f)\|^2_{L^2(\Omega)}-2\|T(\nabla f)\|^2_{L^2(\mathbb{R}^3\setminus \overline{\Omega})}
	\end{gather}
	where we observe that $T(\nabla f)$ is still well-defined on the complement of $\Omega$. We now observe that for $x\in \mathbb{R}^3\setminus \partial\Omega$
	\begin{gather}
		\nonumber
		\frac{1}{4\pi}\int_{\Omega}\nabla f(y)\cdot \frac{x-y}{|x-y|^3}d^3y=\frac{1}{4\pi}\int_{\partial\Omega}f(y)\left(\mathcal{N}(y)\cdot \frac{x-y}{|x-y|^3}\right)d\sigma(y)+\begin{cases}
			f(x) & \text{ if }x\in \Omega \\
			0 & \text{ if }x\in \mathbb{R}^3\setminus \overline{\Omega}
		\end{cases}
	\end{gather}
	where we used that $-\Delta_y\frac{1}{|x-y|}=4\pi \delta (x-y)$ with the Dirac-delta $\delta$. We recall the definition of $W_{\Omega}$, (\ref{3E1}), and note that we may then express (\ref{3E10}) equivalently as
	\begin{gather}
		\label{3E11}
		\|T(\nabla f)\|^2_{L^2(\Omega)}=\|\nabla f\|^2_{L^2(\Omega)}-\|\nabla W_{\Omega}(f)\|^2_{L^2(\Omega)}-2\|\nabla W_{\overline{\Omega}^c}(f)\|^2_{L^2(\overline{\Omega}^c)}.
	\end{gather}
	With this we find the following equivalent characterisation of the contraction property of $T$
	\begin{gather}
		\nonumber
		\|T(\nabla f)\|^2_{L^2(\Omega)}\leq \lambda ^2\|\nabla f\|^2_{L^2(\Omega)}\text{ for some }0<\lambda<1
		\\
		\nonumber
		\Leftrightarrow
		(1-\lambda^2)\|\nabla f\|^2_{L^2(\Omega)}\leq \|\nabla W_{\Omega}(f)\|^2_{L^2(\Omega)}+2\|\nabla W_{\overline{\Omega}^c}(f)\|^2_{L^2(\overline{\Omega}^c)}\text{ for a }0<\lambda<1
		\\
		\nonumber
		\Leftrightarrow
		\|\nabla f\|^2_{L^2(\Omega)}\leq \frac{\|\nabla W_{\Omega}(f)\|^2_{L^2(\Omega)}+2\|\nabla W_{\overline{\Omega}^c}(f)\|^2_{L^2(\overline{\Omega}^c)}}{1-\lambda^2}\text{ for a }0<\lambda<1
		\\
		\nonumber
		\Leftrightarrow
		\|\nabla f\|^2_{L^2(\Omega)}\leq c\left(\|\nabla W_{\Omega}(f)\|^2_{L^2(\Omega)}+2\|\nabla W_{\overline{\Omega}^c}(f)\|^2_{L^2(\overline{\Omega}^c)}\right)\text{ for some }1<c<\infty
		\\
		\label{3E12}
		\Leftrightarrow 	\|\nabla f\|^2_{L^2(\Omega)}\leq C\left(\|\nabla W_{\Omega}(f)\|^2_{L^2(\Omega)}+\|\nabla W_{\overline{\Omega}^c}(f)\|^2_{L^2(\overline{\Omega}^c)}\right)\text{ for some }0<C<\infty.
	\end{gather}
	We will use the equivalent reformulation (\ref{3E12}) to argue by contradiction. We suppose that there exists a sequence $(\nabla f_n)_n\subset \mathcal{H}_{\operatorname{ex}}(\Omega)\cap \mathcal{H}^{\perp_{L^2(\Omega)}}_D(\Omega)$ such that
	\begin{gather}
		\label{3E13}
		\|\nabla f_n\|^2_{L^2(\Omega)}\geq n\left(\|\nabla W_{\Omega}(f_n)\|^2_{L^2(\Omega)}+\|\nabla W_{\overline{\Omega}^c}(f_n)\|^2_{L^2(\overline{\Omega}^c)}\right)\text{ for all }n.
	\end{gather}
	We note that the $f_n$ are determined up to additive constants and so we may suppose that $\int_{\Omega}f_nd^3x=0$ for all $n$. In addition we may replace the $f_n$ by $\frac{f_n}{\|\operatorname{Tr}(f_n)\|_{W^{\frac{1}{2},2}(\partial\Omega)}}$ and observe that according to \Cref{3L3} for this choice of $f_n$ we will also have $\|\nabla f_n\|_{L^2(\Omega)}\leq c$ for some $c$ independent of $n$. Consequently we obtain a sequence of function $(f_n)_n\subset H^1(\Omega)$ with $\Delta f_n=0$ in $\Omega$ for all $n$ and
	\begin{gather}
		\label{3E14}
		\|\nabla W_{\Omega}(f_n)\|^2_{L^2(\Omega)}+\|\nabla W_{\overline{\Omega}^c}(f_n)\|^2_{L^2(\overline{\Omega}^c)}\leq \frac{c}{n}\text{, }\int_{\Omega}f_nd^3x=0\text{ and }\|\operatorname{Tr}(f_n)\|_{W^{\frac{1}{2},2}(\partial\Omega)}=1\text{ for all }n.
	\end{gather}
	By Poincar\'{e}'s inequality we see that also $\|f_n\|_{H^1(\Omega)}\leq c$ for some constant $c$ independent of $n$ (here we use the letter $c$ to denote a generic constant which may differ in distinct expressions). Consequently $f_n\rightharpoonup f$ weakly in $H^1(\Omega)$.
	As we have seen previously $T(\nabla f)=\nabla W_{\Omega}(f)+\nabla f$ and since $T$ is $L^2(\Omega)$-continuous and $\nabla f_n$ converges weakly to $\nabla f$ we conclude that $T(\nabla f_n)$ converges weakly in $L^2(\Omega)$ to $T(\nabla f)$ and consequently $\nabla W_{\Omega}(f_n)$ converges weakly to $\nabla W_{\Omega}(f)$ in $L^2(\Omega)$. Further, we have also seen that $T(\nabla f)=-\nabla W_{\overline{\Omega}^c}(f)$ on $\overline{\Omega}^c$ where the minus sign stems from the fact that the outward unit normal to $\overline{\Omega}^c$ equals the inward unit normal of $\Omega$ at any given point of the boundary. It follows further from (\ref{3E10}) and the $L^2(\Omega)$-boundedness of $T$ that if $T$ is viewed as a map $T:\mathcal{H}_{\operatorname{ex}}(\Omega)\rightarrow L^2(\overline{\Omega}^c,\mathbb{R}^3)$ then it is a well-defined, linear bounded operator. Hence, $\nabla W_{\overline{\Omega}^c}(f_n)$ converges weakly to $\nabla W_{\overline{\Omega}^c}(f)$ in $L^2(\overline{\Omega}^c)$. On the other hand (\ref{3E14}) tells us that $\nabla W_{\Omega}(f_n)$ and $\nabla W_{\overline{\Omega}^c}(f_n)$ converge strongly in $L^2$ to zero from which we infer that $\nabla W_{\Omega}(f)=0$ in $\Omega$ and $\nabla W_{\overline{\Omega}^c}(f)=0$ in $\overline{\Omega}^c$.
	We can now define the following linear, bounded operator, where the well-definedness and boundedness is a consequence of the regularity properties of the Newton potential \cite[Theorem 9.9]{GT01} and the Hardy-Littlewood-Sobolev inequality \cite[Chapter V]{S70}, see also the proof of \Cref{3L1} and \cite[Lemma 6.2]{G24},
	\begin{gather}
		\nonumber
		H:L^2(\Omega,\mathbb{R}^3)\rightarrow H^1(\mathbb{R}^3)\text{, }X\mapsto \left(x\mapsto\frac{1}{4\pi}\int_{\Omega}X(y)\cdot \frac{x-y}{|x-y|^3}d^3y\right).
	\end{gather}
	The main observation now is that $T(\nabla f)=\nabla H(\nabla f)$ on $\Omega$ as well as $\overline{\Omega}^c$ and that we have seen that $T(\nabla f)=0$ on $\overline{\Omega}^c$. This implies that $H(\nabla f)$ is locally constant on $\overline{\Omega}^c$ and hence $\operatorname{Tr}(H(\nabla f))$ is also locally constant. Since $H(\nabla f)\in H^1(\mathbb{R}^3)$, its trace when viewed as a function on $\overline{\Omega}^c$ and when viewed as a function on $\Omega$ coincide. From this we conclude that $\operatorname{Tr}_{\Omega}(H(\nabla f))$ is locally constant. Finally, we have also seen that $\nabla W_{\Omega}(f)=0$ in $\Omega$ which we can express equivalently as $T(\nabla f)=\nabla f$ in $\Omega$ and hence $f$ and $H(\nabla f)$ differ only by a constant which implies that $\operatorname{Tr}(f)$ is locally constant on $\partial\Omega$. From this we conclude $\nabla f\times \mathcal{N}=0$ and consequently $\nabla f\in \mathcal{H}_D(\Omega)$ and thus $\nabla f=0$ since $(\nabla f_n)_n\subset \mathcal{H}^{\perp_{L^2(\Omega)}}_D(\Omega)$. The weak $H^1(\Omega)$-convergence of the $f_n$ to $f$ also implies that $\int_{\Omega}fd^3x=0$ from which we conclude that $f=0$ in $\Omega$.
	
	Due to the continuity of the trace operator we find $\operatorname{Tr}(f_n)\rightharpoonup \operatorname{Tr}(f)=0$ weakly in $W^{\frac{1}{2},2}(\partial\Omega)$ and therefore (upon passing to a subsequence if necessary) $\operatorname{Tr}(f_n)\rightarrow \operatorname{Tr}(f)=0$ strongly in $L^1(\partial\Omega)$, \cite[Theorem 3.85]{DD12}. This allows us to estimate the average of $W_{\Omega}(f_n)$ as follows
	\begin{gather}
		\nonumber
		\left|\int_{\Omega}W_{\Omega}(f_n)(x)d^3x\right|=\frac{1}{4\pi}\left|\int_{\partial\Omega}f_n(y)\int_{\Omega}\mathcal{N}(y)\cdot \frac{x-y}{|x-y|^3}d^3xd\sigma(y)\right|
		\\
		\nonumber
		\leq \frac{1}{4\pi}\int_{\partial\Omega}|f_n(y)|\int_{\Omega}\frac{1}{|x-y|^2}d^3xd\sigma(y)\leq \frac{\sup_{y\in \partial\Omega}\int_{\Omega}\frac{1}{|x-y|^2}d^3x}{4\pi}\|f_n\|_{L^1(\partial\Omega)}=c\|\operatorname{Tr}(f_n)\|_{L^1(\partial\Omega)}\rightarrow 0.
	\end{gather}
	We can then make use of Poincar\'{e}'s inequality to estimate
	\begin{gather}
		\nonumber
		\|W_{\Omega}(f_n)\|_{L^2(\Omega)}\leq c\left(\|\nabla W_{\Omega}(f_n)\|_{L^2(\Omega)}+\|\operatorname{Tr}(f_n)\|_{L^1(\partial\Omega)}\right)\rightarrow 0
	\end{gather}
	where we used once more (\ref{3E14}). We conclude that $\|W_{\Omega}(f_n)\|_{H^1(\Omega)}\rightarrow 0$ and therefore we obtain $\|\operatorname{Tr}(W_{\Omega}(f_n))\|_{W^{\frac{1}{2},2}(\partial\Omega)}\rightarrow0$. \Cref{3L1} then allows us to conclude
	\begin{gather}
		\label{3E15}
		\lim_{n\rightarrow\infty}\left(w_{\Omega}(f_n)-\frac{\operatorname{Tr}(f_n)}{2}\right)=0\text{ strongly in }W^{\frac{1}{2},2}(\partial\Omega).
	\end{gather}
	The goal now is to establish a corresponding version of (\ref{3E15}) where $\Omega$ is replaced by $\overline{\Omega}^c$. The additional technical issue is that $\overline{\Omega}^c$ may be disconnected and that it contains an unbounded component. We label the boundary components of $\partial\Omega$ by $\partial\Omega_0$, $\partial\Omega_1,\dots,\partial\Omega_n$ and the corresponding finite volumes enclosed by these components by $\Omega_0,\Omega_1,\dots,\Omega_n$ respectively where we pick the components such that $\Omega_0$ is the unique finite volume which contains $\Omega$. We can therefore write $\overline{\Omega}^c=U\cup \bigcup_{i=1}^n\Omega_i$ with $U:=\overline{\Omega}^c_0$. We know already from (\ref{3E14}) that $\|\nabla W_{\overline{\Omega}^c}(f_n)\|_{L^2(\overline{\Omega}^c)}\rightarrow 0$ as $n\rightarrow\infty$. Our goal is once more to show that $\|W_{\overline{\Omega}^c}(f_n)\|_{L^2(\overline{\Omega}^c)}\rightarrow 0$ which would prove that the $H^1(\overline{\Omega})$-norm of $W_{\overline{\Omega}^c}(f_n)$ converges to $0$. We start by fixing some $R\gg1$ such that $\overline{\Omega}\subset B_R(0)$ and such that $\frac{1}{|x-y|^2}\leq \frac{2}{|x|^2}$ for all $y\in \overline{\Omega}$ and all $x$ with $|x|\geq R$. We can then estimate
	\begin{gather}
		\nonumber
		|W_{\overline{\Omega}^c}(f_n)(x)|\leq \frac{1}{4\pi}\int_{\partial\Omega}\frac{|f_n(y)|}{|x-y|^2}d\sigma(y)\leq \frac{\|\operatorname{Tr}(f_n)\|_{L^1(\partial\Omega)}}{2\pi|x|^2}\text{ for all }|x|\geq R.
	\end{gather}
	This allows us to conclude
	\begin{gather}
		\label{3E16}
		\|W_{\overline{\Omega}^c}(f_n)\|_{L^2(B^c_R)}\leq c(R)\|\operatorname{Tr}(f_n)\|_{L^1(\partial\Omega)}\rightarrow 0\text{ as }n\rightarrow\infty
	\end{gather}
	where we used that we had shown that $\operatorname{Tr}(f_n)$ converges strongly to $0$ in $L^1(\partial\Omega)$. We are therefore left with estimating $\|W_{\overline{\Omega}^c}(f_n)\|_{L^2(B_R\setminus \overline{\Omega})}$. We define $U_R:=B_R(0)\cap U$ and we fix any open subset $V\in \{U_R,\Omega_1,\dots,\Omega_n\}$. We can now estimate similar as in the case of $\Omega$
	\begin{gather}
		\nonumber
		\left|\int_V W_{\overline{\Omega}^c}(f_n)(x)d^3x\right|\leq \frac{1}{4\pi}\int_{\partial \Omega}|f_n(y)|\int_V\frac{1}{|x-y|^2}d^3xd\sigma(y)\leq \frac{\sup_{y\in \partial\Omega}\int_V\frac{1}{|x-y|^2}d^3x}{4\pi}\|\operatorname{Tr}(f_n)\|_{L^1(\partial\Omega)}.
	\end{gather}
	Now $\sup_{y\in \partial\Omega}\int_V\frac{1}{|x-y|^2}d^3x<\infty$ because each $V$ is bounded and $\partial\Omega$ is compact. We conclude by means of Poincar\'{e}'s inequality
	\begin{gather}
		\nonumber
		\|W_{\overline{\Omega}^c}(f_n)\|_{L^2(V)}\leq c(V)\left(\|\nabla W_{\overline{\Omega}^c}(f_n)\|_{L^2(V)}+\|\operatorname{Tr}(f_n)\|_{L^1(\partial\Omega)}\right)\rightarrow 0\text{ as }n\rightarrow\infty.
	\end{gather}
	Combining this with (\ref{3E16}) and the $L^2(\overline{\Omega}^c)$-gradient estimate of $\nabla W_{\overline{\Omega}^c}(f_n)$ we conclude the relation $\|W_{\overline{\Omega}^c}(f_n)\|_{H^1(\overline{\Omega}^c)}\rightarrow 0$ as $n\rightarrow\infty$. In turn, the continuity of the trace map implies that $\operatorname{Tr}(W_{\overline{\Omega}^c}(f_n))$ converges strongly to zero in $W^{\frac{1}{2},2}(\partial\Omega)$-norm. It follows once more from \Cref{3L1}
	\begin{gather}
		\label{3E17}
		\lim_{n\rightarrow\infty}\left(w_{\overline{\Omega}^c}(f_n)-\frac{\operatorname{Tr}(f_n)}{2}\right)=0\text{ strongly in }W^{\frac{1}{2},2}(\partial\Omega).
	\end{gather}
	We lastly observe that for every $h\in W^{\frac{1}{2},2}(\partial\Omega)$ we have $w_{\overline{\Omega}^c}(h)=-w_{\Omega}(h)$ since the outward unit normal along the boundary of $\overline{\Omega}^c$ coincides with the inner unit normal along the boundary of $\Omega$, i.e. it equals minus the outward unit normal along the boundary of $\Omega$. We can therefore add equations (\ref{3E15}) and (\ref{3E17}) and conclude
	\begin{gather}
		\nonumber
		\operatorname{Tr}(f_n)\rightarrow 0\text{ strongly in }W^{\frac{1}{2},2}(\partial\Omega).
	\end{gather}
	This contradicts the fact that the $f_n$ were chosen such that $\|\operatorname{Tr}(f_n)\|_{W^{\frac{1}{2},2}(\partial\Omega)}=1$ for all $n$, recall (\ref{3E14}). We conclude that there must exist some $C>0$ satisfying the last inequality in (\ref{3E12}) which we have shown to be equivalent to the contraction property of the operator $T$.
\end{proof}
In the upcoming proofs we will often construct currents $j$ as the twisted tangential trace of some $B\in H(\operatorname{curl},\Omega)$ in the sense that we will set $j:=B\times \mathcal{N}$ which will be well-defined elements of $\left(W^{-\frac{1}{2},2}(\partial\Omega)\right)^3$. However, to be valid currents we further need to guarantee that $B\times \mathcal{N}$ is div-free, i.e. that it belongs to the more restrictive space $W^{-\frac{1}{2},2}\mathcal{V}_0(\partial\Omega)$ which we defined as the completion of the space of $L^2\mathcal{V}_0(\partial\Omega)$, the square integrable, div-free fields tangent to $\partial\Omega$, with respect to the $\|\cdot \|_{W^{-\frac{1}{2},2}(\partial\Omega)}$-norm. The following provides a sufficient condition for this to be the case and the currents which we construct in the upcoming proofs are always of this type so that we will make repeatedly use of \Cref{3L4} without further explicit mention. Before we state the theorem we note that for any given $X\in H(\operatorname{curl},\Omega)$ we have $\operatorname{curl}(X)\in H(\operatorname{div},\Omega)$ and that therefore $\operatorname{curl}(X)$ always has a well-defined normal trace within the space $W^{-\frac{1}{2},2}(\partial\Omega)$.
\begin{lem}
	\label{3L4}
	Let $\Omega\subset\mathbb{R}^3$ be a bounded $C^{1,1}$-domain. Let $X\in H(\operatorname{curl},\Omega)$ such that $\mathcal{N}\cdot \operatorname{curl}(X)=0$. Then $X\times \mathcal{N}\in W^{-\frac{1}{2},2}\mathcal{V}_0(\partial\Omega)$.
\end{lem}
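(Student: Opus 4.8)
The plan is to verify two properties of $j:=X\times\mathcal{N}$: that it is a tangential, surface-divergence-free functional, and that it genuinely lies in the completion $W^{-\frac{1}{2},2}\mathcal{V}_0(\partial\Omega)$ of $L^2\mathcal{V}_0(\partial\Omega)$ rather than merely in $\left(W^{-\frac{1}{2},2}(\partial\Omega)\right)^3$. Tangentiality is immediate, since $X\times\mathcal{N}$ is pointwise orthogonal to $\mathcal{N}$, so the content lies in the divergence-free property and in the membership in the completion. The algebraic heart of the matter is the identity stating that the surface divergence of $X\times\mathcal{N}$ equals, up to sign, the normal trace $\mathcal{N}\cdot\operatorname{curl}(X)$.

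First I would establish this identity for smooth fields. For $X\in C^1(\overline{\Omega},\mathbb{R}^3)$ and $\psi\in C^1_c(\mathbb{R}^3)$ one has $\operatorname{div}(X\times\nabla\psi)=\nabla\psi\cdot\operatorname{curl}(X)$ since $\operatorname{curl}(\nabla\psi)=0$. Applying the divergence theorem to $X\times\nabla\psi$ and using $(X\times\nabla\psi)\cdot\mathcal{N}=-(X\times\mathcal{N})\cdot\nabla\psi$ yields $\int_{\Omega}\nabla\psi\cdot\operatorname{curl}(X)\,d^3x=-\int_{\partial\Omega}(X\times\mathcal{N})\cdot\nabla\psi\,d\sigma$. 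On the other hand, the Green formula for $\operatorname{curl}(X)\in H(\operatorname{div},\Omega)$ together with $\operatorname{div}(\operatorname{curl}(X))=0$ gives $\int_{\Omega}\nabla\psi\cdot\operatorname{curl}(X)\,d^3x=\langle\mathcal{N}\cdot\operatorname{curl}(X),\operatorname{Tr}(\psi)\rangle_{\partial\Omega}$. Hence $\int_{\partial\Omega}(X\times\mathcal{N})\cdot\nabla\psi\,d\sigma=-\langle\mathcal{N}\cdot\operatorname{curl}(X),\operatorname{Tr}(\psi)\rangle_{\partial\Omega}$. Both sides are continuous in $X$ with respect to the $H(\operatorname{curl},\Omega)$-norm, because $X\mapsto X\times\mathcal{N}$ is bounded into $\left(W^{-\frac{1}{2},2}(\partial\Omega)\right)^3$ and $X\mapsto\mathcal{N}\cdot\operatorname{curl}(X)$ is bounded into $W^{-\frac{1}{2},2}(\partial\Omega)$, so by density of $C^1(\overline{\Omega})$ in $H(\operatorname{curl},\Omega)$ and the trace theorems (\cite{GR86}) the identity persists for all $X\in H(\operatorname{curl},\Omega)$. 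The hypothesis $\mathcal{N}\cdot\operatorname{curl}(X)=0$ then forces the pairing of $X\times\mathcal{N}$ against every gradient $\nabla\psi$ to vanish, which is precisely the divergence-free condition in the sense used in the paper.

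It remains to place $X\times\mathcal{N}$ inside the completion of $L^2\mathcal{V}_0(\partial\Omega)$. I would choose smooth $X_k\to X$ in $H(\operatorname{curl},\Omega)$; then $X_k\times\mathcal{N}\in L^2\mathcal{V}(\partial\Omega)$, with $X_k\times\mathcal{N}\to X\times\mathcal{N}$ in $W^{-\frac{1}{2},2}(\partial\Omega)$ and with surface divergence $-\mathcal{N}\cdot\operatorname{curl}(X_k)\to-\mathcal{N}\cdot\operatorname{curl}(X)=0$ in $W^{-\frac{1}{2},2}(\partial\Omega)$, the latter because $\operatorname{curl}(X_k)\to\operatorname{curl}(X)$ in $L^2(\Omega)$ and the normal trace is $H(\operatorname{div},\Omega)$-continuous. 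Invoking the $L^2$ Hodge decomposition on the closed surface $\partial\Omega$ I would write $X_k\times\mathcal{N}=\nabla_{\partial\Omega}u_k+Y_k$ with $Y_k\in L^2\mathcal{V}_0(\partial\Omega)$ and $u_k$ solving the scalar surface equation $\Delta_{\partial\Omega}u_k=\operatorname{div}_{\partial\Omega}(X_k\times\mathcal{N})$. Since the right-hand side tends to $0$ in $W^{-\frac{1}{2},2}(\partial\Omega)$, the correction $\nabla_{\partial\Omega}u_k$ tends to $0$ in $W^{-\frac{1}{2},2}(\partial\Omega)$, whence $Y_k\to X\times\mathcal{N}$ in $W^{-\frac{1}{2},2}(\partial\Omega)$ with $Y_k\in L^2\mathcal{V}_0(\partial\Omega)$, exhibiting $X\times\mathcal{N}$ as an element of the completion.

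The step I expect to be the main obstacle is this last one: controlling the gradient correction $\nabla_{\partial\Omega}u_k$ in the negative-order norm $\|\cdot\|_{W^{-\frac{1}{2},2}(\partial\Omega)}$, rather than merely in $L^2$. This requires a mapping property of the solution operator of the Laplace--Beltrami operator on the $C^{1,1}$-surface $\partial\Omega$ (whose metric is only Lipschitz), namely that it sends surface divergences of $W^{-\frac{1}{2},2}$-fields to gradients that are small in $W^{-\frac{1}{2},2}$, together with the compatibility of the surface Hodge decomposition with these dual norms. Once this low-regularity elliptic estimate is available, the convergence $\nabla_{\partial\Omega}u_k\to 0$, and hence the whole claim, follows by linearity and boundedness.
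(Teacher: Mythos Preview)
Your overall strategy is sound and shares its opening with the paper's proof: approximate $X$ by smooth $X_n$ in $H(\operatorname{curl},\Omega)$, note that $X_n\times\mathcal{N}\to X\times\mathcal{N}$ and $\mathcal{N}\cdot\operatorname{curl}(X_n)\to 0$ in $W^{-\frac{1}{2},2}(\partial\Omega)$, and then correct $X_n\times\mathcal{N}$ by subtracting a term that absorbs the residual surface divergence $-\mathcal{N}\cdot\operatorname{curl}(X_n)$. Where you diverge from the paper is in how this correction is built.

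You work \emph{on the surface}: you invoke the Hodge decomposition on $\partial\Omega$ and set the correction equal to $\nabla_{\partial\Omega}u_k$ with $\Delta_{\partial\Omega}u_k=\operatorname{div}_{\partial\Omega}(X_k\times\mathcal{N})$. As you correctly flag, this requires that $g\mapsto\nabla_{\partial\Omega}\Delta_{\partial\Omega}^{-1}g$ be bounded from $W^{-\frac{1}{2},2}(\partial\Omega)$ into $W^{-\frac{1}{2},2}(\partial\Omega)$ on the $C^{1,1}$-surface $\partial\Omega$, whose Riemannian metric is merely Lipschitz. Such a fractional elliptic estimate is plausible but is genuine external input not developed in the paper, so your argument has a real gap exactly where you anticipated it.

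The paper sidesteps this surface regularity issue entirely by building the correction \emph{in the bulk}. Instead of a surface Laplace--Beltrami problem, it solves the volume Neumann problem $\Delta f_n=0$ in $\Omega$, $\mathcal{N}\cdot\nabla f_n=\mathcal{N}\cdot\operatorname{curl}(X_n)$; the crucial estimate $\|\nabla f_n\|_{L^2(\Omega)}\leq c\,\|\mathcal{N}\cdot\operatorname{curl}(X_n)\|_{W^{-\frac{1}{2},2}(\partial\Omega)}$ then comes for free from \Cref{CT1}. Since $\nabla f_n\in L^2\mathcal{H}(\Omega)\cap\mathcal{H}_D^{\perp_{L^2(\Omega)}}(\Omega)$, \Cref{AppEL2} furnishes $A_n\in H^1(\Omega,\mathbb{R}^3)$ with $\operatorname{curl}(A_n)=\nabla f_n$ and $\|A_n\|_{H^1(\Omega)}\leq c\,\|\nabla f_n\|_{L^2(\Omega)}$. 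Setting $\widetilde{X}_n:=X_n-A_n\in H^1(\Omega,\mathbb{R}^3)$ one has $\mathcal{N}\cdot\operatorname{curl}(\widetilde{X}_n)=0$ exactly, and $\|A_n\times\mathcal{N}\|_{W^{-\frac{1}{2},2}(\partial\Omega)}\leq c\,\|A_n\|_{H(\operatorname{curl},\Omega)}\to 0$, so $\widetilde{X}_n\times\mathcal{N}\to X\times\mathcal{N}$ in $W^{-\frac{1}{2},2}(\partial\Omega)$. Finally one checks directly, via the same integration-by-parts identity you wrote down, that $\widetilde{X}_n\times\mathcal{N}\in L^2\mathcal{V}_0(\partial\Omega)$.

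In short: your correction lives on $\partial\Omega$ and hinges on Laplace--Beltrami regularity in $W^{-\frac{1}{2},2}$; the paper's correction lives in $\Omega$ and uses only standard $H(\operatorname{curl})$/$H(\operatorname{div})$ trace theory together with \Cref{CT1} and \Cref{AppEL2}, both proved within the paper. The bulk route is self-contained and avoids precisely the obstacle you identified.
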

\begin{proof}[Proof of \cref{3L4}]
	According to our definition we have to show that there exists a sequence $(j_n)_n\subset L^2\mathcal{V}_0(\partial\Omega)$ which converges to $X\times \mathcal{N}$ in $W^{-\frac{1}{2},2}(\partial\Omega)$-norm. We start by approximating $X$ by a sequence $(X_n)_n\subset C^\infty_c(\mathbb{R}^3,\mathbb{R}^3)$ in $H(\operatorname{curl},\Omega)$-norm which is possible according to \cite[Theorem 2.10]{GR86}. Then on the one hand
	\begin{gather}
		\label{3Extra1}
		X_n\times \mathcal{N}\rightarrow X\times \mathcal{N}\text{ and }\mathcal{N}\cdot \operatorname{curl}(X_n)\rightarrow \mathcal{N}\cdot \operatorname{curl}(X)=0 \text{ in }W^{-\frac{1}{2},2}(\partial\Omega)\text{ respectively }
	\end{gather}
	where we used the continuity of the respective traces and that $\|\operatorname{curl}(X)\|_{H(\operatorname{div},\Omega)}=\|\operatorname{curl}(X)\|_{L^2(\Omega)}$ since $\operatorname{curl}(X)$ is always divergence-free. We note that $\int_{\partial\Omega}\mathcal{N}\cdot \operatorname{curl}(X_n)d\sigma=\int_{\Omega}\operatorname{div}(\operatorname{curl}(X))d^3x=0$ and so there exist solutions $f_n$ of class $W^{2,2}(\Omega)$ to the following boundary value problems
	\begin{gather}
		\label{3Extra2}
		\Delta f_n=0\text{ in }\Omega\text{, }\mathcal{N}\cdot \nabla f_n=\mathcal{N}\cdot \operatorname{curl}(X_n)\text{ on }\partial\Omega\text{, }\int_{\Omega}f_nd^3x=0\text{, }f_n\in W^{2,2}(\Omega).
	\end{gather}
	We observe that $\nabla f_n\in \mathcal{H}^{\perp_{L^2(\Omega)}}_D(\Omega)$ since for any $\nabla h\in \mathcal{H}_D(\Omega)$ we compute
	\begin{gather}
		\nonumber
		\int_{\Omega}\nabla f_n\cdot \nabla hd^3x=\int_{\partial\Omega}h \left(\mathcal{N}\cdot \nabla f_n\right)d\sigma=\int_{\partial\Omega}h\left(\mathcal{N}\cdot \operatorname{curl}(X_n)\right)d\sigma=\int_{\Omega}\nabla h\cdot \operatorname{curl}(X_n)d^3x=0
	\end{gather}
	where we integrated by parts, c.f. \cite[Theorem 2.11]{GR86}, in the last step and used that $\operatorname{curl}(\nabla h)=0$ and $\nabla h\times \mathcal{N}=0$. We conclude that $\nabla f_n\in L^2\mathcal{H}(\Omega)\cap \mathcal{H}^{\perp_{L^2(\Omega)}}_D(\Omega)$. According to \Cref{AppEL2} we can then find some $A_n\in H^1(\Omega,\mathbb{R}^3)$ with $\operatorname{curl}(A_n)=\nabla f_n$ and satisfying the a priori estimate $\|A_n\|_{H^1(\Omega)}\leq c\|\nabla f_n\|_{L^2(\Omega)}$ for some $c>0$ independent of $n$. We define $\widetilde{X}_n:=X_n-A_n$ and observe that $\operatorname{curl}(\widetilde{X}_n)=\operatorname{curl}(X_n)-\nabla f_n$ and consequently $\mathcal{N}\cdot \operatorname{curl}(\widetilde{X}_n)=0$ for all $n$ by definition of the $f_n$. Further, we find
	\begin{gather}
		\nonumber
		\|A_n\times \mathcal{N}\|_{W^{-\frac{1}{2},2}(\partial\Omega)}\leq c\|A_n\|_{H(\operatorname{curl},\Omega)}\leq \tilde{c}\|A_n\|_{H^1(\Omega)}\leq \hat{c}\|\nabla f_n\|_{L^2(\Omega)}
	\end{gather}
	for some suitable constants $c,\tilde{c},\hat{c}>0$ independent of $n$. We observe further that according to \Cref{CT1} we have the estimate $\|\nabla f_n\|_{L^2(\Omega)}\leq c\|\mathcal{N}\cdot \nabla f_n\|_{W^{-\frac{1}{2},2}(\partial\Omega)}=\|\mathcal{N}\cdot \operatorname{curl}(X_n)\|_{W^{-\frac{1}{2},2}(\partial\Omega)}\rightarrow 0$ for a suitable $c>0$ independent of $n$ and where the last claim follows from (\ref{3Extra1}). We overall conclude the following
	\begin{gather}
		\label{3Extra3}
		\widetilde{X}_n\times \mathcal{N}\rightarrow X\times \mathcal{N}\text{ in }W^{-\frac{1}{2},2}(\partial\Omega)\text{, }\mathcal{N}\cdot \operatorname{curl}(\widetilde{X}_n)=0\text{, }\widetilde{X}_n\in H^1(\Omega,\mathbb{R}^3)\text{ for all }n.
	\end{gather}
	We lastly claim that $\widetilde{X}_n\times \mathcal{N}\in L^2\mathcal{V}_0(\partial\Omega)$ for all $n$ so that the convergence $\widetilde{X}_n\times \mathcal{N}\rightarrow X\times \mathcal{N}$ in $W^{-\frac{1}{2},2}(\partial\Omega)$ will conclude the proof of the lemma. To see this we note that by standard trace inequalities we have $\widetilde{X}_n|_{\partial\Omega}\in \left(W^{\frac{1}{2},2}(\partial\Omega)\right)^3\subset \left(L^2(\partial\Omega)\right)^3$. Then consequently $\widetilde{X}_n\times \mathcal{N}$ is of the same class and clearly also tangent to $\partial\Omega$ a.e. so that we are left with proving that $\widetilde{X}_n\times \mathcal{N}$ is div-free on the boundary. Fix any $\psi\in C^{\infty}_c(\mathbb{R}^3)$, then we can perform the following integral manipulations which are justified because the $\widetilde{X}_n$ are of class $H^1(\Omega,\mathbb{R}^3)$
	\begin{gather}
		\nonumber
		\int_{\partial\Omega}\nabla \psi\cdot (\widetilde{X}_n\times \mathcal{N})d\sigma=\int_{\partial\Omega}\mathcal{N}\cdot\left(\nabla \psi\times \widetilde{X}_n\right)d\sigma=\int_{\Omega}\operatorname{div}\left(\nabla \psi\times \widetilde{X}_n\right)d^3x
		\\
		\nonumber
		=\int_{\Omega}\operatorname{curl}(\nabla \psi)\cdot \widetilde{X}_nd^3x-\int_{\Omega}\nabla \psi\cdot \operatorname{curl}(\widetilde{X}_n)d^3x=-\int_{\Omega}\nabla \psi\cdot \operatorname{curl}(\widetilde{X}_n)d^3x=0
	\end{gather}
	where the last identity follows because $\operatorname{div}(\operatorname{curl}(\widetilde{X}_n))=0$ and $\mathcal{N}\cdot \operatorname{curl}(\widetilde{X}_n)=0$ according to (\ref{3Extra3}).
\end{proof}
\subsection{Proof of \Cref{2T2}}
\begin{proof}[Proof of \Cref{2T2}]
	$\quad$
	\newline
	\newline
	\underline{Step 1 $\operatorname{Im}(\operatorname{BS}_{\partial\Omega})\subseteq L^2\mathcal{H}(\Omega)\cap \mathcal{H}^{\perp_{L^2(\Omega)}}_D(\Omega)$:}
	\newline
	\newline
	First it follows from \cite[Lemma C.1]{G24} that $\operatorname{Im}(\operatorname{BS}_{\partial\Omega})\subseteq L^2\mathcal{H}(\Omega)$. So we only need to show that the image of the Biot-Savart operator is $L^2(\Omega)$-orthogonal to the space $\mathcal{H}_D(\Omega)$. This essentially follows from the arguments provided in \cite[Proposition 3.6]{G24} which we recall here. By an approximation argument we may suppose that $j\in L^2\mathcal{V}_0(\partial\Omega)$. We can then fix any $B\in \mathcal{H}_D(\Omega)$ and observe that
	\begin{gather}
		\nonumber
		\int_{\Omega}\operatorname{BS}_{\partial\Omega}(j)(x)\cdot B(x)d^3x=\int_{\partial\Omega}j(y)\cdot \operatorname{BS}_{\Omega}(B)(y)d\sigma(y)
	\end{gather}
	where the volume Biot-Savart operator is given by $\operatorname{BS}_{\Omega}(B)(x)=\frac{1}{4\pi}\int_{\Omega}B(y)\times \frac{x-y}{|x-y|^3}d^3y$. We know that $B\in W^{1,p}(\Omega,\mathbb{R}^3)$ for all $1\leq p<\infty$, \cite[Lemma A.2]{G24}, and we notice that upon integrating by parts we may write $\operatorname{BS}_{\Omega}(Y)(x)=\frac{1}{4\pi}\int_{\partial\Omega}\frac{Y(y)\times \mathcal{N}(y)}{|x-y|}d\sigma(y)+\frac{1}{4\pi}\int_{\Omega}\frac{\operatorname{curl}(Y)(y)}{|x-y|}d^3y$ for all $Y\in W^{1,q}(\Omega,\mathbb{R}^3)$ for some $q>3$, recall also (\ref{2E16}). Since for any $B\in \mathcal{H}_D(\Omega)$ we have $B\times \mathcal{N}=0$ and $\operatorname{curl}(B)=0$ we find $\operatorname{BS}_{\Omega}(B)=0$, see also \cite[Theorem B]{CDG01} for a characterisation of the kernel of the volume Biot-Savart operator in the context of smooth domains. We overall conclude that $\operatorname{Im}(\operatorname{BS}_{\partial\Omega})\subset \mathcal{H}^{\perp_{L^2(\Omega)}}_D(\Omega)\cap L^2\mathcal{H}(\Omega)$.
	\newline
	\newline
	\underline{Step 2 $\mathcal{H}_{\operatorname{ex}}(\Omega)\cap \mathcal{H}^{\perp_{L^2(\Omega)}}_D(\Omega)\subseteq \operatorname{Im}(\operatorname{BS}_{\partial\Omega})$:}
	\newline
	\newline
	Fix any $\nabla h\in \mathcal{H}_{\operatorname{ex}}(\Omega)\cap \mathcal{H}^{\perp_{L^2(\Omega)}}_D(\Omega)$. We assume first that $\nabla h\in W^{1,p}(\Omega,\mathbb{R}^3)$ for some $p>3$. Consider the operator
	\begin{gather}
		\nonumber
		T_{\nabla h}:\mathcal{H}_{\operatorname{ex}}(\Omega)\cap \mathcal{H}^{\perp_{L^2(\Omega)}}_D(\Omega)\rightarrow \mathcal{H}_{\operatorname{ex}}(\Omega)\cap \mathcal{H}^{\perp_{L^2(\Omega)}}_D(\Omega)\text{, }\nabla f\mapsto \nabla h+T(\nabla f).
	\end{gather}
	According to \Cref{3C2} we have $T(\nabla f)\in \mathcal{H}_{\operatorname{ex}}(\Omega)\cap \mathcal{H}^{\perp_{L^2(\Omega)}}_D(\Omega)$ and since $\nabla h\in \mathcal{H}_{\operatorname{ex}}(\Omega)\cap \mathcal{H}^{\perp_{L^2(\Omega)}}_D(\Omega)$ we see that $T_{\nabla h}$ is well-defined and a contraction with the same contraction constant as $T$. According to the Banach fixed-point theorem the operator $T_{\nabla h}$ admits a unique fix point $\nabla f_*\in \mathcal{H}_{\operatorname{ex}}(\Omega)\cap \mathcal{H}^{\perp_{L^2(\Omega)}}_D(\Omega)$ which then satisfies
	\begin{gather}
		\label{3E18}
		\nabla h=\nabla f_*-T(\nabla f_*).
	\end{gather}
	We can then define the current $j:=\nabla f_*\times \mathcal{N}$ which is tangent to $\partial\Omega$, div-free on $\partial\Omega$ and of class $L^2(\partial\Omega)$, i.e. $j\in L^2\mathcal{V}_0(\partial\Omega)$. It follows then from the proof of \cite[Lemma 5.5]{G24} that we have the identity
	\begin{gather}
		\nonumber
		\operatorname{BS}_{\partial\Omega}(j)=\nabla f_*-T(\nabla f_*)-\operatorname{BS}_{\Omega}(\operatorname{curl}(\nabla f_*))=\nabla f_*-T(\nabla f_*)=\nabla h
	\end{gather}
	according to the fix point identity (\ref{3E18}). Now, if $\nabla h\in \mathcal{H}_{\operatorname{ex}}(\Omega)\cap \mathcal{H}^{\perp_{L^2(\Omega)}}_D(\Omega)$ is arbitrary, we can approximate it in $L^2(\Omega)$-norm by elements $\nabla h_n\in \mathcal{H}_{\operatorname{ex}}(\Omega)\cap \mathcal{H}^{\perp_{L^2(\Omega)}}_D(\Omega)$ of class $\nabla h_n\in \bigcap_{1\leq p<\infty}W^{1,p}(\Omega,\mathbb{R}^3)$. We can then construct currents $j_n:=\nabla f_n\times \mathcal{N}$ by means of the fix point procedure (\ref{3E18}) satisfying $\operatorname{BS}_{\partial\Omega}(j_n)=\nabla h_n$ and accordingly we may set $j:=\nabla f\times \mathcal{N}\in W^{-\frac{1}{2},2}\mathcal{V}_0(\partial\Omega)$ where $\nabla f$ denotes the unique fix point of (\ref{3E18}) for our given $\nabla h$. We observe that $\|j_n-j\|_{W^{-\frac{1}{2},2}(\partial\Omega)}\leq c\|\nabla f_n-\nabla f\|_{H(\operatorname{curl},\Omega)}=c\|\nabla f_n-\nabla f\|_{L^2(\Omega)}$ for some constant $c>0$ independent of $n$ by means of the continuity of the twisted tangential trace with respect to the $H(\operatorname{curl},\Omega)$-norm. We can now exploit the contraction property of $T$ and the defining equation (\ref{3E18}) of the $\nabla f_n$ to conclude
	\begin{gather}
		\nonumber
		\|\nabla f_n-\nabla f\|_{L^2(\Omega)}\leq \|\nabla h_n-\nabla h\|_{L^2(\Omega)}+\|T(\nabla f_n)-T(\nabla f)\|_{L^2(\Omega)}
		\\
		\nonumber
		\leq \|\nabla h_n-\nabla h\|_{L^2(\Omega)}+\lambda\|\nabla f_n-\nabla f\|_{L^2(\Omega)}\text{ for a suitable }0<\lambda<1.
	\end{gather}
	We overall infer that $\|j_n-j\|_{W^{-\frac{1}{2},2}(\partial\Omega)}\leq c\|\nabla h_n-\nabla h\|_{L^2(\Omega)}$ for some constant $c>0$ independent of $n$ and hence $j_n\rightarrow j$ in $W^{-\frac{1}{2},2}(\partial\Omega)$. The continuity of $\operatorname{BS}_{\partial\Omega}$ with respect to the $W^{-\frac{1}{2},2}(\partial\Omega)$-norm, c.f. \cite[Lemma C.1]{G24}, implies that $\operatorname{BS}_{\partial\Omega}(j_n)\rightarrow \operatorname{BS}_{\partial\Omega}(j)$ in $L^2(\Omega)$. On the other hand we know that $\operatorname{BS}_{\partial\Omega}(j_n)=\nabla h_n$ converges strongly to $\nabla h$ in $L^2(\Omega)$ from which we conclude $\operatorname{BS}_{\partial\Omega}(j)=\nabla h$ and consequently $\mathcal{H}_{\operatorname{ex}}(\Omega)\cap \mathcal{H}^{\perp_{L^2(\Omega)}}_D(\Omega)\subseteq \operatorname{Im}(\operatorname{BS}_{\partial\Omega})$.
	\newline
	\newline
	\underline{Step 3 $\mathcal{H}_N(\Omega)\subseteq \operatorname{Im}(\operatorname{BS}_{\partial\Omega})$:}
	\newline
	\newline
	Fix any $\Gamma\in \mathcal{H}_N(\Omega)$ and observe that $\Gamma\in W^{1,p}(\Omega,\mathbb{R}^3)$ for all $1\leq p<\infty$, c.f. \cite[Lemma A.1]{G24}. We can then define the current $j:=\Gamma\times \mathcal{N}\in W^{-\frac{1}{2},2}\mathcal{V}_0(\partial\Omega)$. It follows then similarly from the proof of \cite[Lemma 5.5]{G24} that
	\begin{gather}
		\label{3E19}
		\operatorname{BS}_{\partial\Omega}(j)=\Gamma-\operatorname{BS}_{\Omega}(\operatorname{curl}(\Gamma))-\frac{\nabla_x}{4\pi}\int_{\Omega}\Gamma(y)\cdot\frac{x-y}{|x-y|^3}d^3y=\Gamma-\frac{\nabla_x}{4\pi}\int_{\Omega}\Gamma(y)\cdot\frac{x-y}{|x-y|^3}d^3y
	\end{gather}
	where we used that $\operatorname{curl}(\Gamma)=0$. We finally observe that $\frac{x-y}{|x-y|^3}=\nabla_y\frac{1}{|x-y|}$ and hence compute
	\begin{gather}
		\nonumber
		\int_{\Omega}\Gamma(y)\cdot \frac{x-y}{|x-y|^3}d^3y=-\int_{\Omega}\frac{\operatorname{div}(\Gamma)(y)}{|x-y|}d^3y+\int_{\partial\Omega}\frac{\Gamma(y)\cdot \mathcal{N}(y)}{|x-y|}d\sigma(y)=0
	\end{gather}
	where we used that $\operatorname{div}(\Gamma)=0$ and $\mathcal{N}\cdot \Gamma=0$. It follows from (\ref{3E19}) that $\operatorname{BS}_{\partial\Omega}(j)=\Gamma$ and hence $\mathcal{H}_N(\Omega)\subseteq\operatorname{Im}(\operatorname{BS}_{\partial\Omega})$ as claimed.
	\newline
	\newline
	\underline{Step 4 $L^2\mathcal{H}(\Omega)\cap \mathcal{H}^{\perp_{L^2(\Omega)}}_D(\Omega)\subseteq \operatorname{Im}(\operatorname{BS}_{\partial\Omega})$:}
	\newline
	\newline
	Fix any $B\in L^2\mathcal{H}(\Omega)\cap \mathcal{H}^{\perp_{L^2(\Omega)}}_D(\Omega)$. We can perform a Hodge-decomposition, \cite[Theorem B.1]{G24}, of $B$ and write $B=\nabla h+\Gamma$ for suitable $\Gamma\in \mathcal{H}_N(\Omega)$ and $\nabla h\in \mathcal{H}_{\operatorname{ex}}(\Omega)$. We can now further $L^2(\Omega)$-decompose $\nabla h=\nabla f+\nabla \widetilde{f}$ for suitable $\nabla f\in \mathcal{H}_{\operatorname{ex}}(\Omega)\cap \mathcal{H}^{\perp_{L^2(\Omega)}}_D(\Omega)$ and $\nabla \widetilde{f}\in \mathcal{H}_D(\Omega)$. We observe that by step 3 $\Gamma$ lies in the image of the Biot-Savart operator and that by step 1 the image of the Biot-Savart operator is $L^2(\Omega)$-orthogonal to $\mathcal{H}_D(\Omega)$. We conclude that $B$, $\Gamma$ and $\nabla f$ are $L^2(\Omega)$-orthogonal to $\mathcal{H}_D(\Omega)$ and hence $\nabla \widetilde{f}\in \mathcal{H}_D(\Omega)\cap \mathcal{H}^{\perp_{L^2(\Omega)}}_D(\Omega)=\{0\}$ which yields $B=\nabla f+\Gamma$. According to step 2 and step 3 we can find currents $j_1,j_2\in W^{-\frac{1}{2},2}\mathcal{V}_0(\partial\Omega)$ with $\operatorname{BS}_{\partial\Omega}(j_1)=\nabla f$ and $\operatorname{BS}_{\partial\Omega}(j_2)=\Gamma$. By linearity of $\operatorname{BS}_{\partial\Omega}$ we find $\operatorname{BS}_{\partial\Omega}(j_1+j_2)=B$ and consequently $B\in \operatorname{Im}(\operatorname{BS}_{\partial\Omega})$ as desired.
\end{proof}
\section{Current reconstruction algorithm}
\subsection{Proof of \Cref{2T6}}
\begin{proof}[Proof of \Cref{2T6}]
	It follows first from \Cref{2T5} that for given $\epsilon>0$ there exists some $j\in W^{-\frac{1}{2},2}\mathcal{V}_0(\Sigma)$ satisfying $\|\operatorname{BS}_{\Sigma}(j)-B_T\|_{L^2(P)}\leq \epsilon$. Then according to (the easy direction of) \Cref{2T2} we find $B:=\operatorname{BS}_{\Sigma}(j)\in L^2\mathcal{H}(\Omega)$. We can then decompose $B$ according to the Hodge-decomposition theorem \cite[Theorem B.1]{G24} as $B=\widetilde{\Gamma}+\nabla f$ for suitable $\widetilde{\Gamma}\in \mathcal{H}_N(\Omega)$ and $\nabla f\in \mathcal{H}_{\operatorname{ex}}(\Omega)$. Our goal now is to show that we can find $N\in \mathbb{N}$ and constants $\alpha_0,\alpha_1,\dots,\alpha_N$ such that $\|\alpha_0\Gamma+\sum_{k=1}^N\alpha_k\nabla f_k-B\|_{L^2(\Omega)}\leq \epsilon$ for the given solutions $f_k$ of the BVPs $\Delta f_k=0$ in $\Omega$, $f_k|_{\partial\Omega}=\kappa_k$ for the fixed basis $\{\kappa_1,\kappa_2,\dots\}$ of $W^{\frac{1}{2},2}(\Sigma)$ and any fixed $\Gamma\in \mathcal{H}_N(\Omega)\setminus \{0\}$. First, we note that $\mathcal{H}_N(\Omega)$ is $1$-dimensional because $\Omega$ is assumed to be a solid torus so that for any fixed $\Gamma\in \mathcal{H}_N(\Omega)\setminus\{0\}$ there is a unique $\mu\in \mathbb{R}$ with $\widetilde{\Gamma}=\mu\Gamma$ and so we may pick $\alpha_0=\mu$. We are left with approximating $\nabla f$. We note first that $f$ is unique only up to constants and so we may fix a unique scalar potential by demanding $\int_{\Omega}fd^3x=0$. We then note that $f\in H^1(\Omega)$ and so $f$ has a well defined trace $\kappa:=f|_{\partial\Omega}\in W^{\frac{1}{2},2}(\Sigma)$. Since $\{\kappa_1,\kappa_2,\dots\}$ forms a basis of $W^{\frac{1}{2},2}(\Sigma)$ we can find constants $\alpha_1,\dots,\alpha_N$ for some $N\in \mathbb{N}$ such that $\|\sum_{k=1}^N\alpha_k\kappa_k-\kappa\|_{W^{\frac{1}{2},2}(\Sigma)}\leq \epsilon$. We can then make use of \Cref{3L3} to deduce
	\begin{gather}
		\nonumber
		\left\|\sum_{k=1}\alpha_k\nabla f_k-\nabla f\right\|_{L^2(\Omega)}\leq c\left\|\sum_{k=1}\alpha_k\kappa_k-\kappa\right\|_{W^{\frac{1}{2},2}(\Sigma)}\leq c\epsilon
	\end{gather}
	for some $c>0$ independent of $\nabla f$ and consequently we may achieve the estimate $\|\alpha_0\Gamma+\sum_{k=1}^N\alpha_k\nabla f_k-B\|_{L^2(\Omega)}\leq \epsilon$ which in combination with the initial estimate $\|B-B_T\|_{L^2(P)}\leq \epsilon$ implies the statement of the theorem.
\end{proof}
\subsection{Proof of \Cref{2L8}}
For the proof of \Cref{2L8} we first need to understand the boundary behaviour of the operator $T$. The following \Cref{4L1} is known in different contexts and here we provide a proof for our specific situation at hand for the convenience of the reader, see for instance \cite[Theorem 4.24 \& Equation (6.3)]{RCM21} for related results in the context of H\"{o}lder continuous functions.
\begin{lem}[Normal trace of the operator $T$]
	\label{4L1}
	Let $\Omega\subset\mathbb{R}^3$ be a bounded $C^{1,1}$-domain and let $T:\mathcal{H}_{\operatorname{ex}}(\Omega)\cap \mathcal{H}^{\perp_{L^2(\Omega)}}_D(\Omega)\rightarrow \mathcal{H}_{\operatorname{ex}}(\Omega)\cap \mathcal{H}^{\perp_{L^2(\Omega)}}_D(\Omega)$ be given by $T(\nabla f)(x):=\frac{\nabla_x}{4\pi}\int_{\Omega}\nabla f(y)\cdot \frac{x-y}{|x-y|^3}d^3y$. Then
	\begin{gather}
		\nonumber
		\mathcal{N}\cdot T(\nabla g)=\frac{\mathcal{N}\cdot \nabla g}{2}-w^{\operatorname{Tr}}_{\Omega}(\mathcal{N}\cdot \nabla g)
	\end{gather}
	where $w_{\Omega}^{\operatorname{Tr}}$ is the transpose of the double layer potential as defined in (\ref{2E7}).
\end{lem}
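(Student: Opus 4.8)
The starting point is the boundary-integral representation of $T$ already obtained in the proof of \Cref{3C2}. Formula (\ref{3E7}) shows that for $\nabla g\in\mathcal{H}_{\operatorname{ex}}(\Omega)\cap\mathcal{H}^{\perp_{L^2(\Omega)}}_D(\Omega)$ one has $T(\nabla g)(x)=\nabla_x(S\phi)(x)$, where $\phi:=\mathcal{N}\cdot\nabla g$ and $S\phi(x):=\frac{1}{4\pi}\int_{\partial\Omega}\frac{\phi(y)}{|x-y|}d\sigma(y)$ is the single-layer potential with density $\phi$; here $\phi\in W^{-\frac{1}{2},2}(\partial\Omega)$ because $\nabla g\in H(\operatorname{div},\Omega)$ with $\operatorname{div}(\nabla g)=0$. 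The assertion of the lemma is therefore precisely the jump relation for the \emph{interior} normal derivative of $S\phi$, the operator $-w^{\operatorname{Tr}}_{\Omega}$ playing the role of the adjoint (Neumann--Poincar\'e) double-layer operator.

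The plan is to establish the identity first for smooth densities and then pass to the limit. As in Step 2 of the proof of \Cref{2T2}, the fields $\nabla g\in\mathcal{H}_{\operatorname{ex}}(\Omega)\cap\mathcal{H}^{\perp_{L^2(\Omega)}}_D(\Omega)$ of class $W^{1,p}(\Omega)$ with $p>3$ are $L^2(\Omega)$-dense in the whole space, and for such $g$ the trace $\phi=\mathcal{N}\cdot\nabla g$ is H\"older continuous on $\partial\Omega$ (using $W^{1,p}\hookrightarrow C^{0,\alpha}(\overline{\Omega})$ and that $\mathcal{N}$ is Lipschitz on the $C^{1,1}$-boundary), so the classical jump relations apply. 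Concretely, since $-\Delta(S\phi)=\phi\,\delta_{\partial\Omega}$ in the distributional sense, the interior and exterior normal derivatives of $S\phi$ satisfy $\partial_{\mathcal{N}}(S\phi)|_{-}-\partial_{\mathcal{N}}(S\phi)|_{+}=\phi$, while their average equals the principal-value integral
\[
\frac{1}{4\pi}\,\mathrm{p.v.}\!\int_{\partial\Omega}\phi(y)\,\mathcal{N}(x)\cdot\frac{y-x}{|x-y|^3}\,d\sigma(y)=-w^{\operatorname{Tr}}_{\Omega}(\phi)(x),
\]
the last equality being just the definition (\ref{2E7}) together with $\mathcal{N}(x)\cdot\frac{y-x}{|x-y|^3}=-\mathcal{N}(x)\cdot\frac{x-y}{|x-y|^3}$. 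Solving for the interior limit gives $\partial_{\mathcal{N}}(S\phi)|_{-}=\frac{\phi}{2}-w^{\operatorname{Tr}}_{\Omega}(\phi)$, which is the claimed formula for regular $g$. Alternatively, one may mimic \Cref{3L1}, invoking the nontangential-limit results there for $\nabla S\phi$ in place of the double layer $W_{\Omega}$.

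Finally I would remove the regularity assumption by a continuity argument. The left-hand side defines a bounded operator $\nabla g\mapsto\mathcal{N}\cdot T(\nabla g)$ from $\mathcal{H}_{\operatorname{ex}}(\Omega)\cap\mathcal{H}^{\perp_{L^2(\Omega)}}_D(\Omega)$, equipped with the $L^2(\Omega)$-norm, into $W^{-\frac{1}{2},2}(\partial\Omega)$: by \Cref{3C2} the operator $T$ maps continuously into the divergence-free $L^2$-fields, so $T(\nabla g)\in H(\operatorname{div},\Omega)$ with $\|T(\nabla g)\|_{H(\operatorname{div},\Omega)}=\|T(\nabla g)\|_{L^2(\Omega)}$, and the normal trace $H(\operatorname{div},\Omega)\to W^{-\frac{1}{2},2}(\partial\Omega)$ is continuous. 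The right-hand side is continuous for the same reason applied to $\nabla g\mapsto\mathcal{N}\cdot\nabla g$, combined with the boundedness of $w^{\operatorname{Tr}}_{\Omega}$ on $W^{-\frac{1}{2},2}(\partial\Omega)$, cf. (\ref{2E7}). Since the two bounded maps agree on the dense subspace of $W^{1,p}$-fields, they agree everywhere, proving the lemma. I expect the only genuine difficulty to be this low-regularity bookkeeping --- verifying that the pointwise a.e. jump relation obtained for continuous densities indeed represents the abstract $W^{-\frac{1}{2},2}$-trace, and keeping the orientation of $\mathcal{N}$ (hence the sign of the $\tfrac12$-jump and of $w^{\operatorname{Tr}}_{\Omega}$) consistent throughout; the computation itself is the standard single-layer jump relation.
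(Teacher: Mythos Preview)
Your proof is correct. Both you and the paper begin from the single-layer representation $T(\nabla g)=\nabla(S\phi)$ with $\phi=\mathcal{N}\cdot\nabla g$ (the paper's (\ref{3E7})/(\ref{4E3})), and both finish with the same density argument in $L^2$ / $W^{-\frac12,2}$. The difference is in how the jump formula is obtained.

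You invoke the classical single-layer normal-derivative jump directly on the density $\phi$: interior $\partial_{\mathcal N}(S\phi)=\tfrac{\phi}{2}-w^{\operatorname{Tr}}_{\Omega}(\phi)$, for H\"older densities on a $C^{1,1}$ boundary, and then pass to the limit. The paper instead argues by duality: it pairs $\mathcal{N}\cdot T(\nabla g)$ against a smooth test function $\psi$, integrates by parts to the volume integral $\int_\Omega \nabla\psi\cdot T(\nabla g)$, swaps the order of integration, and then recognises the inner integral as the \emph{double-layer} $W_{\overline{\Omega}^c}(\psi)$ evaluated at a point approaching $\partial\Omega$ from outside; the jump relation already established in \Cref{3L1} (more precisely (\ref{3E6})) then yields $\tfrac{\psi}{2}+w_{\Omega}(\psi)$, and the transpose identity $\int\alpha\, w^{\operatorname{Tr}}_{\Omega}(\beta)=-\int\beta\, w_{\Omega}(\alpha)$ converts this into the claim. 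So the paper never needs the single-layer jump as an external input --- it recycles the double-layer jump from \Cref{3L1} on the test-function side. Your route is shorter and more transparent conceptually, but imports a result the paper has not proved; the paper's route is a bit more roundabout but is self-contained within the manuscript.
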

\begin{proof}[Proof of \Cref{4L1}]
	We assume first that $\nabla g\in \mathcal{H}_{\operatorname{ex}}(\Omega)\cap \mathcal{H}^{\perp_{L^2(\Omega)}}_D(\Omega)\cap H^1(\Omega,\mathbb{R}^3)$. The claimed identity is then equivalent to the integral identity
	\begin{gather}
		\label{4E1}
		\int_{\partial\Omega}\psi\cdot (\mathcal{N}\cdot T(\nabla g))d\sigma=\int_{\partial\Omega}\psi\cdot\left(\frac{\mathcal{N}\cdot \nabla g}{2}-w_{\Omega}^{\operatorname{Tr}}(\mathcal{N}\cdot \nabla g)\right)d\sigma\text{ for all }\psi\in W^{\frac{1}{2},2}(\partial\Omega).
	\end{gather}
	Since the $C^1(\partial\Omega)$-functions are dense in $W^{\frac{1}{2},2}(\partial\Omega)$, c.f. \cite[Proposition 3.40]{DD12}, we may assume that $\psi\in C^1(\overline{\Omega})$. We can then express by definition of the normal trace
	\begin{gather}
		\label{4E2}
		\int_{\partial\Omega}\psi\cdot (\mathcal{N}\cdot T(\nabla g))d\sigma=\int_{\Omega}\nabla \psi\cdot T(\nabla g)d^3x+\int_{\Omega}\psi\cdot\operatorname{div}(T(\nabla f))d^3x=\int_{\Omega}\nabla \psi\cdot T(\nabla g)d^3x
	\end{gather}
since $T(\nabla g)$ maps into $\mathcal{H}_{\operatorname{ex}}(\Omega)$ and hence is div-free. We then observe that $\frac{x-y}{|x-y|^3}=\nabla_y\frac{1}{|x-y|}$ and due to the regularity of $\nabla g$ we may perform an integration by parts in the following expression
\begin{gather}
	\label{4E3}
	\frac{1}{4\pi}\int_{\Omega}\nabla g(y)\cdot \frac{x-y}{|x-y|^3}d^3y=\frac{1}{4\pi}\int_{\Omega}\nabla g(y)\cdot \nabla_y\frac{1}{|x-y|}d^3y=\frac{1}{4\pi}\int_{\partial\Omega}\frac{\mathcal{N}(y)\cdot \nabla g(y)}{|x-y|}d\sigma(y)
\end{gather}
where we used that $\Delta g=0$. Since $T(\nabla g)(x)$ is the gradient of the left hand side in (\ref{4E3}), we obtain
\begin{gather}
	\nonumber
	T(\nabla g)(x)=\frac{1}{4\pi}\int_{\partial\Omega}\left(\mathcal{N}(y)\cdot \nabla g(y)\right)\frac{y-x}{|x-y|^3}d\sigma(y).
\end{gather}
We can hence write
\begin{gather}
	\label{4E4}
	\int_{\Omega}\nabla \psi(x)\cdot T(\nabla g)(x)d^3x=\int_{\partial\Omega}\left(\mathcal{N}\cdot \nabla g(y)\right)\frac{1}{4\pi}\int_{\Omega}\nabla \psi(x)\cdot \frac{y-x}{|x-y|^3}d^3xd\sigma(y).
\end{gather}
We observe that since $\psi\in C^1(\overline{\Omega})$, the map $\mathbb{R}^3\rightarrow \mathbb{R}$, $y\mapsto \int_{\Omega}\nabla \psi(x)\cdot \frac{y-x}{|x-y|^3}d^3x$ is continuous and therefore we can fix for some given $y\in \partial\Omega$ any sequence $(y_n)_n\subset \overline{\Omega}^c$ with $y_n\rightarrow y$ and find
\begin{gather}
	\label{4E5}
	\frac{1}{4\pi}\int_{\Omega}\nabla \psi(x)\cdot \frac{y-x}{|x-y|^3}d^3x=\frac{\lim_{n\rightarrow\infty}}{4\pi}\int_{\Omega}\nabla \psi(x)\cdot \frac{y_n-x}{|y_n-x|^3}d^3x.
\end{gather}
Now we compute
\begin{gather}
	\nonumber
	\frac{1}{4\pi}\int_{\Omega}\nabla \psi(x)\cdot \frac{y_n-x}{|y_n-x|^3}d^3x=\frac{1}{4\pi}\int_{\partial\Omega}\psi(x)\mathcal{N}(x)\cdot \frac{y_n-x}{|y_n-x|^3}d\sigma(x)=-W_{\overline{\Omega}^c}(\psi)(y_n)
\end{gather}
where we used that $\frac{y_n-x}{|y_n-x|^3}$ is div-free in $\Omega$ and that the outward unit normal to $\Omega$ equals minus the outward unit normal to $\overline{\Omega}^c$, recall also (\ref{3E1}) for the definition of $W_{\Omega}$. We can insert this into (\ref{4E5}) and find
\begin{gather}
	\nonumber
	\frac{1}{4\pi}\int_{\Omega}\nabla \psi(x)\cdot \frac{y-x}{|x-y|^3}d^3x=-\lim_{n\rightarrow\infty}W_{\overline{\Omega}^c}(\psi)(y_n)
\end{gather}
where $(y_n)_n\subset \overline{\Omega}^c$ is an arbitrary sequence converging to $y\in \partial\Omega$. According to (\ref{3E6}) we may in particular find a sequence such that $\lim_{n\rightarrow\infty}W_{\overline{\Omega}^c}(\psi)(y_n)\rightarrow -\frac{\psi(y)}{2}+w_{\overline{\Omega}^c}(\psi)(x)=-\frac{\psi(y)}{2}-w_{\Omega}(\psi)(y)$ where we used once more that the outer normal of $\Omega$ equals minus the outer unit of $\overline{\Omega}^c$ in the last step. We find
\begin{gather}
	\nonumber
	\frac{1}{4\pi}\int_{\Omega}\nabla \psi(x)\cdot \frac{y-x}{|x-y|^3}d^3x=\frac{\psi(y)}{2}+w_{\Omega}(\psi)(y).
\end{gather}
We can insert this into (\ref{4E4}) which together with the identity $\int_{\partial\Omega} \alpha\cdot  w_{\Omega}^{\operatorname{Tr}}(\beta)d\sigma=-\int_{\partial\Omega}\beta\cdot w_{\Omega}(\alpha)d\sigma$ for all $\beta\in W^{-\frac{1}{2},2}(\partial\Omega)$ and $\alpha\in W^{\frac{1}{2},2}(\partial\Omega)$ (note the minus sign) yields
\begin{gather}
	\nonumber
	\int_{\Omega}\nabla \psi(x)\cdot T(\nabla g)(x)d^3x=\int_{\partial\Omega}\psi(y)\cdot \left(\frac{\mathcal{N}\cdot \nabla g}{2}-w^{\operatorname{Tr}}_{\Omega}(\mathcal{N}\cdot \nabla g)\right)d\sigma(y).
\end{gather}
Then (\ref{4E2}) yields the desired identity (\ref{4E1}). The general case $\nabla g\in \mathcal{H}_{\operatorname{ex}}(\Omega)\cap \mathcal{H}^{\perp_{L^2(\Omega}}_D(\Omega)$ follows by approximation by elements of class $H^1(\Omega,\mathbb{R}^3)$ in $L^2(\Omega)$-norm and the continuity of all quantities involved with respect to this convergence.
\end{proof}
\begin{proof}[Proof of \Cref{2L8}]
	We start with $B\in L^2\mathcal{H}(\Omega)\cap \mathcal{H}^{\perp_{L^2(\Omega)}}_D(\Omega)$ and decompose it by means of the Hodge-decomposition theorem as $B=\nabla h+\Gamma$ for suitable $\Gamma\in \mathcal{H}_N(\Omega)$ and $\nabla h\in \mathcal{H}_{\operatorname{ex}}(\Omega)$. According to step 3 in the proof of \Cref{2T2} we have $\operatorname{BS}_{\partial\Omega}(\Gamma\times \mathcal{N})=\Gamma$ and according to step 4 of the proof of \Cref{2T2} we see that in fact $\nabla h\in \mathcal{H}_{\operatorname{ex}}(\Omega)\cap \mathcal{H}^{\perp_{L^2(\Omega)}}_D(\Omega)$ and so we see that according to step 2 of the proof of \Cref{2T2} we have $\operatorname{BS}_{\partial\Omega}(\nabla f_*\times \mathcal{N})=\nabla h$ where $\nabla f_*\in \mathcal{H}_{\operatorname{ex}}(\Omega)\cap \mathcal{H}^{\perp_{L^2(\Omega)}}_D(\Omega)$ is the unique fix point of the operator $T_{\nabla h}:=\nabla h+T$ where $T(\nabla g)(x):=\frac{\nabla_x}{4\pi}\int_{\Omega}\nabla g(y)\cdot \frac{x-y}{|x-y|^3}d^3y$.
	\newline
	Our goal now will be to show that the fix point $\nabla f_*$ of $T_{\nabla h}$ can be equivalently characterised as the gradient of the unique solution $f$ of the BVP
	\begin{gather}
		\label{4E6}
		\Delta f=0\text{ in }\Omega\text{, }\mathcal{N}\cdot \nabla f=\left(\frac{\operatorname{Id}}{2}+w^{\operatorname{Tr}}_{\Omega}\right)^{-1}\left(B\cdot \mathcal{N}\right)\text{ on }\partial\Omega\text{ and }\int_{\Omega}f^3x=0.
	\end{gather}
	Once we show that $\nabla f=\nabla f_*$ we can conclude $\operatorname{BS}_{\partial\Omega}(\Gamma\times \mathcal{N}+\nabla f\times \mathcal{N})=B$ which is the claim of the lemma. We note that the uniqueness of solutions to the BVP (\ref{4E6}) follows immediately from the uniqueness of solutions to Neumann problems with prescribed mean value. The existence will follow once we show that the fix point function $f_{*}$ normalised by $\int_{\Omega}f_*d^3x=0$ is a solution and in turn the uniqueness of solutions to (\ref{4E6}) will provide an equivalent characterisation of $f_*$ as the unique solution of the BVP (\ref{4E6}). We start with the fix point identity
	\begin{gather}
		\nonumber
		\nabla f_*=T_{\nabla h}(\nabla f_*)=\nabla h+T(\nabla f_*).
	\end{gather}
	We make use of \Cref{4L1} to conclude by means of the fix point property
	\begin{gather}
		\nonumber
		\frac{\mathcal{N}\cdot \nabla f_*}{2}-w^{\operatorname{Tr}}_{\Omega}(\mathcal{N}\cdot \nabla f_*)=\mathcal{N}\cdot  T(\nabla f_*)=\mathcal{N}\cdot \nabla f_*-\mathcal{N}\cdot \nabla h \\
		\label{4E7}
		\Leftrightarrow \left(\frac{\operatorname{Id}}{2}+w^{\operatorname{Tr}}_{\Omega}\right)(\mathcal{N}\cdot \nabla f_*)=\mathcal{N}\cdot \nabla h=\mathcal{N}\cdot B
	\end{gather}
	where we used that $B=\nabla h+\Gamma$ and $\mathcal{N}\cdot \Gamma=0$ since $\Gamma\in \mathcal{H}_N(\Omega)$. We note that once we argue that $\frac{\operatorname{Id}}{2}+w^{\operatorname{Tr}}_{\Omega}$ is invertible it follows from (\ref{4E7}) that $f_{*}$ satisfies the Neumann boundary condition of (\ref{4E6}). Further, $\int_{\Omega}f_*d^3x=0$ holds by our normalisation and $\nabla f_*\in \mathcal{H}_{\operatorname{ex}}(\Omega)$ which implies $\Delta f_*=0$ in $\Omega$ and hence the theorem will be proven. But it follows from the upcoming \Cref{4L2} that we may invert $\frac{\operatorname{Id}}{2}+w^{\operatorname{Tr}}_{\Omega}$. More precisely it is shown that the operator
	\begin{gather}
		\nonumber
		\frac{\operatorname{Id}}{2}+w^{\operatorname{Tr}}_{\Omega}:\left\{\mathcal{N}\cdot \nabla f\mathrel{\bigg|} \nabla f\in \mathcal{H}_{\operatorname{ex}}(\Omega)\cap\mathcal{H}^{\perp_{L^2(\Omega)}}_D(\Omega)\right\}\rightarrow \left\{\mathcal{N}\cdot B\mathrel{\bigg|} B\in L^2\mathcal{H}(\Omega)\cap \mathcal{H}^{\perp_{L^2(\Omega)}}_D(\Omega)\right\}
	\end{gather}
	is invertible and thus the proof is complete.
\end{proof}
We note that the invertibility and characterisation of the operator $\frac{\operatorname{Id}}{2}+w^{\operatorname{Tr}}_{\Omega}$ has been studied in different contexts, see for instance \cite[Chapter 6.5]{RCM21} for the case of H\"{o}lder regular functions. The following lemma contains the invertibility of this operator in our context, whose proof is straightforward with the results already established in the present manuscript. We note that we have the identity
\begin{gather}
	\nonumber
	\left\{\mathcal{N}\cdot \nabla f\mathrel{\bigg|} \nabla f\in \mathcal{H}_{\operatorname{ex}}(\Omega)\cap\mathcal{H}^{\perp_{L^2(\Omega)}}_D(\Omega)\right\}= \left\{\mathcal{N}\cdot B\mathrel{\bigg|} B\in L^2\mathcal{H}(\Omega)\cap \mathcal{H}^{\perp_{L^2(\Omega)}}_D(\Omega)\right\}\subset W^{-\frac{1}{2},2}\mathcal{V}(\partial\Omega).
\end{gather}
which follows immediately from the Hodge-decomposition theorem as has been seen in the course of the proof of \Cref{2L8}.
\begin{lem}
	\label{4L2}
	Let $\Omega\subset\mathbb{R}^3$ be a bounded $C^{1,1}$-domain and define
	\begin{gather}
		\nonumber
	\mathcal{D}:=\left\{\mathcal{N}\cdot \nabla f\mathrel{\bigg|} \nabla f\in \mathcal{H}_{\operatorname{ex}}(\Omega)\cap\mathcal{H}^{\perp_{L^2(\Omega)}}_D(\Omega)\right\}.	
	\end{gather}
	Then the operator
	\begin{gather}
		\nonumber
		\frac{\operatorname{Id}}{2}+w^{\operatorname{Tr}}_{\Omega}:\left(\mathcal{D},\|\cdot\|_{W^{-\frac{1}{2},2}(\partial\Omega)}\right)\rightarrow \left(\mathcal{D},\|\cdot\|_{W^{-\frac{1}{2},2}(\partial\Omega)}\right)
	\end{gather}
	is a bounded linear isomorphism with a bounded linear inverse.
\end{lem}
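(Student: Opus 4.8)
The plan is to realise $\frac{\operatorname{Id}}{2}+w^{\operatorname{Tr}}_{\Omega}$ on $\mathcal{D}$ as a conjugate of the operator $\operatorname{Id}-T$ from \Cref{3C2}, for which invertibility is immediate from the contraction property. First I would introduce the normal trace map
\[
\Phi:\left(\mathcal{H}_{\operatorname{ex}}(\Omega)\cap\mathcal{H}^{\perp_{L^2(\Omega)}}_D(\Omega),\|\cdot\|_{L^2(\Omega)}\right)\rightarrow \left(\mathcal{D},\|\cdot\|_{W^{-\frac{1}{2},2}(\partial\Omega)}\right),\quad \nabla f\mapsto \mathcal{N}\cdot \nabla f,
\]
which is surjective by the very definition of $\mathcal{D}$. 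I claim it is a bounded linear isomorphism with bounded inverse. The upper bound $\|\mathcal{N}\cdot\nabla f\|_{W^{-\frac{1}{2},2}(\partial\Omega)}\leq c\|\nabla f\|_{H(\operatorname{div},\Omega)}=c\|\nabla f\|_{L^2(\Omega)}$ is the continuity of the normal trace together with the fact that every $\nabla f\in\mathcal{H}_{\operatorname{ex}}(\Omega)$ is divergence-free; the matching lower bound $\|\nabla f\|_{L^2(\Omega)}\leq c\|\mathcal{N}\cdot\nabla f\|_{W^{-\frac{1}{2},2}(\partial\Omega)}$ is supplied by \Cref{CT1}. In particular $\Phi$ is injective and $\Phi^{-1}$ is bounded.

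Next I would transport the operator through $\Phi$ using the trace formula of \Cref{4L1}. For $\phi=\mathcal{N}\cdot\nabla g\in\mathcal{D}$ with $\nabla g=\Phi^{-1}(\phi)$, \Cref{4L1} gives $\mathcal{N}\cdot T(\nabla g)=\left(\frac{\operatorname{Id}}{2}-w^{\operatorname{Tr}}_{\Omega}\right)(\phi)$, whence
\[
\Phi\circ(\operatorname{Id}-T)\circ\Phi^{-1}(\phi)=\phi-\mathcal{N}\cdot T(\nabla g)=\frac{\phi}{2}+w^{\operatorname{Tr}}_{\Omega}(\phi)=\left(\frac{\operatorname{Id}}{2}+w^{\operatorname{Tr}}_{\Omega}\right)(\phi).
\]
Thus on $\mathcal{D}$ one has the operator identity $\frac{\operatorname{Id}}{2}+w^{\operatorname{Tr}}_{\Omega}=\Phi\circ(\operatorname{Id}-T)\circ\Phi^{-1}$. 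Since $T$ maps $\mathcal{H}_{\operatorname{ex}}(\Omega)\cap\mathcal{H}^{\perp_{L^2(\Omega)}}_D(\Omega)$ into itself by \Cref{3C2}, this identity simultaneously shows that $\frac{\operatorname{Id}}{2}+w^{\operatorname{Tr}}_{\Omega}$ genuinely maps $\mathcal{D}$ into $\mathcal{D}$ and is bounded there.

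It then remains to invert $\operatorname{Id}-T$ on $\mathcal{H}_{\operatorname{ex}}(\Omega)\cap\mathcal{H}^{\perp_{L^2(\Omega)}}_D(\Omega)$. By \Cref{3C2} the operator $T$ is an $L^2(\Omega)$-contraction, say $\|T(\nabla f)\|_{L^2(\Omega)}\leq \lambda\|\nabla f\|_{L^2(\Omega)}$ with $0<\lambda<1$, so $\operatorname{Id}-T$ is a linear isomorphism with bounded inverse given by the norm-convergent Neumann series $(\operatorname{Id}-T)^{-1}=\sum_{k=0}^{\infty}T^k$, satisfying $\|(\operatorname{Id}-T)^{-1}\|\leq (1-\lambda)^{-1}$. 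Conjugating back, $\frac{\operatorname{Id}}{2}+w^{\operatorname{Tr}}_{\Omega}$ is a composition of three bounded isomorphisms and hence itself a bounded linear isomorphism of $\mathcal{D}$ with bounded inverse $\Phi\circ(\operatorname{Id}-T)^{-1}\circ\Phi^{-1}$. Transporting the series through $\Phi$ and using $\left(\frac{\operatorname{Id}}{2}-w^{\operatorname{Tr}}_{\Omega}\right)^k=\Phi\circ T^k\circ\Phi^{-1}$ additionally yields the expansion $\left(\frac{\operatorname{Id}}{2}+w^{\operatorname{Tr}}_{\Omega}\right)^{-1}=\sum_{k=0}^{\infty}\left(\frac{\operatorname{Id}}{2}-w^{\operatorname{Tr}}_{\Omega}\right)^k$ that the current- and kernel-reconstruction theorems rely on. The only genuinely delicate point is the two-sided norm equivalence realised by $\Phi$: the coercive lower bound must be imported from \Cref{CT1}, and one should keep track that the restriction to $\mathcal{H}^{\perp_{L^2(\Omega)}}_D(\Omega)$ is exactly what upgrades $T$ from the firmly non-expansive map of the full space $\mathcal{H}_{\operatorname{ex}}(\Omega)$ to the strict contraction needed for the Neumann series to converge.
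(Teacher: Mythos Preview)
Your proof is correct and uses the same key ingredients as the paper (\Cref{4L1}, \Cref{3C2}, and \Cref{CT1}), but packages them more cleanly by recognising the single conjugation identity $\frac{\operatorname{Id}}{2}+w^{\operatorname{Tr}}_{\Omega}=\Phi\circ(\operatorname{Id}-T)\circ\Phi^{-1}$, whereas the paper verifies well-definedness, injectivity, and surjectivity separately and then invokes the bounded inverse theorem. Your organisation has the added bonus of yielding the Neumann series for the inverse directly, which the paper only derives later in \Cref{4C4}.
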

\begin{proof}[Proof of \Cref{4L2}]
	We first note that $\mathcal{D}$ is a closed subspace of $W^{-\frac{1}{2},2}(\partial\Omega)$ and thus complete. To see this take any sequence $\mathcal{N}\cdot \nabla f_n$ with $\nabla f_n\in \mathcal{H}_{\operatorname{ex}}(\Omega)\cap \mathcal{H}^{\perp_{L^2(\Omega)}}_D(\Omega)$ which converges in $W^{-\frac{1}{2},2}(\partial\Omega)$ to some $\psi\in W^{-\frac{1}{2},2}(\partial\Omega)$. Then \Cref{CT1} implies that the $L^2(\Omega)$-norm of the $\nabla f_n$ is bounded and since $\mathcal{H}_{\operatorname{ex}}(\Omega)\cap \mathcal{H}^{\perp_{L^2(\Omega)}}_D(\Omega)$ is a Hilbert space together with the $L^2$-inner product we conclude that the $\nabla f_n$ converge weakly to some $\nabla f\in \mathcal{H}_{\operatorname{ex}}(\Omega)\cap \mathcal{H}^{\perp_{L^2(\Omega)}}_D(\Omega)$. By continuity of the normal trace we conclude that $\mathcal{N}\cdot \nabla f_n$ converges weakly to $\mathcal{N}\cdot \nabla f$ in $W^{-\frac{1}{2},2}(\partial\Omega)$. Since weak and strong limits coincide we find $\psi=\mathcal{N}\cdot \nabla f\in \mathcal{D}$. Therefore, by means of the bounded inverse theorem, we only need to prove that $\frac{\operatorname{Id}}{2}+w^{\operatorname{Tr}}_{\Omega}:\mathcal{D}\rightarrow\mathcal{D}$ is a well-defined, bounded, linear bijective map. The linearity is clear and the boundedness follows from the boundedness of the operator $w^{\operatorname{Tr}}_{\Omega}$ as a map from $W^{-\frac{1}{2},2}(\partial\Omega)$ into $W^{-\frac{1}{2},2}(\partial\Omega)$.
	
	We argue now that the operator is well-defined, i.e. it maps elements of $\mathcal{D}$ to elements in $\mathcal{D}$. We start with an arbitrary fixed element $\mathcal{N}\cdot \nabla f\in \mathcal{D}$ and we write $\left(\frac{\operatorname{Id}}{2}+w^{\operatorname{Tr}}_{\Omega}\right)(\mathcal{N}\cdot \nabla f)=-\left(\frac{\operatorname{Id}}{2}-w^{\operatorname{Tr}}_{\Omega}\right)(\mathcal{N}\cdot \nabla f)+\mathcal{N}\cdot \nabla f$. It then follows from \Cref{4L1} that $\left(\frac{\operatorname{Id}}{2}+w^{\operatorname{Tr}}_{\Omega}\right)(\mathcal{N}\cdot \nabla f)=\mathcal{N}\cdot \nabla f-\mathcal{N}\cdot T(\nabla f)\in \mathcal{D}$ because $\nabla f\in \mathcal{H}_{\operatorname{ex}}(\Omega)\cap \mathcal{H}^{\perp_{L^2(\Omega)}}_D(\Omega)$ and $T$ maps $\mathcal{H}_{\operatorname{ex}}(\Omega)\cap \mathcal{H}^{\perp_{L^2(\Omega)}}_D(\Omega)$ into $\mathcal{H}_{\operatorname{ex}}(\Omega)\cap \mathcal{H}^{\perp_{L^2(\Omega)}}_D(\Omega)$, c.f. \Cref{3C2}.
	
	To see that $\frac{\operatorname{Id}}{2}+w^{\operatorname{Tr}}_{\Omega}$ is surjective we may simply follow the arguments of the proof of \Cref{2L8} until (\ref{4E7}) which shows that for any $\nabla h\in \mathcal{H}_{\operatorname{ex}}(\Omega)\cap \mathcal{H}^{\perp_{L^2(\Omega)}}_D(\Omega)$ there is some $\nabla f_*\in \mathcal{H}_{\operatorname{ex}}(\Omega)\cap \mathcal{H}^{\perp_{L^2(\Omega)}}_D(\Omega)$ with $\left(\frac{\operatorname{Id}}{2}+w^{\operatorname{Tr}}_{\Omega}\right)(\mathcal{N}\cdot \nabla f_*)=\mathcal{N}\cdot \nabla h$.
	
	To see that $\frac{\operatorname{Id}}{2}+w^{\operatorname{Tr}}_{\Omega}$ is injective, suppose that $\left(\frac{\operatorname{Id}}{2}+w^{\operatorname{Tr}}_{\Omega}\right)(\mathcal{N}\cdot \nabla f)=0$. Following \Cref{4L1} we can express this condition as
	\begin{gather}
		\nonumber
		\mathcal{N}\cdot \nabla f-\mathcal{N}\cdot T(\nabla f)=0.
	\end{gather}
	In other words $\mathcal{N}\cdot \left(\nabla f-T(\nabla f)\right)=0$ and since $\nabla f-T(\nabla f)\in\mathcal{H}_{\operatorname{ex}}(\Omega)$ it is div- and curl-free so that $\nabla f-T(\nabla f)\in \mathcal{H}_N(\Omega)\cap \mathcal{H}_{\operatorname{ex}}(\Omega)$. Since $\mathcal{H}_N(\Omega)$ and $\mathcal{H}_{\operatorname{ex}}(\Omega)$ are $L^2(\Omega)$-orthogonal we infer $\nabla f-T(\nabla f)=0$ or equivalently $T(\nabla f)=\nabla f$. Hence, $\nabla f$ is a fix point of $T$. But $T$ is a contraction and so has a unique fix point. By linearity of $T$ we get $T(0)=0$ and thus $0$ is the unique fix point, i.e. $\nabla f=0$, which in turn implies $\mathcal{N}\cdot \nabla f=0$ which proves injectivity of $\frac{\operatorname{Id}}{2}+w^{\operatorname{Tr}}_{\Omega}$ and completes the proof of the lemma.
\end{proof}
\subsection{Proof of \Cref{2T9}}
Before we come to the proof of \Cref{2T9} we introduce the following inner product on the space $W^{-\frac{1}{2},2}(\partial\Omega)$ which gives rise to a norm equivalent to the standard $W^{-\frac{1}{2},2}(\partial\Omega)$-norm on $\partial\Omega$, c.f. \cite[Th\'{e}or\`{e}me 1.1]{NedPlan73},
\begin{gather}
	\label{4E8}
	\langle \cdot,\cdot \rangle: W^{-\frac{1}{2},2}(\partial\Omega)\times W^{-\frac{1}{2},2}(\partial\Omega)\rightarrow\mathbb{R}\text{, }(\psi,\phi)\mapsto \frac{1}{4\pi}\int_{\partial\Omega}\int_{\partial\Omega}\frac{\psi(x)\cdot \phi(y)}{|x-y|}d\sigma(y)d\sigma(x).
\end{gather}
\begin{lem}
	\label{4L3}
	Let $\Omega\subset\mathbb{R}^3$ be a bounded $C^{1,1}$-domain. Then $\mathcal{D}:=\left\{\mathcal{N}\cdot \nabla f\mid \nabla f\in \mathcal{H}_{\operatorname{ex}}(\Omega)\cap \mathcal{H}^{\perp_{L^2(\Omega)}}_D(\Omega)\right\}$ together with the inner product defined in (\ref{4E8}) is a Hilbert space and for every $\mathcal{N}\cdot \nabla f,\mathcal{N}\cdot \nabla h\in \mathcal{D}$ we have the identity
	\begin{gather}
		\label{4E9}
		\left\langle \left(w_{\Omega}^{\operatorname{Tr}}-\frac{\operatorname{Id}}{2}\right)(\mathcal{N}\cdot \nabla f),\mathcal{N}\cdot \nabla h\right\rangle=-\int_{\Omega}T(\nabla f)(x)\cdot T(\nabla h)(x)d^3x
	\end{gather}
	where the operator $T$ is as usual defined by
	\begin{gather}
		\nonumber
		T:\mathcal{H}_{\operatorname{ex}}(\Omega)\cap \mathcal{H}^{\perp_{L^2(\Omega)}}_D(\Omega)\rightarrow\mathcal{H}_{\operatorname{ex}}(\Omega)\cap \mathcal{H}^{\perp_{L^2(\Omega)}}_D(\Omega)\text{, }\nabla f\mapsto \left(x\mapsto \frac{\nabla_x}{4\pi}\int_{\Omega}\nabla f(y)\cdot \frac{x-y}{|x-y|^3}d^3y\right).
	\end{gather}
\end{lem}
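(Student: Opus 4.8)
The plan is to prove the two assertions in turn. For the Hilbert space claim I would invoke \cite[Th\'{e}or\`{e}me 1.1]{NedPlan73}, by which the bilinear form (\ref{4E8}) induces on $W^{-\frac{1}{2},2}(\partial\Omega)$ a norm equivalent to the standard one; in particular it is a genuine inner product. Since it was already established in the course of the proof of \Cref{4L2} that $\mathcal{D}$ is a closed subspace of $W^{-\frac{1}{2},2}(\partial\Omega)$, it is complete for the standard norm, hence by equivalence of norms also complete for the norm induced by (\ref{4E8}). Being a closed subspace of a Hilbert space endowed with an equivalent inner product, $\left(\mathcal{D},\langle\cdot,\cdot\rangle\right)$ is therefore itself a Hilbert space.

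For the identity (\ref{4E9}) the starting point is \Cref{4L1}, which yields $\mathcal{N}\cdot T(\nabla f)=\left(\frac{\operatorname{Id}}{2}-w^{\operatorname{Tr}}_{\Omega}\right)(\mathcal{N}\cdot\nabla f)$ and consequently $\left(w^{\operatorname{Tr}}_{\Omega}-\frac{\operatorname{Id}}{2}\right)(\mathcal{N}\cdot\nabla f)=-\mathcal{N}\cdot T(\nabla f)$. Thus the left-hand side of (\ref{4E9}) equals $-\langle \mathcal{N}\cdot T(\nabla f),\mathcal{N}\cdot\nabla h\rangle$, and it suffices to show that $\langle \mathcal{N}\cdot T(\nabla f),\mathcal{N}\cdot\nabla h\rangle=\int_{\Omega}T(\nabla f)\cdot T(\nabla h)\,d^3x$. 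I would argue this first for $\nabla f,\nabla h\in \mathcal{H}_{\operatorname{ex}}(\Omega)\cap\mathcal{H}^{\perp_{L^2(\Omega)}}_D(\Omega)\cap H^1(\Omega,\mathbb{R}^3)$. For such fields, integrating by parts exactly as in (\ref{4E3}) and using $\Delta h=0$ shows that the single layer potential $u_h(x):=\frac{1}{4\pi}\int_{\partial\Omega}\frac{\mathcal{N}(y)\cdot\nabla h(y)}{|x-y|}d\sigma(y)$ satisfies $T(\nabla h)=\nabla u_h$ on $\Omega$, and likewise $T(\nabla f)=\nabla u_f$. Carrying out the inner $y$-integration in (\ref{4E8}) recognizes precisely the boundary trace of $u_h$, whence
\[
\langle \mathcal{N}\cdot T(\nabla f),\mathcal{N}\cdot\nabla h\rangle=\int_{\partial\Omega}\left(\mathcal{N}\cdot\nabla u_f\right)u_h\,d\sigma .
\]
Since $\nabla u_f\in H(\operatorname{div},\Omega)$ is divergence-free and $u_h\in H^1(\Omega)$, the integration-by-parts formula pairing the normal trace in $W^{-\frac{1}{2},2}(\partial\Omega)$ with the boundary trace in $W^{\frac{1}{2},2}(\partial\Omega)$, together with $\operatorname{div}(\nabla u_f)=\Delta u_f=0$, converts this into $\int_{\Omega}\nabla u_f\cdot\nabla u_h\,d^3x=\int_{\Omega}T(\nabla f)\cdot T(\nabla h)\,d^3x$; combined with the sign extracted from \Cref{4L1} this gives (\ref{4E9}).

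The main obstacle will be the low-regularity bookkeeping: the identification $T(\nabla h)=\nabla u_h$ and the reduction of the double boundary integral in (\ref{4E8}) to the single layer potential were derived only for $H^1$-regular fields, so the general case $\nabla f,\nabla h\in\mathcal{H}_{\operatorname{ex}}(\Omega)\cap\mathcal{H}^{\perp_{L^2(\Omega)}}_D(\Omega)$ must be recovered by approximating these fields in $L^2(\Omega)$-norm by elements of the dense $H^1$-subclass and passing to the limit in each term. Here I would rely on the $L^2(\Omega)$-continuity of $T$ established in \Cref{3C2}, the continuity of the normal trace on $H(\operatorname{div},\Omega)$, and the boundedness of the single layer operator from $W^{-\frac{1}{2},2}(\partial\Omega)$ into $W^{\frac{1}{2},2}(\partial\Omega)$, all of which render both sides of (\ref{4E9}) continuous under this convergence. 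A secondary point needing care is the application of Fubini's theorem to the double integral (\ref{4E8}) for low-regularity densities, which is again cleanest to settle on the dense $H^1$-subclass and then transfer by continuity.
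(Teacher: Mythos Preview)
Your proposal is correct and follows essentially the same route as the paper: both invoke \Cref{4L1} to rewrite the left-hand side as $-\langle \mathcal{N}\cdot T(\nabla f),\mathcal{N}\cdot\nabla h\rangle$, identify the inner $y$-integral of (\ref{4E8}) with the single layer potential $u_h$ whose gradient is $T(\nabla h)$ on $\Omega$, and then integrate by parts using $\operatorname{div}(T(\nabla f))=0$ to obtain the volume integral, first for $H^1$-regular fields and then by density. The paper spends a few extra lines justifying that the boundary value of $u_h$ agrees with the trace of the interior potential $H(x)=\frac{1}{4\pi}\int_{\Omega}\nabla h(y)\cdot\frac{x-y}{|x-y|^3}d^3y$ via a continuity-and-limit-from-the-exterior argument (equations (\ref{4E11})--(\ref{4E12})), which you subsume in the phrase ``recognizes precisely the boundary trace of $u_h$''; this is fine since the single layer potential of an $L^4(\partial\Omega)$ density is continuous across $\partial\Omega$.
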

\begin{proof}[Proof of \Cref{4L3}]
	Just like in the proof of \Cref{4L1} we may by a density argument assume that $\nabla f,\nabla h\in H^1(\Omega,\mathbb{R}^3)$. It further follows from \Cref{4L1} that we have the identity 
	\begin{gather}
		\label{4E10}
		\left(w_{\Omega}^{\operatorname{Tr}}-\frac{\operatorname{Id}}{2}\right)(\mathcal{N}\cdot \nabla f)=-\mathcal{N}\cdot T(\nabla f).
	\end{gather}
	In addition, we observe that the map
	\begin{gather}
		\label{4E11}
		\mathbb{R}^3\rightarrow \mathbb{R}\text{, }x\mapsto \int_{\partial\Omega}\frac{\mathcal{N}(y)\cdot \nabla h(y)}{|x-y|}d\sigma(y)
	\end{gather}
	is continuous because $\mathcal{N}\cdot \nabla h\in W^{\frac{1}{2},2}(\partial\Omega)\hookrightarrow L^4(\partial\Omega)$ by standard trace inequalities and fractional embedding theorems and since $4>2$. For fixed $x\in \partial\Omega$ we can now take any sequence $(x_n)_n\subset\mathbb{R}^3\setminus \overline{\Omega}$ converging to $x$ and find
	\begin{gather}
		\nonumber
		\int_{\partial\Omega}\frac{\mathcal{N}(y)\cdot \nabla h(y)}{|x-y|}d\sigma(y)=\lim_{n\rightarrow\infty}\int_{\partial\Omega}\frac{\mathcal{N}(y)\cdot \nabla h(y)}{|x_n-y|}d\sigma(y)=\lim_{n\rightarrow\infty}\int_{\Omega}\nabla h(y)\cdot \frac{x_n-y}{|x_n-y|^3}d^3y
	\end{gather}
	where we used the continuity of (\ref{4E11}) and that $\nabla h$ is div-free. We finally note that $H^1(\Omega)\hookrightarrow L^6(\Omega)$ and that $6>3$ so that it follows easily that the map
	\begin{gather}
		\nonumber
		\mathbb{R}^3\rightarrow\mathbb{R}\text{, }x\mapsto \int_{\Omega}\nabla h(y)\cdot \frac{x-y}{|x-y|^3}d^3y
	\end{gather}
	is continuous. We conclude overall
	\begin{gather}
		\label{4E12}
		\int_{\partial\Omega}\frac{\mathcal{N}(y)\cdot \nabla h(y)}{|x-y|}d\sigma(y)=\int_{\Omega}\nabla h(y)\cdot \frac{x-y}{|x-y|^3}d^3y\text{ for all }x\in \partial\Omega.
	\end{gather}
	We combine (\ref{4E12}) and (\ref{4E10}) and find
	\begin{gather}
		\nonumber
		\left\langle \left(w_{\Omega}^{\operatorname{Tr}}-\frac{\operatorname{Id}}{2}\right)(\mathcal{N}\cdot \nabla f),\mathcal{N}\cdot \nabla h\right\rangle=-\frac{1}{4\pi}\int_{\partial\Omega}(\mathcal{N}\cdot T(\nabla f))\cdot \int_{\Omega}\nabla h(y)\cdot \frac{x-y}{|x-y|^3}d^3yd\sigma(x).
	\end{gather}
	We finally note that $\frac{1}{4\pi}\int_{\Omega}\nabla h(y)\cdot \frac{x-y}{|x-y|^3}d^3y$, $x\in \partial\Omega$, is the trace of the $H^1(\Omega)$ function $H:\Omega\rightarrow \mathbb{R}\text{, }x\mapsto \frac{1}{4\pi}\int_{\Omega}\nabla h(y)\cdot \frac{x-y}{|x-y|^3}d^3y$ (by continuity of $H$). Consequently we obtain
	\begin{gather}
		\nonumber
		\left\langle \left(w_{\Omega}^{\operatorname{Tr}}-\frac{\operatorname{Id}}{2}\right)(\mathcal{N}\cdot \nabla f),\mathcal{N}\cdot \nabla h\right\rangle=-\int_{\partial\Omega}\operatorname{Tr}(H)(x)\mathcal{N}(x)\cdot T(\nabla f)(x)d\sigma(y)
		\\
		\nonumber
		=-\int_{\Omega}\nabla H(x)\cdot T(\nabla f)(x)d^3x
	\end{gather}
	where we used that $T(\nabla f)$ is div-free. The claim now follows by observing that $\nabla_xH(x)=T(\nabla h)(x)$.
\end{proof}
\begin{cor}
	\label{4C4}
	Let $\Omega\subset\mathbb{R}^3$ be a bounded $C^{1,1}$-domain. With the same notation as in \Cref{4L3}, we consider the operator
	\begin{gather}
		\nonumber
		w^{\operatorname{Tr}}_{\Omega}-\frac{\operatorname{Id}}{2}:\mathcal{D}\rightarrow\mathcal{D}.
	\end{gather}
	Then $\left\|w^{\operatorname{Tr}}_{\Omega}-\frac{\operatorname{Id}}{2}\right\|\leq \lambda <1$, where $\left\|w^{\operatorname{Tr}}_{\Omega}-\frac{\operatorname{Id}}{2}\right\|$ denotes the operator norm induced by (\ref{4E8}) and where $0<\lambda<1$ denotes the contraction constant of the operator $T$, c.f. \Cref{3C2}.
\end{cor}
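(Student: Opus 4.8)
The plan is to exploit the inner-product identity of \Cref{4L3} to recognise $A:=w^{\operatorname{Tr}}_{\Omega}-\frac{\operatorname{Id}}{2}$ as a self-adjoint, negative semidefinite operator on $(\mathcal{D},\langle\cdot,\cdot\rangle)$ and to then reduce the norm bound to a purely spectral estimate for $T$. First I would note that, by \Cref{4L3} together with polarisation, $\langle A(\mathcal{N}\cdot\nabla f),\mathcal{N}\cdot\nabla h\rangle=-\langle T(\nabla f),T(\nabla h)\rangle_{L^2(\Omega)}$ for all $\mathcal{N}\cdot\nabla f,\mathcal{N}\cdot\nabla h\in\mathcal{D}$. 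The right-hand side is symmetric in $f$ and $h$, so $A$ is self-adjoint with respect to $\langle\cdot,\cdot\rangle$, and setting $h=f$ gives $\langle A(\mathcal{N}\cdot\nabla f),\mathcal{N}\cdot\nabla f\rangle=-\|T(\nabla f)\|^2_{L^2(\Omega)}\le 0$, so $A$ is negative semidefinite. Since for a bounded self-adjoint operator the operator norm is attained on the diagonal, this yields $\|A\|=\sup_{\nabla f\neq 0}\frac{\|T(\nabla f)\|^2_{L^2(\Omega)}}{\langle\mathcal{N}\cdot\nabla f,\mathcal{N}\cdot\nabla f\rangle}$, where the supremum ranges over $\nabla f\in\mathcal{H}_{\operatorname{ex}}(\Omega)\cap\mathcal{H}^{\perp_{L^2(\Omega)}}_D(\Omega)$ via the normal-trace correspondence with $\mathcal{D}$.

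The key step is the identity $\langle\mathcal{N}\cdot\nabla f,\mathcal{N}\cdot\nabla h\rangle=\langle\nabla f,T(\nabla h)\rangle_{L^2(\Omega)}$. To obtain it I would reuse (\ref{4E12}), which already appears in the proof of \Cref{4L3}: the single-layer integral $\frac{1}{4\pi}\int_{\partial\Omega}\frac{\mathcal{N}\cdot\nabla h}{|x-y|}d\sigma(y)$ evaluated on $\partial\Omega$ is the boundary trace of the harmonic function $H_h(x):=\frac{1}{4\pi}\int_{\Omega}\nabla h(y)\cdot\frac{x-y}{|x-y|^3}d^3y$, whose gradient is $T(\nabla h)$. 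Inserting this into the definition (\ref{4E8}) and integrating by parts over $\Omega$ (using $\Delta H_h=0$, since $T(\nabla h)$ is divergence-free, and $\Delta f=0$) turns the double boundary integral into $\int_\Omega\nabla f\cdot\nabla H_h\,d^3x=\langle\nabla f,T(\nabla h)\rangle_{L^2(\Omega)}$. Symmetry of the left-hand side then shows that $T$ is self-adjoint on $\mathcal{H}_{\operatorname{ex}}(\Omega)\cap\mathcal{H}^{\perp_{L^2(\Omega)}}_D(\Omega)$ with respect to the $L^2(\Omega)$-inner product. Combined with the firm non-expansiveness recalled in the proof of \Cref{3C2} (which gives $\langle T(\nabla f),\nabla f\rangle_{L^2(\Omega)}\ge\|T(\nabla f)\|^2_{L^2(\Omega)}\ge 0$, hence $T\ge 0$) and with the contraction bound $\|T\|\le\lambda<1$ of \Cref{3C2}, this establishes $0\le T\le\lambda\operatorname{Id}$.

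Finally I would close the estimate by functional calculus for the bounded self-adjoint operator $T$. Since $\sigma(T)\subseteq[0,\lambda]$ and $\lambda t-t^2=t(\lambda-t)\ge 0$ on $[0,\lambda]$, we obtain $\lambda T-T^2\ge 0$, whence $\|T(\nabla f)\|^2_{L^2(\Omega)}=\langle T^2(\nabla f),\nabla f\rangle_{L^2(\Omega)}\le\lambda\langle T(\nabla f),\nabla f\rangle_{L^2(\Omega)}=\lambda\langle\mathcal{N}\cdot\nabla f,\mathcal{N}\cdot\nabla f\rangle$, the last equality being the identity from the second step with $h=f$. Dividing and taking the supremum gives $\|A\|\le\lambda$, as claimed. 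The main obstacle is precisely the second step: recognising that the inner product (\ref{4E8}) restricted to $\mathcal{D}$ equals the form $\langle\nabla f,T(\nabla h)\rangle_{L^2(\Omega)}$, which at once delivers the self-adjointness of $T$ and converts the norm computation into the elementary spectral inequality $T^2\le\lambda T$; everything else is a formal consequence of \Cref{4L3} and \Cref{3C2}.
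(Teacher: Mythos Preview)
Your proof is correct and takes a genuinely different route from the paper. The paper argues directly: it uses \Cref{4L1} to write $A(\mathcal{N}\cdot\nabla f)=-\mathcal{N}\cdot T(\nabla f)$, evaluates $\|A(\mathcal{N}\cdot\nabla f)\|^2$ via \Cref{4L3} as $\int_\Omega T(\nabla f)\cdot T^2(\nabla f)\,d^3x$, bounds this by $\lambda\|T(\nabla f)\|_{L^2}^2$ with Cauchy--Schwarz plus the contraction of $T$, and then applies \Cref{4L3} once more together with Cauchy--Schwarz to get $\|T(\nabla f)\|_{L^2}^2\le\|A(\mathcal{N}\cdot\nabla f)\|\,\|\mathcal{N}\cdot\nabla f\|$, closing the bootstrap. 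Your argument is more structural: you extract from the computation behind \Cref{4L3} the stronger identity $\langle\mathcal{N}\cdot\nabla f,\mathcal{N}\cdot\nabla h\rangle=\langle\nabla f,T(\nabla h)\rangle_{L^2(\Omega)}$, which simultaneously exhibits the self-adjointness of $A$ on $(\mathcal{D},\langle\cdot,\cdot\rangle)$ and of $T$ on $L^2$, and then the norm bound reduces to the spectral inequality $T^2\le\lambda T$ for a positive self-adjoint operator with $\sigma(T)\subset[0,\lambda]$. Your route makes explicit that the single-layer inner product (\ref{4E8}) on $\mathcal{D}$ is precisely the pullback of the form $\langle\cdot,T\cdot\rangle_{L^2(\Omega)}$ under the normal trace, an observation the paper does not isolate; on the other hand, the paper's proof is entirely elementary, needing no functional calculus and no self-adjointness, only two applications of Cauchy--Schwarz.
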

\begin{proof}[Proof of \Cref{4C4}]
	We fix any $\mathcal{N}\cdot \nabla f\in \mathcal{D}$ and observe that according to \Cref{4L1} $\left(w^{\operatorname{Tr}}_{\Omega}-\frac{\operatorname{Id}}{2}\right)(\mathcal{N}\cdot \nabla f)=\mathcal{N}\cdot \nabla h$ with $\nabla h:=-T(\nabla f)$. Hence, according to \Cref{4L3} we obtain
	\begin{gather}
		\nonumber
		\left\|\left(w^{\operatorname{Tr}}_{\Omega}-\frac{\operatorname{Id}}{2}\right)(\mathcal{N}\cdot \nabla f)\right\|^2=\left\langle \left(w^{\operatorname{Tr}}_{\Omega}-\frac{\operatorname{Id}}{2}\right)(\mathcal{N}\cdot \nabla f),\mathcal{N}\cdot\nabla h\right\rangle=-\int_{\Omega}T(\nabla f)\cdot T(\nabla h)d^3x
		\\
		\nonumber
		=\int_{\Omega}T(\nabla f)\cdot T^2(\nabla f)d^3x\leq \|T(\nabla f)\|_{L^2(\Omega)}\|T^2(\nabla f)\|_{L^2(\Omega)}\leq \lambda \|T(\nabla f)\|^2_{L^2(\Omega)}
	\end{gather}
	where we used the definition of $\nabla h$, the Cauchy-Schwarz inequality and the contraction property of $T$, c.f. \Cref{3C2}. We employ once more \Cref{4L3} and the Cauchy-Schwarz inequality to arrive at
	\begin{gather}
		\nonumber
		\|T(\nabla f)\|^2_{L^2(\Omega)}=-\left\langle\left(w^{\operatorname{Tr}}_{\Omega}-\frac{\operatorname{Id}}{2}\right)(\mathcal{N}\cdot \nabla f),\mathcal{N}\cdot \nabla f\right\rangle\leq \left\|\left(w^{\operatorname{Tr}}_{\Omega}-\frac{\operatorname{Id}}{2}\right)(\mathcal{N}\cdot \nabla f)\right\|\|\mathcal{N}\cdot \nabla f\|
	\end{gather}
	so that we overall arrive at
	\begin{gather}
		\nonumber
		\left\|\left(w^{\operatorname{Tr}}_{\Omega}-\frac{\operatorname{Id}}{2}\right)(\mathcal{N}\cdot \nabla f)\right\|\leq \lambda \|\mathcal{N}\cdot \nabla f\|
	\end{gather}
	which proves the claim.
\end{proof}
We are now in the position to prove \Cref{2T9}.
\begin{proof}[Proof of \Cref{2T9}]
	We first observe that the operator $w^{\operatorname{Tr}}_{\Omega}+\frac{\operatorname{Id}}{2}:\mathcal{D}\rightarrow\mathcal{D}$ is well-defined according to \Cref{4L2}. Then given $B\in L^2\mathcal{H}(\Omega)\cap \mathcal{H}^{\perp_{L^2(\Omega)}}_D(\Omega)$ we recall that $\mathcal{N}\cdot B\in \mathcal{D}$ which follows from the Hodge-decomposition theorem and the fact that $\mathcal{H}_N(\Omega)$ is $L^2(\Omega)$-orthogonal to $\mathcal{H}_D(\Omega)$ which follows from step 1 and 3 of the proof of \Cref{2T2}. Consequently $w^{\operatorname{Tr}}_{\Omega}-\frac{\operatorname{Id}}{2}=\left(w^{\operatorname{Tr}}_{\Omega}+\frac{\operatorname{Id}}{2}\right)-\operatorname{Id}$ also maps $\mathcal{D}$ into $\mathcal{D}$. So if we define $b_n:=\sum_{k=0}^{n}\left(\frac{\operatorname{Id}}{2}-w^{\operatorname{Tr}}_{\Omega}\right)^k(\mathcal{N}\cdot B)$ for fixed $n\in \mathbb{N}$ we find $b_n\in \mathcal{D}\subset W^{-\frac{1}{2},2}(\partial\Omega)$ for every $n$. It follows from definition of $\mathcal{D}$ that $\int_{\partial\Omega}b_nd\sigma=0$ for every $n$ and consequently the Neumann boundary value problem (\ref{2E9}) admits a unique solution $f_n\in H^1(\Omega)$ for any fixed $n$. We recall that if we let $\Gamma\in \mathcal{H}_N(\Omega)$ denote the $L^2(\Omega)$-orthogonal projection of $B$ onto $\mathcal{H}_N(\Omega)$ and if we let $f$ denote the unique solution to the BVP
	\begin{gather}
		\nonumber
		\Delta f=0\text{ in }\Omega\text{, }\mathcal{N}\cdot \nabla f=\left(\frac{\operatorname{Id}}{2}+w^{\operatorname{Tr}}_{\Omega}\right)^{-1}(\mathcal{N}\cdot B)\text{ on }\partial\Omega\text{ and }\int_{\Omega}fd^3x=0
	\end{gather}
	then according to \Cref{2L8} the current $j:=\Gamma\times \mathcal{N}+\nabla f\times \mathcal{N}$ is a preimage of $B$, i.e. $\operatorname{BS}_{\partial\Omega}(j)=B$. Further, we define the approximations $j_n:=\Gamma\times \mathcal{N}+\nabla f_n\times \mathcal{N}$ and we need to prove that
	\begin{gather}
		\label{4E13}
		\|j-j_n\|_{W^{-\frac{1}{2},2}(\partial\Omega)}\leq\frac{c_1\lambda^n}{1-\lambda}\|B\|_{L^2(\Omega)}\text{, }
		\\
		\label{4E14}
		\|\operatorname{BS}_{\partial\Omega}(j_n)-B\|_{L^2(\Omega)}\leq \frac{c_2\lambda^n}{1-\lambda}\|B\|_{L^2(\Omega)}
	\end{gather}
	for some constants $c_1,c_2>0$ which are independent of $B$ and $n$. We recall that $B=\operatorname{BS}_{\partial\Omega}(j)$ and therefore the continuity of the Biot-Savart operator, c.f. \cite[Lemma C.1]{G24}, implies that (\ref{4E14}) is an immediate consequence of (\ref{4E13}). We are hence left with establishing (\ref{4E13}). We observe that
	\begin{gather}
		\nonumber
		\|j-j_n\|_{W^{-\frac{1}{2},2}(\partial\Omega)}=\|\nabla f\times \mathcal{N}-\nabla f_n\times \mathcal{N}\|_{W^{-\frac{1}{2},2}(\partial\Omega)}\leq c\|\nabla f-\nabla f_n\|_{H(\operatorname{curl},\Omega)}=c\|\nabla f-\nabla f_n\|_{L^2(\Omega)}
	\end{gather}
	for a suitable $c>0$ which is independent of $f$ and $n$ by means of the continuity of the tangential trace. It now follows from \Cref{CT1} that
	\begin{gather}
		\nonumber
		\|\nabla f-\nabla f_n\|_{L^2(\Omega)}\leq \tilde{c}\|\mathcal{N}\cdot \nabla f-\mathcal{N}\cdot \nabla f_n\|_{W^{-\frac{1}{2},2}(\partial\Omega)}
		\\
		\nonumber
		=\tilde{c}\left\|\left(\frac{\operatorname{Id}}{2}+w^{\operatorname{Tr}}_{\Omega}\right)^{-1}(\mathcal{N}\cdot B)-\sum_{k=0}^n\left(\frac{\operatorname{Id}}{2}-w^{\operatorname{Tr}}_{\Omega}\right)^k(\mathcal{N}\cdot B)\right\|_{W^{-\frac{1}{2},2}(\partial\Omega)}
		\\
		\nonumber
		\leq \tilde{C}\left\|\left(\frac{\operatorname{Id}}{2}+w^{\operatorname{Tr}}_{\Omega}\right)^{-1}-\sum_{k=0}^n\left(\frac{\operatorname{Id}}{2}-w^{\operatorname{Tr}}_{\Omega}\right)^k\right\|\cdot \|\mathcal{N}\cdot B\|_{W^{-\frac{1}{2},2}(\partial\Omega)}
	\end{gather}
	where $\tilde{c},\tilde{C}>0$ are constants independent of $B$ and $n$, where $\|\cdot\|$ denotes the operator norm induced by the inner product (\ref{4E8}) and where we used the equivalence of the norm $\|\cdot \|$ induced by (\ref{4E8}) and the $W^{-\frac{1}{2},2}(\partial\Omega)$-norm. We can then use the continuity of the normal trace to conclude that $\|\mathcal{N}\cdot B\|_{W^{-\frac{1}{2},2}(\partial\Omega)}\leq \hat{c}\|B\|_{L^2(\Omega)}$ (keeping in mind that $\operatorname{div}(B)=0$) for some suitable $\hat{c}>0$ independent of $B$. We overall arrive at
	\begin{gather}
		\label{4E15}
		\|j-j_n\|_{W^{-\frac{1}{2},2}(\partial\Omega)}\leq c_1\left\|\left(\frac{\operatorname{Id}}{2}+w^{\operatorname{Tr}}_{\Omega}\right)^{-1}-\sum_{k=0}^n\left(\frac{\operatorname{Id}}{2}-w^{\operatorname{Tr}}_{\Omega}\right)^k\right\|\cdot \|B\|_{L^2(\Omega)}
	\end{gather}
	for some $c_1>0$ independent of $B$ and $n$. We now write $\frac{\operatorname{Id}}{2}+w^{\operatorname{Tr}}_{\Omega}=\operatorname{Id}-\left( \frac{\operatorname{Id}}{2}-w^{\operatorname{Tr}}_{\Omega}\right)$. According to \Cref{4C4} we see that $\|\frac{\operatorname{Id}}{2}-w^{\operatorname{Tr}}_{\Omega}\|\leq \lambda<1$ and thus the inverse of $\frac{\operatorname{Id}}{2}+w^{\operatorname{Tr}}_{\Omega}$ admits a Neumann series expression as
	\begin{gather}
		\nonumber
		\left(\frac{\operatorname{Id}}{2}+w^{\operatorname{Tr}}_{\Omega}\right)^{-1}=\sum_{k=0}^{\infty}\left(\frac{\operatorname{Id}}{2}-w^{\operatorname{Tr}}_{\Omega}\right)^k
	\end{gather}
	and we find the estimate
	\begin{gather}
		\nonumber
		\left\|\left(\frac{\operatorname{Id}}{2}+w^{\operatorname{Tr}}_{\Omega}\right)^{-1}-\sum_{k=0}^n\left(\frac{\operatorname{Id}}{2}-w^{\operatorname{Tr}}_{\Omega}\right)^k\right\|\leq \sum_{k={n+1}}^{\infty}\lambda^k=\frac{\lambda^{n+1}}{1-\lambda}
	\end{gather}
	which in combination with (\ref{4E15}) proves the theorem.
\end{proof}
\subsection{Proof of \Cref{2P10}}
\begin{proof}[Proof of \Cref{2P10}]
	We recall that we are given a $C^{1,1}$-solid torus $\Omega\subset\mathbb{R}^3$ and we need to prove that for every $\Gamma\in \mathcal{H}_N(\Omega)$ and $\nabla f\in \mathcal{H}_{\operatorname{ex}}(\Omega)$ we have $J:=\Gamma\times \mathcal{N}+\nabla f\times \mathcal{N}\in W^{-\frac{1}{2},2}\mathcal{V}_0(\Sigma)$, where $\Sigma:=\partial\Omega$, and $\overline{Q}(J)=0$ where $\overline{Q}$ is defined in (\ref{2E12}). The fact that $J\in W^{-\frac{1}{2},2}\mathcal{V}_0(\Sigma)$ follows immediately from \Cref{3L4} since $\operatorname{curl}(\Gamma)=0=\operatorname{curl}(\nabla f)$. To see that $\overline{Q}(J)=0$ we recall that we have to show that
	\begin{gather}
		\nonumber
		\int_{\Sigma}J\cdot \gamma_td\sigma=0
	\end{gather}
	where $\gamma_t\in \mathcal{H}(\Sigma)=\{\gamma\in L^2\mathcal{V}(\Sigma)\mid \operatorname{div}_{\Sigma}(\gamma)=0=\operatorname{curl}_{\Sigma}(\gamma)\}$ is uniquely determined by the conditions $\int_{\sigma_t}\gamma_t=1$ and $\int_{\sigma_p}\gamma_t=0$ where $\sigma_t$ and $\sigma_p$ are some fixed toroidal and poloidal closed curve respectively. We observe first that $\mathcal{H}_N(\Omega)\subset \bigcap_{1<p<\infty}W^{1,p}(\Omega,\mathbb{R}^3)\subset \bigcap_{0<\alpha<1}C^{0,\alpha}(\overline{\Omega},\mathbb{R}^3)$, c.f. \cite[Lemma A.1]{G24}. Now, since $\sigma_p$ is poloidal it bounds a disc $D\subset \Omega$ and we may fix any $\widetilde{\Gamma}\in \mathcal{H}_N(\Omega)\setminus \{0\}$ and compute by means of Stokes' theorem $\int_{\sigma_p}\widetilde{\Gamma}=\int_D\operatorname{curl}(\widetilde{\Gamma})\cdot \mathcal{N}d\sigma=0$. It further follows from the fact that $\operatorname{curl}(\widetilde{\Gamma})=0$ that the restriction $\widetilde{\Gamma}|_{\Sigma}$ can be expressed, by means of the Hodge decomposition theorem, as $\widetilde{\Gamma}|_{\Sigma}=\nabla_{\Sigma}\kappa+\gamma$ for suitable $\gamma\in \mathcal{H}(\Sigma)$ and $\kappa\in H^1(\Sigma)$. It follows from the regularity of $\widetilde{\Gamma}$ and the regularity of $\mathcal{H}(\Sigma)\subset \bigcap_{1\leq p<\infty}W^{1,p}(\Sigma)\subset \bigcap_{0<\alpha<1}C^{0,\alpha}(\Sigma)$ that we also have $\kappa\in \bigcap_{0< \alpha<1} C^{1,\alpha}(\Sigma)$ so that $\nabla_{\Sigma}\kappa$ and $\gamma$ admit well-defined line-integrals and that we in particular have $\int_{\sigma_p}\nabla_{\Sigma}\kappa=0$ since $\sigma_p$ is a closed curve. Consequently $\int_{\sigma_p}\gamma=0$. Since $\sigma_p$ and $\sigma_t$ form a set of generators of the first fundamental group of $\Sigma$ we must have $\int_{\sigma_t}\gamma\neq 0$, since otherwise $\int_{\sigma}\gamma=0$ for any closed curve $\sigma\subset \Sigma$ which would imply that $\gamma$ is a gradient field and hence must be identically zero, which in turn would imply that $\widetilde{\Gamma}|_{\Sigma}$ is a gradient field which is only the case if $\widetilde{\Gamma}=0$ since $\mathcal{H}_N(\Omega)$ is $L^2(\Omega)$-orthogonal to the gradient fields, is curl-free and admits a vector potential. We conclude that with the right scaling we have $\int_{\sigma_t}\gamma=1$ and $\int_{\sigma_p}\gamma=0$, i.e. $\gamma=\gamma_t$. Further, we conclude that $\widetilde{\Gamma}|_{\Sigma}$ and $\gamma_t$ only differ by a gradient field $\nabla_{\Sigma} \kappa\in \bigcap_{0< \alpha<1}C^{0,\alpha}(\Sigma)=\bigcap_{1<p<\infty}W^{1-\frac{1}{p},p}(\Sigma)$. Since $J\in W^{-\frac{1}{2},2}\mathcal{V}_0(\Sigma)$ is div-free, it follows by an approximation argument that $\int_{\Sigma}J\cdot \nabla_{\Sigma}\kappa d\sigma=0$. Consequently, we find $\int_{\Sigma}J\cdot \gamma_t d\sigma=\int_{\Sigma}J\cdot \widetilde{\Gamma} d\sigma=0$ where we used that $(\Gamma\times \mathcal{N})\cdot \widetilde{\Gamma}=0$ on all of $\Sigma$ because $\mathcal{H}_N(\Omega)$ is $1$-dimensional and that $\nabla f\times \mathcal{N}$ is co-exact, while $\widetilde{\Gamma}|_{\Sigma}$ is a closed field, so that $\int_{\Sigma}(\nabla f\times \mathcal{N})\cdot \widetilde{\Gamma}d\sigma=0$. It then follows from definition of $\overline{Q}$ that $\overline{Q}(J)=0$ as desired.
\end{proof}
\section{Kernel reconstruction algorithm}
\subsection{Proof of \Cref{2T11}}
\begin{proof}[Proof of \Cref{2T11}]
	It follows from the proof of \cite[Proposition 5.8]{G24}, see also \cite[Equation (6.1)]{G24} that if we fix a basis $\Gamma_1,\dots,\Gamma_n$ of $\mathcal{H}_N(\Omega)$, then for every $1\leq i\leq n$ there exists a function $f_i\in \bigcap_{0<\alpha<1}C^{1,\alpha}(\partial\Omega)$ which satisfies the equation
	\begin{gather}
		\label{5E1}
		\int_{\partial\Omega} f_i(y)\mathcal{N}(y)\cdot\frac{y-x}{|x-y|^3}d\sigma(y)=\int_{\partial\Omega}\frac{\operatorname{BS}_{\Omega}(\Gamma_i)(y)\cdot \mathcal{N}(y)}{|x-y|}d\sigma(y)\text{ for all }x\in \Omega
	\end{gather}
	and that the vector fields 
	\begin{gather}
		\label{5E2}
j_i:=\operatorname{BS}_{\Omega}(\Gamma_i)\times \mathcal{N}+\nabla f_i\times \mathcal{N}\in W^{-\frac{1}{2},2}\mathcal{V}_0(\partial\Omega)
\end{gather}
	provide a basis of $\operatorname{Ker}(\operatorname{BS}_{\partial\Omega})$. We can then let $\tilde{f}_i\in \bigcap_{1\leq p<\infty}W^{2,p}(\Omega)$ be the harmonic extension of the $f_i$, i.e. $\Delta \widetilde{f}_i=0$ in $\Omega$ and $\widetilde{f}_i=f_i$ on $\partial\Omega$, c.f. \cite[Theorem 2.4.2.5]{Gris85}. We compute first for fixed $x\in \Omega$
	\begin{gather}
		\nonumber
		\int_{\partial\Omega}f_i(y)\mathcal{N}(y)\cdot\frac{y-x}{|x-y|^3}d\sigma(y)=\int_{\partial\Omega}\widetilde{f}_i(y)\mathcal{N}(y)\cdot\frac{y-x}{|x-y|^3}d\sigma(y)=\int_{\Omega}\nabla \widetilde{f}_i(y)\cdot \frac{y-x}{|y-x|^3}d^3y+4\pi \widetilde{f}_i(x).
	\end{gather}
	Further, using that $\operatorname{BS}_{\Omega}(\Gamma_i)$ is div-free, we find $\int_{\partial\Omega}\frac{\operatorname{BS}_{\Omega}(\Gamma_i)(y)\cdot\mathcal{N}(y)}{|x-y|}d\sigma(y)=\int_{\Omega}\operatorname{BS}_{\Omega}(\Gamma_i)(y)\cdot \frac{x-y}{|x-y|^3}d^3y$. We insert this into (\ref{5E1}) and obtain
	\begin{gather}
		\label{5E3}
		\widetilde{f}_i(x)=\frac{1}{4\pi}\int_{\Omega}\nabla \widetilde{f}_i(y)\cdot\frac{x-y}{|x-y|^3}d^3y+\frac{1}{4\pi}\int_{\Omega}\operatorname{BS}_{\Omega}(\Gamma_i)(y)\cdot \frac{x-y}{|x-y|^3}d^3y\text{ for all }x\in \Omega.
	\end{gather}
	We observe now that $\frac{x-y}{|x-y|^3}=\nabla_y\frac{1}{|x-y|}$ and that $\operatorname{BS}_{\Omega}(\Gamma_i)$ is div-free. Therefore it follows from the Hodge-decomposition theorem \cite[Theorem B.1]{G24} that, if we let $Z_i$ denote the $L^2(\Omega)$-orthogonal projection of $\operatorname{BS}_{\Omega}(\Gamma_i)$ onto $\mathcal{H}_{\operatorname{ex}}(\Omega)$, we find $\int_{\Omega}\operatorname{BS}_{\Omega}(\Gamma_i)(y)\cdot \frac{x-y}{|x-y|^3}d^3y=\int_{\Omega}Z_i(y)\cdot \frac{x-y}{|x-y|^3}d^3y$. Further, it follows from (\ref{2E16}) and the symmetry of the volume Biot-Savart operator that $\int_{\Omega}\nabla f\cdot \operatorname{BS}_{\Omega}(\Gamma_i)d^3y=\int_{\Omega}\Gamma_i\cdot \operatorname{BS}_{\Omega}(\nabla f)d^3y=0$ for every $\nabla f\in \mathcal{H}_D(\Omega)$. We conclude that $\operatorname{BS}_{\Omega}(\Gamma_i)\in \mathcal{H}^{\perp_{L^2(\Omega)}}_D(\Omega)$ and consequently $Z_i\in \mathcal{H}_{\operatorname{ex}}(\Omega)\cap \mathcal{H}^{\perp_{L^2(\Omega)}}_D(\Omega)$. Further, we observe that we may assume that $\nabla \widetilde{f}_i\in \mathcal{H}^{\perp_{L^2(\Omega)}}_D(\Omega)$. To see this we may label as usual the boundary components of $\partial\Omega$ by $\partial\Omega_0$,$\partial\Omega_1,\dots$,$\partial\Omega_M$ for suitable $M\in \mathbb{N}_0$ and observe that the gradients of the following (unique) functions
	\begin{gather}
		\label{5E4}
		\Delta h_k=0\text{ in }\Omega\text{, }h_k|_{\partial\Omega_i}=\delta_{ki}
	\end{gather} 
	for $k=1,\dots,M$ form a basis of $\mathcal{H}_D(\Omega)$. We further observe that, letting $\Omega_k$ denote the finite volumes enclosed by the $\partial\Omega_k$ and taking $\partial\Omega_0$ as the unique boundary component with $\Omega\subset \Omega_0$, we find
	\begin{gather}
		\nonumber
		\int_{\partial\Omega}h_k(y)\mathcal{N}(y)\cdot \frac{y-x}{|x-y|^3}d\sigma(y)=\int_{\partial\Omega_k}\mathcal{N}(y)\cdot \frac{y-x}{|y-x|^3}d\sigma(y)=0
	\end{gather}
	for every fixed $x\in \Omega$ because $\Omega\cap \Omega_k=\emptyset$ for every $1\leq k\leq M$. Hence, we may subtract the projection of $\nabla \widetilde{f}_i$ onto $\mathcal{H}_D(\Omega)$ from $\nabla \widetilde{f}_i$ which will lead to a possibly modified function $f_i$ still solving (\ref{5E1}) with a harmonic extension $\widetilde{f}_i$ satisfying $\nabla \widetilde{f}_i\in \mathcal{H}^{\perp_{L^2(\Omega)}}_D(\Omega)$. The key now is that according to \cite[Lemma 6.3]{G24} any two solutions $f_i$,$\hat{f}_i$ of (\ref{5E1}) lead to the same current in (\ref{5E2}).
	
	We can then take the gradient in (\ref{5E3}) and observe the identity
	\begin{gather}
		\nonumber
		\nabla \widetilde{f}_i=T(\nabla \widetilde{f}_i)+T(Z_i).
	\end{gather}
	We then apply \Cref{4L1} to conclude
	\begin{gather}
		\nonumber
		\mathcal{N}\cdot \nabla \widetilde{f}_i=\frac{\mathcal{N}\cdot \nabla \widetilde{f}_i}{2}-w^{\operatorname{Tr}}_{\Omega}(\mathcal{N}\cdot \nabla \widetilde{f}_i)+\left(\frac{\operatorname{Id}}{2}-w^{\operatorname{Tr}}_{\Omega}\right)(\mathcal{N}\cdot Z_i)
		\\
		\nonumber
		\Leftrightarrow \left(\frac{\operatorname{Id}}{2}+w^{\operatorname{Tr}}_{\Omega}\right)(\mathcal{N}\cdot \nabla \widetilde{f}_i)=\left(\frac{\operatorname{Id}}{2}-w^{\operatorname{Tr}}_{\Omega}\right)(\mathcal{N}\cdot Z_i).
	\end{gather}
	Since $\mathcal{N}\cdot \nabla \widetilde{f}_i$, $\mathcal{N}\cdot Z_i\in \mathcal{D}=\left\{\mathcal{N}\cdot \nabla f\mid \nabla f\in \mathcal{H}_{\operatorname{ex}}(\Omega)\cap \mathcal{H}^{\perp_{L^2(\Omega)}}_D(\Omega)\right\}$ we may invert the operator $\frac{\operatorname{Id}}{2}+w^{\operatorname{Tr}}_{\Omega}$, recall \Cref{4L2}, and hence we see that $\widetilde{f}_i$ solves the Neumann problem
	\begin{gather}
		\nonumber
		\Delta \widetilde{f}_i=0\text{ in }\Omega\text{ and }\mathcal{N}\cdot \nabla\widetilde{f}_i=\left(\frac{\operatorname{Id}}{2}+w^{\operatorname{Tr}}_{\Omega}\right)^{-1}\left(\frac{\operatorname{Id}}{2}-w^{\operatorname{Tr}}_{\Omega}\right)(\mathcal{N}\cdot Z_i)\text{ on }\partial\Omega.
	\end{gather}
	Finally, we observe that $\mathcal{N}\cdot Z_i=\mathcal{N}\cdot \operatorname{BS}_{\Omega}(\Gamma_i)$ because the remaining components of the Hodge decomposition of $\operatorname{BS}_{\Omega}(\Gamma_i)$ are tangent to the boundary, \cite[Theorem B.1]{G24}. We conclude that there exist solutions of the BVPs
	\begin{gather}
		\nonumber
		\Delta g_i=0\text{ in }\Omega\text{, }\mathcal{N}\cdot \nabla g_i=\left(\frac{\operatorname{Id}}{2}+w^{\operatorname{Tr}}_{\Omega}\right)^{-1}\left(\frac{\operatorname{Id}}{2}-w^{\operatorname{Tr}}_{\Omega}\right)(\mathcal{N}\cdot \operatorname{BS}_{\Omega}(\Gamma_i))\text{ on }\partial\Omega\text{ and }\int_{\Omega}g_id^3x=0
	\end{gather}
	and that $\nabla g_i=\nabla \widetilde{f}_i$ so that according to (\ref{5E2}) the vector fields $j_i:=\operatorname{BS}_{\Omega}(\Gamma_i)\times \mathcal{N}+\nabla g_i\times \mathcal{N}$ form a basis of $\operatorname{Ker}(\operatorname{BS}_{\partial\Omega})$ as claimed.
\end{proof}
\subsection{Proof of \Cref{2T12}}
\begin{proof}[Proof of \Cref{2T12}]
	We recall first that according to \Cref{2T11} and its proof we have for any fixed $\Gamma\in \mathcal{H}_N(\Omega)$, $\mathcal{N}\cdot \operatorname{BS}_{\Omega}(\Gamma)\in \mathcal{D}=\{\mathcal{N}\cdot \nabla f\mid \nabla f\in \mathcal{H}_{\operatorname{ex}}(\Omega)\cap \mathcal{H}^{\perp_{L^2(\Omega)}}_D(\Omega)\}$ and $j:=\operatorname{BS}_{\Omega}(\Gamma)\times \mathcal{N}+\nabla f\times \mathcal{N}\in \operatorname{Ker}(\operatorname{BS}_{\partial\Omega})$ where $f$ is the unique solution of the BVP
	\begin{gather}
		\nonumber
		\Delta f=0\text{ in }\Omega\text{, }\mathcal{N}\cdot \nabla f=\left(\frac{\operatorname{Id}}{2}+w^{\operatorname{Tr}}_{\Omega}\right)^{-1}\left(\left(\frac{\operatorname{Id}}{2}-w^{\operatorname{Tr}}_{\Omega}\right)(\mathcal{N}\cdot \operatorname{BS}_{\Omega}(\Gamma))\right)\text{ on }\partial\Omega\text{, }\int_{\Omega}fd^3x=0.
	\end{gather}
	Now, since $\mathcal{N}\cdot \operatorname{BS}_{\Omega}(\Gamma)\in \mathcal{D}$ it follows that for any fixed $n\in \mathbb{N}$, $b_n:=\sum_{k=1}^{n}\left(\frac{\operatorname{Id}}{2}-w^{\operatorname{Tr}}_{\Omega}\right)^k(\mathcal{N}\cdot \operatorname{BS}_{\Omega}(\Gamma))\in \mathcal{D}$ and consequently $\int_{\partial\Omega}b_nd\sigma=0$ so that each $b_n$ satisfies the compatibility condition for the existence of a solution $f_n\in H^1(\Omega)$ of the corresponding Neumann problem
	\begin{gather}
		\nonumber
		\Delta f_n=0\text{ in }\Omega\text{, }\mathcal{N}\cdot \nabla f_n=b_n\text{ on }\partial\Omega\text{ and }\int_{\partial\Omega}f_nd^3x=0.
	\end{gather}
	One can argue now in the spirit of the proof of \Cref{2T9}, namely letting $j_n:=\operatorname{BS}_{\Omega}(\Gamma)\times \mathcal{N}+\nabla f_n\times \mathcal{N}$ we find
	\begin{gather}
		\nonumber
		\|j-j_n\|_{W^{-\frac{1}{2},2}(\partial\Omega)}=\|\nabla f\times \mathcal{N}-\nabla f_n\times \mathcal{N}\|_{W^{-\frac{1}{2},2}(\partial\Omega)}\leq c\|\nabla f-\nabla f_n\|_{H(\operatorname{curl},\Omega)}
		\\
		\nonumber
		=c\|\nabla f-\nabla f_n\|_{L^2(\Omega)}\leq C\|\mathcal{N}\cdot\nabla f-\mathcal{N}\cdot \nabla f_n\|_{W^{-\frac{1}{2},2}(\partial\Omega)}
	\end{gather}
	where we used once more \Cref{CT1} in the last step and $c,C>0$ are constants independent of $f$ and $n$. Now we can use the boundary conditions satisfied by $\nabla f,\nabla f_n$
	\begin{gather}
		\nonumber
		\|\mathcal{N}\cdot\nabla f-\mathcal{N}\cdot \nabla f_n\|_{W^{-\frac{1}{2},2}(\partial\Omega)}
		\\
		\nonumber
		=\left\|\left(\frac{\operatorname{Id}}{2}+w^{\operatorname{Tr}}_{\Omega}\right)^{-1}\left(\frac{\operatorname{Id}}{2}-w^{\operatorname{Tr}}_{\Omega}\right)(\mathcal{N}\cdot \operatorname{BS}_{\Omega}(\Gamma))-\sum_{k=1}^n\left(\frac{\operatorname{Id}}{2}-w^{\operatorname{Tr}}_{\Omega}\right)^k(\mathcal{N}\cdot \operatorname{BS}_{\Omega}(\Gamma))\right\|_{W^{-\frac{1}{2},2}(\partial\Omega)}
		\\
		\nonumber
		\leq \left\|\left(\frac{\operatorname{Id}}{2}+w^{\operatorname{Tr}}_{\Omega}\right)^{-1}\circ \left(\frac{\operatorname{Id}}{2}-w^{\operatorname{Tr}}_{\Omega}\right)-\sum_{k=1}^n\left(\frac{\operatorname{Id}}{2}-w^{\operatorname{Tr}}_{\Omega}\right)^k\right\|_{W^{-\frac{1}{2},2}(\partial\Omega)}\|\mathcal{N}\cdot \operatorname{BS}_{\Omega}(\Gamma)\|_{W^{-\frac{1}{2},2}(\partial\Omega)}
		\\
		\nonumber
		\leq c\left\|\left(\frac{\operatorname{Id}}{2}+w^{\operatorname{Tr}}_{\Omega}\right)^{-1}\circ \left(\frac{\operatorname{Id}}{2}-w^{\operatorname{Tr}}_{\Omega}\right)-\sum_{k=1}^n\left(\frac{\operatorname{Id}}{2}-w^{\operatorname{Tr}}_{\Omega}\right)^k\right\|\|\operatorname{BS}_{\Omega}(\Gamma)\|_{L^2(\Omega)}
	\end{gather}
	for some suitable constant $c>0$ independent of $\Gamma$ and $n$, where we used the continuity of the normal trace with respect to $H(\operatorname{div},\Omega)$-norm, that $\operatorname{BS}_{\Omega}(\Gamma)$ is div-free and the fact that the norm $\|\cdot \|$ induced by the inner product in (\ref{4E8}) is equivalent to the $W^{-\frac{1}{2},2}(\partial\Omega)$-norm. As in the proof of \Cref{2T9} we have the expression $\left(\frac{\operatorname{Id}}{2}+w^{\operatorname{Tr}}_{\Omega}\right)^{-1}=\sum_{k=0}^{\infty}\left(\frac{\operatorname{Id}}{2}-w^{\operatorname{Tr}}_{\Omega}\right)^k$ and so utilising \Cref{4C4} we can estimate
	\begin{gather}
		\nonumber
		\left\|\left(\frac{\operatorname{Id}}{2}+w^{\operatorname{Tr}}_{\Omega}\right)^{-1}\circ \left(\frac{\operatorname{Id}}{2}-w^{\operatorname{Tr}}_{\Omega}\right)-\sum_{k=1}^n\left(\frac{\operatorname{Id}}{2}-w^{\operatorname{Tr}}_{\Omega}\right)^k\right\|\leq \frac{\lambda^{n+1}}{1-\lambda}
	\end{gather}
	where $0<\lambda<1$ is the contraction constant of the operator $T$.
	
	We hence arrive at the inequality
	\begin{gather}
		\nonumber
		\|j_n-j\|_{W^{-\frac{1}{2},2}(\partial\Omega)}\leq \tilde{c}\frac{\lambda^{n+1}}{1-\lambda}\|\operatorname{BS}_{\Omega}(\Gamma)\|_{L^2(\Omega)}
	\end{gather}
	where $\tilde{c}>0$ is a constant independent of $n$ and $\Gamma$. To obtain the desired estimate we note that by means of the H\"{o}lder inequality and Hardy-Littlewood-Sobolev inequality we can estimate
	\begin{gather}
		\nonumber
		\|\operatorname{BS}_{\Omega}(\Gamma)\|_{L^2(\Omega)}\leq c(\Omega)\|\operatorname{BS}_{\Omega}(\Gamma)\|_{L^6(\Omega)}\leq c(\Omega)\|\operatorname{BS}_{\Omega}(\Gamma)\|_{L^6(\mathbb{R}^3)}\leq \tilde{c}\|\Gamma\|_{L^2(\Omega)}
	\end{gather}
	once more for suitable constants $c,\tilde{c}>0$ which are independent of $\Gamma$. We conclude overall
	\begin{gather}
		\nonumber
		\|j_n-j\|_{W^{-\frac{1}{2},2}(\partial\Omega)}\leq \widetilde{C}\frac{\lambda^{n+1}}{1-\lambda}\|\Gamma\|_{L^2(\Omega)}
	\end{gather}
	for a suitable $\widetilde{C}>0$ independent of $n$ and $\Gamma$. This proves (\ref{2E20}).
	
	The estimate of $\operatorname{BS}_{\partial\Omega}(j_n)$ follows from the continuity of the Biot-Savart operator \cite[Lemma C.1]{G24} and the fact that $\operatorname{BS}_{\partial\Omega}(j)=0$ since $j\in \operatorname{Ker}(\operatorname{BS}_{\partial\Omega})$.
\end{proof}
\subsection{Proof of \Cref{2T13}}
\begin{proof}[Proof of \Cref{2T13}]
	We will prove that the traces $\nabla g_i\times \mathcal{N}$ of the functions $g_i$ from \Cref{2T11} may be equivalently characterised as the traces $\nabla f_i\times \mathcal{N}$ of the functions $f_i$ solving the exterior boundary value problems
	\begin{gather}
		\label{5E5}
		\Delta f_i=0\text{ in }\mathbb{R}^3\setminus \overline{\Omega}\text{, }\mathcal{N}\cdot \nabla f_i=-\mathcal{N}\cdot \operatorname{BS}_{\Omega}(\Gamma_i)\text{ on }\partial\Omega\text{, }f_i\rightarrow 0\text{ as }x\rightarrow\infty.
	\end{gather}
	Then the theorem will follow immediately from the characterisation of the kernel in \Cref{2T11}. To this end we fix a basis $\Gamma_1,\dots,\Gamma_n$ of $\mathcal{H}_N(\Omega)$ as in \Cref{2T11} and fix from now on some $1\leq i\leq n$ and drop for notational simplicity the index, i.e. we set $\Gamma\equiv \Gamma_i$ for our fixed index $i$.  It follows then first from the proof of \cite[Proposition 5.8]{G24} that $\operatorname{BS}_{\Omega}(\Gamma)|_{\partial\Omega}\in \bigcap_{0<\alpha<1}C^{1,\alpha}(\partial\Omega)$. Denoting as usual the boundary components of $\Omega$ by $\partial\Omega_0,\partial\Omega_1,\dots,\partial\Omega_m$ for suitable $m\in \mathbb{N}_0$ where $\partial\Omega_0$ is the unique component which encloses a finite volume $\Omega_0$ with $\Omega\subset\Omega_0$. The remaining components will also enclose finite regions $\Omega_i$ and it follows from the div-theorem and the fact that $\operatorname{BS}_{\Omega}(\Gamma)$ is div-free throughout all of $\mathbb{R}^3$ that $\int_{\partial\Omega_i}\mathcal{N}\cdot \operatorname{BS}_{\Omega}(\Gamma)d\sigma=0$ for all $0\leq i\leq m$. It then follows from \cite[Theorem 6.43]{RCM21} that there exists a solution $\phi\in \bigcap_{0<\alpha<1}C^{0,\alpha}(\partial\Omega)$ of the equation
	\begin{gather}
		\label{5E6}
		\left(\frac{\operatorname{Id}}{2}+w^{\operatorname{Tr}}_{\Omega}\right)(\phi)=\mathcal{N}\cdot \operatorname{BS}_{\Omega}(\Gamma)\text{ on }\partial\Omega
	\end{gather}
	and that the single layer potential
	\begin{gather}
		\label{5E7}
		v_{\Omega^c}[\phi](x):=\frac{1}{4\pi}\int_{\partial\Omega}\frac{\phi(y)}{|x-y|}d\sigma(y)\in \bigcap_{0<\alpha<1}C^{1,\alpha}_{\operatorname{loc}}(\mathbb{R}^3\setminus \Omega)
	\end{gather}
	provides a weak solution of the BVP (\ref{5E5}). It follows further from \cite[Theorem 6.43]{RCM21} that the difference of any other solution of (\ref{5E5}) and $v_{\Omega^c}[\phi]$ is locally constant and is identically zero on the unbounded component of $\overline{\Omega}^c$. Therefore, fixing the averages on each $\Omega_i$, $i=1,\dots,m$ provides a unique solution to the BVP (\ref{5E5}) and further the gradient of this solution coincides with $\nabla_xv_{\Omega^c}[\phi](x)$. Since the constructed currents $j_i$ only depend on the gradient of the BVP (\ref{5E5}) we may work with the solutions given by the single layer potential $v_{\Omega^c}[\phi](x)$. We now recall that by the characterisation of the kernel in \Cref{2T11} we want to show that $\nabla g\times \mathcal{N}=\nabla v_{\Omega^c}[\phi]\times \mathcal{N}$ on $\partial\Omega$ where $g$ is the unique solution to the interior Neumann BVP
	\begin{gather}
		\label{5E8}
		\Delta g=0\text{ in }\Omega\text{, }\mathcal{N}\cdot \nabla g=\left(\frac{\operatorname{Id}}{2}+w^{\operatorname{Tr}}_{\Omega}\right)^{-1}\left(\left(\frac{\operatorname{Id}}{2}-w^{\operatorname{Tr}}_{\Omega}\right)(\operatorname{BS}_{\Omega}(\Gamma)\cdot \mathcal{N})\right)\text{ on }\partial\Omega\text{ and }\int_{\Omega}gd^3x=0.
	\end{gather}
	We observe that $\frac{\operatorname{Id}}{2}+w^{\operatorname{Tr}}_{\Omega}$ and $\frac{\operatorname{Id}}{2}-w^{\operatorname{Tr}}_{\Omega}$ commute and hence so do $\left(\frac{\operatorname{Id}}{2}+w^{\operatorname{Tr}}_{\Omega}\right)^{-1}$ and $\frac{\operatorname{Id}}{2}-w^{\operatorname{Tr}}_{\Omega}$. With this observation we can express the boundary conditions in (\ref{5E8}) equivalently as
	\begin{gather}
		\label{5E9}
		\mathcal{N}\cdot \nabla g=\left(\frac{\operatorname{Id}}{2}-w^{\operatorname{Tr}}_{\Omega}\right)\left(\left(\frac{\operatorname{Id}}{2}+w^{\operatorname{Tr}}_{\Omega}\right)^{-1}(\operatorname{BS}_{\Omega}(\Gamma)\cdot \mathcal{N})\right).
	\end{gather}
	We further conclude from (\ref{5E6})
	\begin{gather}
		\label{5E10}
		\left(\frac{\operatorname{Id}}{2}+w^{\operatorname{Tr}}_{\Omega}\right)\left(\phi-\left(\frac{\operatorname{Id}}{2}+w^{\operatorname{Tr}}_{\Omega}\right)^{-1}(\operatorname{BS}_{\Omega}(\Gamma)\cdot \mathcal{N})\right)=0.
	\end{gather}
	We claim now that $\phi$ in (\ref{5E6}) can be chosen such that $\phi \in \mathcal{D}=\left\{\mathcal{N}\cdot \nabla f\mid \nabla f\in \mathcal{H}_{\operatorname{ex}}(\Omega)\cap \mathcal{H}_D^{\perp_{L^2(\Omega)}}(\Omega)\right\}$. To see this, we note first that according to \cite[Lemma 6.11]{RCM21}
	\begin{gather}
		\nonumber
		\int_{\partial\Omega}\phi d\sigma=\int_{\partial\Omega}\left(\frac{\operatorname{Id}}{2}+w^{\operatorname{Tr}}_{\Omega}\right)(\phi)d\sigma=\int_{\partial\Omega}\mathcal{N}\cdot \operatorname{BS}_{\Omega}(\Gamma)d\sigma=0
	\end{gather}
	where we used that $\operatorname{BS}_{\Omega}(\Gamma)$ is div-free in the last step. Hence there is some $\nabla f\in \mathcal{H}_{\operatorname{ex}}(\Omega)$ with $\mathcal{N}\cdot \nabla f=\phi$. Further, it follows from \cite[Lemma 6.28]{RCM21} that $\phi$ can be chosen such that in addition $\int_{\partial\Omega_i}\phi d\sigma=0$ for all $1\leq i\leq m$. For this specific choice of $\phi$ we can then decompose further by finite dimensionality of $\mathcal{H}_D(\Omega)$, $\nabla f=\nabla h+\nabla \kappa$ where $\nabla h\in \mathcal{H}_{\operatorname{ex}}(\Omega)\cap \mathcal{H}_D^{\perp_{L^2(\Omega)}}(\Omega)$ and $\nabla \kappa\in \mathcal{H}_D(\Omega)$. According to (\ref{5E4}) we can find constants $\alpha_j\in \mathbb{R}$, $1\leq j\leq m$ with $\nabla \kappa=\sum_{j=1}^m\alpha_j \nabla \kappa_j$ where the $\kappa_j$ are given as the solutions of the BVPs in (\ref{5E4}). Using that the decomposition $\nabla f=\nabla h+\nabla \kappa$ is $L^2(\Omega)$-orthogonal, that $\mathcal{N}\cdot \nabla f=\phi$ and the properties of $\phi$ one concludes easily by performing an integration by parts that $\|\nabla h\|^2_{L^2(\Omega)}=0$ and consequently $\phi \in \mathcal{D}$ as claimed. It hence follows from (\ref{5E10}) and \Cref{4L2} that $\phi=\left(\frac{\operatorname{Id}}{2}+w^{\operatorname{Tr}}_{\Omega}\right)^{-1}(\operatorname{BS}_{\Omega}(\Gamma)\cdot \mathcal{N})$ and consequently (\ref{5E9}) reads $\mathcal{N}\cdot \nabla g=\left(\frac{\operatorname{Id}}{2}-w^{\operatorname{Tr}}_{\Omega}\right)(\phi)$. It then follows from the representation formula for the interior Neumann problem, c.f. \cite[Theorem 6.42]{RCM21}, that a solution to the BVP (\ref{5E8}) with a possibly non-zero mean within $\Omega$ is given by 
	\begin{gather}
		\label{5E11}
		v_{\Omega}[\phi](x):=\frac{1}{4\pi}\int_{\partial\Omega}\frac{\phi(y)}{|x-y|}d\sigma(y)\in \bigcap_{0<\alpha<1}C^{1,\alpha}(\overline{\Omega}).
	\end{gather}
	Again, the gradient of the solution to (\ref{5E8}) coincides with $\nabla v_{\Omega}[\phi]$. We observe further that $v_{\Omega^c}[\phi]$,$v_{\Omega}[\phi]$ as defined in (\ref{5E7}) and (\ref{5E11}) give rise to well-defined continuous functions on all of $\mathbb{R}^3$ so that their traces on $\partial\Omega$ coincide. This implies $\nabla_{\partial\Omega} v_{\Omega}[\phi]=\nabla_{\partial\Omega} v_{\Omega^c}[\phi]$ on $\partial\Omega$ and consequently $\nabla v_{\Omega}[\phi]\times \mathcal{N}=\nabla v_{\Omega^c}[\phi]\times \mathcal{N}$ on $\partial\Omega$ since only the tangential gradients contribute to these expressions. We conclude that $\nabla f\times \mathcal{N}=\nabla g\times \mathcal{N}$ for any solution $f$ of the BVP (\ref{5E5}) and any solution $g$ of the BVP (\ref{5E8}) which proves the theorem.
\end{proof}
\section*{Acknowledgements}
This work has been supported by the Inria AEX StellaCage. The research was supported in part by the MIUR Excellence Department Project awarded to Dipartimento di Matematica, Università di Genova, CUP D33C23001110001.
\appendix
\section{$L^2$-equivalent norm on $L^2\mathcal{H}(\Omega)$}
\label{AppC}
We recall that $L^2\mathcal{H}(\Omega)$ consists of all square integrable fields $B$ which are div- and curl-free and that $\mathcal{H}_{\operatorname{ex}}(\Omega)\subset L^2\mathcal{H}(\Omega)$ consists of those $B\in L^2\mathcal{H}(\Omega)$ for which there is some $f\in H^1(\Omega)$ with $B=\nabla f$. By standard elliptic estimates any element $B\in L^2\mathcal{H}(\Omega)$ is analytic within $\Omega$ and since $L^2\mathcal{H}(\Omega)\subset H(\operatorname{div},\Omega)\cap H(\operatorname{curl},\Omega)$ every $B\in L^2\mathcal{H}(\Omega)$ has a normal trace $\mathcal{N}\cdot B\in W^{-\frac{1}{2},2}\mathcal{V}_0(\partial\Omega)$ and a twisted tangential trace $B\times \mathcal{N}\in \left(W^{-\frac{1}{2},2}(\partial\Omega)\right)^3$, recall the discussion preceding \Cref{2L8}. In addition, if $\Omega\subset\mathbb{R}^3$ is a $C^{1,1}$-solid torus, we may fix any simple closed $C^1$-loop $\sigma$ within $\Omega$ which represents a non-trivial element of the first fundamental group of $\Omega$ and define the \textit{circulation} $I_B$  of a given $B\in L^2\mathcal{H}(\Omega)$ as the line integral $\int_{\sigma}B$ which is always well-defined since $B$ is analytic within $\Omega$. We note that $I_B$ depends on the chosen $\sigma$ only via its orientation, i.e. if we pick any other non-trivial simple closed curve $\tilde{\sigma}\subset \Omega$ then $|\tilde{I}_B|=|I_B|$ and $\tilde{I}_B$ and $I_B$ have the same sign iff $\sigma$ and $\tilde{\sigma}$ are oriented in the same way. We have the following result, which may be seen as a generalisation of \cite[Lemma 11]{PRS22} since $\|j\|_{W^{-\frac{1}{2},2}(\partial\Omega)}\leq c(\partial\Omega)\|j\|_{L^2(\partial\Omega)}$ for some suitable $c>0$ which is independent of $j$.
\begin{thm}[Equivalent $L^2\mathcal{H}(\Omega)$-norm]
	\label{CT1}
	Let $\Omega\subset\mathbb{R}^3$ be a bounded $C^{1,1}$-domain with possibly disconnected boundary. Then there exists some $0<c_1(\Omega),c_2(\Omega)<\infty$ such that for all $\nabla f\in \mathcal{H}_{\operatorname{ex}}(\Omega)$ we have
	\begin{gather}
		\nonumber
		c_1\|\mathcal{N}\cdot\nabla f\|_{W^{-\frac{1}{2},2}(\partial\Omega)}\leq \|\nabla f\|_{L^2(\Omega)}\leq c_2\|\mathcal{N}\cdot\nabla f\|_{W^{-\frac{1}{2},2}(\partial\Omega)}.
	\end{gather}
	If in addition $\Omega$ is a solid torus and we fix some non-trivial simple closed $C^1$-loop $\sigma\subset\Omega$. Then there exist constants $0<c_1(\Omega),c_2(\Omega)<\infty$ such that for all $B\in L^2\mathcal{H}(\Omega)$ we have
	\begin{gather}
		\nonumber
		\|B\|_{L^2(\Omega)}\leq c_1\sqrt{\|\mathcal{N}\cdot B\|^2_{W^{-\frac{1}{2},2}(\partial\Omega)}+\left(\int_{\sigma}B\right)^2}\leq c_2\|B\|_{L^2(\Omega)}.
	\end{gather}
\end{thm}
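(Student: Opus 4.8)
The plan is to reduce both norm equivalences to the Hodge decomposition (\ref{2E4}) together with the continuity and duality properties of the normal trace, so that almost everything follows from tools already established above.

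I would begin with the inequality chain for $\nabla f\in \mathcal{H}_{\operatorname{ex}}(\Omega)$. The left estimate $c_1\|\mathcal{N}\cdot\nabla f\|_{W^{-\frac{1}{2},2}(\partial\Omega)}\leq \|\nabla f\|_{L^2(\Omega)}$ is merely the continuity of the normal trace $T:H(\operatorname{div},\Omega)\to W^{-\frac{1}{2},2}(\partial\Omega)$, using that $\operatorname{div}(\nabla f)=\Delta f=0$ forces $\|\nabla f\|_{H(\operatorname{div},\Omega)}=\|\nabla f\|_{L^2(\Omega)}$. For the right estimate I would integrate by parts: since $\Delta f=0$,
\[
\|\nabla f\|^2_{L^2(\Omega)}=\langle \mathcal{N}\cdot\nabla f,\operatorname{Tr}(f)\rangle\leq \|\mathcal{N}\cdot\nabla f\|_{W^{-\frac{1}{2},2}(\partial\Omega)}\,\|\operatorname{Tr}(f)\|_{W^{\frac{1}{2},2}(\partial\Omega)}.
\]
Choosing the representative of $f$ with $\int_\Omega f\,d^3x=0$ (which leaves both the left-hand side and the pairing unchanged, because $\int_{\partial\Omega}\mathcal{N}\cdot\nabla f\,d\sigma=\int_\Omega\Delta f\,d^3x=0$), the standard trace inequality and Poincar\'{e}'s inequality give $\|\operatorname{Tr}(f)\|_{W^{\frac{1}{2},2}(\partial\Omega)}\leq c\|f\|_{H^1(\Omega)}\leq c'\|\nabla f\|_{L^2(\Omega)}$; dividing by $\|\nabla f\|_{L^2(\Omega)}$ completes the first statement.

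For the solid-torus statement I would decompose $B=\nabla f+\Gamma$ according to (\ref{2E4}) with $\nabla f\in \mathcal{H}_{\operatorname{ex}}(\Omega)$ and $\Gamma\in \mathcal{H}_N(\Omega)$, and reduce the two observables to the two summands. Since $\Gamma$ is tangent to $\partial\Omega$, one has $\mathcal{N}\cdot B=\mathcal{N}\cdot\nabla f$; and since $f$ is single-valued the circulation of the gradient part vanishes, $\int_\sigma\nabla f=0$, so $\int_\sigma B=\int_\sigma\Gamma$. Because $\Omega$ is a solid torus, $\dim\mathcal{H}_N(\Omega)=g(\partial\Omega)=1$, so fixing a basis element $\Gamma_0$ I may write $\Gamma=\mu\Gamma_0$. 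The Pythagorean identity $\|B\|^2_{L^2(\Omega)}=\|\nabla f\|^2_{L^2(\Omega)}+\|\Gamma\|^2_{L^2(\Omega)}$ then lets me treat the gradient part by the first statement and the harmonic part through $\|\Gamma\|_{L^2(\Omega)}=|\mu|\,\|\Gamma_0\|_{L^2(\Omega)}$ together with $(\int_\sigma B)^2=\mu^2(\int_\sigma\Gamma_0)^2$. Matching these quantities produces both inequalities with constants depending only on $\|\Gamma_0\|_{L^2(\Omega)}$, $\int_\sigma\Gamma_0$, and the constants from the first statement.

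The step I expect to require the most care is verifying that $\int_\sigma\Gamma_0\neq 0$, which is precisely what makes the circulation a non-degenerate functional on the one-dimensional space $\mathcal{H}_N(\Omega)$ and keeps the claimed upper bound from degenerating. I would argue this topologically: if $\int_\sigma\Gamma_0=0$, then since $\sigma$ generates $H_1(\Omega)$ and $\Gamma_0$ is curl-free, the circulation of $\Gamma_0$ vanishes along every closed loop in $\Omega$, so $\Gamma_0$ admits a single-valued potential and hence lies in $\mathcal{H}_{\operatorname{ex}}(\Omega)$; but $\mathcal{H}_N(\Omega)\perp_{L^2(\Omega)}\mathcal{H}_{\operatorname{ex}}(\Omega)$ then forces $\Gamma_0=0$, contradicting $\Gamma_0\neq 0$. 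The remaining ingredients, namely continuity of the traces, Poincar\'{e}'s inequality, and the $L^2(\Omega)$-orthogonality of the Hodge summands, are standard and enter only to convert the schematic equivalences above into explicit constants.
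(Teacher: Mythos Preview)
Your proposal is correct and follows essentially the same route as the paper's own proof: integration by parts plus Poincar\'{e} and the trace inequality for the gradient part, continuity of the normal trace for the reverse inequality, and the Hodge decomposition $B=\nabla f+\Gamma$ together with the non-degeneracy of the circulation on the one-dimensional $\mathcal{H}_N(\Omega)$ for the solid-torus statement. Your extra remark that adding a constant to $f$ leaves the duality pairing unchanged (since $\int_{\partial\Omega}\mathcal{N}\cdot\nabla f\,d\sigma=0$) is a nice touch that the paper leaves implicit.
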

\begin{proof}[Proof of \Cref{CT1}]
To simplify notation set $\Sigma:=\partial\Omega$. We fix any $\nabla f\in \mathcal{H}_{\operatorname{ex}}(\Omega)$ and observe that
\begin{gather}
	\nonumber
	\|\nabla f\|^2_{L^2(\Omega)}=\int_{\Sigma}f\left(\mathcal{N}\cdot\nabla f\right)d\sigma(x)\leq \|f\|_{W^{\frac{1}{2},2}(\Sigma)}\|\mathcal{N}\cdot \nabla f\|_{W^{-\frac{1}{2},2}(\Sigma)}
\end{gather}
where we integrated by parts and used the definition of the $W^{-\frac{1}{2},2}$-norm. We recall that the scalar potential is fixed only up to a constant so that we may assume that $\int_{\Omega}fd^3x=0$. Then, using the standard trace inequality and Poincar\'{e}'s inequality we can estimate $\|f\|_{W^{\frac{1}{2},2}(\Sigma)}\leq \tilde{c}(\Omega)\|\nabla f\|_{L^2(\Omega)}$ for some $\tilde{c}>0$ independent of $f$ so that
\begin{gather}
	\nonumber
	\|\nabla f\|_{L^2(\Omega)}\leq \tilde{c}(\Omega)\|\mathcal{N}\cdot \nabla f\|_{W^{-\frac{1}{2},2}(\Sigma)}.
\end{gather}
The converse inequality $\|\mathcal{N}\cdot \nabla f\|_{W^{-\frac{1}{2},2}(\Sigma)}\leq \hat{c}(\Omega)\|\nabla f\|_{L^2(\Omega)}$ for some $\hat{c}$ independent of $f$ follows from the continuity of the normal trace with respect to the $H(\operatorname{div},\Omega)$-norm and the fact that $\nabla f$ is a div-free field.
\newline
\newline
Now we suppose additionally that $\Omega\subset\mathbb{R}^3$ is a solid torus and we fix any $B\in L^2\mathcal{H}(\Omega)$. We first perform a Hodge-decomposition of $B$, \cite[Theorem B.1]{G24}, and write $B=\nabla f+\Gamma$ for suitable $\nabla f\in \mathcal{H}_{\operatorname{ex}}(\Omega)$, $\Gamma\in \mathcal{H}_N(\Omega)$. We note that this decomposition is $L^2(\Omega)$-orthogonal so that $\|B\|^2_{L^2(\Omega)}=\|\nabla f\|^2_{L^2(\Omega)}+\|\Gamma\|^2_{L^2(\Omega)}$. We first observe that $\dim\left(\mathcal{H}_N(\Omega)\right)=1$ because $\Omega$ is a solid torus, \cite[Theorem 2.6.1]{S95}, and that $\int_{\sigma}\Gamma=0$ implies that $\Gamma$ is a gradient field because $\sigma$ is a generator of the first fundamental group of $\Omega$. But since $\mathcal{H}_N(\Omega)$ and $\mathcal{H}_{\operatorname{ex}}(\Omega)$ are $L^2(\Omega)$-orthogonal we conclude $\int_{\sigma}\Gamma=0\Leftrightarrow \Gamma=0$. Hence we may fix some $\Gamma_0\in \mathcal{H}_N(\Omega)$ with $\int_{\sigma}\Gamma_0=1$ so that consequently $\Gamma=\kappa \Gamma_0$ with $\kappa=\int_{\sigma}\Gamma=\int_{\sigma}B$ for any $\Gamma\in \mathcal{H}_N(\Omega)$ where we used that $\sigma$ is a closed curve and $B$ and $\Gamma$ differ by a gradient field. Hence we find $\|\Gamma\|^2_{L^2(\Omega)}=\kappa^2\|\Gamma_0\|^2_{L^2(\Omega)}$, i.e. there is some $c(\Omega)>0$ independent of $B$ with
\begin{gather}
	\nonumber
	\|\Gamma\|^2_{L^2(\Omega)}=c(\Omega)\left(\int_\sigma B\right)^2.
\end{gather}
Combining the estimates for $\|\Gamma\|^2_{L^2(\Omega)}$ and $\|\nabla f\|_{L^2(\Omega)}$ and using that $\mathcal{N}\cdot \nabla f=\mathcal{N}\cdot B$ because by definition of $\mathcal{H}_N(\Omega)$ we have $\mathcal{N}\cdot \Gamma=0$ we arrive at the desired result.
\end{proof}
\section{Average poloidal and toroidal windings of kernel elements of $\operatorname{BS}_{\Sigma}$}
Let $\Sigma\subset\mathbb{R}^3$ be a $C^{1,1}$-surface which bounds a solid torus $\Omega$. We have already seen in the comparison of the procedure (\ref{2E13}) and our proposed algorithm \Cref{2T6} \& \Cref{2T9} that we can express a given current $j\in L^2\mathcal{V}_0(\Sigma)$ as $j=\nabla f\times \mathcal{N}+\alpha \gamma_t\times \mathcal{N}+\beta \gamma_p\times \mathcal{N}$ where $f\in H^1(\Sigma)$ is a suitable function, $\alpha,\beta\in \mathbb{R}$ and $\gamma_p,\gamma_t$ are a basis of the space $\mathcal{H}(\Sigma)$ of harmonic fields on our surface $\Sigma$ which are uniquely determined by demanding $\int_{\sigma_t}\gamma_t=1=\int_{\sigma_p}\gamma_p$ and $\int_{\sigma_p}\gamma_t=0=\int_{\sigma_t}\gamma_p=0$ for fixed poloidal and toroidal curves $\sigma_p$ and $\sigma_t$ on $\Sigma$. If we assume that $\sigma_t$ is homotopic to a toroidal loop $\tilde{\sigma}_t$ within the toroidal "plasma region" $P$, with $\overline{P}\subset \Omega$, then we may let $I_p:=\int_{\tilde{\sigma}_t}B_T$ where $B_T\in L^2\mathcal{H}(P)$ is our target magnetic field and we have argued that we have $\|\operatorname{BS}_{\Sigma}(j)-B_T\|^2_{L^2(P)}\geq (\alpha-I_p)^2\|\widetilde{\Gamma}\|^2_{L^2(P)}$ for some suitable (non-zero) element $\widetilde{\Gamma}$ of $\mathcal{H}_N(P)$. Consequently we must have $\alpha=I_p$ to be able to approximate the magnetic field $B_T$ well. We also know that according to \Cref{2T3} the kernel of $\operatorname{BS}_{\Sigma}$ is $1$-dimensional whenever $\Sigma$ bounds a $C^{1,1}$-solid torus. The goal of the present section is to show that fixing the parameter $\beta$ in the expression $j=\nabla f\times \mathcal{N}+I_p\gamma_t\times \mathcal{N}+\beta \gamma_p\times \mathcal{N}$ essentially determines the "contribution" of $\operatorname{Ker}(\operatorname{BS}_{\Sigma})$ to $j$ in the sense as stated in \Cref{DC2}.
\newline
\newline
Before we formulate our results we recall here from the discussion preceding \Cref{2P10} that for a Lipschitz-continuous, div-free current $j$ on $\Sigma$ we denote for fixed $x\in \Sigma$ by $\sigma_x$ the field line of $j$ starting at $x$ and that then, c.f. \cite[Definition 2.11, Lemma 2.12]{G24SurfaceHelicityArXiv}, the quantities $\hat{q}(x):=\lim_{T\rightarrow\infty}\frac{1}{T}\int_{\sigma_x[0,T]}\gamma_t$, $\hat{p}(x):=\lim_{T\rightarrow\infty}\frac{1}{T}\int_{\sigma_x[0,T]}\gamma_p$ are well-defined $L^1(\Sigma)$ functions whose averages may be computed according to
\begin{gather}
	\nonumber
	\overline{P}(j):=\frac{1}{|\Sigma|}\int_{\Sigma}\hat{p}(x)d\sigma(x)=\frac{1}{|\Sigma|}\int_{\Sigma}j(x)\cdot \gamma_p(x)d\sigma(x),
\end{gather}
	\begin{AppB}
	\label{ABE1}
	\overline{Q}(j):=\frac{1}{|\Sigma|}\int_{\Sigma}\hat{q}(x)d\sigma(x)=\frac{1}{|\Sigma|}\int_{\Sigma}j(x)\cdot \gamma_t(x)d\sigma(x).
\end{AppB}
$\overline{P}(j)$ and $\overline{Q}(j)$ can be interpreted as the average poloidal and toroidal wrappings of the field lines of $j$ around $\Sigma$ respectively. In particular, the formulas in (\ref{ABE1}) allow us to extend the notions of average poloidal and toroidal windings to any elements $j\in W^{-\frac{1}{2},2}\mathcal{V}_0(\Sigma)$.
\begin{thm}[Poloidal \& toroidal windings of $\operatorname{Ker}(\operatorname{BS}_{\Sigma})$]
	\label{DT1}
	Let $\Sigma\subset\mathbb{R}^3$ be a $C^{1,1}$-surface which bounds a solid torus $\Omega$ and let $\sigma_p$ and $\sigma_t$ be a poloidal and a toroidal curve in $\Sigma$ respectively, i.e. we assume that $\sigma_p$ bounds a disc in $\Omega$ and $\sigma_t$ bounds a surface outside of $\Omega$. Then there is some $j_0\in \operatorname{Ker}(\operatorname{BS}_{\Sigma})$ which satisfies
	\begin{gather}
		\nonumber
		\overline{P}(j_0)=0\text{ and }\overline{Q}(j_0)=1.
	\end{gather} 
\end{thm}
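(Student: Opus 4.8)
The plan is to exhibit $j_0$ explicitly, by taking a generator of the kernel and computing $\overline{P}$ and $\overline{Q}$ on it directly. Since $\Omega$ is a solid torus, $\operatorname{Ker}(\operatorname{BS}_{\Sigma})$ is one-dimensional by \Cref{2T3} and $\dim\mathcal{H}_N(\Omega)=1$. Fixing $\Gamma\in\mathcal{H}_N(\Omega)\setminus\{0\}$, \Cref{2T11} produces a generator $\tilde{j}_0:=\operatorname{BS}_{\Omega}(\Gamma)\times\mathcal{N}+\nabla g\times\mathcal{N}$ of the kernel. Because $\overline{P}$ and $\overline{Q}$ are linear and the kernel is a line, it suffices to prove $\overline{Q}(\tilde{j}_0)\neq 0$ and $\overline{P}(\tilde{j}_0)=0$; the asserted $j_0$ is then $\tilde{j}_0/\overline{Q}(\tilde{j}_0)$. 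The idea is that the two equivalent descriptions of the kernel element, the interior one of \Cref{2T11} and the exterior one of \Cref{2T13}, are tailored to computing $\overline{Q}$ and $\overline{P}$ respectively.

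For $\overline{Q}$ I would work inside $\Omega$. Setting $X:=\operatorname{BS}_{\Omega}(\Gamma)+\nabla g$ one has $\operatorname{curl}(X)=\Gamma$, $\operatorname{div}(X)=0$ in $\Omega$ and $\tilde{j}_0=X\times\mathcal{N}$. As $\Gamma$ is curl-free, its tangential trace is closed, and since $\sigma_p$ bounds a disc in $\Omega$ one has $\int_{\sigma_p}\Gamma=0$, while $c:=\int_{\sigma_t}\Gamma\neq 0$ (otherwise $\Gamma|_{\Sigma}$ would be exact, forcing $\Gamma=0$ as in the proof of \Cref{2P10}); hence the surface Hodge decomposition gives $\Gamma|_{\Sigma}=\nabla_{\Sigma}\kappa+c\,\gamma_t$. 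Since $\tilde{j}_0$ is div-free, $\int_{\Sigma}\tilde{j}_0\cdot\nabla_{\Sigma}\kappa\,d\sigma=0$, so $\overline{Q}(\tilde{j}_0)=\frac{1}{c|\Sigma|}\int_{\Sigma}\tilde{j}_0\cdot\Gamma\,d\sigma$. The decisive identity comes from the divergence theorem: using $(X\times\mathcal{N})\cdot\Gamma=\mathcal{N}\cdot(\Gamma\times X)$ one gets $\int_{\Sigma}\tilde{j}_0\cdot\Gamma\,d\sigma=\int_{\Omega}\operatorname{div}(\Gamma\times X)\,d^3x=\int_{\Omega}\big(\operatorname{curl}(\Gamma)\cdot X-\Gamma\cdot\operatorname{curl}(X)\big)\,d^3x=-\|\Gamma\|^2_{L^2(\Omega)}$, whence $\overline{Q}(\tilde{j}_0)=-\|\Gamma\|^2_{L^2(\Omega)}/(c|\Sigma|)\neq 0$.

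For $\overline{P}$ I would pass to the exterior description of \Cref{2T13}, where $\tilde{j}_0=Y\times\mathcal{N}$ with $Y:=\operatorname{BS}_{\Omega}(\Gamma)+\nabla g$ now assembled from the exterior Neumann solution. The exterior boundary condition $\mathcal{N}\cdot\nabla g=-\mathcal{N}\cdot\operatorname{BS}_{\Omega}(\Gamma)$ yields $\mathcal{N}\cdot Y=0$ on $\Sigma$, and $Y$ is curl- and div-free and decaying in $\overline{\Omega}^c$. Since $\sigma_t$ bounds a surface in $\overline{\Omega}^c$ and $\operatorname{curl}(Y)=0$ there, Stokes' theorem gives $\int_{\sigma_t}Y=0$, so the closed trace satisfies $Y|_{\Sigma}=\nabla_{\Sigma}\rho+\mu\gamma_p$ for some $\mu\in\mathbb{R}$. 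Then $\overline{P}(\tilde{j}_0)=\frac{1}{|\Sigma|}\int_{\Sigma}(Y\times\mathcal{N})\cdot\gamma_p\,d\sigma=\frac{1}{|\Sigma|}\int_{\Sigma}Y\cdot(\mathcal{N}\times\gamma_p)\,d\sigma$, and both contributions vanish: $\gamma_p\cdot(\mathcal{N}\times\gamma_p)=0$ pointwise, while $\nabla_{\Sigma}\rho$ is exact and $\mathcal{N}\times\gamma_p$ is harmonic on $\Sigma$, hence $L^2$-orthogonal. This gives $\overline{P}(\tilde{j}_0)=0$, and rescaling completes the construction.

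The main difficulty is organisational rather than analytic: one must correctly match each computation to the representation that makes a curve null-homologous ($\sigma_p$ in $\overline{\Omega}$ for $\overline{Q}$, $\sigma_t$ in $\overline{\Omega}^c$ for $\overline{P}$), so that $\Gamma|_{\Sigma}$ resp. $Y|_{\Sigma}$ has the intended harmonic component, and keep track of the $\pm$ signs in the triple-product identities. Regularity is the other point to control, but it is supplied by the paper: $\operatorname{BS}_{\Omega}(\Gamma)\in C^{1,\alpha}(\overline{\Omega})$ and the single-layer potentials of \Cref{2T11} and \Cref{2T13} are $C^{1,\alpha}$, so all surface pairings and the divergence theorem are classical, and the fact that $\tilde{j}_0$ is div-free kills the exact pieces $\nabla_{\Sigma}\kappa$, $\nabla_{\Sigma}\rho$ by the usual approximation argument.
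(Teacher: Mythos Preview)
Your argument is correct and takes a genuinely different route from the paper's own proof.

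For $\overline{Q}(\tilde j_0)\neq 0$ the paper simply cites an external result (\cite[Theorem 2.33]{G24SurfaceHelicityArXiv}) and then rescales. You instead prove it directly: writing $\tilde j_0=X\times\mathcal{N}$ with $X=\operatorname{BS}_{\Omega}(\Gamma)+\nabla g$ and $\operatorname{curl}(X)=\Gamma$ in $\Omega$, the divergence theorem yields $\int_{\Sigma}\tilde j_0\cdot\Gamma\,d\sigma=-\|\Gamma\|_{L^2(\Omega)}^2$, which after replacing $\Gamma|_{\Sigma}$ by $c\gamma_t$ modulo an exact piece gives $\overline{Q}(\tilde j_0)\neq 0$. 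This is fully self-contained within the paper's framework.

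For $\overline{P}(\tilde j_0)=0$ the paper works with the \emph{interior} representation of \Cref{2T11}, drops the co-exact piece $\nabla g\times\mathcal{N}$ by orthogonality to $\gamma_p\in\mathcal{H}(\Sigma)$, and then invokes a second external result (\cite[Proposition C.4]{G24SurfaceHelicityArXiv}) stating that the harmonic projection of $\operatorname{BS}_{\Omega}(\Gamma)^{\parallel}$ is parallel to $\gamma_p$, so that pairing with $\mathcal{N}\times\gamma_p$ vanishes. You instead switch to the \emph{exterior} representation of \Cref{2T13}: since $\sigma_t$ is null-homologous in $\overline{\Omega}^c$ and $Y=\operatorname{BS}_{\Omega}(\Gamma)+\nabla g$ is curl-free there, Stokes gives $\int_{\sigma_t}Y=0$, forcing the harmonic part of $Y|_{\Sigma}$ to lie along $\gamma_p$, which is pointwise orthogonal to $\mathcal{N}\times\gamma_p$.

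What your approach buys is self-containment: both external citations are eliminated, and the structural idea---pair each of $\sigma_p,\sigma_t$ with the domain in which it bounds, using \Cref{2T11} for $\Omega$ and \Cref{2T13} for $\overline{\Omega}^c$---is transparent and makes the two computations symmetric. What the paper's approach buys is economy of setting: it stays in $\Omega$ throughout and never needs the exterior representation, at the price of quoting results proved elsewhere. The regularity bookkeeping you flag (single-layer $C^{1,\alpha}$ regularity from the proofs of \Cref{2T11} and \Cref{2T13}, and $\operatorname{BS}_{\Omega}(\Gamma)\in C^{1,\alpha}(\overline{\Omega})$) indeed suffices to justify the divergence and Stokes theorems you use.
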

\begin{proof}[Proof of \Cref{DT1}]
	It follows first from \cite[Theorem 2.33]{G24SurfaceHelicityArXiv} and its proof that there exists some $j_0\in \operatorname{Ker}(\operatorname{BS}_{\Sigma})$ with $\overline{Q}(j_0)\neq 0$ so that by scaling and linearity of $\overline{Q}$ we find some $j_0\in \operatorname{Ker}(\operatorname{BS}_{\Sigma})$ with $\overline{Q}(j_0)=1$. We are left with proving that every element $j$ of $\operatorname{Ker}(\operatorname{BS}_{\Sigma})$ satisfies $\overline{P}(j)=0$ which will conclude the proof. According to (\ref{ABE1}) we need to show that
	\begin{AppB}
		\label{ABE2}
		\int_{\Sigma}j(x)\cdot \gamma_p(x)d\sigma(x)=0.
	\end{AppB}
	We recall that since $\Sigma$ is toroidal $\dim\left(\operatorname{Ker}(\operatorname{BS}_{\Sigma})\right)=1$, c.f. \Cref{2T3}, and that according to \Cref{2T11} a non-zero element of $\operatorname{Ker}(\operatorname{BS}_{\Sigma})$ is given by $\operatorname{BS}_{\Omega}(\Gamma)\times \mathcal{N}+\nabla g\times \mathcal{N}$ where $\Gamma\in \mathcal{H}_N(\Omega)\setminus \{0\}$ can be arbitrarily fixed and $g\in H^1(\Omega)$ is a suitable function depending on the choice of $\Gamma$. In particular, we may fix $\Gamma\in \mathcal{H}_N(\Omega)\setminus \{0\}$ such that the $L^2(\Sigma)$-orthogonal projection $\gamma$ of $\Gamma|_{\Sigma}$ onto $\mathcal{H}(\Sigma)$ satisfies $\|\gamma\|_{L^2(\Sigma)}=1$. We may therefore assume without loss of generality that $j=\operatorname{BS}_{\Omega}(\Gamma)\times \mathcal{N}+\nabla g\times \mathcal{N}$ for a suitable $g\in H^1(\Omega)$ and $\Gamma\in \mathcal{H}_N(\Omega)$ with $\|\gamma\|_{L^2(\Sigma)}=1$. But since $\mathcal{H}(\Sigma)$ is $L^2(\Sigma)$-orthogonal to the co-exact fields we see that the integral in (\ref{ABE2}) becomes
	\begin{gather}
		\nonumber
		\int_{\Sigma}j(x)\cdot \gamma_p(x)d\sigma=\int_{\Sigma}\left(\operatorname{BS}_{\Omega}(\Gamma)\times \mathcal{N}\right)\cdot \gamma_pd\sigma(x)
		\\
		\nonumber
		=\int_{\Sigma}(\mathcal{N}\times \gamma_p)\cdot \operatorname{BS}_{\Omega}(\Gamma)d\sigma=\int_{\Sigma}(\mathcal{N}\times \gamma_p)\cdot \left(\pi_{\mathcal{H}(\Sigma)}(\operatorname{BS}^{\parallel}_{\Omega}(\Gamma))\right)d\sigma
	\end{gather}
	where $\operatorname{BS}^{\parallel}_{\Omega}(\Gamma)$ denotes the part of $\operatorname{BS}_{\Omega}(\Gamma)|_{\Sigma}$ which is tangent to $\Sigma$, where $\pi_{\mathcal{H}(\Sigma)}$ denotes the $L^2(\Sigma)$-orthogonal projection onto the space $\mathcal{H}(\Sigma)$ and where we used that $\mathcal{N}\times \gamma_p\in \mathcal{H}(\Sigma)$ because $\gamma_p$ is. It follows finally from the proof of \cite[Proposition C.4]{G24SurfaceHelicityArXiv} that $\pi_{\mathcal{H}(\Sigma)}(\operatorname{BS}^{\parallel}_{\Omega}(\Gamma))=\operatorname{Flux}(\Gamma)\gamma_p$ where $\operatorname{Flux}(\Gamma)$ denotes the flux of $\Gamma$ through the disc bounded by $\sigma_p$. Most importantly, $\pi_{\mathcal{H}(\Sigma)}(\operatorname{BS}^{\parallel}_{\Omega}(\Gamma))$ is parallel to $\gamma_p$ and consequently we find $\int_{\Sigma}j(x)\cdot \gamma_p(x)d\sigma=0$ as desired.
\end{proof}
We obtain the following corollary
\begin{cor}
	\label{DC2}
	Let $\Sigma\subset\mathbb{R}^3$ be a $C^{1,1}$-surface which bounds a solid torus $\Omega$ and let $\sigma_p$, $\sigma_t$ be a poloidal and toroidal curve in $\Sigma$ respectively, i.e. $\sigma_p$ bounds a disc in $\Omega$ and $\sigma_t$ bounds a surface outside of $\Omega$. Then for given $j_i\in L^2\mathcal{V}_0(\Sigma)$, $i=1,2$, we denote their Hodge-decomposition by $j_i=\nabla f_i\times \mathcal{N}+\alpha_i\gamma_t\times \mathcal{N}+\beta_i\gamma_p\times \mathcal{N}$ for suitable $f_i\in H^1(\Sigma)$, $\alpha_i,\beta_i\in \mathbb{R}$. Then the following are equivalent
	\begin{enumerate}
		\item $\overline{Q}(j_1)=\overline{Q}(j_2)$,
		\item $\beta_1=\beta_2$.
	\end{enumerate}
	Further, if $\widetilde{j}\in \operatorname{Ker}(\operatorname{BS}_{\Sigma})$ and $j\in L^2\mathcal{V}_0(\Sigma)$, we have $\overline{Q}(j+\widetilde{j})=\overline{Q}(j)\Leftrightarrow \widetilde{j}=0$.
\end{cor}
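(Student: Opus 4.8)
The plan is to reduce both assertions to a single explicit evaluation of $\overline{Q}$ on the Hodge decomposition. Concretely, I would first prove that for every $j\in L^2\mathcal{V}_0(\Sigma)$ with decomposition $j=\nabla f\times\mathcal{N}+\alpha\gamma_t\times\mathcal{N}+\beta\gamma_p\times\mathcal{N}$ one has
\[
\overline{Q}(j)=\frac{c}{|\Sigma|}\,\beta,\qquad c:=\int_{\Sigma}\gamma_t\cdot(\gamma_p\times\mathcal{N})\,d\sigma .
\]
Since $\gamma_t\in\mathcal{H}(\Sigma)\subset W^{\frac12,2}\mathcal{V}(\Sigma)$, the pairing $\int_{\Sigma}j\cdot\gamma_t\,d\sigma$ defining $\overline{Q}(j)$ is well posed for $j\in L^2\mathcal{V}_0(\Sigma)$, and I would evaluate it term by term along the three summands.

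The computation itself is routine. The co-exact term drops out, because $\nabla f\times\mathcal{N}$ is $L^2(\Sigma)$-orthogonal to the harmonic field $\gamma_t$ (equivalently $\int_{\Sigma}(\nabla f\times\mathcal{N})\cdot\gamma_t\,d\sigma=\int_{\Sigma}\nabla_\Sigma f\cdot(\mathcal{N}\times\gamma_t)\,d\sigma=0$ upon integrating by parts on the closed surface $\Sigma$, using that $\mathcal{N}\times\gamma_t\in\mathcal{H}(\Sigma)$ is divergence-free). The $\alpha$-term vanishes pointwise, since $(\gamma_t\times\mathcal{N})\cdot\gamma_t=\det[\gamma_t,\mathcal{N},\gamma_t]=0$. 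Hence only the $\beta$-term survives, giving the displayed formula.

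The crux — and the only nontrivial point — is to verify $c\neq 0$. For this I would introduce the bilinear form $B(\gamma,\gamma'):=\int_{\Sigma}\gamma\cdot(\gamma'\times\mathcal{N})\,d\sigma$ on the two-dimensional space $\mathcal{H}(\Sigma)$. It is antisymmetric, since interchanging $\gamma$ and $\gamma'$ transposes two entries of the triple product and flips the sign. It is also nondegenerate: because $\gamma\times\mathcal{N}\in\mathcal{H}(\Sigma)$ whenever $\gamma\in\mathcal{H}(\Sigma)$, the identity $(\gamma\times\mathcal{N})\times\mathcal{N}=-\gamma$ for tangent $\gamma$ yields $B(\gamma,\gamma\times\mathcal{N})=-\int_{\Sigma}|\gamma|^2\,d\sigma=-\|\gamma\|^2_{L^2(\Sigma)}$, which is nonzero for $\gamma\neq 0$. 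Taking $\gamma=\gamma_t$ shows $B(\gamma_t,\cdot)\not\equiv 0$ on $\mathcal{H}(\Sigma)$; since $\{\gamma_p,\gamma_t\}$ is a basis and $B(\gamma_t,\gamma_t)=0$, this forces $c=B(\gamma_t,\gamma_p)\neq 0$. With the formula established, the equivalence follows immediately by applying it to $j_1$ and $j_2$: $\overline{Q}(j_1)=\overline{Q}(j_2)\Leftrightarrow \frac{c}{|\Sigma|}\beta_1=\frac{c}{|\Sigma|}\beta_2\Leftrightarrow\beta_1=\beta_2$.

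Finally, for the last assertion I would argue directly from the kernel structure. By linearity of $\overline{Q}$ we have $\overline{Q}(j+\widetilde{j})=\overline{Q}(j)\Leftrightarrow\overline{Q}(\widetilde{j})=0$. By \Cref{2T3} the kernel $\operatorname{Ker}(\operatorname{BS}_{\Sigma})$ is one-dimensional, as $\Sigma$ has genus one, and by \Cref{DT1} it contains an element $j_0$ with $\overline{Q}(j_0)=1$; in particular $j_0\neq 0$, so $\{j_0\}$ spans the kernel. Writing $\widetilde{j}=\mu j_0$ gives $\overline{Q}(\widetilde{j})=\mu$, whence $\overline{Q}(\widetilde{j})=0\Leftrightarrow\mu=0\Leftrightarrow\widetilde{j}=0$, which completes the proof.
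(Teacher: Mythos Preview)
Your proposal is correct and follows essentially the same approach as the paper's proof: both compute $\overline{Q}$ termwise along the Hodge decomposition, observe that only the $\beta$-term survives, show the surviving coefficient $\int_\Sigma\gamma_t\cdot(\gamma_p\times\mathcal{N})\,d\sigma$ is nonzero, and then handle the kernel statement via \Cref{DT1} and the one-dimensionality of $\operatorname{Ker}(\operatorname{BS}_\Sigma)$. Your argument for $c\neq 0$ via the antisymmetric nondegenerate form on $\mathcal{H}(\Sigma)$ is a slightly more explicit packaging of what the paper states tersely (namely that $\gamma_p\times\mathcal{N}\in\mathcal{H}(\Sigma)$ is linearly independent of $\gamma_p$), but the underlying idea is identical.
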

\begin{proof}[Proof of \Cref{DC2}]
	\underline{(i)$\Leftrightarrow$(ii):} By linearity it is enough to show that $\overline{Q}(j_1)=0$ if and only if $\beta=0$. From definition we immediately obtain $\overline{Q}(j_1)=0\Leftrightarrow \int_{\Sigma}\gamma_t\cdot j_1d\sigma=0\Leftrightarrow \int_{\Sigma}(\gamma_p\times \mathcal{N})\cdot \gamma_td\sigma\beta=0$ since $\gamma_t\in \mathcal{H}(\Sigma)$ is $L^2$-orthogonal to the co-exact fields and is pointwise everywhere orthogonal to $\gamma_t\times \mathcal{N}$. We observe that $\gamma_p\times \mathcal{N}\in \mathcal{H}(\Sigma)$ (in the language of differential forms taking the cross product with the outer normal corresponds to applying the Hodge star operator to the associated $1$-form) and $\gamma_p\times \mathcal{N}$ is linearly independent of $\gamma_p$ and therefore $\int_{\Sigma}(\gamma_p\times \mathcal{N})\cdot \gamma_td\sigma\neq 0$ so that $\overline{Q}(j_1)=0$ if and only if $\beta=0$.
	\newline
	\newline
	\underline{$\overline{Q}(j+\widetilde{j})=\overline{Q}(j)\Leftrightarrow \widetilde{j}=0$:} According to \Cref{DT1} we can write $\widetilde{j}=\kappa j_0$, $\kappa\in \mathbb{R}$, with $j_0$ as in \Cref{DT1} since $\operatorname{Ker}(\operatorname{BS}_{\Sigma})$ is $1$-dimensional, c.f. \Cref{2T3}. By linearity we have then to show that $\overline{Q}(\widetilde{j})=0\Leftrightarrow \widetilde{j}=0$ but this follows immediately from $\overline{Q}(\widetilde{j})=\kappa\overline{Q}(j_0)=\kappa$.
\end{proof}
Mathematically, we have the following deeper result, where we define
\begin{gather}
	\nonumber
	W^{-\frac{1}{2},2}\mathcal{V}^{\overline{Q}=0}_0(\Sigma):=\{j\in W^{-\frac{1}{2},2}\mathcal{V}_0(\Sigma)\mid \overline{Q}(j)=0 \}
\end{gather}
which defines a closed subspace of $W^{-\frac{1}{2},2}\mathcal{V}_0(\Sigma)$. We note that the condition $\overline{Q}(j)=0$ is independent of the chosen toroidal and poloidal curves $\sigma_p$ and $\sigma_t$ on $\Sigma$ since $\overline{Q}$ and $\overline{P}$ depend on $\sigma_t$ and $\sigma_p$ only via their orientation which at most leads to a change in sign of $\overline{Q}$ and $\overline{P}$.
\begin{thm}
	\label{DT3}
	Let $\Sigma\subset\mathbb{R}^3$ be a $C^{1,1}$-surface which bounds a solid torus $\Omega$. Then
	\begin{gather}
		\nonumber
		\operatorname{BS}_{\Sigma}:W^{-\frac{1}{2},2}\mathcal{V}^{\overline{Q}=0}_0(\Sigma)\rightarrow L^2\mathcal{H}(\Omega)
	\end{gather}
	is a linear isomorphism and there exist constants $0<c_1(\Sigma),c_2(\Sigma)<\infty$ such that
	\begin{gather}
		\nonumber
		\|j\|_{W^{-\frac{1}{2},2}(\Sigma)}\leq c_1(\Sigma)\|\operatorname{BS}_{\Sigma}(j)\|_{L^2(\Omega)}\leq c_2(\Sigma)\|j\|_{W^{-\frac{1}{2},2}(\Sigma)}\text{ for all }j\in W^{-\frac{1}{2},2}\mathcal{V}^{\overline{Q}=0}_0(\Sigma).
	\end{gather}
\end{thm}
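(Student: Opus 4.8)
The plan is to deduce the statement from \Cref{2T2}, \Cref{2T3} and \Cref{DT1} together with the bounded inverse theorem, with essentially no new analysis. First I would record the simplifications afforded by $\Omega$ being a solid torus. Then $\Sigma=\partial\Omega$ is connected, so $\#\partial\Omega=1$ and $\dim\mathcal{H}_D(\Omega)=\#\partial\Omega-1=0$, i.e. $\mathcal{H}_D(\Omega)=\{0\}$. Feeding this into \Cref{2T2} gives $\operatorname{Im}(\operatorname{BS}_\Sigma)=L^2\mathcal{H}(\Omega)\cap\mathcal{H}_D^{\perp_{L^2(\Omega)}}(\Omega)=L^2\mathcal{H}(\Omega)$, so the unrestricted Biot--Savart operator is surjective onto the target. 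Moreover \Cref{2T3} yields $\dim\operatorname{Ker}(\operatorname{BS}_\Sigma)=g(\Sigma)=1$, and \Cref{DT1} supplies an explicit generator $j_0$ of this one-dimensional kernel with $\overline{Q}(j_0)=1$.

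Next I would check that the restriction of $\operatorname{BS}_\Sigma$ to the constrained space is a bijection. For injectivity, if $j\in W^{-\frac{1}{2},2}\mathcal{V}^{\overline{Q}=0}_0(\Sigma)$ satisfies $\operatorname{BS}_\Sigma(j)=0$, then $j=\lambda j_0$ for some $\lambda\in\mathbb{R}$ by one-dimensionality of the kernel, and $0=\overline{Q}(j)=\lambda\,\overline{Q}(j_0)=\lambda$ forces $j=0$. For surjectivity, given $B\in L^2\mathcal{H}(\Omega)$ I pick any preimage $j$ under the (surjective) unrestricted operator and set $\widetilde{j}:=j-\overline{Q}(j)\,j_0$; then $\overline{Q}(\widetilde{j})=\overline{Q}(j)-\overline{Q}(j)\,\overline{Q}(j_0)=0$ since $\overline{Q}(j_0)=1$, so $\widetilde{j}$ lies in the constrained space, while $\operatorname{BS}_\Sigma(\widetilde{j})=\operatorname{BS}_\Sigma(j)=B$ because $j_0\in\operatorname{Ker}(\operatorname{BS}_\Sigma)$. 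Hence $\operatorname{BS}_\Sigma$ maps $W^{-\frac{1}{2},2}\mathcal{V}^{\overline{Q}=0}_0(\Sigma)$ bijectively onto $L^2\mathcal{H}(\Omega)$.

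For the norm estimates, the right-hand inequality is merely the continuity of $\operatorname{BS}_\Sigma$ with respect to the $W^{-\frac{1}{2},2}(\Sigma)$-norm (\Cref{2D1} and \cite[Lemma C.1]{G24}), restricted to the subspace; this provides $c_2$. The left-hand inequality is where I would invoke the bounded inverse theorem: the constrained domain equals $\ker(\overline{Q})$, which is a closed subspace of the Banach space $W^{-\frac{1}{2},2}\mathcal{V}_0(\Sigma)$ because $\overline{Q}$ is a bounded linear functional, hence is itself a Banach space, and the target $L^2\mathcal{H}(\Omega)$ is complete. A bounded linear bijection between Banach spaces has a bounded inverse, which is exactly the coercivity bound $\|j\|_{W^{-\frac{1}{2},2}(\Sigma)}\leq c_1\|\operatorname{BS}_\Sigma(j)\|_{L^2(\Omega)}$, with a constant depending only on $\Sigma$ since $\Omega$ is determined by $\Sigma$; one then rescales so that the displayed chain of inequalities holds.

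The only point demanding care --- and the nearest thing to an obstacle --- is confirming that the constrained domain is complete so that the open mapping/bounded inverse theorem legitimately applies. This is immediate once one observes that $\overline{Q}$ is continuous (stated after (\ref{2E12})) and that $W^{-\frac{1}{2},2}\mathcal{V}_0(\Sigma)$ is complete by construction, being defined as a completion; the kernel of a continuous functional on a Banach space is closed and therefore complete in the restricted norm. All the genuinely hard analysis is already encapsulated in \Cref{2T2}, \Cref{2T3} and \Cref{DT1}, so the remaining work is purely the functional-analytic assembly sketched above.
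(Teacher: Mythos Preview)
Your proposal is correct and follows essentially the same approach as the paper's proof: both combine \Cref{2T2} (surjectivity since $\mathcal{H}_D(\Omega)=\{0\}$ for connected $\Sigma$), \Cref{2T3} (one-dimensional kernel), and \Cref{DT1} (a kernel element with $\overline{Q}(j_0)=1$) to obtain bijectivity, then invoke the bounded inverse theorem for the norm equivalence. Your extra remark that $W^{-\frac{1}{2},2}\mathcal{V}^{\overline{Q}=0}_0(\Sigma)$ is complete as the kernel of the continuous functional $\overline{Q}$ is a welcome clarification, but otherwise the arguments coincide.
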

\begin{proof}[Proof of \Cref{DT3}]
	First we observe that $\Sigma$ is connected and hence $\dim\left(\mathcal{H}_D(\Omega)\right)=\#\Sigma-1=0$ so that according to \Cref{2T2} we see that $\operatorname{BS}_{\Sigma}$ is surjective as a map from $W^{-\frac{1}{2},2}\mathcal{V}_0(\Sigma)$ into $L^2\mathcal{H}(\Omega)$. Now fix any $B\in L^2\mathcal{H}(\Omega)$ and preimage $j\in W^{-\frac{1}{2},2}\mathcal{V}_0(\Sigma)$ of $B$. Then by means of \Cref{DT1} we know there exists $j_0\in \operatorname{Ker}(\operatorname{BS}_{\Sigma})$ with $\overline{Q}(j_0)=\overline{Q}(j)$ and therefore $j-j_0\in W^{-\frac{1}{2},2}\mathcal{V}^{\overline{Q}=0}_0(\Sigma)$ is a preimage of $B$. We conclude that $\operatorname{BS}_{\Sigma}$ remains surjective as a map from $W^{-\frac{1}{2},2}\mathcal{V}^{\overline{Q}=0}_0(\Sigma)$ into $L^2\mathcal{H}(\Omega)$. As for the injectivity we observe that $\operatorname{BS}_{\Sigma}(j)=0$ implies $j\in \operatorname{Ker}(\operatorname{BS}_{\Sigma})$ and since the kernel is $1$-dimensional when $\Omega$ is a solid torus, c.f. \Cref{2T3}, we find a $\kappa\in \mathbb{R}$ such that $j=\kappa j_0$ where $j_0$ now denotes the kernel element obtained from \Cref{DT1}. Further, we know that $0=\overline{Q}(j)=\kappa\overline{Q}(j_0)=\kappa$ by properties of $j_0$ and therefore $j=0$, proving that $\operatorname{BS}_{\Sigma}$ is injective as a map from $W^{-\frac{1}{2},2}\mathcal{V}^{\overline{Q}=0}_0(\Sigma)$ into $L^2\mathcal{H}(\Omega)$. Lastly, according to \cite[Lemma 5.1]{G24}, the Biot-Savart operator is continuous. Hence the bounded-inverse theorem implies that the inverse remains a bounded operator.
\end{proof}
We obtain therefore in general the following equivalent norm where we note that $|\overline{Q}(j)|$ does not depend on the chosen poloidal and toroidal curves since this choice at most affects the sign of $\overline{Q}(j)$ and $\overline{P}(j)$.
\begin{cor}[Equivalent $W^{-\frac{1}{2},2}\mathcal{V}_0(\Sigma)$-norm]
	\label{DC4}
	Let $\Sigma\subset\mathbb{R}^3$ be a $C^{1,1}$-surface which bounds a solid torus $\Omega$ and suppose that $\sigma_p,\sigma_t\subset \Sigma$ are $C^1$-curves which bound a disc within $\Omega$ and a surface outside of $\Omega$ respectively. Then there exist constants $0<C_1(\Sigma),C_2(\Sigma)<\infty$ (independent of the chosen $\sigma_p,\sigma_t$) such that
	\begin{gather}
		\nonumber
		\|j\|_{W^{-\frac{1}{2},2}(\Sigma)}\leq C_1\sqrt{\|\operatorname{BS}_{\Sigma}(j)\|^2_{L^2(\Omega)}+|Q(j)|^2}\leq C_2\|j\|_{W^{-\frac{1}{2},2}(\Sigma)}\text{ for all }j\in W^{-\frac{1}{2},2}\mathcal{V}_0(\Sigma).
	\end{gather}
\end{cor}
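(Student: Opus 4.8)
The plan is to reduce everything to the two structural facts already established for solid tori: that $\operatorname{BS}_{\Sigma}$ restricts to a two-sided bounded isomorphism on the subspace $W^{-\frac{1}{2},2}\mathcal{V}^{\overline{Q}=0}_0(\Sigma)$ (\Cref{DT3}), and that the kernel $\operatorname{Ker}(\operatorname{BS}_{\Sigma})$, which is one-dimensional by \Cref{2T3}, contains an element $j_0$ with $\overline{Q}(j_0)=1$ (\Cref{DT1}). The underlying idea is that the two quantities under the square root measure exactly the two complementary pieces of a current $j$: $\|\operatorname{BS}_{\Sigma}(j)\|_{L^2(\Omega)}$ detects the component of $j$ transverse to the kernel, while $|\overline{Q}(j)|$ detects the kernel component.

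First I would fix once and for all the element $j_0\in\operatorname{Ker}(\operatorname{BS}_{\Sigma})$ furnished by \Cref{DT1}, normalised by $\overline{Q}(j_0)=1$; since the kernel is one-dimensional this pins down $j_0$ up to sign, so in particular $\|j_0\|_{W^{-\frac{1}{2},2}(\Sigma)}$ is a fixed finite number. Given an arbitrary $j\in W^{-\frac{1}{2},2}\mathcal{V}_0(\Sigma)$ I would then split
\[
j=\big(j-\overline{Q}(j)\,j_0\big)+\overline{Q}(j)\,j_0=:j'+\overline{Q}(j)\,j_0.
\]
By linearity of $\overline{Q}$ and $\overline{Q}(j_0)=1$ one gets $\overline{Q}(j')=0$, hence $j'\in W^{-\frac{1}{2},2}\mathcal{V}^{\overline{Q}=0}_0(\Sigma)$, and since $j_0$ lies in the kernel, $\operatorname{BS}_{\Sigma}(j')=\operatorname{BS}_{\Sigma}(j)$.

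For the lower bound I would apply the triangle inequality and then the left inequality of \Cref{DT3} to $j'$, giving
\[
\|j\|_{W^{-\frac{1}{2},2}(\Sigma)}\leq\|j'\|_{W^{-\frac{1}{2},2}(\Sigma)}+|\overline{Q}(j)|\,\|j_0\|_{W^{-\frac{1}{2},2}(\Sigma)}\leq c_1(\Sigma)\,\|\operatorname{BS}_{\Sigma}(j)\|_{L^2(\Omega)}+\|j_0\|_{W^{-\frac{1}{2},2}(\Sigma)}\,|\overline{Q}(j)|,
\]
and then bound the sum of the two nonnegative terms by $\sqrt{2}\,\max\!\big(c_1(\Sigma),\|j_0\|_{W^{-\frac{1}{2},2}(\Sigma)}\big)$ times $\sqrt{\|\operatorname{BS}_{\Sigma}(j)\|^2_{L^2(\Omega)}+|\overline{Q}(j)|^2}$ via $a+b\leq\sqrt{2}\sqrt{a^2+b^2}$; this produces $C_1$. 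For the upper bound I would simply use that both $\operatorname{BS}_{\Sigma}$ (\cite[Lemma C.1]{G24}) and $\overline{Q}$ (bounded, as observed right after its definition in (\ref{2E12})) are bounded on $W^{-\frac{1}{2},2}\mathcal{V}_0(\Sigma)$, so that $\|\operatorname{BS}_{\Sigma}(j)\|^2_{L^2(\Omega)}+|\overline{Q}(j)|^2\leq\big(\|\operatorname{BS}_{\Sigma}\|^2+\|\overline{Q}\|^2\big)\|j\|^2_{W^{-\frac{1}{2},2}(\Sigma)}$, and then set $C_2:=C_1\sqrt{\|\operatorname{BS}_{\Sigma}\|^2+\|\overline{Q}\|^2}$ to chain the two estimates.

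No genuine analytic difficulty remains, since all the hard work is carried by \Cref{DT1} and \Cref{DT3}; the only point that needs care is the claimed independence of $C_1,C_2$ from the curves $\sigma_p,\sigma_t$. Here I would invoke the observation recorded before the statement that changing $\sigma_p,\sigma_t$ alters $\gamma_t$, and hence $\overline{Q}$ and $j_0$, at most by an overall sign. Consequently $\|\overline{Q}\|$, the value $|\overline{Q}(j)|$, and $\|j_0\|_{W^{-\frac{1}{2},2}(\Sigma)}$ are all unchanged, so the constants assembled above depend only on $\Sigma$, as asserted.
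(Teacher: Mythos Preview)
Your proposal is correct and follows essentially the same argument as the paper: both split $j=(j-\overline{Q}(j)j_0)+\overline{Q}(j)j_0$ with $j_0$ the normalised kernel element from \Cref{DT1}, apply \Cref{DT3} to the first piece, and invoke the boundedness of $\operatorname{BS}_{\Sigma}$ and $\overline{Q}$ for the reverse inequality. Your write-up is in fact slightly more explicit than the paper's, spelling out the passage from the sum estimate to the square-root form and the independence of the constants from $\sigma_p,\sigma_t$.
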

\begin{proof}[Proof of \Cref{DC4}]
	Given $j\in W^{-\frac{1}{2},2}\mathcal{V}(\Sigma)$ we may let $\kappa:=\overline{Q}(j)$ and if we let $j_0\in \operatorname{Ker}(\operatorname{BS}_{\Sigma})$ as in \Cref{DT1} we can write $j=j-\kappa j_0+\kappa j_0$ with $j-\kappa j_0\in W^{-\frac{1}{2},2}\mathcal{V}_0^{\overline{Q}=0}(\Sigma)$ so that by means of \Cref{DT3} we have
	\begin{gather}
		\nonumber
		\|j\|_{W^{-\frac{1}{2},2}}\leq \|j-\kappa j_0\|_{W^{-\frac{1}{2}}}+|\kappa|\|j_0\|_{W^{-\frac{1}{2},2}}\leq c_1(\Sigma)\|\operatorname{BS}_{\Sigma}(j-\kappa j_0)\|_{L^2(\Omega)}+|\kappa|\|j_0\|_{W^{-\frac{1}{2},2}}
		\\
		\nonumber
		=c_1(\Sigma)\|\operatorname{BS}_{\Sigma}(j)\|_{L^2(\Omega)}+|\overline{Q}(j)|\|j_0\|_{W^{-\frac{1}{2},2}}
	\end{gather}
	with the constant $c_1(\Sigma)$ from \Cref{DT3} and where we used that $j_0\in \operatorname{Ker}(\operatorname{BS}_{\Sigma})$. The remaining inequality follows immediately from the fact that $\operatorname{BS}_{\Sigma}:W^{-\frac{1}{2},2}(\Sigma)\rightarrow L^2\mathcal{H}(\Omega)$ and $\overline{Q}:W^{-\frac{1}{2},2}(\Sigma)\rightarrow\mathbb{R}$ are bounded linear operators.
\end{proof}
\section{Friedrichs decomposition on $C^{1,1}$-domains}
\label{AppE}
Here we establish the Friedrichs decomposition of the space $L^2\mathcal{H}(\Omega)$. We will follow closely the proof given in \cite[Theorem 2.4.8]{S95} which applies to general $k$-forms and deals with smooth domains. In contrast, here we translate the proofs into the language of vector fields for the convenience of the reader and also explain how the $C^{1,1}$-regularity of the boundary comes into play.
\begin{lem}[Ellipticity of Dirichlet-integral]
	\label{AppEL1}
	Let $\Omega\subset\mathbb{R}^3$ be a bounded $C^{1,1}$-domain. Then there exist $c_1(\Omega),c_2(\Omega)>0$ such that for all $X\in H:=\{X\in H^1(\Omega,\mathbb{R}^3)\mid X\times \mathcal{N}=0\}\cap \mathcal{H}^{\perp_{L^2(\Omega)}}_D(\Omega)$ we have the inequality
	\begin{gather}
		\nonumber
		\|X\|_{H^1(\Omega)}\leq c_1\sqrt{\|\operatorname{curl}(X)\|^2_{L^2(\Omega)}+\|\operatorname{div}(X)\|^2_{L^2(\Omega)}}\leq c_2\|X\|_{H^1(\Omega)}.
	\end{gather}
\end{lem}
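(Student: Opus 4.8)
The plan is to treat the two inequalities separately. The right-hand inequality is immediate: since $\operatorname{curl}(X)$ and $\operatorname{div}(X)$ are linear combinations of the first order partial derivatives $\partial_iX_j$, we have $\|\operatorname{curl}(X)\|_{L^2(\Omega)}^2+\|\operatorname{div}(X)\|_{L^2(\Omega)}^2\le c\,\|\nabla X\|_{L^2(\Omega)}^2\le c\,\|X\|_{H^1(\Omega)}^2$ for a universal constant $c$, which gives the bound with a suitable $c_2$. All the work lies in the left-hand (coercivity) inequality, a Gaffney--Friedrichs estimate, and I would follow the Hodge-theoretic scheme of \cite[Theorem 2.4.8]{S95}, recast for vector fields.

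The starting point, for $X\in C^\infty(\overline{\Omega},\mathbb{R}^3)$, is the pointwise identity
\[
|\nabla X|^2=|\operatorname{curl}(X)|^2+(\operatorname{div}(X))^2-\partial_i\!\left(X_i\,\partial_jX_j\right)+\partial_j\!\left(X_i\,\partial_iX_j\right),
\]
obtained by expanding the two quadratic expressions with the Levi-Civita identity and observing that the second order terms cancel. Integrating over $\Omega$ and using the divergence theorem yields
\[
\int_\Omega|\nabla X|^2=\int_\Omega\left(|\operatorname{curl}(X)|^2+(\operatorname{div}(X))^2\right)+\int_{\partial\Omega}\left(X_i(\partial_iX_j)\mathcal{N}_j-(X\cdot\mathcal{N})\operatorname{div}(X)\right)d\sigma.
\]
The decisive step is the simplification of the boundary integral under the constraint $X\times\mathcal{N}=0$, i.e. $X|_{\partial\Omega}=(X\cdot\mathcal{N})\mathcal{N}$. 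Splitting $\operatorname{div}(X)$ and the directional derivative $X_i(\partial_iX_j)\mathcal{N}_j=(X\cdot\mathcal{N})\,\mathcal{N}\cdot\partial_{\mathcal{N}}X$ at the boundary into normal and tangential parts, the purely normal-derivative contributions cancel, and because the tangential part of $X$ vanishes on $\partial\Omega$ the surviving tangential derivatives fall on the unit normal $\mathcal{N}$ and reproduce the second fundamental form; one is thus left with $-\int_{\partial\Omega}\operatorname{tr}(\mathrm{II})\,(X\cdot\mathcal{N})^2\,d\sigma$. Consequently the boundary integral is bounded in modulus by $C(\Omega)\|X\|_{L^2(\partial\Omega)}^2$, and this is exactly where the $C^{1,1}$-hypothesis is used: it guarantees $\mathcal{N}\in C^{0,1}(\partial\Omega)$, so the second fundamental form belongs to $L^\infty(\partial\Omega)$ and $C(\Omega)<\infty$. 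Crucially the simplified boundary term depends only on the trace $X|_{\partial\Omega}$, so the resulting inequality
\[
\|\nabla X\|_{L^2(\Omega)}^2\le\|\operatorname{curl}(X)\|_{L^2(\Omega)}^2+\|\operatorname{div}(X)\|_{L^2(\Omega)}^2+C(\Omega)\,\|X\|_{L^2(\partial\Omega)}^2
\]
passes from smooth fields obeying the boundary condition to all $X\in H$ by density and the $H^1$-continuity of every term appearing.

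To conclude I would remove the two remaining terms. First, the trace interpolation inequality $\|X\|_{L^2(\partial\Omega)}^2\le\varepsilon\|\nabla X\|_{L^2(\Omega)}^2+C_\varepsilon\|X\|_{L^2(\Omega)}^2$, valid on bounded Lipschitz domains, lets me choose $\varepsilon$ small and absorb the gradient contribution on the left, giving $\|X\|_{H^1(\Omega)}^2\le C\big(\|\operatorname{curl}(X)\|_{L^2(\Omega)}^2+\|\operatorname{div}(X)\|_{L^2(\Omega)}^2+\|X\|_{L^2(\Omega)}^2\big)$. Second, I would discard the lower order term $\|X\|_{L^2(\Omega)}^2$ by a Rellich compactness argument: were the desired estimate false, there would exist $X_n\in H$ with $\|X_n\|_{H^1(\Omega)}=1$ and $\|\operatorname{curl}(X_n)\|_{L^2}^2+\|\operatorname{div}(X_n)\|_{L^2}^2\to0$; by the compact embedding $H^1(\Omega)\hookrightarrow L^2(\Omega)$ a subsequence converges strongly in $L^2(\Omega)$ and weakly in $H^1(\Omega)$ to some $X$ satisfying $\operatorname{curl}(X)=0$, $\operatorname{div}(X)=0$ and $X\times\mathcal{N}=0$ (all closed under weak $H^1$-convergence), hence $X$ is a harmonic field with vanishing tangential trace, i.e. $X\in\mathcal{H}_D(\Omega)$; as $\mathcal{H}_D^{\perp_{L^2(\Omega)}}(\Omega)$ is closed in $L^2(\Omega)$ we also have $X\in\mathcal{H}_D^{\perp_{L^2(\Omega)}}(\Omega)$, so $X=0$, and then the absorbed estimate forces $\|\nabla X_n\|_{L^2}\to0$, contradicting $\|X_n\|_{H^1(\Omega)}=1$.

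The main obstacle is the honest treatment of the boundary term at $C^{1,1}$-regularity: one must justify the integration by parts and, above all, rewrite the boundary integral --- which a priori involves traces of first derivatives of $X$ that are not controlled in $H^1$ --- as a quadratic form in the trace $X|_{\partial\Omega}$ alone, with an $L^\infty$ coefficient given by the second fundamental form. This reduction, valid because $X\times\mathcal{N}=0$ forces $X$ to be normal and transfers the surviving tangential derivatives onto $\mathcal{N}$, together with the density of smooth tangential-trace-free fields in $H$, is the step that genuinely distinguishes the present setting from the smooth case of \cite{S95}; once it is in place the interpolation-plus-compactness conclusion is routine.
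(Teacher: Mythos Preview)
Your proposal is correct and follows essentially the same two-step architecture as the paper's proof: first establish the Gaffney-type estimate $\|X\|_{H^1}^2\le C(\|\operatorname{curl} X\|_{L^2}^2+\|\operatorname{div} X\|_{L^2}^2+\|X\|_{L^2}^2)$ on $H$, then remove the lower-order term $\|X\|_{L^2}^2$ by a Rellich compactness argument exploiting $H\cap\mathcal{H}_D(\Omega)=\{0\}$. The compactness step is the same in substance (the paper phrases it as showing a constrained infimum is strictly positive, you phrase it as a contradiction, but these are equivalent).

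The one genuine difference is in step one: the paper simply \emph{cites} the Gaffney inequality for $C^{1,1}$-domains from \cite[Lemma 2.11, Equation (2.12)]{ABDG98} and moves on, whereas you outline an explicit derivation via the pointwise identity $|\nabla X|^2=|\operatorname{curl} X|^2+(\operatorname{div} X)^2+\text{divergence terms}$, the boundary reduction to a second-fundamental-form term under $X\times\mathcal{N}=0$, and trace interpolation. Your route is more self-contained and makes transparent exactly where the $C^{1,1}$-hypothesis enters (boundedness of $\mathrm{II}$), at the cost of having to handle the density/approximation issue you yourself flag as the main obstacle---and which is in fact delicate here, since on a merely $C^{1,1}$ boundary the normal is only Lipschitz, so genuinely smooth fields satisfying $X\times\mathcal{N}=0$ are scarce. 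The paper sidesteps this entirely by deferring to \cite{ABDG98}, which is shorter but less illuminating. Either way the argument is sound.
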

\begin{proof}[Proof of \Cref{AppEL1}]
	The proof follows closely the exposition of the proof given in \cite[Proposition 2.2.3]{S95}. We start with the following Gaffney type inequality valid on $C^{1,1}$-domains, \cite[Lemma 2.11, Equation (2.12)]{ABDG98},
	\begin{AppE}
		\label{AppEE1}
		\|X\|^2_{H^1(\Omega)}\leq c\left(\|X\|^2_{L^2(\Omega)}+\|\operatorname{curl}(X)\|^2_{L^2(\Omega)}+\|\operatorname{div}(X)\|^2_{L^2(\Omega)}\right)\text{ for all }X\in H
	\end{AppE}
	for some $c>0$ independent of $X$. We can now take a sequence $(X_n)_n$ with $\|X_n\|_{L^2(\Omega)}=1$ with $\lim_{n\rightarrow\infty}\left(\|\operatorname{curl}(X_n)\|^2_{L^2(\Omega)}+\|\operatorname{div}(X_n)\|^2_{L^2(\Omega)}\right)=\inf_{X\in H,\|X\|_{L^2(\Omega)}=1}\left(\|\operatorname{curl}(X)\|^2_{L^2(\Omega)}+\|\operatorname{div}(X)\|^2_{L^2(\Omega)}\right)$. According to Gaffney's inequality (\ref{AppEE1}) we conclude that the $H^1(\Omega)$-norm of the $X_n$ are bounded and hence by the Rellich-Kondrachov theorem we may assume that the $X_n$ converge strongly to some $X$ in $L^2(\Omega)$ and due to the fact that $H$ is an $H^1$-closed subspace of $H^1(\Omega,\mathbb{R}^3)$ and hence a Hilbert space in its own right we may further assume that the $X_n$ converge weakly in $H^1(\Omega)$ to the same element $X\in H$. In particular, $\operatorname{curl}(X_n)$ and $\operatorname{div}(X_n)$ converge weakly to $\operatorname{curl}(X)$ and $\operatorname{div}(X)$ in $L^2(\Omega)$ norm respectively. Then the lower semi-continuity of the norms and the strong $L^2$-convergence imply that $\|X\|_{L^2(\Omega)}=1$ and
	\begin{gather}
		\nonumber
			\|\operatorname{curl}(X)\|^2_{L^2(\Omega)}+\|\operatorname{div}(X)\|^2_{L^2(\Omega)}=\inf_{Y\in H,\|Y\|_{L^2(\Omega)}=1}\left(\|\operatorname{curl}(Y)\|^2_{L^2(\Omega)}+\|\operatorname{div}(Y)\|^2_{L^2(\Omega)}\right).
	\end{gather}
We note that $c_0:=\|\operatorname{curl}(X)\|^2_{L^2(\Omega)}+\|\operatorname{div}(X)\|^2_{L^2(\Omega)}>0$, since $c_0=0$ would imply $\operatorname{curl}(X)=0=\operatorname{div}(X)$ and since $X\times \mathcal{N}=0$ (as $X\in H$) we would find $X\in \mathcal{H}_D(\Omega)$. But $X\in H\subset \mathcal{H}^{\perp_{L^2(\Omega)}}_D(\Omega)$ so that we would find $X=0$ contradicting $\|X\|_{L^2(\Omega)}=1$. If $B\in H$ is any other arbitrary element, we then find
\begin{AppE}
	\label{AppEE2}
	\|\operatorname{curl}(B)\|^2_{L^2(\Omega)}+\|\operatorname{div}(B)\|^2_{L^2(\Omega)}\geq c_0\|B\|^2_{L^2(\Omega)}
\end{AppE}
for some $c_0>0$ independent of $B$. Combining (\ref{AppEE2}) and (\ref{AppEE1}) proves that
\begin{gather}
	\nonumber
	\|X\|^2_{H^1(\Omega)}\leq c_1\left(\|\operatorname{curl}(X)\|^2_{L^2(\Omega)}+\|\operatorname{div}(X)\|^2_{L^2(\Omega)}\right) 
\end{gather}
for all $X\in H$ for a suitable $c_1>0$ independent of $X$. Lastly, the second inequality in the statement of the lemma is trivial to verify.
\end{proof}
We recall that $L^2\mathcal{H}(\Omega)$ is the space of square-integrable vector fields on $\Omega$ which are div- and curl-free in the weak sense on $\Omega$.
\begin{lem}
	\label{AppEL2}
	Let $\Omega\subset\mathbb{R}^3$ be a bounded $C^{1,1}$-domain. Then there exists some $c>0$ such that for all $X\in L^2\mathcal{H}(\Omega)\cap \mathcal{H}^{\perp_{L^2(\Omega)}}_D(\Omega)$ there is some $A\in H^1(\Omega,\mathbb{R}^3)$ with $\operatorname{div}(A)=0$, $\mathcal{N}\cdot A=0$, $\operatorname{curl}(A)=X$ and $\|A\|_{H^1(\Omega)}\leq c\|X\|_{L^2(\Omega)}$.
\end{lem}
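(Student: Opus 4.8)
The plan is to produce $A$ as the curl of an auxiliary field obtained from a Lax--Milgram argument on the space $H=\{Y\in H^1(\Omega,\mathbb{R}^3)\mid Y\times\mathcal{N}=0\}\cap\mathcal{H}^{\perp_{L^2(\Omega)}}_D(\Omega)$ from \Cref{AppEL1}. On $H$ I would consider the bilinear form $a(B,C):=\int_\Omega\operatorname{curl}(B)\cdot\operatorname{curl}(C)\,d^3x+\int_\Omega\operatorname{div}(B)\operatorname{div}(C)\,d^3x$ together with the functional $\ell(C):=\int_\Omega X\cdot C\,d^3x$. By \Cref{AppEL1} the form is bounded and coercive, $a(B,B)\ge c_1^{-2}\|B\|^2_{H^1(\Omega)}$, while $\ell$ is bounded because $|\ell(C)|\le\|X\|_{L^2(\Omega)}\|C\|_{H^1(\Omega)}$. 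Hence Lax--Milgram provides a unique $B\in H$ with $a(B,C)=\ell(C)$ for all $C\in H$ and with $\|B\|_{H^1(\Omega)}\le c\|X\|_{L^2(\Omega)}$.

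First I would fix the gauge by showing $\operatorname{div}(B)=0$. To do so I would test against $C=\nabla\psi$, where $\psi\in H^2(\Omega)$ solves the Dirichlet problem $\Delta\psi=\operatorname{div}(B)$ in $\Omega$, $\psi|_{\partial\Omega}=0$ (solvable by \cite[Theorem 2.4.2.5]{Gris85}). One checks $\nabla\psi\in H$: since $\psi$ vanishes on $\partial\Omega$ we have $\nabla\psi\times\mathcal{N}=0$, and $\int_\Omega\nabla\psi\cdot\nabla h\,d^3x=-\int_\Omega\psi\Delta h\,d^3x=0$ for every $\nabla h\in\mathcal{H}_D(\Omega)$, so $\nabla\psi\in\mathcal{H}^{\perp_{L^2(\Omega)}}_D(\Omega)$. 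The identity $a(B,\nabla\psi)=\ell(\nabla\psi)$ then reduces to $\int_\Omega(\operatorname{div}B)^2\,d^3x=\int_\Omega X\cdot\nabla\psi\,d^3x=0$, the last integral vanishing after integrating by parts because $\operatorname{div}(X)=0$ and $\psi|_{\partial\Omega}=0$. Thus $\operatorname{div}(B)=0$ and the variational identity simplifies to $\int_\Omega\operatorname{curl}(B)\cdot\operatorname{curl}(C)\,d^3x=\int_\Omega X\cdot C\,d^3x$ for all $C\in H$.

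Next I set $A:=\operatorname{curl}(B)$. Then $\operatorname{div}(A)=0$ automatically and $\|A\|_{L^2(\Omega)}\le\|B\|_{H^1(\Omega)}\le c\|X\|_{L^2(\Omega)}$. To obtain $\operatorname{curl}(A)=X$ I would test the simplified identity against $C=(\operatorname{Id}-P)\Phi$ for arbitrary $\Phi\in C_c^\infty(\Omega,\mathbb{R}^3)$, where $P$ denotes the $L^2(\Omega)$-orthogonal projection onto $\mathcal{H}_D(\Omega)$; such $C$ lies in $H$ since $\Phi$ has vanishing trace, $P\Phi\in\mathcal{H}_D(\Omega)$ has vanishing tangential trace, and $C\perp_{L^2(\Omega)}\mathcal{H}_D(\Omega)$ by construction. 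As $P\Phi$ is a gradient, $\operatorname{curl}((\operatorname{Id}-P)\Phi)=\operatorname{curl}(\Phi)$, and as $X\perp_{L^2(\Omega)}\mathcal{H}_D(\Omega)$ we get $\int_\Omega X\cdot(\operatorname{Id}-P)\Phi\,d^3x=\int_\Omega X\cdot\Phi\,d^3x$; hence $\int_\Omega A\cdot\operatorname{curl}(\Phi)\,d^3x=\int_\Omega X\cdot\Phi\,d^3x$ for all $\Phi\in C_c^\infty(\Omega,\mathbb{R}^3)$, i.e. $\operatorname{curl}(A)=X$. Finally $\mathcal{N}\cdot A=0$ follows from $B\times\mathcal{N}=0$: for every $\phi\in H^1(\Omega)$, integration by parts gives $\int_{\partial\Omega}\phi(\mathcal{N}\cdot A)\,d\sigma=\int_\Omega\operatorname{curl}(B)\cdot\nabla\phi\,d^3x=\pm\int_{\partial\Omega}(B\times\mathcal{N})\cdot\nabla\phi\,d\sigma=0$, so $\mathcal{N}\cdot A=0$ in $W^{-\frac{1}{2},2}(\partial\Omega)$. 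The $H^1$-estimate then comes from the observation that $A\in H(\operatorname{curl},\Omega)\cap H(\operatorname{div},\Omega)$ with vanishing normal trace, upon invoking the Gaffney-type inequality for the normal boundary condition on $C^{1,1}$-domains (the companion of \cite[Lemma 2.11, Equation (2.12)]{ABDG98} used in \Cref{AppEL1}), which yields $A\in H^1(\Omega)$ with $\|A\|_{H^1(\Omega)}\le c\big(\|A\|_{L^2(\Omega)}+\|\operatorname{curl}(A)\|_{L^2(\Omega)}+\|\operatorname{div}(A)\|_{L^2(\Omega)}\big)\le c\|X\|_{L^2(\Omega)}$.

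The main obstacles are twofold, and both are where the $C^{1,1}$-regularity of $\partial\Omega$ is essential. The identification $\operatorname{curl}(A)=X$ is not an immediate consequence of the variational identity, since $C_c^\infty(\Omega,\mathbb{R}^3)\not\subset H$ and one must first subtract the $\mathcal{H}_D(\Omega)$-component of the test fields; it is precisely this step that requires the hypothesis $X\in\mathcal{H}^{\perp_{L^2(\Omega)}}_D(\Omega)$ (which also encodes the vanishing of the boundary fluxes necessary for a tangential vector potential to exist). The harder analytic point is the passage from $A\in H(\operatorname{curl},\Omega)\cap H(\operatorname{div},\Omega)$ to $A\in H^1(\Omega)$ with the quantitative bound, which rests on the Gaffney estimate available on $C^{1,1}$-domains and is the reason the boundary regularity cannot be relaxed to merely Lipschitz.
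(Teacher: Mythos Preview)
Your proof is correct and follows essentially the same route as the paper's: the same space $H$, the same coercive Dirichlet form from \Cref{AppEL1} (you phrase it via Lax--Milgram, the paper via Riesz representation, which coincide here), the same test with a Dirichlet-Laplacian solution to kill $\operatorname{div}(B)$, the same definition $A:=\operatorname{curl}(B)$, and the same Gaffney upgrade at the end. The only cosmetic difference is in passing from the variational identity to $\operatorname{curl}(A)=X$: the paper first extends the identity to all $\tilde{B}\in H^1$ with $\tilde{B}\times\mathcal{N}=0$ by writing $\tilde{B}=B+\nabla g$ with $\nabla g\in\mathcal{H}_D(\Omega)$ and then specialises to $\Psi\in C_c^\infty(\Omega,\mathbb{R}^3)$, whereas you test directly against $(\operatorname{Id}-P)\Phi$; these are the same manoeuvre.
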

\begin{proof}[Proof of \Cref{AppEL2}]
	The proof follows closely the idea presented in \cite[Theorem 2.4.8]{S95} and is based on the existence of Dirichlet-potentials \cite[Theorem 2.2.4]{S95}. Here we adapt the proof once more to the language of vector fields for the convenience of the reader. Fix $X\in L^2\mathcal{H}(\Omega)\cap \mathcal{H}^{\perp_{L^2(\Omega)}}_D(\Omega)$ and define the space $H:=\{Y\in H^1(\Omega,\mathbb{R}^3)\mid Y\times \mathcal{N}=0\}\cap \mathcal{H}^{\perp_{L^2(\Omega)}}_D(\Omega)$ and the following linear functional
	\begin{gather}
		\nonumber
		T:H\rightarrow \mathbb{R}\text{, }B\mapsto \int_{\Omega}B\cdot X d^3x
	\end{gather}
	which is clearly bounded with respect to the $H^1$-norm. According to \Cref{AppEL1} the inner product $\langle A,B\rangle:=\int_{\Omega}\operatorname{curl}(A)\cdot \operatorname{curl}(B)d^3x+\int_{\Omega}\operatorname{div}(A)\operatorname{div}(B)d^3x$ turns $H$ into a Hilbert space whose norm is equivalent to the $H^1$-norm. We conclude that $T$ is bounded with respect to the induced norm $\|\cdot \|$ of the inner product $\langle\cdot,\cdot\rangle$. Then by means of Riesz representation theorem we conclude that there exists (a unique) $A\in H$ satisfying
	\begin{AppE}
		\label{AppEE3}
		\int_{\Omega}B\cdot Xd^3x=T(B)=\langle B,A\rangle=\int_{\Omega}\operatorname{curl}(A)\cdot \operatorname{curl}(B)d^3x+\int_{\Omega}\operatorname{div}(A)\operatorname{div}(B)d^3x\text{ for all }B\in H.
	\end{AppE}
	Given any $h\in L^2(\Omega)$ we can now solve the Dirichlet problem $\Delta f=h$ in $\Omega$ and $f|_{\partial\Omega}=0$ which admits a unique solution of class $f\in W^{1,2}_0(\Omega)\cap W^{2,2}(\Omega)$, c.f. \cite[Theorem 2.4.2.5]{Gris85}. We observe that $\nabla f\in H$ due to the boundary conditions of $f$ and since each such $\nabla f$ is $L^2(\Omega)$ to the space $L^2\mathcal{H}(\Omega)$. We can therefore set $B=\nabla f$ in (\ref{AppEE3}) and use the fact that $\nabla f$ and $X\in L^2\mathcal{H}(\Omega)$ are $L^2(\Omega)$-orthogonal to each other to conclude
	\begin{gather}
		\nonumber
		0=\int_{\Omega}\nabla f\cdot Xd^3x=\int_{\Omega}\operatorname{curl}(\nabla f)\cdot \operatorname{curl}(A)d^3x+\int_{\Omega}\operatorname{div}(A)\cdot \Delta fd^3x=\int_{\Omega}\operatorname{div}(A)\cdot hd^3x
	\end{gather}
	from which we conclude, by selecting $h=\operatorname{div}(A)$, $\operatorname{div}(A)=0$. Coming back to (\ref{AppEE3}) we obtain
	\begin{gather}
		\nonumber
		\int_{\Omega}B\cdot Xd^3x=\int_{\Omega}\operatorname{curl}(A)\cdot \operatorname{curl}(B)d^3x\text{ for all }B\in H.
	\end{gather}
	We observe now that $X$ is $L^2(\Omega)$-orthogonal to the space $\mathcal{H}_D(\Omega)$ so that for an arbitrary $\widetilde{B}\in \{Y\in H^1(\Omega,\mathbb{R}^3)\mid Y\times\mathcal{N}=0\}$ we can decompose, due to the finite dimensionality of $\mathcal{H}_D(\Omega)$, $\widetilde{B}=B+\nabla g$ for suitable $B\in H$ and $\nabla g\in \mathcal{H}_D(\Omega)$ from which we easily conclude
	\begin{gather}
		\nonumber
		\int_{\Omega}\widetilde{B}\cdot Xd^3x=\int_{\Omega}\operatorname{curl}(\widetilde{B})\cdot \operatorname{curl}(A)d^3x\text{ for all }\widetilde{B}\in H^1(\Omega,\mathbb{R}^3)\text{ with }\widetilde{B}\times \mathcal{N}=0.
	\end{gather}
	In particular, we may set $\widetilde{B}=\Psi$ for any given $\Psi\in C^{\infty}_c(\Omega,\mathbb{R}^3)$ and find
	\begin{gather}
		\nonumber
		\int_{\Omega}\Psi\cdot Xd^3x=\int_{\Omega}\operatorname{curl}(\Psi)\cdot \operatorname{curl}(A)d^3x\text{ for all }\Psi\in C^{\infty}_c(\Omega,\mathbb{R}^3).
	\end{gather}
	This implies by definition of the weak curl, that $\operatorname{curl}(A)\in H(\operatorname{curl},\Omega)$ with $\operatorname{curl}(\operatorname{curl}(A))=X$. We may then set $w:=\operatorname{curl}(A)$ and observe first that $\operatorname{div}(w)=0$, $\operatorname{curl}(w)=X$ and $\mathcal{N}\cdot w=0$ because $A\times \mathcal{N}=0$. We can further plug in $B=A\in H$ in (\ref{AppEE3}), keeping in mind that $\operatorname{div}(A)=0$, to conclude that $\|w\|^2_{L^2(\Omega)}\leq \|X\|_{L^2(\Omega)}\|A\|_{L^2(\Omega)}$. Further, since $A\in H$ it follows from (\ref{AppEE2}) that $\|A\|_{L^2(\Omega)}\leq c\|\operatorname{curl}(A)\|_{L^2(\Omega)}=c\|w\|_{L^2(\Omega)}$ for some $c>0$ independent of $A$ and where we used again that $\operatorname{div}(A)=0$. We conclude that $\|w\|_{L^2(\Omega)}\leq c\|X\|_{L^2(\Omega)}$ and that $\|\operatorname{curl}(w)\|_{L^2(\Omega)}=\|X\|_{L^2(\Omega)}$ and therefore $\|w\|_{H(\operatorname{curl},\Omega)}\leq c\|X\|_{L^2(\Omega)}$ for some $c>0$ independent of $X$. We are left with upgrading the $H(\operatorname{curl},\Omega)$ estimate to an $H^1$-estimate which follows from the corresponding Gaffney type inequality for $C^{1,1}$-domains for vector fields satisfying a tangent to the boundary condition, c.f. \cite[Theorem 2.9 \& Lemma 2.11, Equation (2.12)]{ABDG98}.
\end{proof}
\begin{thm}[Friedrichs decomposition]
	\label{AppET3}
	Let $\Omega\subset\mathbb{R}^3$ be a bounded $C^{1,1}$-domain. Then for every $B\in L^2\mathcal{H}(\Omega)$ there exists some $A\in H^1(\Omega,\mathbb{R}^3)$,$\nabla f\in \mathcal{H}_D(\Omega)$ with $\operatorname{div}(A)=0$, $\mathcal{N}\cdot A=0$ and such that the following $L^2(\Omega)$-orthogonal decomposition holds
	\begin{gather}
		\nonumber
		B=\operatorname{curl}(A)+\nabla f.
	\end{gather}
	Further there exists a constant $c(\Omega)>0$ independent of $B$ such that we have the a priori estimate
	\begin{gather}
		\nonumber
		\|A\|_{H^1(\Omega)}+\|\nabla f\|_{L^2(\Omega)}\leq c\|B\|_{L^2(\Omega)}.
	\end{gather}
\end{thm}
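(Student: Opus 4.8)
The plan is to reduce the statement to the two preceding lemmas by first splitting off the harmonic Dirichlet part of $B$ via an orthogonal projection and then producing a vector potential for the remainder. First I would note that $\mathcal{H}_D(\Omega)$ is a finite-dimensional, hence closed, subspace of $L^2\mathcal{H}(\Omega)$, since any $\nabla f\in \mathcal{H}_D(\Omega)$ satisfies $\operatorname{curl}(\nabla f)=0$ and $\operatorname{div}(\nabla f)=\Delta f=0$. Therefore the $L^2(\Omega)$-orthogonal projection $\pi$ onto $\mathcal{H}_D(\Omega)$ is well-defined and bounded, and for given $B\in L^2\mathcal{H}(\Omega)$ I would set $\nabla f:=\pi(B)\in \mathcal{H}_D(\Omega)$ and $X:=B-\nabla f$. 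Since $B$ and $\nabla f$ both lie in $L^2\mathcal{H}(\Omega)$, so does their difference, and by construction $X\in L^2\mathcal{H}(\Omega)\cap \mathcal{H}_D^{\perp_{L^2(\Omega)}}(\Omega)$.

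The next step is to apply \Cref{AppEL2} to $X$, which yields some $A\in H^1(\Omega,\mathbb{R}^3)$ with $\operatorname{div}(A)=0$, $\mathcal{N}\cdot A=0$, $\operatorname{curl}(A)=X$ and the a priori bound $\|A\|_{H^1(\Omega)}\leq c\|X\|_{L^2(\Omega)}$ for some $c>0$ independent of $X$. Combining this with the splitting $B=X+\nabla f$ gives immediately the claimed decomposition $B=\operatorname{curl}(A)+\nabla f$ with $A$ having the required properties and $\nabla f\in \mathcal{H}_D(\Omega)$.

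For the orthogonality of the decomposition I would simply observe that $\operatorname{curl}(A)=X\in \mathcal{H}_D^{\perp_{L^2(\Omega)}}(\Omega)$ while $\nabla f\in \mathcal{H}_D(\Omega)$, so that $\int_\Omega \operatorname{curl}(A)\cdot \nabla f\,d^3x=\int_\Omega X\cdot \nabla f\,d^3x=0$. For the a priori estimate, the orthogonal splitting $B=X+\nabla f$ gives by the Pythagorean identity $\|X\|_{L^2(\Omega)}\leq \|B\|_{L^2(\Omega)}$ and $\|\nabla f\|_{L^2(\Omega)}\leq \|B\|_{L^2(\Omega)}$; feeding the first inequality into the bound from \Cref{AppEL2} yields $\|A\|_{H^1(\Omega)}\leq c\|B\|_{L^2(\Omega)}$, and adding the two estimates produces the stated inequality $\|A\|_{H^1(\Omega)}+\|\nabla f\|_{L^2(\Omega)}\leq c\|B\|_{L^2(\Omega)}$. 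I do not expect a genuine obstacle here, since the analytic difficulty --- the coercivity of the Dirichlet form on $C^{1,1}$-domains and the construction of the solenoidal potential --- has already been absorbed into \Cref{AppEL1} and \Cref{AppEL2}; the only points requiring care are verifying that the remainder $X$ lands in the precise space to which \Cref{AppEL2} applies and that the projection does not increase the $L^2(\Omega)$-norm.
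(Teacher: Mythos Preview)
Your proposal is correct and follows essentially the same approach as the paper: orthogonally project $B$ onto the finite-dimensional space $\mathcal{H}_D(\Omega)$, then apply \Cref{AppEL2} to the remainder and read off the a priori bounds from the $L^2$-orthogonality of the splitting. The paper's proof is in fact terser than yours, but the logical content is identical.
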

\begin{proof}[Proof of \Cref{AppET3}]
	Since $\mathcal{H}_D(\Omega)$ is finite dimensional we may decompose $B=\widetilde{B}+\nabla f$ for $\nabla f\in \mathcal{H}_D(\Omega)$ and $\widetilde{B}\in L^2\mathcal{H}(\Omega)\cap \mathcal{H}^{\perp_{L^2(\Omega)}}_D(\Omega)$. Since this decomposition is $L^2(\Omega)$-orthogonal we have $\|\nabla f\|_{L^2(\Omega)}\leq \|B\|_{L^2(\Omega)}$ and $\|\widetilde{B}\|_{L^2(\Omega)}\leq \|B\|_{L^2(\Omega)}$. We can now apply \Cref{AppEL2} to $\widetilde{B}$.
\end{proof}
\bibliographystyle{plain}
\bibliography{mybibfileNOHYPERLINK}
\footnotesize
\end{document}